\newtheorem{theorem}{Theorem}
\newtheorem{lemma}{Lemma}
\newtheorem{proposition}{Proposition}
\newtheorem{assumption}{Assumption}
\theoremstyle{remark}
\def\>{\rangle}
\def\<{\langle}
\def\poly{{\rm poly}}
\def\A{\mathcal{A}}
\def\B{\mathcal{B}}
\def\E{\mathbb{E}}
\def\H{\mathcal{H}}
\def\L{\mathcal{L}}
\def\O{\mathcal{O}}
\def\Q{\mathcal{Q}}
\def\R{\mathbb{R}}
\def\S{\mathcal{S}}
\def\Tr{{\rm Tr}}
\def\0{\bm{0}}
\def\1{\bm{1}}
\def\2{\bm{2}}
\def\bmt{\bm{\theta}}
\def\l\|{\left\|}
\def\r\|{\right\|}
\def\lmin{\lambda_{\min}}
\def\lmax{\lambda_{\max}}
\begin{document}

\title{The curse of random quantum data}

\author{Kaining Zhang}
\email{kzha3670@uni.sydney.edu.au}
\affiliation{School of Computer Science, Faculty of Engineering, University of Sydney, Australia}

\author{Junyu Liu}
\email{junyuliucaltech@gmail.com}
\affiliation{Pritzker School of Molecular Engineering, The University of Chicago, Chicago, IL 60637, USA}
\affiliation{Department of Computer Science, The University of Chicago, Chicago, IL 60637, USA}
\affiliation{Kadanoff Center for Theoretical Physics, The University of Chicago, Chicago, IL 60637, USA}
\affiliation{Department of Computer Science, The University of Pittsburgh, Pittsburgh, PA 15260, USA}

\author{Liu Liu}
\email{liuliubh@gmail.com}
\affiliation{School of Computer Science, Faculty of Engineering, University of Sydney, Australia}

\author{Liang Jiang}
\email{liang.jiang@uchicago.edu}
\affiliation{Pritzker School of Molecular Engineering, The University of Chicago, Chicago, IL 60637, USA}

\author{Min-Hsiu Hsieh}
\email{min-hsiu.hsieh@foxconn.com}
\affiliation{Hon Hai Quantum Computing Research Center, Taipei, Taiwan}

\author{Dacheng Tao}
\email{dacheng.tao@sydney.edu.au}
\affiliation{School of Computer Science, Faculty of Engineering, University of Sydney, Australia}

\date{\today}

\begin{abstract}
Quantum machine learning, which involves running machine learning algorithms on quantum devices, may be one of the most significant flagship applications for these devices. Unlike its classical counterparts, the role of data in quantum machine learning has not been fully understood. In this work, we quantify the performances of quantum machine learning in the landscape of quantum data. Provided that the encoding of quantum data is sufficiently random, the performance, we find that the training efficiency and generalization capabilities in quantum machine learning will be exponentially suppressed with the increase in the number of qubits, which we call ``the curse of random quantum data". Our findings apply to both the quantum kernel method and the large-width limit of quantum neural networks. Conversely, we highlight that through meticulous design of quantum datasets, it is possible to avoid these curses, thereby achieving efficient convergence and robust generalization. Our conclusions are corroborated by extensive numerical simulations.
\end{abstract}


\maketitle

\section{Introduction}

The attainment of quantum advantage~\cite{nature_arute2019quantum, science_abe8770} using noisy intermediate-scale quantum (NISQ) devices~\cite{quantum_Preskill2018} has significantly advanced research in various fields where quantum computing shows promise.
Among these, the variational quantum algorithm~(VQA)~\cite{nat_review_phys_cerezo2021} emerges as a prominent framework for harnessing NISQ devices, owing to its relatively modest requirements on gate noise and circuit connectivity. VQAs have been applied in diverse areas, including machine learning (so-called quantum machine learning)~\cite{pra_schuld2020circuit, nature_havlivcek2019sl, prl_schuld2019qml, prr_yuxuan2020powerpqc, ieee_samuel2020rl, nature_saggio2021rl, pra_heliang2021experimentqgan, prxq_yuxuan2021learnability}, numerical analysis~\cite{pra_lubasch2020vqanonlinear, pra_kubo2021vqasde, prxq_yongxin2021vqd, pra_hailing2021vqapoisson, pra_kyriienko2021nonlinearde}, quantum simulation~\cite{rmp_georgescu2014qs, quantum_yuan2019theoryofvariational, npj_mcardle2019variational, prl_endo2020_vqsgp, nature_neill2021qring, nature_mi2021time, science_randall2021mbltime, science_amita2021obs, science_semeghini2021topoliquid, science_satzinger2021topoorder}, and quantum chemistry~\cite{rmp_mcardle2020qcc, nc_peruzzo2014var, nature_kandala2017hardware, prx_hempel2018qcion, quantum_Higgott2019vqc, nc_grimsley2019adaptive, science_frank2020hf, prxq_tang2021adaptvqe, pra_delgado2021vqa}. The concepts of variational quantum circuits have also been utilized for practical demonstrations of quantum control and quantum error correction, thereby bridging current developments in noisy quantum computing towards future fault-tolerant quantum computing \cite{ni2023beating,sivak2023real}.

On the other hand, despite the existence of practical demonstrations of variational quantum algorithms, their theoretical foundation has not been completely established. People still lack confidence regarding when and how variational quantum circuits could be utilized, how to design them from first principles, and where theoretical quantum advantage could potentially exist in a generic setup \cite{liu2021representation,arxiv_liu2022analytic,liu2022laziness,arxiv_zhang2023dynamical,wang2023symmetric,li2023d}. An existing theoretical proposal, known as the quantum neural tangent kernel (QNTK), has proven helpful in understanding the dynamics of gradient descent in quantum machine learning, particularly in the realm of wide quantum neural networks where the number of training parameters and the dimension of the Hilbert space are considered large \cite{arxiv_liu2022analytic}. However, in scenarios where the number of training parameters is not comparable to the dimension of Hilbert spaces, the QNTK might become small, and training efficiency may be unsatisfactory, leading to what is termed the barren plateau problem \cite{nc_mcclean2018barren}. Conversely, a large number of training parameters will linearize the quantum neural network towards the frozen kernel limit, where an emergent fixed kernel could be used akin to linear regression. This limit linearizes the dynamics and obscures information about the quantum data. The effects of data, which appear to be significant in practical training performances, become less clear in the limit of a frozen kernel. Hence, an important question naturally arises: how are quantum data and its encoding related to the training efficiency and practical performance of quantum machine learning?

In this paper, we uncover a novel phenomenon related to the interplay between quantum machine learning performance and the distribution of quantum data, which we term \emph{the curse of random quantum data}. We discover that, similar to the barren plateau phenomenon in the gradient efficiency for training dynamics, across a typical class of quantum machine learning algorithms, when quantum data is encoded randomly, the algorithm's learning performance markedly declines. Specifically, the spectrum of the quantum neural tangent kernel diminishes in line with the dimension of the Hilbert space, and similarly, the reduction in the generalization error during training is also constrained by the Hilbert space's dimension. This phenomenon may not be restricted to any particular training strategy; it is observable in both the quantum kernel method and the variational quantum algorithm at or beyond the frozen kernel limit. Hence, we identify this as a new type of the curse of large dimensions of Hilbert spaces occurring within the realm of quantum data rather than conventional barren plateau statements in the conventional parameter space associated with the loss function landscape \cite{nc_mcclean2018barren,nc_cerezo2020cost, iop_2021Uvarovlocality, prx_pesah2020absence,qmi_skolik2020layerwise,arxiv_zhang2021towardtqnn,zhang2022gaussian}. Importantly, this curse can emerge irrespective of the training method employed, indicating that it is a fundamental aspect of quantum data space, affecting both kernel methods and gradient descent approaches alike. Furthermore, we discover that a carefully curated data distribution, as opposed to one that is uniformly random across the Hilbert space, can significantly mitigate the curse. This is demonstrated through direct examples where certain input state datasets and encoding schemes markedly enhance the performance of quantum machine learning algorithms, particularly affecting the spectra of the QNTK. Our theoretical examples are substantiated by numerical simulations of quantum machine learning on tasks including the quantum dynamics learning and the binary classification.

\section{Theoretical results}

\subsection{Background of quantum machine learning}

We begin by setting up notations and methods for quantum machine learning. The training set is denoted by $\A=\{a\}=\{(\rho_a, y_a)\}$, where each sample $a$ consists of the input state in the density matrix form $\rho_a$ and the label $y_a$. The test set is denoted by $\B=\{\hat{b}\}$. In the task of QML, we aim to approximate the label $y$ with the parameterized prediction function $z(\bmt)$. Here a common approach is to minimize the mean squared error on the training set as the loss function:
\begin{equation}\label{vqacr_qntk_loss_eq}
\L_{\A}(\bmt) = \frac{1}{2|\A|} \sum_{a\in\A} \left( z_a(\bmt) - y_a \right)^2 = \frac{1}{2|\A|} \left\| \bm{r}_{\A} (\bmt) \right\|_2^2,
\end{equation}
where 
\begin{equation}\label{vqacr_qntk_loss_i_eq}
\bm{r}_{\A}(\bmt):=\{r_a(\bmt)\}_{a\in\A} := \{z_a(\bmt) - y_a\}_{a\in\A}
\end{equation}
is the training residual vector. The predication function is given by
\begin{equation}\label{vqacr_main_backqml_za}
z_a(\bmt) = \Tr [O(\bmt) \rho_a].
\end{equation}

In this paper, we focus on two approaches for generating the predication function. The first is the quantum kernel method:
\begin{equation}\label{vqacr_main_backqml_qkm_z}
O_{\rm qkm}(\bmt) = \sum_{a\in\A} \theta_a \rho_a .
\end{equation}
The second approach is the quantum neural network using the observable $O$ and the variational quantum circuit $V(\bmt)$:
\begin{equation}\label{vqacr_main_backqml_qnn_o}
O_{\rm qnn}(\bmt) = {V(\bmt)}^\dag O V(\bmt) .
\end{equation}
For the QNN case, we train the parameter using gradient descent with some learning rate $\eta$, i.e.
\begin{equation}\label{vqacr_main_backqml_gd_theta}
\bmt(t+1) = \bmt(t) - \eta \nabla \L_{\A} (\bmt(t))	.
\end{equation}

Suppose the parameter $\bmt \in \R^D$. Let $\eta \rightarrow 0$, the gradient descent dynamics approximates to the continuous regime known as the gradient flow:
\begin{equation}\label{vqacr_qntk_gd_theta_eq}
\dot{\bmt} = - \nabla \L_{\A}(\bmt) = - \frac{1}{|\A|} J(\bmt) \bm{r}_{\A}(\bmt),
\end{equation}
where $J(\bm{\theta}) \in \R^{D\times |\A|}$ is the Jacobian matrix with entries $J_{d a}(\bm{\theta})=\frac{\partial z_a}{\partial \theta_d}(\bm{\theta})$.
We are interested in the convergence rate of the loss function, i.e.
\begin{align}
\dot{\L}_{\A}(\bmt) ={}&  \nabla \L_{\A}(\bmt)^T \dot{\bmt}  \notag \\ 
={}& - \frac{D}{|\A|^2} \bm{r}_{\A}(\bmt)^T K(\bmt) \bm{r}_{\A}(\bmt) \notag \\ 
\leq{}&  - \frac{2D}{|\A|} \lambda_{\min} \left[ K(\bmt) \right] {\L}_{\A}(\bmt), \label{vqacr_qntk_loss_descent_simple_4}
\end{align}
where
\begin{equation}\label{vqacr_qntk_eq}
K(\bm{\theta}) = \frac{1}{D} J(\bm{\theta})^T J(\bm{\theta})
\end{equation}
is the QNTK at the point $\bm{\theta}$. 
Eq.~(\ref{vqacr_qntk_loss_descent_simple_4}) shows that the loss decays linearly when the least eigenvalue of QNTK is lower bounded by some positive constant. In the rest of this paper, we will prove rigorous convergence guarantees based on further analyses to the least eigenvalue of QNTK.

\subsection{Generalization error with quantum data}

Since input states are engaged in the training of QML via quantum kernel or QNTK, their distributions could naturally affect training and generalization performances. Here, we provide some negative examples for datasets sampled from state 2-designs. Specifically, we consider the quantum kernel method and the quantum neural network approach separately in Theorems~\ref{vqacr_geb_qkm_haar_input_theorem_main} and \ref{vqacr_geb_qntk_haar_input_theorem_main}. Related proofs can be found in Appendix~\ref{vqacr_geb_qkm_haar_input}.
The generalization error has up to exponentially small improvement from the extremely overfitting situation, when the size of datasets is relatively small compared with the Hilbert dimension. 

\begin{theorem}\label{vqacr_geb_qkm_haar_input_theorem_main}
Suppose all $N$-qubit quantum states in the training and test datasets $\A$ and $\B$ are independently sampled from state 2-designs and the size of datasets is smaller than $2^{N/2}$. Let $\L_{\A}(t)$ and $\L_{\B}(t)$ be the training and the test loss function of the quantum kernel method (Eqs.~(\ref{vqacr_main_backqml_za}) and (\ref{vqacr_main_backqml_qkm_z})), respectively. Then, with high probability,
\begin{equation}
\mathop{\E} \L_{\B}(\infty) \gtrsim{} \mathop{\E} \mathcal{L}_{\B} (0) - \frac{|\A|}{2^{N-1}} \mathop{\E} \sqrt{ \mathcal{L}_{\A}(0) \mathcal{L}_{\B}(0) } ,
\end{equation}
where the expectation is taken under state $2$-designs for states in $\A$ and $\B$.
\end{theorem}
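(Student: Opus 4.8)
The plan is to replace the infinite-time gradient flow by the closed-form kernel-regression interpolant, and then exploit the fact that independent state $2$-design vectors are almost orthonormal under the feature map $\rho\mapsto\rho$ with inner product $\Tr[\rho\rho']$: the trained predictor on a fresh test state is then exponentially close to the trivial predictor, so the asymptotic test loss cannot fall much below the ``extreme overfitting'' value $\L_\B(0)$.

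\emph{Step 1 (dynamics $\to$ interpolant).} For the quantum kernel method the prediction is linear in $\bmt$. Writing $K$ for the $|\A|\times|\A|$ Gram matrix $K_{aa'}=\Tr[\rho_a\rho_{a'}]$ and $K_{\B\A}$ for the $|\B|\times|\A|$ matrix $(K_{\B\A})_{\hat b a}=\Tr[\rho_{\hat b}\rho_a]$, one has $z_\A(\bmt)=K\bmt$ and $z_\B(\bmt)=K_{\B\A}\bmt$. Gradient flow on $\L_\A$ drives the residual to its projection onto $\ker K$, so with the standard initialization $\bmt(0)=\0$ (hence $z(0)\equiv 0$, and $\L_\A(0),\L_\B(0)$ deterministic given the labels) we get $\bmt(\infty)=K^{+}y_\A$ and $z_\B(\infty)=K_{\B\A}K^{+}y_\A$; generically $K$ is full rank so $K^{+}=K^{-1}$ and the frozen $\ker K$ component is irrelevant. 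Therefore $\L_\B(\infty)=\tfrac{1}{2|\B|}\bigl\|\,y_\B-K_{\B\A}K^{+}y_\A\,\bigr\|_2^2$.

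\emph{Step 2 (one-sided estimate).} Expanding $\|y_\B-v\|_2^2\ge \|y_\B\|_2^2-2\|y_\B\|_2\|v\|_2$ with $v=K_{\B\A}K^{+}y_\A$, and using $\L_\B(0)=\tfrac{1}{2|\B|}\|y_\B\|_2^2$, $\L_\A(0)=\tfrac{1}{2|\A|}\|y_\A\|_2^2$, gives $\L_\B(\infty)\ge \L_\B(0)-\tfrac{1}{|\B|}\|y_\B\|_2\,\|K_{\B\A}\|_{\mathrm{op}}\,\|K^{+}\|_{\mathrm{op}}\,\|y_\A\|_2$, and $\|y_\A\|_2\|y_\B\|_2=2\sqrt{|\A||\B|}\sqrt{\L_\A(0)\L_\B(0)}$. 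So it remains to show $\|K_{\B\A}\|_{\mathrm{op}}\|K^{+}\|_{\mathrm{op}}\lesssim \sqrt{|\A||\B|}/2^{N}$. Here the $2$-design moments enter: from $\E[\rho^{\otimes 2}]=(I+\mathrm{SWAP})/(2^N(2^N+1))$ one gets, for independent states, $\E\Tr[\rho_a\rho_{a'}]=2^{-N}$ and $\E(\Tr[\rho_a\rho_{a'}])^2=2/(2^N(2^N+1))$, while diagonal entries satisfy $\Tr[\rho_a^2]=1$. Hence $\E\|K-I\|_F^2=|\A|(|\A|-1)\cdot 2/(2^N(2^N+1))<2|\A|^2/4^N=o(1)$ since $|\A|<2^{N/2}$, so by Markov (or, sharper, matrix Bernstein, which also controls only the $\Or(|\A|/2^N)$-size mean part) we have $\|K-I\|_{\mathrm{op}}<\tfrac12$ and thus $\|K^{+}\|_{\mathrm{op}}<2$ on a high-probability event $G$; likewise $\E\|K_{\B\A}\|_F^2=|\A||\B|\cdot 2/(2^N(2^N+1))\le 2|\A||\B|/4^N$, so $\|K_{\B\A}\|_{\mathrm{op}}\le\|K_{\B\A}\|_F\lesssim \sqrt{|\A||\B|}/2^{N}$ on $G$. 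Combining, $\L_\B(\infty)\ge \L_\B(0)-\Or(|\A|/2^{N})\sqrt{\L_\A(0)\L_\B(0)}$ on $G$, which is the claim up to the constant.

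\emph{Step 3 (passing to expectations).} Since $\L_\B(\infty)\ge 0$, write $\E\L_\B(\infty)\ge \E[\L_\B(\infty)\mathbf 1_G]\ge \E\bigl[(\L_\B(0)-\Or(|\A|/2^N)\sqrt{\L_\A(0)\L_\B(0)})\mathbf 1_G\bigr]$, and absorb the $G^{c}$ contribution (exponentially small probability, with $\L_\B(0),\L_\A(0)$ fixed by the labels) to obtain the stated inequality for $\E\L_\B(\infty)$.

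The main obstacle is not conceptual but two pieces of bookkeeping. First, pinning the \emph{constant} to exactly $|\A|/2^{N-1}$: the crude estimates $\|K^{+}\|_{\mathrm{op}}\le 2$ and $\|K_{\B\A}\|_{\mathrm{op}}\le\|K_{\B\A}\|_F$ overshoot, so one either accepts that $\gtrsim$ hides a constant, or replaces Step~2 by a direct second-moment computation of $\E\|K_{\B\A}K^{+}y_\A\|_2^2$ — averaging over $\B$ first gives $\E_\B[K_{\B\A}^{\!\top}K_{\B\A}]=\tfrac{|\B|}{2^N(2^N+1)}(\mathbf 1\mathbf 1^{\!\top}+K)$, after which $K^{+}$ near $I$ makes the bound sharp. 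Second, justifying that ``$K\approx I$ with high probability'' yields a genuine lower bound on the \emph{expectation} requires controlling the rare ill-conditioned event; this is where the nonnegativity of $\L_\B(\infty)$ (Step~3) and the hypothesis $|\A|<2^{N/2}$ (which makes $\|K-I\|$ small and $\Pr[G^c]$ exponentially small) do the work.
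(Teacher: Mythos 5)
Your proposal is correct and reaches the stated inequality, but by a genuinely different route from the paper's. The paper keeps the initialization general (via $\bmt(0)=k^{-1}\bm{z}_{\A}(0)$), writes the post-training test residual as $r_{\hat b}(\infty)=r_{\hat b}(0)-\sum_{a,a'}k_{\hat b a'}[k^{-1}]_{a'a}r_a(0)$, and then computes the \emph{expectation} of the relative generalization error to leading order using the second- and fourth-moment unitary $2$-design identities together with the expansion $k^{-1}\approx 2I-k$: the quadratic term in the metric $g$ comes out $\O(4^{-N})$ and the surviving cross term equals $-\frac{1}{2^N|\B|}\,\E_{\A}[\bm{e}^T\bm{r}_{\A}(0)]\,\E_{\B}[\bm{e}^T\bm{r}_{\B}(0)]$, which Cauchy--Schwarz converts into exactly $-\frac{|\A|}{2^{N-1}}\E\sqrt{\L_{\A}(0)\L_{\B}(0)}$. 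You instead prove a pathwise bound on the high-probability event where $\|K-I\|_{\mathrm{op}}<1/2$, using only second moments of the kernel entries plus crude operator/Frobenius estimates on $K_{\B\A}$ and $K^{+}$. What your approach buys: it is more elementary (no fourth-moment bookkeeping and no chain of $\simeq$ approximations around $k^{-1}\approx 2I-k$), and it controls $\L_{\B}(\infty)$ on an event rather than only in expectation. What it loses: the sharp constant (your estimates give roughly $|\A|/2^{N-2}$, which the $\gtrsim$ absorbs, as you note), the structural information that the expected deficit is a product of \emph{mean} residuals and hence typically much smaller than the Cauchy--Schwarz bound, and a little generality, since you fix $\bmt(0)=\bm{0}$ while the paper's argument runs for arbitrary initialization. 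Your handling of the rare ill-conditioned event in Step~3 is at the same level of rigor as the paper's implicit conditioning on the analogous high-probability event, so no gap there.
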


\begin{theorem}\label{vqacr_geb_qntk_haar_input_theorem_main}
Suppose all $N$-qubit quantum states in the training and test datasets $\A$ and $\B$ are independently sampled from state 2-designs. Let $\L_{\A}(t)$ and $\L_{\B}(t)$ be the training and the test loss function of the quantum neural network (Eqs.~(\ref{vqacr_main_backqml_za}) and (\ref{vqacr_main_backqml_qnn_o})), where the label is given as $y:=\Tr[O U \rho U^\dag]$ for a target unitary $U$ with the zero trace observable $O$. Then, under the frozen QNTK regime,
\begin{equation}\label{vqacr_geb_qntk_haar_input_theorem_eq_main}
\mathop{\E} \L_{\B}(\infty) \geq{} \left( 1 - \frac{|\A|}{2^{2N}-1} \right) \mathop{\E} \mathcal{L}_{\B} (0) ,
\end{equation}
where the expectation is taken under $2$-designs distributions for states in $\A$ and $\B$ and the target unitary $U$.
\end{theorem}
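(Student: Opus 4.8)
\emph{Proof plan.} Write $d:=2^{N}$, so $d^{2}-1=2^{2N}-1$, and let $\|\cdot\|$ denote the Hilbert--Schmidt norm on $d\times d$ operators. The plan is: (i) solve the frozen flow and package the asymptotic test residual through one explicit superoperator; (ii) average out the test states, reducing the claim to a trace inequality between $\E\|M-W\|^{2}$ and $\E\|M\|^{2}$; and (iii) average out the remaining unitaries so that the relevant operator-space second moment is a multiple of the identity on traceless operators, at which point an idempotency argument closes the proof.

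\textbf{Step 1 (frozen solution).} In the frozen QNTK regime the flow (\ref{vqacr_qntk_gd_theta_eq}) linearizes about $\bmt_{0}$, so $\bm r_{\A}(t)=e^{-(D/|\A|)K_{\A\A}t}\bm r_{\A}(0)\to 0$ (using that $K_{\A\A}$ is invertible here), and integrating the flow of a test prediction gives the ridgeless kernel-regression identity $r_{\hat b}(\infty)=r_{\hat b}(0)-K_{\hat b\A}K_{\A\A}^{-1}\bm r_{\A}(0)$. Writing $r_{a}(\bmt_{0})=\Tr[M\rho_{a}]$ with the traceless Hermitian operator $M:=V(\bmt_{0})^{\dag}OV(\bmt_{0})-U^{\dag}OU$, and $K_{\hat b a}=\tfrac1D\sum_{c}\Tr[G_{c}\rho_{\hat b}]\Tr[G_{c}\rho_{a}]$ with $G_{c}:=\partial_{\theta_{c}}(V^{\dag}OV)|_{\bmt_{0}}$ (all traceless), one rewrites $r_{\hat b}(\infty)=\Tr\big[(M-W)\rho_{\hat b}\big]$, where $W=\mathcal{S}(M)$ and $\mathcal{S}$ is the \emph{fixed} linear superoperator that maps $X$ to its kernel-regression reconstruction: the element of $\mathrm{span}\{G_{c}\}$ with $\Tr[\mathcal{S}(X)\rho_{a}]=\Tr[X\rho_{a}]$ for all $a\in\A$ (explicitly $\mathcal{S}(X)=\tfrac1D\sum_{c}\langle\bm\kappa_{c},K_{\A\A}^{-1}\bm r_{X}\rangle G_{c}$ with $(\bm\kappa_{c})_{a}=\Tr[G_{c}\rho_{a}]$, $(\bm r_{X})_{a}=\Tr[X\rho_{a}]$). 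Two features are crucial: every $\mathcal{S}(X)$ is traceless (the $G_{c}$ are), and feeding $W=\mathcal{S}(M)$ back through the same formula reuses the intermediate vector $K_{\A\A}^{-1}\bm r_{\A}(0)$, so $\mathcal{S}(W)=W$. Hence the restriction $\mathcal{S}_{0}$ of $\mathcal{S}$ to the $(d^{2}-1)$-dimensional space of traceless operators is idempotent, $\mathcal{S}_{0}^{2}=\mathcal{S}_{0}$, and — factoring injectively through $\R^{|\A|}$ by invertibility of $K_{\A\A}$ — has rank exactly $|\A|$.

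\textbf{Steps 2--3 (averaging and the idempotency bound).} Since the test states are i.i.d.\ $2$-designs independent of $\A,U,\bmt_{0}$, the identity $\E_{\rho}[\Tr(A\rho)\Tr(B\rho)]=(\Tr(AB)+\Tr(A)\Tr(B))/(d(d+1))$ together with $\Tr(M-W)=\Tr(M)=0$ gives $\E\L_{\B}(\infty)=\E\|M-W\|^{2}/(2d(d+1))$ and $\E\L_{\B}(0)=\E\|M\|^{2}/(2d(d+1))$, so the theorem is equivalent to $\E\|M-W\|^{2}\ge(1-\tfrac{|\A|}{d^{2}-1})\E\|M\|^{2}$. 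Expand $\|M-W\|^{2}=\|M\|^{2}-\langle M,\mathcal{T}M\rangle$ with $\mathcal{T}:=2\mathcal{S}-\mathcal{S}^{\dag}\mathcal{S}=I-(I-\mathcal{S})^{\dag}(I-\mathcal{S})$, and average $M$ over $U$ and over $V(\bmt_{0})$, both acting as unitary $2$-designs in the frozen/large-width regime. Because the law of $M$ is (to second order) invariant under $X\mapsto\Omega^{\dag}X\Omega$, Schur's lemma forces the operator-space second moment $\mathcal{M}\colon X\mapsto\E[\langle M,X\rangle M]$ to equal $\beta\,\Pi_{0}$, where $\Pi_{0}$ is the projection onto traceless operators, the identity component vanishing since $M$ is traceless and $\beta=\E\|M\|^{2}/(d^{2}-1)=2\Tr(O^{2})/(d^{2}-1)$ (here $\E\|M\|^{2}=2\Tr(O^{2})$ because $\E\langle V^{\dag}OV,U^{\dag}OU\rangle=0$). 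Hence $\E\langle M,\mathcal{T}M\rangle=\Tr[\mathcal{T}\mathcal{M}]=\beta\,\Tr[\Pi_{0}\mathcal{T}\Pi_{0}]=\beta\big(2\Tr\mathcal{S}_{0}-\|\mathcal{S}_{0}\|^{2}\big)$. Since $\mathcal{S}_{0}$ is idempotent of rank $|\A|$, $\Tr\mathcal{S}_{0}=|\A|$ and $\|\mathcal{S}_{0}\|^{2}\ge\mathrm{rank}\,\mathcal{S}_{0}=|\A|$ (equality iff $\mathcal{S}_{0}$ is an orthogonal projection), so $2\Tr\mathcal{S}_{0}-\|\mathcal{S}_{0}\|^{2}\le|\A|$ and $\E\langle M,\mathcal{T}M\rangle\le\beta|\A|$. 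Therefore $\E\|M-W\|^{2}\ge\E\|M\|^{2}-\beta|\A|=(1-\tfrac{|\A|}{d^{2}-1})\E\|M\|^{2}$, and dividing by $2d(d+1)$ gives the theorem.

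\textbf{Main obstacle.} The heart of the argument is Step 1: extracting $\mathcal{S}$ from the frozen flow and proving that its restriction to the traceless subspace is \emph{exactly} an idempotent of rank $|\A|$ — this, rather than any spectral estimate, is what produces the sharp constant $|\A|/(d^{2}-1)$. It rests on $K_{\A\A}$ being invertible so the interpolation is exact; for $2$-design inputs this needs $|\A|\lesssim d^{2}$ plus a concentration argument, and one should check this is built into the frozen-QNTK hypotheses. A secondary but genuine point is that obtaining the stated constant (rather than an extra factor $\tfrac12$) uses that the \emph{initialization} $V(\bmt_{0})$, not only the target $U$, behaves as a $2$-design; if only $U$ is twirled, the contribution $\langle V(\bmt_{0})^{\dag}OV(\bmt_{0}),\,\mathcal{T}\,V(\bmt_{0})^{\dag}OV(\bmt_{0})\rangle$ can only be controlled via $\mathcal{T}\preceq I$, i.e.\ by $\Tr(O^{2})$.
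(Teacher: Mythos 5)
Your Step 1 and Step 2 are correct and are essentially a coordinate-free repackaging of the paper's argument: the paper works entry-wise with the metric $g_{\hat b a}=\sum_{a'}\tilde K_{\hat b a'}[K^{-1}]_{a'a}$ and closes via $\sum_{a'}[K^{-1}]_{a'a}K_{aa'}=1$, which is exactly your statement that $\mathcal{S}_0$ is idempotent with $\Tr\mathcal{S}_0=\mathrm{rank}\,\mathcal{S}_0=|\A|$; the reduction over $\B$ via the pure-state $2$-design second moment is also the same. The genuine gap is in Step 3. To make the second moment of $M=V(\bmt_0)^{\dag}OV(\bmt_0)-U^{\dag}OU$ proportional to $\Pi_0$ you twirl over $V(\bmt_0)$ as well as $U$. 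This is doubly unjustified: the theorem's expectation runs only over the datasets and the target $U$, not over the circuit; and even if a $V$-average were added, $\mathcal{T}=2\mathcal{S}-\mathcal{S}^{\dag}\mathcal{S}$ is built from the tangent operators $G_d$, which are functions of the same circuit, so $M$ and $\mathcal{T}$ are correlated and the identity $\E\langle M,\mathcal{T}M\rangle=\Tr[\mathcal{T}\mathcal{M}]$ does not hold. Your own fallback, bounding the $P$-block by $\mathcal{T}\preceq I$ with $P:=V(\bmt_0)^{\dag}OV(\bmt_0)$, gives $\langle P,\mathcal{T}P\rangle\leq\|P\|_F^{2}=\Tr[O^{2}]$, which is of the same order as $\E\|M\|_F^{2}$ itself; it only yields $\E\L_{\B}(\infty)\gtrsim\tfrac12\,\E\L_{\B}(0)$ rather than the stated $\bigl(1-|\A|/(2^{2N}-1)\bigr)\E\L_{\B}(0)$, which for $|\A|\ll 2^{2N}$ is a genuinely weaker conclusion.

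The missing ingredient, and the step the paper's proof actually turns on (Eq.~(\ref{vqacr_geb_qntk_ntk_grr_exp_b_3_4})), is the deterministic identity
\begin{equation*}
\langle G_d,P\rangle \;=\; -\tfrac{i}{2}\,\Tr\!\bigl[\,[\tilde O_d,H_d]\,\tilde O_d\,\bigr] \;=\; 0 \qquad\text{for every } d,\qquad \tilde O_d:=W_{D:d+1}^{\dag}OW_{D:d+1},
\end{equation*}
i.e.\ the initial prediction operator $P$ is exactly orthogonal to every tangent direction $G_d$, hence to $\mathrm{Im}(\mathcal{S})$. This gives $\langle P,\mathcal{S}P\rangle=0$ and therefore $\langle P,\mathcal{T}P\rangle=-\|\mathcal{S}P\|_F^{2}\leq 0$ pointwise, with no randomness assumption on $V(\bmt_0)$ at all. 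The cross terms vanish because $\E_U[U^{\dag}OU]=0$, and only the $Q=U^{\dag}OU$ block needs the $2$-design twirl, where your Schur argument is valid since $U$ is independent of $\mathcal{S}$. One then gets $\E\langle M,\mathcal{T}M\rangle\leq\frac{\Tr[O^{2}]}{2^{2N}-1}\bigl(2\Tr\mathcal{S}_0-\|\mathcal{S}_0\|_F^{2}\bigr)\leq\frac{|\A|\,\Tr[O^{2}]}{2^{2N}-1}=\frac{|\A|}{2(2^{2N}-1)}\E\|M\|_F^{2}$, which recovers the stated constant (in fact with a factor-$2$ margin, because you retain the positive term $\|\mathcal{S}M\|_F^{2}$ that the paper drops). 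With that one identity inserted your argument closes; without it, Step 3 does not. The secondary caveat you raise about invertibility of $K_{\A\A}$ is shared with the paper, which assumes it implicitly.
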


We remark that Theorem~\ref{vqacr_geb_qntk_haar_input_theorem_main} provides a novel result about learning quantum dynamics from the no-free-lunch framework perspective. 
Compared with existing results~\cite{Wang2024} that mainly focus on the lower bound with an absolute formulation, we provide a bound with a relative formulation for the test loss with trained parameters. As shown in Eq.~(\ref{vqacr_geb_qntk_haar_input_theorem_eq_main}), the bound is in terms of the initial test loss times a constant, which is exponentially close to $1$ with polynomially scaled training datasets. Therefore, the loss on the test set with $2$-design states only has exponentially small improvements from its initial value, unless the size of the training set exceeds an exponentially large threshold. However, training QNN with exponentially large datasets requires tremendous computational resources. Such kind of datasets are also unavailable in practice for large-scale problems.

Theorems~\ref{vqacr_geb_qkm_haar_input_theorem_main} and \ref{vqacr_geb_qntk_haar_input_theorem_main} reveal one critical issue of quantum machine learning from the data perspective, which we name as \textit{the curse of random quantum data}. For input states distributed uniformly in the Hilbert space, the corresponding QML task exhibits poor generalization performance. This phenomenon is universal and is independent of certain QML method employed. One intuitive explanation of the curse of random quantum data is that uniformly distributed quantum states have inherently bad learnability, unless the dataset has exponentially large size. Specifically, quantum representation learning is hard for datasets composed of nearly orthogonal states without prior knowledge. However, practical QML datasets do not grow exponentially with increasing qubits in general. Therefore the distribution of input states given as quantum data or quantum encoding would play an essential role, which may rule out the potential of benign generalization. 

\begin{figure*}[t]
\centering
\subfloat[]{
\includegraphics[width=.48\linewidth]{./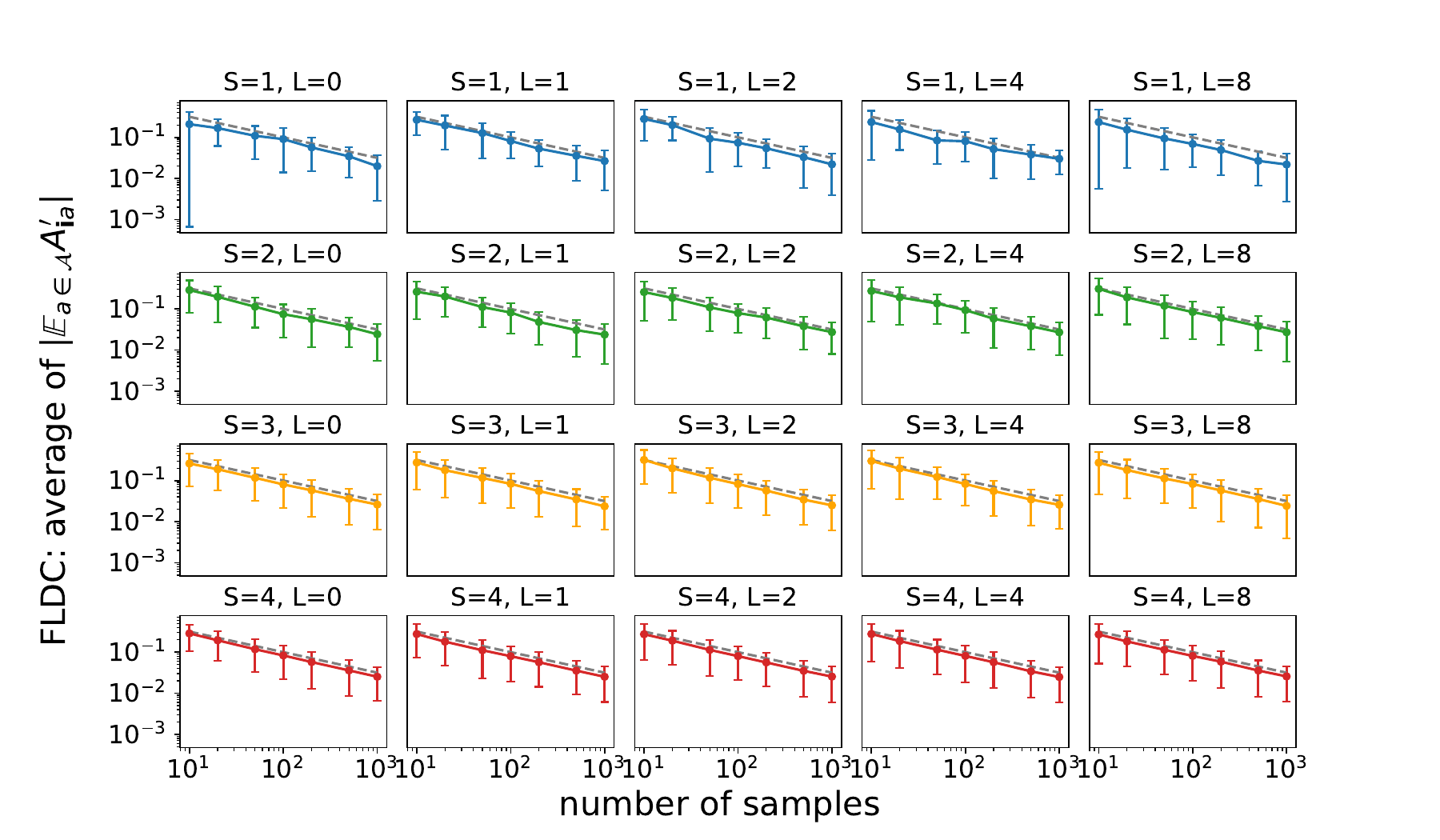}  
\label{vqacr_fig_Aai_fldcs_mean0}
}
\subfloat[]{
\includegraphics[width=.48\linewidth]{./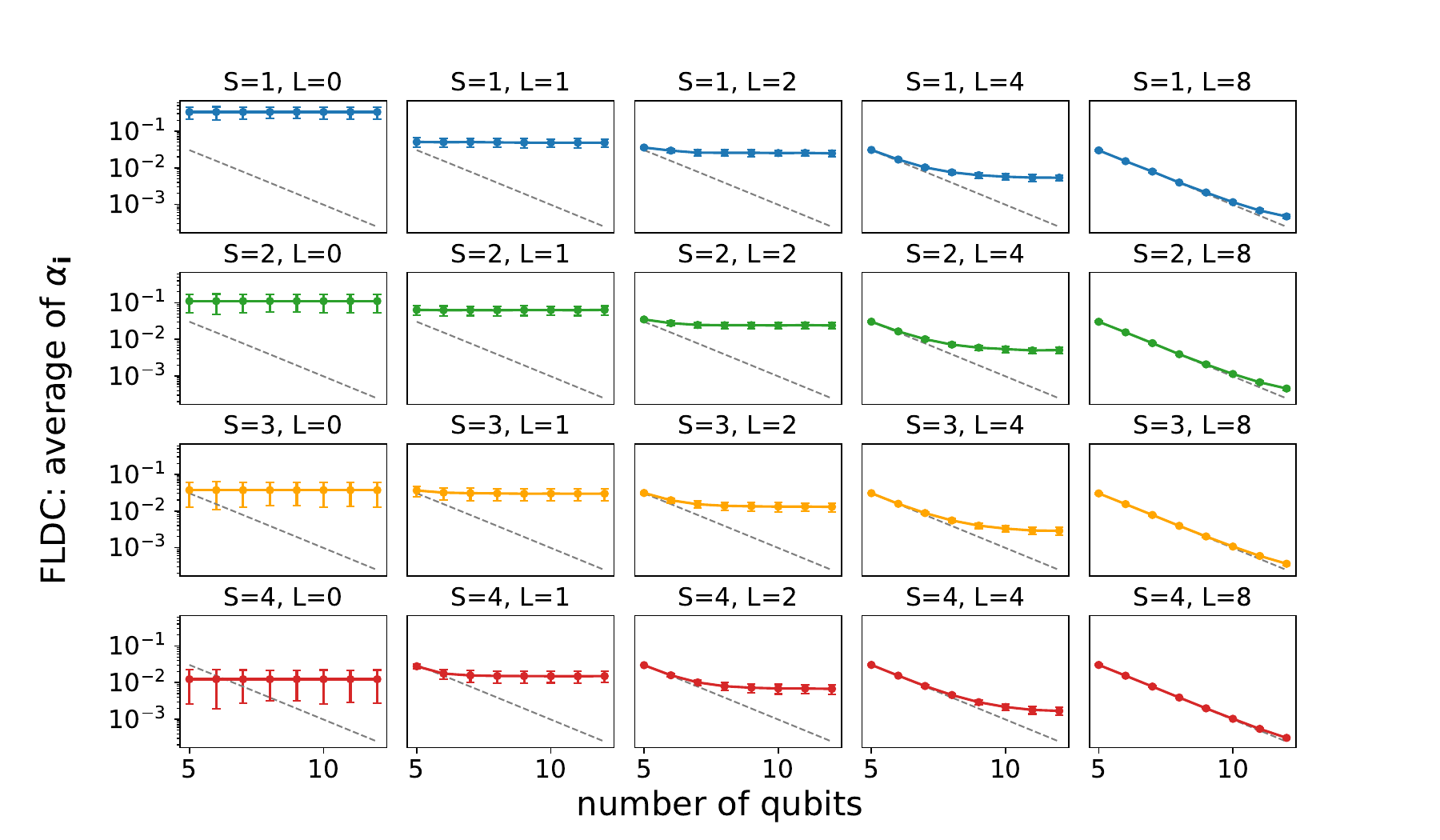}  
\label{vqacr_fig_Aai_fldcs_alpha}
}
\caption{Scaling of Pauli decomposition coefficients of finite local-depth circuit states over $S\in \{1,2,3,4\}$ and circuit layer $L\in \{0,1,2,4,8\}$. Figure~\ref{vqacr_fig_Aai_fldcs_mean0} shows the average of  $ |\E_{a\in\A} A'_{\bm{i}a} | $ in Eq.~(\ref{vqacr_main_Ttest_mean}) w.r.t. all $\|\bm{i}\|_1=S$ with 1D connectivity, where $|\A| \in \{10, 20, 50, 100, 20, 500, 1000\}$. The grey dashed line plots $|\A|^{-0.5}$. Figure~\ref{vqacr_fig_Aai_fldcs_alpha} shows the average of $\alpha_{\bm{i}}$ in Proposition~\ref{vqacr_prop_state_distribution_local} w.r.t. all $\|\bm{i}\|_1=S$ with 1D connectivity, where $N \in \{5, 6, 7, 8, 9, 10, 11, 12\}$, and each $\alpha_{\bm{i}}$ is calculated by taking the average over $|\A|=1000$ samples. The grey dashed line plots the Haar limit. The error bar in all figures shows the standard deviation over variables with different $\bm{i}$. }
\label{vqacr_fig_Aai_fldcs}
\end{figure*}

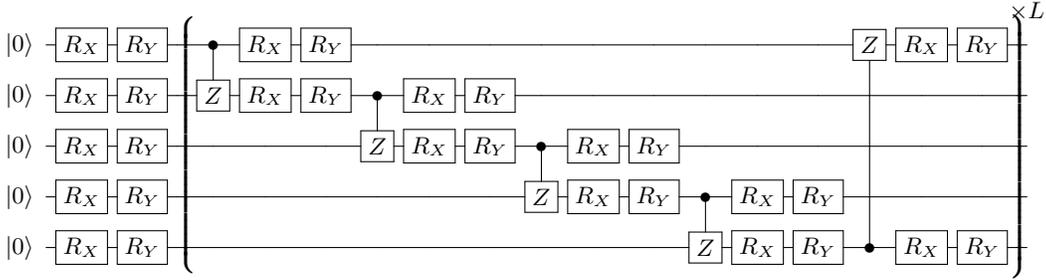
\begin{figure*}
\centerline{
\Qcircuit @C=0.4em @R=0.7em {
& & & & & & & & & & & & & & & & & & & & & \times L \\
\lstick{|0\>} & \gate{R_X} & \gate{R_Y} & \qw & \qw & \ctrl{1} & \gate{R_X} & \gate{R_Y} & \qw & \qw & \qw & \qw & \qw & \qw & \qw & \qw & \qw & \gate{Z} & \gate{R_X} & \gate{R_Y} & \qw & \qw \\
\lstick{|0\>} & \gate{R_X} & \gate{R_Y} & \qw & \qw & \gate{Z} & \gate{R_X} & \gate{R_Y} & \ctrl{1} & \gate{R_X} & \gate{R_Y} & \qw & \qw & \qw & \qw & \qw & \qw & \qw & \qw & \qw & \qw & \qw \\
\lstick{|0\>} & \gate{R_X} & \gate{R_Y} & \qw & \qw & \qw & \qw & \qw & \gate{Z} & \gate{R_X} & \gate{R_Y} & \ctrl{1} & \gate{R_X} & \gate{R_Y} & \qw & \qw & \qw & \qw & \qw & \qw & \qw & \qw \\
\lstick{|0\>} & \gate{R_X} & \gate{R_Y} & \qw & \qw & \qw & \qw & \qw & \qw & \qw & \qw & \gate{Z} & \gate{R_X} & \gate{R_Y} & \ctrl{1} & \gate{R_X} & \gate{R_Y} & \qw & \qw & \qw & \qw & \qw \\
\lstick{|0\>} & \gate{R_X} & \gate{R_Y} & \qw & \qw & \qw & \qw & \qw & \qw & \qw & \qw & \qw & \qw & \qw & \gate{Z} & \gate{R_X} & \gate{R_Y} & \ctrl{-4} & \gate{R_X} & \gate{R_Y} & \qw & \qw 
\gategroup{2}{6}{6}{6}{2.0em}{(}
\gategroup{2}{20}{6}{20}{1.0em}{)} 
}
}
\caption{Finite local-depth circuits with $L$ blocks for $N=5$.}
\label{vqacr_fldcs_example}
\end{figure*}

\subsection{Spectrum of QNTK with quantum data}

We have shown the \textit{learnability} issue from uniformly random quantum data, i.e., datasets with $2$-design states have bad generalization performances. Here we present another \textit{trainability} issue induced by uniformly random quantum data from the optimization perspective in QNN. 

\begin{theorem}\label{vqacr_lemma_qntk_spectrum_qntk_haar_main}
Let $\A\{(\rho_a, y_a)\}$ be the training datasets with independent $N$-qubit $2$-design states $\rho_a$. Denote by $z_a=\Tr[OV(\bmt)\rho_a V(\bmt)^\dag]$ the QNN with the observable $O$ and the variational circuit $V(\bmt)=\prod_{d=D}^{1} \exp[-iH_d \theta_d/2] W_d$, where the hamiltonian $H_d$ is Hermitian unitary and $W_d$ is the fixed unitary. Then there exists a constant $C>0$ such that
\begin{equation} \label{vqacr_lemma_qntk_spectrum_qntk_haar_main_eq}
\lmax \left[ K (\bmt) \right] \leq{} \frac{\|O\|_F^2}{4^{N}} \left( 1 + \frac{C|\A|}{2^N \delta} \sqrt{\ln |\A|} \right)
\end{equation}
with probability at least $1-\delta$.
\end{theorem}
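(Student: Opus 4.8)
The plan is to reduce the spectral bound to a concentration statement for the Gram matrix $K(\bmt)$ of independent random vectors whose mean is a scalar multiple of the identity. First I would differentiate $z_a=\Tr[OV(\bmt)\rho_aV(\bmt)^\dagger]$ and use $H_d^2=I$ to write $J_{da}=\Tr[G_d\rho_a]$ with $G_d=\tfrac{i}{2}V(\bmt)^\dagger[\tilde H_d,O]V(\bmt)$, where $\tilde H_d$ is $H_d$ conjugated by the part of the circuit acting after it. Each $G_d$ is traceless and Hermitian, and since $\tilde H_d$ is a Hermitian unitary ($\|\tilde H_d\|_{\mathrm{op}}=1$) one gets $\|G_d\|_F\le\|O\|_F$ and $\|G_d\|_{\mathrm{op}}\le\|O\|_{\mathrm{op}}$. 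Then $K(\bmt)=\tfrac1D J^TJ$ has entries $K_{ab}=\tfrac1D\sum_{d=1}^D\Tr[G_d\rho_a]\Tr[G_d\rho_b]$, i.e. $K(\bmt)$ is the Gram matrix of the vectors $\phi_a:=\tfrac1{\sqrt D}(\Tr[G_d\rho_a])_{d=1}^D\in\R^D$, which are independent across $a$ because the $\rho_a$ are.

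Next I would compute the mean of $K$ from the $2$-design moments $\E[\rho_a]=I/2^N$ and $\E[\rho_a\otimes\rho_a]=(I+\mathbb{S})/(2^N(2^N+1))$ with $\mathbb{S}$ the swap. Since each $G_d$ is traceless, $\E[K_{ab}]=0$ for $a\ne b$, while $\E[K_{aa}]=\mu:=\tfrac{1}{D\,2^N(2^N+1)}\sum_d\Tr[G_d^2]\le\|O\|_F^2/4^N$. Hence $\E[K(\bmt)]=\mu\,I_{|\A|}$ and $\lmax[\E K(\bmt)]=\mu\le\|O\|_F^2/4^N$; this accounts for the leading ``$1$'' inside the parenthesis.

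It then remains to control the fluctuation, via $\lmax[K]\le\mu+\|K-\mu I\|_{\mathrm{op}}$. I would estimate $\|K-\mu I\|_{\mathrm{op}}$ entrywise, e.g. by Gershgorin, $\|K-\mu I\|_{\mathrm{op}}\le\max_a\big(|K_{aa}-\mu|+\sum_{b\ne a}|K_{ab}|\big)$. The $2$-design identity gives the exact off-diagonal second moment $\E[K_{ab}^2]=\|M\|_F^2\big/\big(D^2(2^N(2^N+1))^2\big)$ with $M_{dd'}=\Tr[G_dG_{d'}]$ PSD and $\Tr M=\sum_d\Tr[G_d^2]\le D\|O\|_F^2$, while the diagonal entries obey $0\le K_{aa}\le\|O\|_{\mathrm{op}}^2$ deterministically with $\E[K_{aa}]=\mu$. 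Feeding these moment and range bounds into a tail estimate applied uniformly over the $\le|\A|^2$ entries (this is where the $\sqrt{\ln|\A|}$ and the $1/\delta$ enter), and then summing the $|\A|$ terms per row, should yield $\|K-\mu I\|_{\mathrm{op}}\le\tfrac{\|O\|_F^2}{4^N}\cdot\tfrac{C|\A|}{2^N\delta}\sqrt{\ln|\A|}$ on an event of probability at least $1-\delta$; adding $\mu$ and factoring out $\|O\|_F^2/4^N$ finishes. An equivalent route bundles the $\phi_a\phi_a^T$ into $\Phi\Phi^T=\sum_a\phi_a\phi_a^T$ (whose top eigenvalue equals $\lmax[K]$) and applies a matrix Bernstein inequality to this sum of independent rank-one matrices, processing the same ingredients through a single concentration bound.

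The hard part will be the last step, and specifically obtaining the extra $2^{-N}$ suppression of the fluctuation term: the crude bound $\|M\|_F^2\le(\Tr M)^2\le D^2\|O\|_F^4$ only gives fluctuations of the same order $\|O\|_F^2/4^N$ as $\mu$ itself. To do better one must exploit that the generators $\{G_d\}$ are spread out in Hilbert–Schmidt space — so that $M$ is close to a multiple of the identity rather than nearly rank one — which is where structural input about the ansatz (and, in the wide/over-parameterized regime that is this paper's setting, largeness of $D$) comes in; simultaneously, since a mere $2$-design only furnishes degree-$2$ moments of each $\rho_a$, quantities such as $\mathrm{Var}(K_{aa})$ cannot be evaluated directly and must be controlled by the deterministic operator-norm bounds on $G_d$. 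Matching the exact dependence on $|\A|$, $\delta$ and $\ln|\A|$, rather than settling for a worse power, is the part of the bookkeeping that requires the most care.
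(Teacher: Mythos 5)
Your first two steps are sound and coincide with the paper's starting point: the appendix's Lemma on the Jacobian form gives exactly your $J_{da}=\Tr[O_d\rho_a]$ with $O_d$ traceless Hermitian and $\Tr[O_d^2]\le\|O\|_F^2$, and your computation $\E K(\bmt)=\mu I$ with $\mu\le\|O\|_F^2/4^N$ correctly identifies the leading term. The gap you flag at the end, however, is genuine and is not closable by the route you sketch under the stated hypotheses. The obstruction is sharpest on the diagonal: $K_{aa}=\frac1D\sum_d\Tr[O_d\rho_a]^2$ is a function of the single random state $\rho_a$ with no averaging over $a$, a $2$-design furnishes only its first moment, and Markov then yields at best $\max_a K_{aa}\le\frac{\|O\|_F^2}{4^N}\cdot\frac{|\A|}{\delta}$ — missing the claimed $2^{-N}$ suppression of the correction. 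This is not merely a lossy estimate: for $D=1$, $O=Z_1$, $H_1=X_1$ one has $K_{ab}=\Tr[Y_1\rho_a]\Tr[Y_1\rho_b]$, so $\lmax[K]=\sum_a\Tr[Y_1\rho_a]^2\approx|\A|\,\|O\|_F^2/4^N$ with high probability, exceeding the theorem's right-hand side by a factor $\approx|\A|$. So without additional structural input (largeness of $D$ and spreading of the $O_d$ in Hilbert--Schmidt space, exactly as you anticipate), the bound in the form stated cannot be reached.

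The paper's own proof takes a different route that sidesteps the diagonal issue rather than confronting it. Writing everything in the Pauli basis, it factorizes $J=C_O^{\top}A$ (so $K=\frac1D A^{\top}C_OC_O^{\top}A$) and uses submultiplicativity, $\lmax[K]\le\frac1D\,\lmax[A^{\top}A]\,\lmax[C_OC_O^{\top}]$. The data-dependent factor $\lmax[A^{\top}A]$ is controlled by the covariance/isotropic-columns estimate (Lemma on independent isotropic columns), and here purity makes every column norm of $A$ deterministic, so only off-diagonal correlations are random and second ($2$-design) moments suffice; the circuit factor is bounded crudely by $\lmax[C_OC_O^{\top}]\le\|C_O\|_F^2\le 2^ND\|O\|_F^2$, with no spreading assumption on the generators. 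You should be aware, though, that when the normalizations of $A$ (which carries a $2^{-N}$ in its definition) and $C_O$ (which does not) are tracked consistently, this chain delivers $\lmax[K]\le\|O\|_F^2\bigl(1+\frac{C|\A|}{2^N\delta}\sqrt{\ln|\A|}\bigr)$; the extra $4^{-N}$ in the stated bound enters through the identity $J=\frac{1}{2^N}C_O^{\top}A$, which is off by a factor of $2^N$ relative to those definitions. In other words, the difficulty you isolate is real, the counterexample above shows the stated inequality fails for small $D$, and the paper's argument reaches the stated constant only via this normalization slip — so you should not expect to close your gap without adding hypotheses on $D$ and on the generator ensemble.
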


As shown in Theorem~\ref{vqacr_lemma_qntk_spectrum_qntk_haar_main}, the QNTK has exponentially small spectrum with increased qubits $N$. Since the convergence rate is controlled by the spectrum of QNTK, we should avoid using uniformly distributed input states in the training of QNN. One appropriate substitute is a biased state distribution. For convenience, we first show the state $2$-design distribution in the form of Pauli decomposition. Let $A\in\R^{4^N\times|\A|}$ be the matrix that stores the Pauli decomposition coefficients of states in $\A$, i.e.
\begin{align}
\rho_a = \sum_{\bm{i}\in\{0,1,2,3\}^{\otimes N}} \frac{1}{2^N} A_{\bm{i}a} \sigma_{\bm{i}}, \label{vqacr_main_pauli_decomposition_rho_a}
\end{align}
where the basis $\sigma_{\bm{i}}:=\sigma_{i_0} \otimes \cdots\otimes \sigma_{i_N}$ is the tensor product of Pauli matrices. When each $\rho_a$ is sampled independently from a $2$-design distribution,
\begin{align}
\mathop{\E}_{\text{2-design}} A_{\bm{i} a} ={}& 0 , \label{vqacr_main_EAA_2design_eq0} \\
\mathop{\E}_{\text{2-design}} A_{\bm{i} a} A_{\bm{j} b} ={}& \frac{1}{2^N+1} \delta_{\bm{i}\bm{j}} \delta_{ab} . \label{vqacr_main_EAA_2design_eq1}
\end{align}
for all $\bm{i},\bm{j} \neq \bm{0}$. 
We recap that the Pauli coefficient coefficients of $2$-designs states has the following two properties: 1) zero mean and covariance; 2) unbiased variance with the value $(2^N+1)^{-1}$. Therefore, each element of $A$ except $A_{\bm{0}a}$ has a small size around $\O(2^{-N/2})$, which technically contributes to exponentially small generalization improvement and QNTK. To overcome this issue, we introduce the state distribution in Proposition~\ref{vqacr_prop_state_distribution_local}. The proposed distribution is generalized from Eqs.~(\ref{vqacr_main_EAA_2design_eq0}) and (\ref{vqacr_main_EAA_2design_eq1}), such that coefficients have biased variances on different Pauli bases.

\begin{proposition}
\label{vqacr_prop_state_distribution_local}
States $\rho_a$ in the dataset $\A$ are generated independently from the same distribution, such that for all $\bm{i}, \bm{j} \neq \bm{0}$, 
\begin{align}
\E A_{\bm{i} a} ={}& 0 , \notag \\
\E A_{\bm{i} a} A_{\bm{j} b} ={}& \alpha_{\bm{i}} \delta_{\bm{i}\bm{j}} \delta_{ab} . \notag
\end{align} 
\end{proposition}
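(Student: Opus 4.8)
The statement is, in effect, a structural hypothesis on the data ensemble that generalizes Eqs.~(\ref{vqacr_main_EAA_2design_eq0})--(\ref{vqacr_main_EAA_2design_eq1}), so what needs to be established is that it is non-vacuous: the plan is to exhibit explicit, efficiently preparable ensembles whose Pauli coefficients $A_{\bm{i}a}=\Tr[\rho_a\sigma_{\bm{i}}]$ from Eq.~(\ref{vqacr_main_pauli_decomposition_rho_a}) have vanishing mean, vanishing cross-covariance, and a variance $\alpha_{\bm{i}}$ that genuinely depends on the Pauli label $\bm{i}$. I would first record the only constraint an admissible profile can satisfy: since $A_{\bm{0}a}\equiv 1$ and $\sum_{\bm{i}}A_{\bm{i}a}^2=2^N\Tr[\rho_a^2]$, every ensemble forces $\alpha_{\bm{i}}\ge 0$ and $\sum_{\bm{i}\neq\bm{0}}\alpha_{\bm{i}}=2^N\,\E\Tr[\rho_a^2]-1\le 2^N-1$, with equality for ensembles of pure states; the state $2$-design is the flat case $\alpha_{\bm{i}}\equiv(2^N+1)^{-1}$, so any non-flat realization already proves the proposition.

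\textbf{Product construction.} The cleanest realization is $\rho_a=\bigotimes_{k=1}^{N}\rho_a^{(k)}$ with all single-qubit factors drawn i.i.d.\ from a single-qubit state $2$-design. Then $A_{\bm{i}a}=\prod_{k}A_{i_k a}^{(k)}$, and from the single-qubit identities $\E A_i^{(k)}=0$, $\E A_i^{(k)}A_j^{(k)}=\tfrac13\delta_{ij}$ for $i,j\neq0$, and $A_0^{(k)}\equiv1$, together with independence across $k$ and across $a$, one obtains $\E A_{\bm{i}a}=0$ for $\bm{i}\neq\bm{0}$ and $\E A_{\bm{i}a}A_{\bm{j}b}=\delta_{ab}\,\delta_{\bm{i}\bm{j}}\,3^{-\|\bm{i}\|_1}$. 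This gives the proposition with the biased profile $\alpha_{\bm{i}}=3^{-\|\bm{i}\|_1}$, and the purity bound above is saturated, since $\sum_{\bm{i}\neq\bm{0}}3^{-\|\bm{i}\|_1}=\sum_{w\ge1}\binom{N}{w}3^{w}3^{-w}=2^N-1$.

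\textbf{Local-circuit construction.} For the practically relevant case I would take $\rho_a=V_a\ket{0^N}\bra{0^N}V_a^{\dagger}$ with $V_a$ a random finite local-depth circuit of the type in Fig.~\ref{vqacr_fldcs_example}, and argue as follows. Propagating Paulis through the single-qubit rotation layers in the Heisenberg picture, $U^{\dagger}\sigma_i U=\sum_{i'}R_{ii'}\sigma_{i'}$ with $R$ the (random) adjoint-representation matrix; if the final rotation on each output wire is an independent single-qubit $2$-design, then $\E R_{ii'}=\delta_{i0}\delta_{i'0}$ yields the zero-mean property, while the single-qubit Weingarten identity $\E[R_{ab}R_{cd}]\propto\delta_{ac}\delta_{bd}$ (for Pauli indices) forces every off-diagonal moment $\E A_{\bm{i}a}A_{\bm{j}a}$, $\bm{i}\neq\bm{j}$, to vanish and collapses the diagonal moment to a quantity depending on $\bm{i}$ only through $\mathrm{supp}(\bm{i})$. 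Combined with the bounded backward light cone of a depth-$L$ local circuit, this makes $\alpha_{\bm{i}}$ a function of that light cone alone, so the profile interpolates between the product value $3^{-\|\bm{i}\|_1}$ at $L=0$ and the Haar value $(2^N+1)^{-1}$ as $L$ grows — exactly the behaviour tracked numerically in Fig.~\ref{vqacr_fig_Aai_fldcs_alpha}.

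\textbf{Main obstacle.} The delicate step is securing \emph{exact} vanishing of the off-diagonal covariance for a realistic shallow circuit whose per-qubit rotation layers form only a two-parameter family ($R_X R_Y$ in Fig.~\ref{vqacr_fldcs_example}) rather than a full single-qubit $2$-design, since such layers do not reproduce the Haar projector on $R\otimes R$ and hence need not annihilate correlations between distinct Pauli strings of the same support. I would handle this either by enlarging each output-wire rotation layer to a genuine single-qubit $2$-design — for instance a uniformly random single-qubit Clifford — after which the $\delta_{ac}\delta_{bd}$ contraction cleanly kills all cross terms, or by reading the proposition as the defining property of the biased-variance class and treating the FLDC family as a concrete ensemble that realizes it up to the deviations quantified in Fig.~\ref{vqacr_fig_Aai_fldcs_alpha}. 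A minor residual task is evaluating $\E\Tr[\rho_a^2]$ along the light cone to fix $\sum_{\bm{i}}\alpha_{\bm{i}}$ exactly, which is automatic and equals $2^N-1$ whenever the $\rho_a$ are pure.
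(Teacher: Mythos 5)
You have correctly read Proposition~\ref{vqacr_prop_state_distribution_local} for what it is: not a theorem to be derived, but a structural hypothesis on the data ensemble, so the only thing to ``prove'' is that it is realizable with a genuinely non-flat profile $\alpha_{\bm{i}}$. The paper takes the same stance but supports it differently: it offers only numerical evidence for the FLDC ensemble (Figure~\ref{vqacr_fig_Aai_fldcs}) and one rigorous instance, the qubit embedding of Eq.~(\ref{vqacr_qubit_embedding_eq}), whose moment conditions are verified inside the proof of Lemma~\ref{vqacr_lemma_qntk_least_eigen_random_sphere} (the isotropy computation culminating in Eq.~(\ref{vqacr_lemma_qntk_least_eigen_random_sphere_iso_cond}), which amounts to $\alpha_{\bm{i}}=2^{-\|\bm{i}\|_1}$ on $\{X,Y\}$-supported strings and $\alpha_{\bm{i}}=0$ otherwise). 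Your product-of-single-qubit-$2$-designs construction with $\alpha_{\bm{i}}=3^{-\|\bm{i}\|_1}$ is a different but equally valid realization — arguably cleaner, since it covers all Pauli directions rather than only $\{X,Y\}$ — and your normalization constraint $\sum_{\bm{i}\neq\bm{0}}\alpha_{\bm{i}}=2^N\,\E\Tr[\rho_a^2]-1$ is a useful sanity check the paper does not state. On the FLDC side you are more candid than the paper: you correctly identify that the $R_XR_Y$ layers of Figure~\ref{vqacr_fldcs_example} do not form a single-qubit $2$-design, so exact vanishing of the off-diagonal covariance is not guaranteed for that circuit family; the paper silently sidesteps this by retreating to the numerics of Figure~\ref{vqacr_fig_Aai_fldcs} rather than claiming an exact derivation. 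Your proposed fix (append a random single-qubit Clifford per output wire, then use the adjoint-representation $2$-design contraction $\E[R_{ab}R_{cd}]=\tfrac13\delta_{ac}\delta_{bd}$) would close that gap rigorously. In short: your route is sound, exhibits a concrete biased profile the paper does not, and makes explicit a limitation of the FLDC example that the paper leaves implicit.
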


Here we present an example that lead to the coefficient distribution in Proposition~\ref{vqacr_prop_state_distribution_local}. We consider states from finite local-depth circuits (FLDC)~\cite{PhysRevLett.132.150603}, where the parameter in circuits are independently sampled from the uniform distribution on $[0,2\pi]$. Numerical evidence is provided in Figure~\ref{vqacr_fig_Aai_fldcs}. 
Denote by $\mu_{\bm{i}}$ the expectation of each Pauli decomposition coefficient $A_{\bm{i}a}$ over the distribution of $a$. Firstly, we demonstrate that the expectation $\mu_{\bm{i}}$ is zero. 
Due to the central limit theorem, the statistical term 
\begin{align}
\left| \mathop{\E}_{a \in \A} A'_{\bm{i}a} \right| :={}& \left| \frac{ \mathop{\E}_{a \in \A} A_{\bm{i}a}}{ \mathop{\rm SD}_{a \in \A} A_{\bm{i}a} } \right|  \lesssim |\mu_{\bm{i}}| + \frac{1}{\sqrt{|\A|}} |g| , \label{vqacr_main_Ttest_mean}
\end{align}
where $g$ is a standard normal distributed variable. As illustrated in Figure~\ref{vqacr_fig_Aai_fldcs_mean0}, the average of the statistical term in Eq.~(\ref{vqacr_main_Ttest_mean}) decays inversely with the square root of $|\A|$ when it grows. Therefore each $\mu_{\bm{i}}=0$ for $S \in \{1, 2, 3, 4\}$.
Next, we reveal the scaling of the variance $\alpha_{\bm{i}}$ in terms of the qubit number and the block layer. As shown in Figure~\ref{vqacr_fig_Aai_fldcs_alpha}, when the qubit number increases, 
the variance remains large when the number of circuit layer is small. With growing circuit layers, the variance converges to the Haar limit $1/(2^N+1)$ for $S\in \{1,2,3,4\}$. In summary, input states prepared from FLDC with finite circuit layers are expected to have large Pauli decomposition coefficients on local bases with small $S$.

Datasets with the property in Proposition~\ref{vqacr_prop_state_distribution_local} have the potential to induce large QNTK. Here we provide one example in Lemma~\ref{vqacr_lemma_qntk_least_eigen_gaussian_matrix_main}. The least eigenvalue of the QNTK at $\bmt=\bm{0}$ has a lower bound that is proportional to the least variance of coefficients corresponding to quantum gates qubits. Therefore, we expect that an appropriate state distribution should have variances at least $\O(1/\poly(N))$ for local Pauli bases.

\begin{lemma}
\label{vqacr_lemma_qntk_least_eigen_gaussian_matrix_main}
Let the observable be $O=\sum_{k=1}^{N} o_k Z_k$, where each $Z_k$ is the product of Pauli matrices that is $Z$ on the $k$-th qubit and $I$ on other qubits. Denote by $H_d$ the Hamiltonian of the $d$-th quantum gate in the circuit $V(\bmt)=\prod_{d=D}^{1} \exp[-iH_d \theta_d/2]$ that acts on qubits $\S_d \subseteq \{1,2,\cdots,N\}$. We assume that $H_d$ is randomly sampled from the set $\H:=\{X,Y,Z\}^{\otimes S}$. Let $\Q:=\{\S_d\}$. Let $\rho_{a}=\sum_{\bm{p} \in \{0,1,2,3\}^{\otimes N}} \frac{1}{2^N} A_{\bm{p}a} \sigma_{\bm{p}}$ be the Pauli decomposition of state $\rho_a$, where $A$ follows Proposition~\ref{vqacr_prop_state_distribution_local}. Denote by $A_{\Q}$ the matrix with dimension $((3^S-1)|\Q|, |\A|)$ formed by vertically concatenating matrices $A_{\Q,w}$ for $w \in \{0,1,\cdots,S-1\}$. 
Denote by ${\rm Var}_{\Q} := {\rm diag}(\alpha_{\bm{i}})$, $\alpha_{\max}:=\max_{\bm{i}} \alpha_{\bm{i}}$, and $\alpha_{\min}:=\min_{\bm{i}} \alpha_{\bm{i}}$, where $\bm{i}$ corresponds to the index of rows of $A_{\Q}$. Denote by $A'_{\Q} := {\rm Var}_{\Q}^{-1/2} A_{\Q}$, $ A'_{\Q,\max}:=\frac{1 }{\sqrt{(3^S-1)|\Q|}} \max_{a\in\A} \left\| [A'_{\Q}]_{\cdot,a} \right\|_2 $ and $ A'_{\Q,\min}:=\frac{1 }{\sqrt{(3^S-1)|\Q|}} \min_{a\in\A} \left\| [A'_{\Q}]_{\cdot,a} \right\|_2 $.
Then when $|\A|$ is sufficiently large, there exists two constants $C_1, C_2 >0$ such that for any $\delta, k \in (0,1)$, the following holds with probability at least $1-\delta$
\begin{align}
(1-k) \Delta_S \alpha_{\min} A_{\Q, \min}'^2 \leq{}& \lambda \left[K(\bm{0}) \right] \notag \\
\leq{}& (1+k) \|O\|_2^2 \alpha_{\max} A_{\Q, \max}'^2 , \label{vqacr_lemma_qntk_least_eigen_gaussian_matrix_main_eq_1}
\end{align}
for $D\geq D_0 \ln \frac{D_0}{\delta}$ with $D_0 = C_2 \frac{ \|O\|_2^4 \alpha_{\max}^2 A_{\Q, \max}'^2 }{k^2 \Delta_S^2 \alpha_{\min}^2 A_{\Q, \min}'^4} |\A| \ln \frac{|\A|}{\delta} $ and $S \geq{} \log_3 \left[ 1 + \frac{C_1^2  |\A|^2 \ln |A| }{ k^2 A_{\Q, \min}'^4 |\Q| \delta^2 } \right]$, 
where $\Delta_{S} := \min_{\bm{g}\in \{0,\pm 1\}^{\otimes N}, 1 \leq \|\bm{g}\|_1 \leq S} (\bm{g}^T \bm{o} )^2 $.
\end{lemma}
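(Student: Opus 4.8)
The plan is to evaluate $K(\bm 0)$ in closed form, reduce it to a random quadratic form in the Pauli-coefficient matrix $A$, and then control separately the two independent sources of randomness — the circuit Hamiltonians $\{H_d\}$ and the dataset $\{\rho_a\}$ — by matrix concentration. \emph{Step 1 (the Jacobian at the origin).} Since $V(\bm 0)=I$, one has $J_{da}(\bm 0)=\frac{i}{2}\Tr[[H_d,O]\rho_a]$. Writing $O=\sum_k o_k Z_k$ and using that $[H_d,Z_k]=0$ unless $k\in\S_d$ and the Pauli of $H_d$ on qubit $k$ is $X$ or $Y$, in which case $[H_d,Z_k]=2i\,\sigma((H_d)_k)\,\sigma_{\mathrm{flip}_k(H_d)}$ with $\mathrm{flip}_k$ the $X\!\leftrightarrow\!Y$ swap on qubit $k$ and $\sigma(X)=-1,\sigma(Y)=+1$, together with $\Tr[\sigma_{\bm p}\rho_a]=A_{\bm p a}$, gives $J_{da}(\bm 0)=-\sum_{k\in\S_d:(H_d)_k\neq Z}o_k\,\sigma((H_d)_k)\,A_{\mathrm{flip}_k(H_d),a}$. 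Hence $J=M_\Q A_\Q$, where $A_\Q$ is exactly the advertised $((3^S-1)|\Q|)\times|\A|$ submatrix of $A$ (its rows the Pauli strings that are non-identity on some $\S\in\Q$ and not all-$Z$, grouped into the $A_{\Q,w}$ blocks by the commuted qubit) and $M_\Q\in\R^{D\times(3^S-1)|\Q|}$ has $d$-th row $m_d$ with at most $S$ nonzero entries of size $|o_k|$, determined by $H_d$. Therefore
\begin{equation}
K(\bm 0)=\frac{1}{D}J^T J=\frac{1}{D}\sum_{d=1}^{D}v_d v_d^T,\qquad v_d:=A_\Q^T m_d\in\R^{|\A|}.
\end{equation}

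\emph{Step 2 (concentration over $\{H_d\}$).} Conditioned on the dataset, the $v_d v_d^T$ are independent rank-one PSD matrices with $\|v_d v_d^T\|=\|v_d\|_2^2\le\|\bm o\|_1^2\max_{\bm p}\|[A_\Q]_{\bm p,\cdot}\|_2^2=\O(\|O\|_2^2\alpha_{\max}|\A|)$, since $\|O\|_2=\|\bm o\|_1$ and each row of $A_\Q$ has squared norm concentrated at $|\A|\alpha_{\bm p}$ (a sum of $|\A|$ i.i.d.\ bounded terms). Matrix Bernstein then shows that with probability at least $1-\delta$ every eigenvalue of $K(\bm 0)$ lies within a factor $1\pm k$ of the corresponding eigenvalue of $\E_{\{H\}}K(\bm 0)=A_\Q^T\widehat G A_\Q$, where $\widehat G:=\E_{\{H\}}[\frac1D M_\Q^T M_\Q]$, provided $D\ge D_0\ln(D_0/\delta)$; the variance term of matrix Bernstein — numerator $\max_d\|v_d\|_2^2\cdot\lmax[A_\Q^T\widehat G A_\Q]$, denominator $k^2\,\lmin[A_\Q^T\widehat G A_\Q]^2$ — is what produces the stated $D_0$ once the spectral bounds of Step 3 are substituted in.

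\emph{Step 3 (expected kernel, concentration over $\{\rho_a\}$).} A direct computation gives $\widehat G=\widehat G_{\mathrm{diag}}+\widehat G_{\mathrm{off}}$ with $\widehat G_{\mathrm{diag}}=\mathrm{diag}(c_{\bm p})$, $c_{\bm p}=\frac{1}{|\Q|3^S}\sum_{k:(\bm p)_k\in\{X,Y\}}o_k^2$, so that $\frac{\Delta_S}{|\Q|3^S}\le c_{\bm p}\le\frac{\|O\|_2^2}{|\Q|3^S}$ (lower bound from $\Delta_S\le\min_k o_k^2$ and the fact that every row of $A_\Q$ has at least one $X$ or $Y$; upper bound from $\|\bm o\|_2^2\le\|\bm o\|_1^2=\|O\|_2^2$), while $\widehat G_{\mathrm{off}}$ connects only Pauli strings differing at exactly two $\{X,Y\}$-coordinates, has entries of size $\O(1/3^S)$, and at most $\binom{S}{2}$ nonzeros per row. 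Using $A_\Q={\rm Var}_\Q^{1/2}A'_\Q$ we get $A_\Q^T\widehat G_{\mathrm{diag}}A_\Q=(A'_\Q)^T{\rm Var}_\Q^{1/2}\mathrm{diag}(c_{\bm p}){\rm Var}_\Q^{1/2}A'_\Q$, whose eigenvalues are pinched between $(\min_{\bm p}c_{\bm p})\alpha_{\min}\lmin[(A'_\Q)^T A'_\Q]$ and $(\max_{\bm p}c_{\bm p})\alpha_{\max}\lmax[(A'_\Q)^T A'_\Q]$. Finally $(A'_\Q)^T A'_\Q$ ($|\A|\times|\A|$) is nearly diagonal: its diagonal entries $\|[A'_\Q]_{\cdot,a}\|_2^2$ lie in $[(3^S-1)|\Q|A_{\Q,\min}'^2,(3^S-1)|\Q|A_{\Q,\max}'^2]$ by definition, while by Proposition~\ref{vqacr_prop_state_distribution_local} (zero mean, diagonal covariance, boundedness of Pauli coefficients) its off-diagonal entries are mean-zero with standard deviation $\O(\sqrt{(3^S-1)|\Q|})$, so Gershgorin together with a per-entry tail bound and a union bound over the $\binom{|\A|}{2}$ pairs yields $(1-k)(3^S-1)|\Q|A_{\Q,\min}'^2\le\lambda[(A'_\Q)^T A'_\Q]\le(1+k)(3^S-1)|\Q|A_{\Q,\max}'^2$ precisely when $(3^S-1)|\Q|\gtrsim|\A|^2\ln|\A|/(k^2\delta^2 A_{\Q,\min}'^4)$, i.e.\ when $S\ge\log_3[1+C_1^2|\A|^2\ln|\A|/(k^2 A_{\Q,\min}'^4|\Q|\delta^2)]$. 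The same dataset-concentration argument — using now that $\E_{\{\rho\}}[A_\Q^T\widehat G_{\mathrm{off}}A_\Q]=0$ by diagonal covariance and that $\widehat G_{\mathrm{off}}$ is suppressed by $1/3^S$ — bounds $\|A_\Q^T\widehat G_{\mathrm{off}}A_\Q\|$ by $o(\Delta_S\alpha_{\min}A_{\Q,\min}'^2)$ in this regime. Assembling (and using $(3^S-1)/3^S=1-3^{-S}$), $(1-k)\Delta_S\alpha_{\min}A_{\Q,\min}'^2\le\lambda[A_\Q^T\widehat G A_\Q]\le(1+k)\|O\|_2^2\alpha_{\max}A_{\Q,\max}'^2$; feeding this through Step 2 and renaming $k$ gives Eq.~(\ref{vqacr_lemma_qntk_least_eigen_gaussian_matrix_main_eq_1}).

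\emph{Main obstacle.} The delicate estimate is the dataset concentration of Step 3, and within it the bound on $\|A_\Q^T\widehat G_{\mathrm{off}}A_\Q\|$. The deterministic estimate $\|A_\Q^T\widehat G_{\mathrm{off}}A_\Q\|\le\lmax[A_\Q^T A_\Q]\,\|\widehat G_{\mathrm{off}}\|$ is far too lossy; one must genuinely use that this matrix has zero dataset-mean — which is exactly where the diagonal-covariance hypothesis of Proposition~\ref{vqacr_prop_state_distribution_local} is needed — and that its fluctuations stay below the target accuracy once $S$ is logarithmic in $|\A|$, despite there being exponentially many ($\sim3^S$) Pauli strings. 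Exploiting the sparse, small-norm structure of $\widehat G_{\mathrm{off}}$ finely enough to absorb the auxiliary factors (powers of $S$, of $\|\bm o\|_\infty$, of $\alpha_{\max}/\alpha_{\min}$) into the universal constant $C_1$ is the heart of the argument; by contrast, the matrix-Bernstein bookkeeping that yields $D_0$ — recognizing that its variance term, which carries the condition number of the expected kernel, dominates the uniform-bound term — is routine.
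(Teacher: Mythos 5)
Your three-step architecture (write $J(\bm{0})$ as a linear image of the Pauli-coefficient submatrix $A_{\Q}$, concentrate over the random Hamiltonians to reduce to the expected kernel, then control the dataset randomness via the spectrum of $A_{\Q}^T A_{\Q}$) is the same as the paper's, and Steps 1 and 2 are sound — the paper uses Vershynin's covariance-estimation theorem where you invoke matrix Bernstein, which is an interchangeable choice. The genuine gap is in Step 3, in your diagonal/off-diagonal split of $\widehat{G}$. The off-diagonal part is \emph{not} a perturbation that dataset randomness renders negligible relative to the target: for each gate the middle matrix is exactly $\sum_{k,k'\in\S_d}o_k o_{k'}C_kC_{k'}^T=\big(\sum_k o_kC_k\big)\big(\sum_k o_kC_k\big)^T$ with the $C_k$ commuting real antisymmetric involutions, so its spectrum is exactly $\{(\bm{g}^T\bm{o}_d)^2:\bm{g}\in\{\pm1\}^{S}\}$. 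Take $S=2$, $o_1=o_2$: the least eigenvalue is $0$, while your diagonal part alone predicts $o_1^2$. The off-diagonal block is precisely what pulls the spectrum down from $\min_k o_k^2$ to $\Delta_S$, so any argument that shows $\|A_{\Q}^T\widehat{G}_{\mathrm{off}}A_{\Q}\|=o(\Delta_S\alpha_{\min}A_{\Q,\min}'^2)$ would be proving a lower bound the lemma does not claim, and the price you pay for it — a threshold on $S$ involving $\alpha_{\max}/\alpha_{\min}$, $\|\bm{o}\|_\infty^2/\Delta_S$ and powers of $S$ — cannot be absorbed into the universal constant $C_1$, since these are problem parameters, not absolute constants. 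The paper sidesteps the entire issue: the exact spectral identity gives the deterministic operator sandwich $\frac{\Delta_S}{3^S|\Q|}A_{\Q}^TA_{\Q}\preccurlyeq K_{\infty}(\bm{0})\preccurlyeq\frac{\|O\|_2^2}{3^S|\Q|}A_{\Q}^TA_{\Q}$ (its Lemma on bounds of $K_\infty(\bm 0)$), after which only the spectrum of $A_{\Q}^TA_{\Q}$ needs probabilistic control.

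A secondary gap: your control of $(A'_{\Q})^TA'_{\Q}$ by ``Gershgorin together with a per-entry tail bound and a union bound over the $\binom{|\A|}{2}$ pairs'' assumes tail bounds that Proposition~\ref{vqacr_prop_state_distribution_local} does not supply — it fixes only means and covariances of the coefficients. With Chebyshev plus a union bound over $|\A|^2$ pairs you would need $(3^S-1)|\Q|\gtrsim|\A|^4/(\delta A_{\Q,\min}'^4)$, which misses the stated threshold $S\geq\log_3[1+C_1^2|\A|^2\ln|\A|/(k^2A_{\Q,\min}'^4|\Q|\delta^2)]$. The paper instead applies the independent-isotropic-columns theorem (its Lemma derived from Vershynin's Theorem 5.62) to the normalized matrix $R=\sqrt{(3^S-1)|\Q|}\,\mathrm{Var}_{\Q}^{-1/2}A_{\Q}D_{\Q}^{-1}$, which requires only the fourth-moment quantity $\E\max_a\sum_{b\neq a}(R_{\cdot,a}^TR_{\cdot,b})^2\leq|\A|(|\A|-1)/A_{\Q,\min}'^4$ followed by Markov's inequality, and this is what produces the $|\A|^2\ln|\A|/\delta^2$ scaling in the $S$ condition. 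If you replace your Step 3 with the exact spectral sandwich and the isotropic-columns lemma, the rest of your outline goes through.
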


We remark that Proposition~\ref{vqacr_prop_state_distribution_local} could be applied for designing quantum encoding strategies. One example that Proposition~\ref{vqacr_prop_state_distribution_local} holds is the qubit embedding~\cite{lloyd2020quantum} for random vectors. Here, we encode the vector $\bm{x} \in \R^N$ into the state 
\begin{equation}\label{vqacr_qubit_embedding_eq}
|\psi(\bm{x})\>	 = \bigotimes_{n=1}^{N} e^{-iZ\pi x_n/2} \frac{|0\>+|1\>}{\sqrt{2}} 
\end{equation}
and each entry of $\bm{x}$ is normalized in $[-1,1]$. 
An example of QNNs with qubit embedding is presented in Lemma~\ref{vqacr_lemma_qntk_least_eigen_random_sphere_main}, where the least eigenvalue of the QNTK at $\bmt=\bm{0}$ has the lower bound around $\O(2^{-S})$.   

\begin{lemma}\label{vqacr_lemma_qntk_least_eigen_random_sphere_main}
We follow conditions and notations in Lemma~\ref{vqacr_lemma_qntk_least_eigen_gaussian_matrix_main} except the Hamiltonian set $\H := \{X,Y\}^{\otimes S}$. Let $\rho_a$ be the density matrix of the qubit embedding of the vector $\bm{x}_{a}$ via Eq.~(\ref{vqacr_qubit_embedding_eq}), where each entry of $\bm{x}_{a}$ is an independent random variable distributed uniformly in $[-1,1]$. Then when $|\A|$ is sufficiently large, there exists two constants $C_1, C_2 >0$ such that for any $\delta, k \in (0,1)$, the following holds with probability at least $1-\delta$
\begin{equation}\label{vqacr_lemma_qntk_least_eigen_random_sphere_main_eq}
(1-k)\frac{\Delta_S}{ 2^S} \leq{} \lambda \left[K(\bm{0}) \right] \leq{} (1+k)\frac{\|O\|_2^2}{2^S},
\end{equation}
for $D\geq D_0 \ln \frac{D_0}{\delta}$ with $D_0 = C_2 \frac{|\A| \|O\|_2^4 }{k^2 \Delta_S^2 } \ln \frac{|\A|}{\delta} $ and $S \geq 2+ \log_2 \frac{C_1^2 |\A|^2 \ln |\A|}{k^2 |\Q|\delta^2}$, 
where $\Delta_{S} := \min_{\bm{g}\in \{0,\pm 1\}^{\otimes N}, 1 \leq \|\bm{g}\|_1 \leq S} (\bm{g}^T \bm{o} )^2 $.
\end{lemma}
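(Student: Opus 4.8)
The plan is to recognize the qubit-embedding dataset as a concrete instance of the distribution in Proposition~\ref{vqacr_prop_state_distribution_local} and then apply Lemma~\ref{vqacr_lemma_qntk_least_eigen_gaussian_matrix_main}; the only real work is computing the instance-specific constants $\alpha_{\bm{i}}$, $A'_{\Q,\min}$, $A'_{\Q,\max}$ and checking that they take the clean forms that collapse (\ref{vqacr_lemma_qntk_least_eigen_gaussian_matrix_main_eq_1}) into (\ref{vqacr_lemma_qntk_least_eigen_random_sphere_main_eq}). First I would compute the Pauli decomposition of the embedded state: since $\frac{|0\>+|1\>}{\sqrt2}$ is the $+1$ eigenvector of $X$ and conjugation by $e^{-iZ\pi x/2}$ rotates it inside the $XY$-plane, one finds $\rho_a = \bigotimes_{n=1}^{N}\frac{1}{2}\left(I + \cos(\pi x_{a,n})X_n + \sin(\pi x_{a,n})Y_n\right)$, so that $A_{\bm{p}a} = \prod_{n:\,p_n=1}\cos(\pi x_{a,n})\prod_{n:\,p_n=2}\sin(\pi x_{a,n})$ and $A_{\bm{p}a}=0$ as soon as $\bm{p}$ contains a $Z$.

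Next I would verify the hypotheses of Proposition~\ref{vqacr_prop_state_distribution_local}. Because the entries of $\bm{x}_a$ are independent across $n$ and the samples are independent across $a$, all expectations factorize, so the one-dimensional identities $\int_{-1}^{1}\cos(\pi x)\,\frac{dx}{2}=\int_{-1}^{1}\sin(\pi x)\,\frac{dx}{2}=\int_{-1}^{1}\cos(\pi x)\sin(\pi x)\,\frac{dx}{2}=0$ and $\int_{-1}^{1}\cos^2(\pi x)\,\frac{dx}{2}=\int_{-1}^{1}\sin^2(\pi x)\,\frac{dx}{2}=\tfrac12$ give $\E A_{\bm{i}a}=0$ and $\E A_{\bm{i}a}A_{\bm{j}b}=\alpha_{\bm{i}}\delta_{\bm{i}\bm{j}}\delta_{ab}$ with $\alpha_{\bm{i}}=2^{-\|\bm{i}\|_1}$ when $\bm{i}$ is $\{X,Y\}$-valued and $\alpha_{\bm{i}}=0$ otherwise. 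Hence Lemma~\ref{vqacr_lemma_qntk_least_eigen_gaussian_matrix_main} applies to this dataset.

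The third step evaluates the constants that enter Lemma~\ref{vqacr_lemma_qntk_least_eigen_gaussian_matrix_main}. At $\bm{\theta}=\bm{0}$ we have $V(\bm{0})=I$ and $\partial_{\theta_d}z_a|_{\bm{0}}=\tfrac{i}{2}\Tr\big[[H_d,O]\rho_a\big]$; since $\H=\{X,Y\}^{\otimes S}$, each $H_d$ anticommutes with every $Z_k$ for $k\in\S_d$ (hence $[H_d,Z_k]=2H_dZ_k$, a weight-$S$ $\{X,Y\}$-valued Pauli string on $\S_d$ up to a phase $\pm i$) and commutes with $Z_k$ for $k\notin\S_d$. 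Thus every Pauli string contributing to $J(\bm{0})$ is $\{X,Y\}$-valued of weight exactly $S$; restricting $A_\Q$ to these indices (the others carry vanishing coefficients) gives ${\rm Var}_\Q=2^{-S}I$ and $\alpha_{\min}=\alpha_{\max}=2^{-S}$. For the normalized column norms $A'_{\Q,\min},A'_{\Q,\max}$ I would exploit the single-qubit identity $\cos^2(\pi x_{a,n})+\sin^2(\pi x_{a,n})=1$, which fixes $\sum_{\bm{i}}A_{\bm{i}a}^2$ over a complete $\{X,Y\}$-block on $\S_d$ to be exactly $1$ and leaves only its distribution among individual strings fluctuating; together with a concentration estimate and a union bound over the $|\A|$ columns (each $A_{\bm{i}a}^2\in[0,1]$ has mean exactly $\alpha_{\bm{i}}$) this yields $A'_{\Q,\min},A'_{\Q,\max}\in[1-k',1+k']$ with probability $1-\delta/2$ for a suitably small $k'$. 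Substituting $\alpha_{\min}=\alpha_{\max}=2^{-S}$ and $A'_{\Q,\min},A'_{\Q,\max}\approx 1$ into (\ref{vqacr_lemma_qntk_least_eigen_gaussian_matrix_main_eq_1}) gives (\ref{vqacr_lemma_qntk_least_eigen_random_sphere_main_eq}); plugging the same values into $D_0$ (the $\alpha$'s cancel, leaving $D_0=C_2\frac{|\A|\|O\|_2^4}{k^2\Delta_S^2}\ln\frac{|\A|}{\delta}$) and into the $S$-threshold of Lemma~\ref{vqacr_lemma_qntk_least_eigen_gaussian_matrix_main} (using that $(A'_{\Q,\min})^{-4}\to1$ and $\log_3(1+x)\le 2+\log_2 x$ for $x\ge1$) reproduces the stated conditions, after splitting $\delta$ between the two events and renaming constants.

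The main obstacle is exactly this passage from the abstract bound of Lemma~\ref{vqacr_lemma_qntk_least_eigen_gaussian_matrix_main} to the explicit one: one must pin down which Pauli strings populate $A_\Q$ for the $\{X,Y\}^{\otimes S}$ family, show that the $\cos^2+\sin^2=1$ identity forces the normalized column norms to concentrate uniformly over all $|\A|$ samples within a window the $(1\pm k)$ in (\ref{vqacr_lemma_qntk_least_eigen_gaussian_matrix_main_eq_1}) can absorb, and confirm that the $S$-condition inherited from Lemma~\ref{vqacr_lemma_qntk_least_eigen_gaussian_matrix_main} (which governs collisions among the random strings $H_dZ_k$) together with this new concentration requirement is still implied by $S\ge 2+\log_2(\cdots)$. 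The remaining pieces — the single-qubit computation, the check of Proposition~\ref{vqacr_prop_state_distribution_local}, and the arithmetic of constants — are routine.
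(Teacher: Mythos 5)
Your computations are all correct and they coincide with the ones the paper actually performs: the product form $\rho_a=\bigotimes_n\frac{1}{2}(I+\cos(\pi x_{a,n})X+\sin(\pi x_{a,n})Y)$, the vanishing means and the variance $2^{-S}$ of the surviving coefficients, the identification of the contributing Pauli strings as the weight-$S$ $\{X,Y\}$-valued strings on the $\S_d$'s, and the observation that $\cos^2+\sin^2=1$ pins the column norms (in fact it pins them \emph{exactly}: $\|[A'_{\Q}]_{\cdot,a}\|_2^2=2^S|\Q|$ for every $a$, so $A'_{\Q,\min}=A'_{\Q,\max}=1$ deterministically and the concentration estimate and union bound you invoke for the column norms are unnecessary). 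The one place your packaging differs from what is literally available is the claim that Lemma~\ref{vqacr_lemma_qntk_least_eigen_gaussian_matrix_main} can be applied as a black box. That lemma's hypotheses fix $\H=\{X,Y,Z\}^{\otimes S}$, and its matrix $A_{\Q}$ has $(3^S-1)|\Q|$ rows including all strings with $Z$-components; for the qubit embedding those rows are identically zero, so $\alpha_{\min}=0$ over that index set and the lemma's lower bound is vacuous. What you actually need, and what the paper proves as a separate parallel chain, is the $\{X,Y\}^{\otimes S}$ analogue: a sandwich bound $\frac{1}{2^S|\Q|}\Delta_S A_{\Q}^TA_{\Q}\preccurlyeq K_\infty(\bm{0})\preccurlyeq\frac{1}{2^S|\Q|}\|O\|_2^2A_{\Q}^TA_{\Q}$ with $A_{\Q}$ now of size $2^S|\Q|\times|\A|$, followed by the singular-value concentration for that matrix and the finite-$D$ covariance estimate. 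Your final numbers come out right only because the row count $2^S|\Q|$ cancels the $1/(2^S|\Q|)$ prefactor exactly as $(3^S-1)|\Q|$ cancels $1/(3^S|\Q|)$ in the other case, and because the $S$-threshold $2^{S-2}\geq C_1^2|\A|^2\ln|\A|/(k^2|\Q|\delta^2)$ is precisely the condition that makes the $2^S|\Q|$-row random-matrix bound nontrivial --- it is not obtained by massaging the $\log_3$ condition of Lemma~\ref{vqacr_lemma_qntk_least_eigen_gaussian_matrix_main}.

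So the verdict is: same underlying argument, different (more economical) packaging, with one bookkeeping step left implicit. To make the reduction rigorous you would either state and prove the $\{X,Y\}^{\otimes S}$ variant of the sandwich bound (the paper's Lemma~\ref{vqacr_lemma_qntk_lower_bound_xy}) and rerun the two concentration steps with $3^S\to 2^S$, or reformulate Lemma~\ref{vqacr_lemma_qntk_least_eigen_gaussian_matrix_main} so that the Hamiltonian set and the index set of $A_{\Q}$ are parameters. Everything else --- the single-qubit integrals, the isotropy check, the cancellation of the $\alpha$'s in $D_0$ --- is correct and matches the paper.
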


Both bounds in Lemma~\ref{vqacr_lemma_qntk_least_eigen_gaussian_matrix_main} and Lemma~\ref{vqacr_lemma_qntk_least_eigen_random_sphere_main} holds with mild conditions on the number of gates in the circuit and the number of qubits involved for each gate, i.e. $D=\tilde{\O}(|\A| \ln^2 |\A|)$ and $S=\tilde{\O}(\ln |\A|)$. 
Derivations of Theorem~\ref{vqacr_lemma_qntk_spectrum_qntk_haar_main} and Theorems~\ref{vqacr_lemma_qntk_least_eigen_gaussian_matrix_main} and \ref{vqacr_lemma_qntk_least_eigen_random_sphere_main} are provided in Appendix~\ref{vqacr_qntk_app_lmin_qntk}.

\begin{figure*}[t]
\centering
\subfloat[]{
\includegraphics[width=.48\linewidth]{./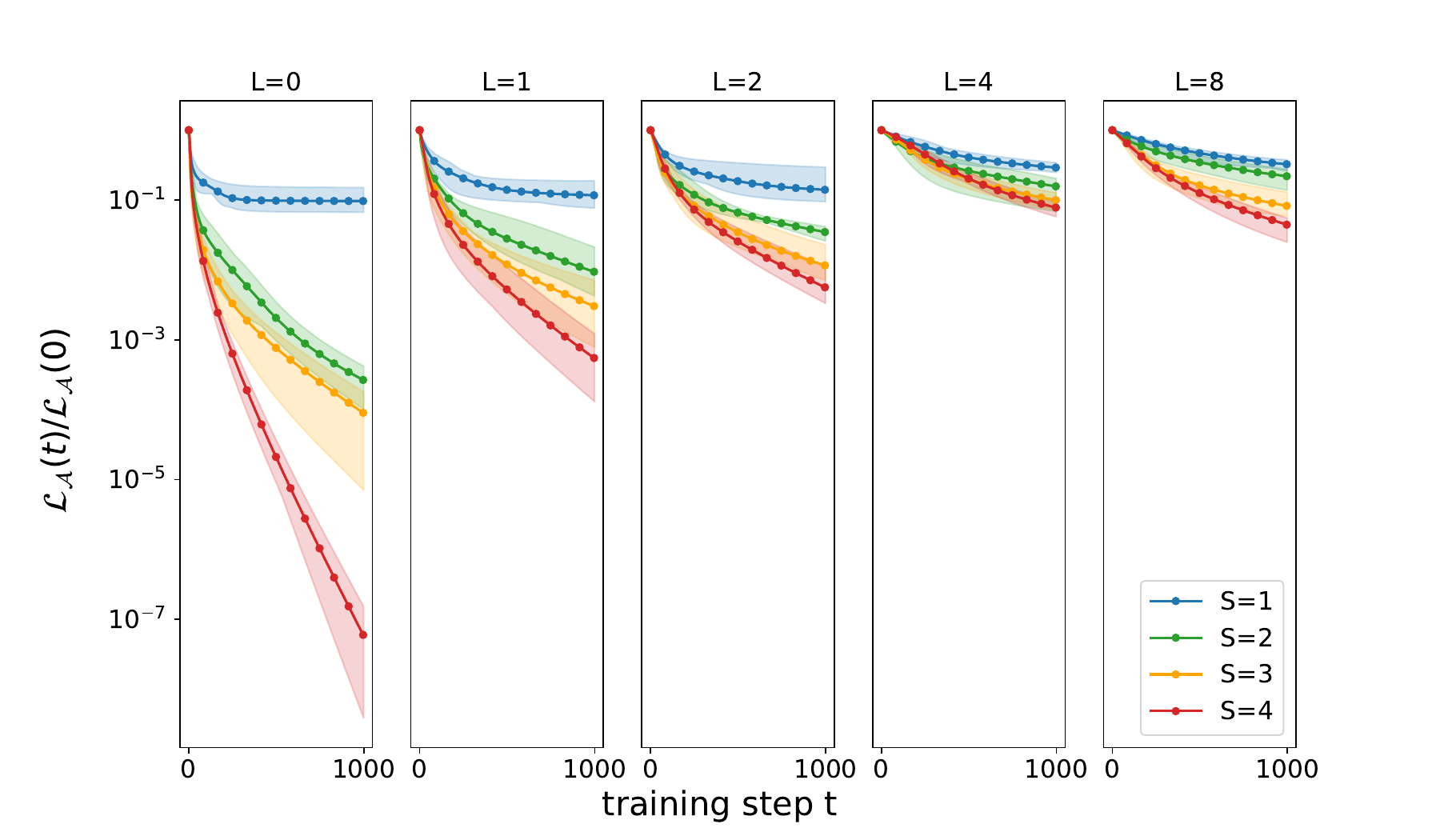}  
\label{vqacr_fig_qdl_fldc_LA}
}
\subfloat[]{
\includegraphics[width=.48\linewidth]{./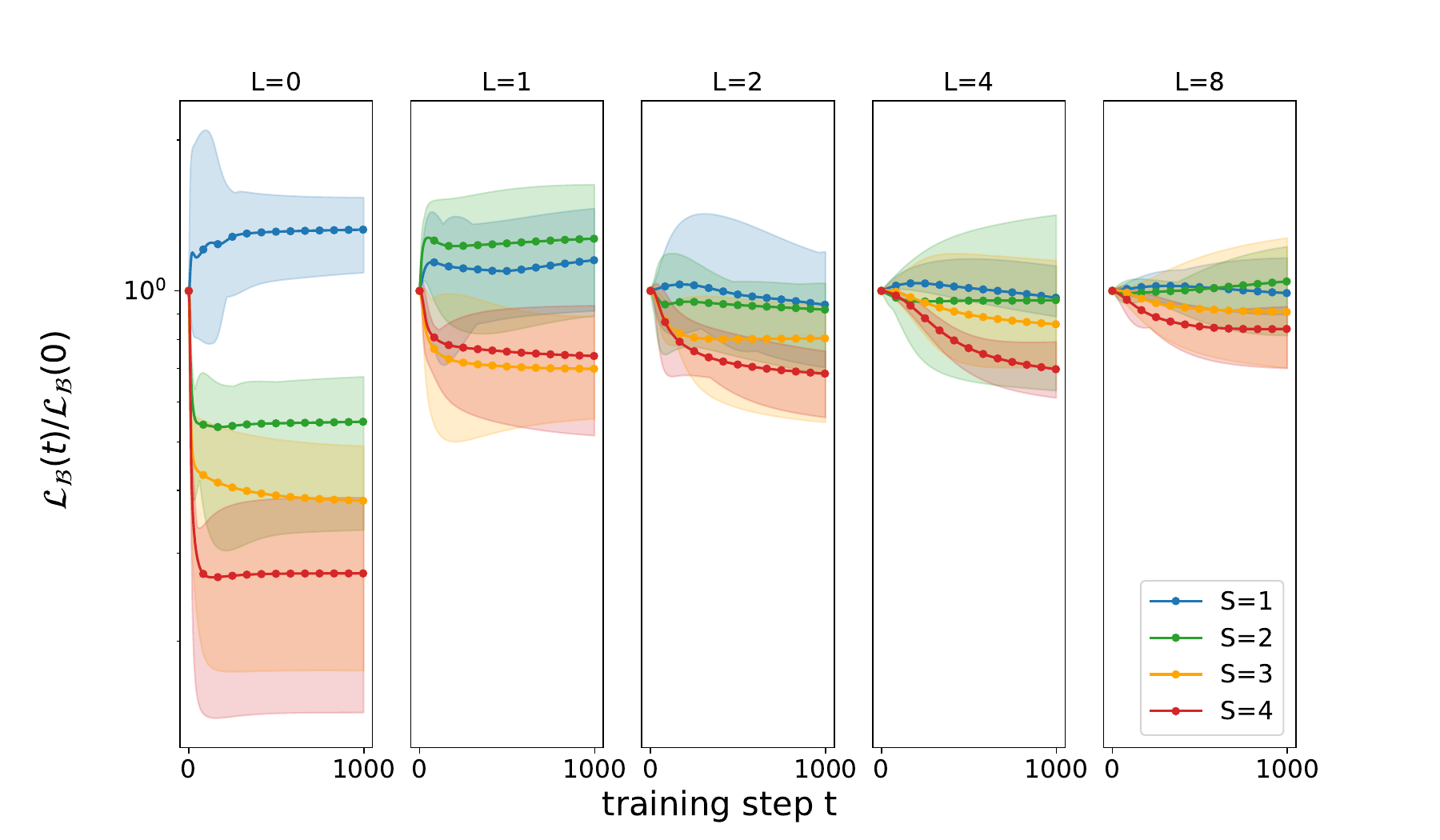}  
\label{vqacr_fig_qdl_fldc_LB}
}\\
\subfloat[]{
\includegraphics[width=.48\linewidth]{./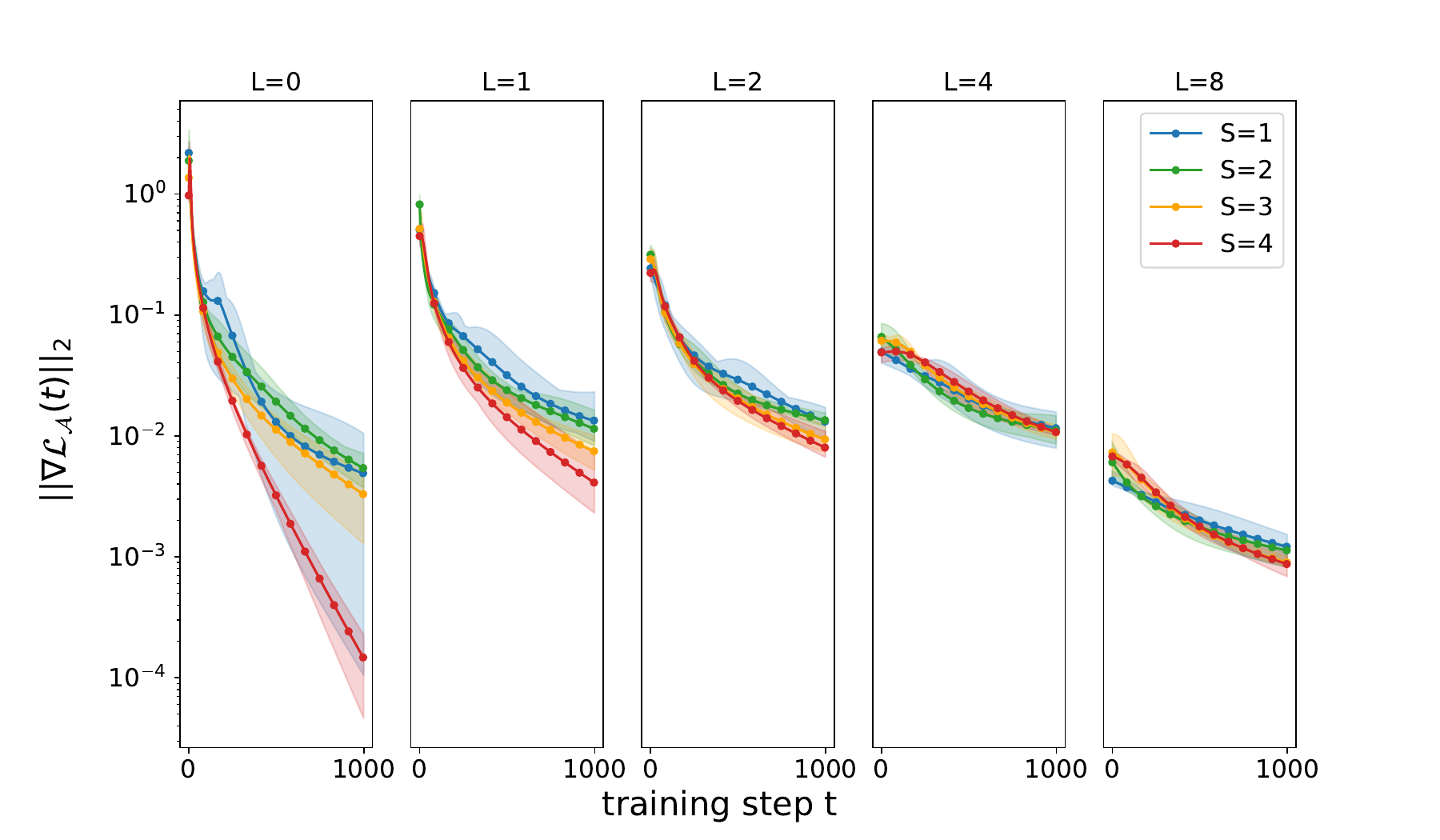}  
\label{vqacr_fig_qdl_fldc_gradnorm}
}
\subfloat[]{
\includegraphics[width=.48\linewidth]{./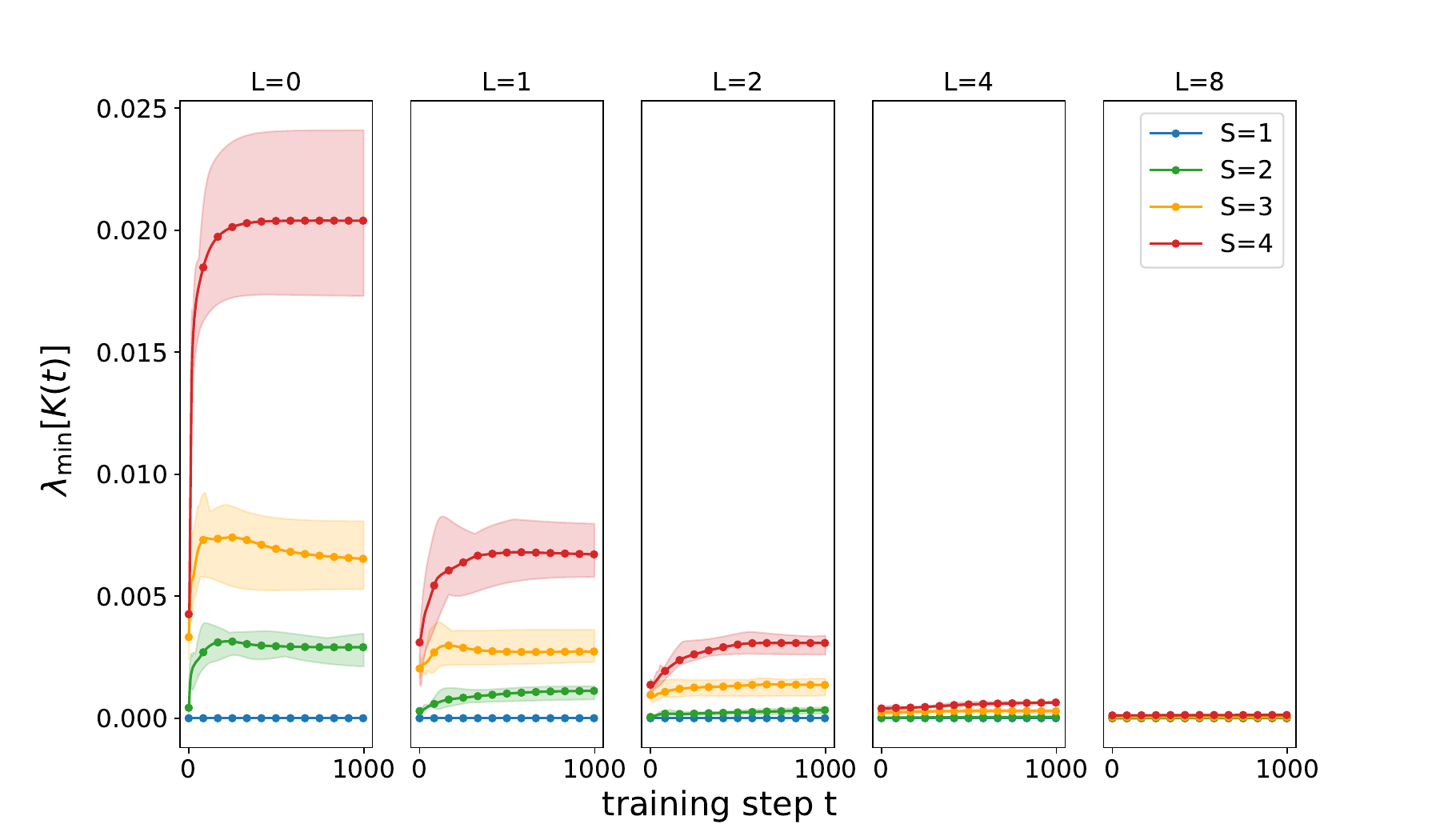}  
\label{vqacr_fig_qdl_fldc_lminK}
}
\caption{Numerical results of QDL with FLDC input states, where $(N,D)=(12,240)$ and $L \in \{0,1,2,4,8\}$. Figures~\ref{vqacr_fig_qdl_fldc_LA} and \ref{vqacr_fig_qdl_fldc_LB} show the relative loss of the training and the test dataset during the training, respectively. Figure~\ref{vqacr_fig_qdl_fldc_gradnorm} shows the $\ell_2$-norm of the gradient for the loss function. Figure~\ref{vqacr_fig_qdl_fldc_lminK} shows the least eigenvalue of the QNTK. Each solid line denotes the average of $5$ rounds of simulations with independent circuits and parameters. }
\label{vqacr_fig_qdl_fldc_L}
\end{figure*}

\subsection{Convergence analysis of training QNNs}
\label{vqacr_rate}

Here we provide two examples of QNNs with rigorous convergence analysis, where the input state in the training dataset satisfies Proposition~\ref{vqacr_prop_state_distribution_local}. The optimization of the loss function in Eq.~(\ref{vqacr_qntk_loss_eq}) exhibits proven linear convergence during gradient descent training with high probability, when the locally smooth assumption of the Jacobian matrix in Assumption~\ref{vqacr_lemma_ntk_local_Jacobian_stability_xyz} holds. 
Technically, we employ Lemmas~\ref{vqacr_lemma_qntk_least_eigen_gaussian_matrix_main} and \ref{vqacr_lemma_qntk_least_eigen_random_sphere_main} to bound the largest and the least eigenvalue of the initial QNTK. Thus we obtain necessary conditions given as the lower bound of the number of parameters $D$ and the number of local qubits $S$, such that $D$ gate Hamiltonians are sampled independently from $\{X,Y,Z\}^{\otimes S}$ or $\{X,Y\}^{\otimes S}$. These Lemmas are employed with Assumption~\ref{vqacr_lemma_ntk_local_Jacobian_stability_xyz} for deriving the convergence results in Theorems~\ref{vqacr_theorem_gd_xyz_lazy_theta_and_K} and \ref{vqacr_theorem_gd_xy_lazy_theta_and_K} with proof in Appendix~\ref{vqacr_qntk_app_qnn_linear_converge}. For the general case with Proposition~\ref{vqacr_prop_state_distribution_local}, the linear convergence rate requires $S\geq\tilde{\O}(\ln |\A|)$ and $D\geq\tilde{\O}(|\A|\ln^2|\A|)$. For the qubit embedding case in Eq.~(\ref{vqacr_qubit_embedding_eq}), the linear convergence rate requires $S\geq\tilde{\O}(\ln |\A|)$ and $D\geq\tilde{\O}(|\A|^5 \ln|\A|)$. We remark that the condition on $D$ is smaller than previous results that scale $\O(2^N)$~\cite{arxiv_larocca2021theory} when the size of dataset is polynomial to the qubit number. Apart from the linear convergence regime, we also obtain some by-products including the lazy training property shown as the small parameter changes and the stability of the QNTK during the training.

\begin{assumption}[Locally smooth condition]\label{vqacr_lemma_ntk_local_Jacobian_stability_xyz}
Let $J(\bm{\theta})\in \R^{D \times |\A|}$ be the Jacobian matrix of a QNN. Then for any constant $C_0>0$, we call $\mathcal{Z}(C_0, R_0, \bmt)$ a locally smooth region if for any $\bmt'$ with $\|\bmt'-\bmt\|_2 \leq R_0/\sqrt{D}$, the following holds:
\begin{equation}\label{vqacr_lemma_ntk_local_Jacobian_stability_eq}
\left\| J(\bm{\theta}') - J(\bm{\theta}) \right\|_2 \leq{} \sqrt{C_0 D} \| \bmt' -\bmt \|_2 .
\end{equation} 
\end{assumption}

\begin{theorem}[Linear convergence of QNN training w.r.t. Proposition~\ref{vqacr_prop_state_distribution_local}]\label{vqacr_theorem_gd_xyz_lazy_theta_and_K}
We follow conditions and notations in Lemma~\ref{vqacr_lemma_qntk_least_eigen_gaussian_matrix_main}. 
Denote $\lmin':={} \frac{2}{3} \Delta_S \alpha_{\min} A_{\Q, \min}'^2$ and $\lmax' :={} \frac{4}{3} \|O\|_2^2 \alpha_{\max} A_{\Q, \max}'^2$. Let $\L_{\A}(t)$ be the loss function at the $t$-th step. Suppose that there exists four constants $c_0, c_1, c_2, c_3>0$ and $\delta \in (0, 1)$, such that Assumption~\ref{vqacr_lemma_ntk_local_Jacobian_stability_xyz} holds for the parameter $\bmt(0)=\bm{0}$ with the constant $c_0$ and $R_0 := \frac{5}{2} {\lmin'^{-1}} \sqrt{2|\A|\lmax' \L_{\A}(0)} $ for every $S \geq{} \log_3 \left[ 1 + c_3 \frac{ |\A|^2 \ln |A| }{ A_{\Q, \min}'^4 |\Q| \delta^2 } \right]$ some $D \geq \max (D_1 \ln \frac{D_1}{\delta}, D_2 )$, where 
\begin{align*}
D_1 ={}& \left( \frac{\lmax' }{\lmin' A_{\Q, \max}' } \right)^2  c_1 |\A| \ln \frac{|\A|}{\delta} ,  \\
D_2 ={}& \frac{\lmax'^2 }{ \lmin'^4 } c_2 |\A| \L_{\A}(0) .
\end{align*}
Then the following holds with probability at least $1-\delta$ when applying gradient descent with learning rate $\eta={\eta_0}|\A|/{D}$ with $\eta_0 \in (0, \lmin'^{-1} )$,
\begin{align}
{}& \L_{\A} (T) \leq{} \left( 1 - \frac{1}{2} \eta_0 \lmin' \right)^{2T} \L_{\A} (0) , \label{vqacr_theorem_gd_xyz_lazy_theta_and_K_eq1} \\
{}& \left\| \bmt(T) - \bmt(0) \right\|_2  \leq \frac{R_0}{\sqrt{D}} . \label{vqacr_theorem_gd_xyz_lazy_theta_and_K_eq2}
\end{align}
\end{theorem}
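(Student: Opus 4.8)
\emph{Proof strategy.} The plan is to run a neural-tangent-kernel ``lazy training'' argument tailored to the mean-squared loss $\L_{\A}(\bmt)=\frac{1}{2|\A|}\|\bm{r}_{\A}(\bmt)\|_2^2$: first pin the spectrum of the QNTK at the initialization $\bmt(0)=\bm{0}$, then show the gradient-descent iterates never leave a small ball around $\bm{0}$ so that the QNTK stays pinned along the entire trajectory, and finally read off linear convergence from a one-step contraction of the residual. The only randomness sits in the QNTK at $\bm{0}$; everything afterwards is deterministic given that event and Assumption~\ref{vqacr_lemma_ntk_local_Jacobian_stability_xyz}.

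To seed the argument I apply Lemma~\ref{vqacr_lemma_qntk_least_eigen_gaussian_matrix_main} with $k=\tfrac13$. On an event $\mathcal{E}$ of probability at least $1-\delta$ this yields $\lmin'\le\lambda_{\min}[K(\bm{0})]$ and $\lambda_{\max}[K(\bm{0})]\le\lmax'$, since for $k=\tfrac13$ the lemma's two-sided bound is exactly $\tfrac23\Delta_S\alpha_{\min}A_{\Q,\min}'^2=\lmin'$ and $\tfrac43\|O\|_2^2\alpha_{\max}A_{\Q,\max}'^2=\lmax'$. Substituting $k=\tfrac13$ into the lemma's own hypotheses turns its $S$-threshold into the one stated here and its requirement $D\ge D_0\ln(D_0/\delta)$ into $D\ge D_1\ln(D_1/\delta)$ after absorbing the universal constants into $c_1,c_3$; this is precisely why those two conditions appear in the statement.

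The heart of the proof is an induction on the step index asserting, for all $0\le t\le T$, both (i) $\|\bmt(t)-\bm{0}\|_2\le R_0/\sqrt{D}$ and (ii) $\L_{\A}(t)\le(1-\tfrac12\eta_0\lmin')^{2t}\L_{\A}(0)$; the base case is immediate. Assuming both up to step $t$, claim (i) keeps every $\bmt(s)$, $s\le t$, inside $\mathcal{Z}(c_0,R_0,\bm{0})$, so Assumption~\ref{vqacr_lemma_ntk_local_Jacobian_stability_xyz} bounds the Jacobian drift by $\sqrt{c_0}\,R_0$ and hence the kernel drift by $\tfrac1D\|J(\bmt(s))-J(\bm{0})\|_2(\|J(\bmt(s))\|_2+\|J(\bm{0})\|_2)$, which, after inserting $\|J(\bm{0})\|_2\le\sqrt{D\lmax'}$ and the explicit value of $R_0$, is at most $\tfrac12\lmin'$ once $D\ge D_2$ with $c_2$ chosen large relative to $c_0$; thus $\tfrac12\lmin'\le\lambda_{\min}[K(\bmt(s))]$ and $\lambda_{\max}[K(\bmt(s))]\le\tfrac32\lmax'$ along the trajectory seen so far. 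Writing one gradient step as $\bm{r}_{\A}(\bmt(t+1))=(I-\eta_0 K(\bmt(t))-E_t)\bm{r}_{\A}(\bmt(t))$ with $E_t$ collecting the Jacobian variation over the segment, which Assumption~\ref{vqacr_lemma_ntk_local_Jacobian_stability_xyz} together with the step-length estimate makes negligible against $\eta_0\lmin'$ once $D\ge D_2$, a direct estimate valid for $\eta_0\in(0,(\lmin')^{-1})$ gives $\|\bm{r}_{\A}(\bmt(t+1))\|_2\le(1-\tfrac12\eta_0\lmin')\|\bm{r}_{\A}(\bmt(t))\|_2$, i.e. (ii) at $t+1$. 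For (i) at $t+1$, sum the step lengths: $\|\bmt(t+1)-\bm{0}\|_2\le\sum_{s\le t}\eta\|\nabla\L_{\A}(\bmt(s))\|_2\le\tfrac{\eta_0}{\sqrt{D}}\sum_{s\le t}\sqrt{2|\A|\,\lambda_{\max}[K(\bmt(s))]\,\L_{\A}(s)}$, and inserting (ii) collapses the sum into a geometric series of ratio $1-\tfrac12\eta_0\lmin'$ summing to at most $2/(\eta_0\lmin')$, so the product equals $R_0/\sqrt{D}$ by the very definition of $R_0$ (which is where the constant $\tfrac52$ comes from). This closes the induction, proves~\eqref{vqacr_theorem_gd_xyz_lazy_theta_and_K_eq1}, and delivers~\eqref{vqacr_theorem_gd_xyz_lazy_theta_and_K_eq2} as claim (i).

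I expect the main obstacle to be the self-referential nature of this bootstrap: the spectral control of the QNTK requires the iterates to remain in $\mathcal{Z}(c_0,R_0,\bm{0})$, remaining there requires the geometric decay, and the decay requires the spectral control, so all three must be carried jointly through a single induction, and the discretization term $E_t$ together with the quadratic-in-$\eta_0$ correction must be shown to be dominated by the linear decrease uniformly in $t$. It is precisely this domination that forces the two overparameterization conditions, $D\ge D_1\ln(D_1/\delta)$ (for the QNTK at $\bm{0}$, inherited from Lemma~\ref{vqacr_lemma_qntk_least_eigen_gaussian_matrix_main}) and $D\ge D_2$ (for the lazy regime), and fixes the numerical constants in $\lmin'$, $\lmax'$ and $R_0$; the remaining work, namely the exact one-step contraction constant and the verification that the accumulated path length stays below $R_0/\sqrt{D}$ for every horizon $T$, is delicate but routine bookkeeping. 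The probability $1-\delta$ is inherited wholesale from the single event $\mathcal{E}$ of Lemma~\ref{vqacr_lemma_qntk_least_eigen_gaussian_matrix_main}.
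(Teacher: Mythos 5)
Your proposal follows essentially the same route as the paper: you invoke Lemma~\ref{vqacr_lemma_qntk_least_eigen_gaussian_matrix_main} with $k=\tfrac13$ to pin the initial QNTK between $\lmin'$ and $\lmax'$, and then run the same lazy-training induction (joint control of the geometric residual decay, the Jacobian drift via Assumption~\ref{vqacr_lemma_ntk_local_Jacobian_stability_xyz}, and the accumulated path length summing to $R_0/\sqrt{D}$) that the paper packages as the general convergence result, Theorem~\ref{vqacr_lemma_gd_general_lazy_theta_and_K}, with $k=\tfrac12$ and then specializes. The only difference is organizational — you inline the bootstrap induction rather than citing the standalone general theorem — so the argument is correct and matches the paper's proof.
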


\begin{theorem}[Linear convergence of QNN training w.r.t. qubit embedding]\label{vqacr_theorem_gd_xy_lazy_theta_and_K}
We follow conditions and notations in Lemma~\ref{vqacr_lemma_qntk_least_eigen_random_sphere_main}. Let $\L_{\A}(t)$ be the loss function at the $t$-th step.
Suppose that there exists four constants $c_0, c_1, c_2, c_3>0$ and $\delta \in (0, 1)$, such that Assumption~\ref{vqacr_lemma_ntk_local_Jacobian_stability_xyz} holds for the parameter $\bmt(0)=\bm{0}$ with the constant $c_0$ and $R_0 := \frac{5}{2} { \|O\|_2 }{\Delta_S^{-1} } \sqrt{6 \times 2^S |\A| \L_{\A}(0)} $ for some $S \geq 2+ \log_2 \frac{ c_3 |\A|^2 \ln |\A|}{ |\Q|\delta^2}$ and $D \geq \max (D_1 \ln \frac{D_1}{\delta}, D_2 )$, where 
\begin{align*}
D_1 ={}& c_1 \frac{\|O\|_2^4 }{ \Delta_S^2 } |\A| \ln \frac{|\A|}{\delta}, \\
D_2 ={}& c_2 \frac{\|O\|_2^4 }{ \Delta_S^4  } 2^{2S} |\A| \L_{\A}(0) .
\end{align*}
Then the following holds with probability at least $1-\delta$ when applying gradient descent with learning rate $\eta={\eta_0}|\A|/{D}$ with $\eta_0 \in (0, 3 \times 2^{S-1} \Delta_S^{-1} )$,
\begin{align}
{}& \L_{\A} (T) \leq{} \left( 1 - \eta_0  \frac{\Delta_S}{3\times 2^S} \right)^{2T} \L_{\A} (0) , \label{vqacr_theorem_gd_xy_lazy_theta_and_K_eq1} \\
{}& \left\| \bmt(T) - \bmt(0) \right\|_2  \leq \frac{R_0}{\sqrt{D}} . \label{vqacr_theorem_gd_xy_lazy_theta_and_K_eq2}
\end{align}
\end{theorem}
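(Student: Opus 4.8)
The plan is to read Theorem~\ref{vqacr_theorem_gd_xy_lazy_theta_and_K} as the qubit-embedding instance of the general convergence statement Theorem~\ref{vqacr_theorem_gd_xyz_lazy_theta_and_K}: one only needs concrete extremal eigenvalues for the initial QNTK together with the locally smooth condition, and then the standard over-parameterized ``lazy training'' bootstrap closes everything. To match notation, set $\lmin':=\tfrac{2\Delta_S}{3\cdot 2^S}$ and $\lmax':=\tfrac{4\|O\|_2^2}{3\cdot 2^S}$, so that the asserted rate is $(1-\tfrac12\eta_0\lmin')^{2T}$, the confinement radius reads $R_0=\tfrac52\,\lmin'^{-1}\sqrt{2|\A|\lmax'\L_{\A}(0)}$, and the hypotheses on $S$ and $D$ are, up to the choice of $c_1,c_2,c_3$, exactly the regime in which Lemma~\ref{vqacr_lemma_qntk_least_eigen_random_sphere_main} with $k=\tfrac13$ certifies
\[ \lmin'\;\le\;\lambda\big[K(\0)\big]\;\le\;\lmax' \qquad\text{with probability at least } 1-\delta; \]
in particular $D\ge D_1\ln(D_1/\delta)$ is arranged to dominate the $D_0\ln(D_0/\delta)$ threshold demanded by that lemma.

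I would then run a joint induction over the step $t$ with three hypotheses: (i) $\|\bmt(s)-\0\|_2\le R_0/\sqrt D$ for $s\le t$; (ii) $\lambda_{\min}[K(\bmt(s))]\ge\tfrac12\lmin'$ and $\|J(\bmt(s))\|_2^2\le 2D\lmax'$ for $s\le t$; (iii) $\L_{\A}(s)\le(1-\tfrac12\eta_0\lmin')^{2s}\L_{\A}(0)$ for $s\le t$. The base case $s=0$ is the displayed spectral bound, using $\|J(\0)\|_2^2=D\lambda_{\max}[K(\0)]$. For the inductive step, hypothesis (i) lets me apply Assumption~\ref{vqacr_lemma_ntk_local_Jacobian_stability_xyz} with constant $c_0$ on the radius-$R_0/\sqrt D$ ball, giving $\|J(\bmt')-J(\0)\|_2\le\sqrt{c_0}\,R_0$ there; plugging this into $\|K(\bmt')-K(\0)\|_2\le\frac1D(\|J(\bmt')\|_2+\|J(\0)\|_2)\|J(\bmt')-J(\0)\|_2$ and using $\|J(\0)\|_2\le\sqrt{D\lmax'}$, the condition $D\ge D_2$ (after substituting $R_0,\lmax'$, with $D_2$ precisely of the stated form $c_2\|O\|_2^4\Delta_S^{-4}2^{2S}|\A|\L_{\A}(0)$) makes the dominant term $\le\tfrac16\lmin'$, while $D\ge D_1\ln(D_1/\delta)$ disposes of the lower-order $c_0R_0^2/D$ term; hence $\|K(\bmt')-K(\0)\|_2\le\tfrac12\lmin'$ uniformly on the ball.

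The one-step estimate is next. Writing the gradient step~(\ref{vqacr_main_backqml_gd_theta}) with $\eta=\eta_0|\A|/D$ as $\bmt(t+1)-\bmt(t)=-\tfrac{\eta_0}{D}J(\bmt(t))\bm{r}_{\A}(\bmt(t))$, (ii)--(iii) bound this step by $\tfrac{\eta_0}{D}\sqrt{2D\lmax'}\sqrt{2|\A|\L_{\A}(0)}\,(1-\tfrac12\eta_0\lmin')^{t}$, which already keeps the segment $[\bmt(t),\bmt(t+1)]$ inside the ball; the fundamental theorem of calculus then gives $\bm{r}_{\A}(t+1)=(I-\tfrac{\eta_0}{D}\bar J(t)^TJ(\bmt(t)))\bm{r}_{\A}(t)$ with $\bar J(t)$ the Jacobian averaged along that segment, and since $\bar J(t),J(\bmt(t))$ are both within $\sqrt{c_0}R_0$ of $J(\0)$ we get $\tfrac{\eta_0}{D}\bar J(t)^TJ(\bmt(t))=\eta_0K(\0)+E$, $\|E\|_2\le\tfrac16\eta_0\lmin'$. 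For $\eta_0\in(0,\lmin'^{-1})$ this yields $\|\bm{r}_{\A}(t+1)\|_2\le(1-\tfrac12\eta_0\lmin')\|\bm{r}_{\A}(t)\|_2$, i.e.\ (iii) at $t+1$ (and re-establishes (ii)). Summing the single-step bounds with $\sum_{s\ge0}(1-\tfrac12\eta_0\lmin')^{s}=2/(\eta_0\lmin')$,
\[ \|\bmt(t+1)-\0\|_2\;\le\;\frac{2}{\sqrt D}\,\lmin'^{-1}\sqrt{2\lmax'}\,\sqrt{2|\A|\L_{\A}(0)}\;=\;\frac{R_0}{\sqrt D}, \]
which is (i) at $t+1$ and closes the induction; evaluating (iii) and (i) at $t=T$ gives~(\ref{vqacr_theorem_gd_xy_lazy_theta_and_K_eq1}) and~(\ref{vqacr_theorem_gd_xy_lazy_theta_and_K_eq2}).

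I expect the main obstacle to be the self-consistency of the three hypotheses: the contraction of $\|\bm{r}_{\A}\|_2$ at step $t$ is what keeps $\bmt(t+1)$ confined, yet confinement is what licenses the perturbed-kernel estimate that proves the contraction. This is defused by (a) proving (i)--(iii) jointly, and (b) inside the inductive step first bounding the \emph{single} displacement $\|\bmt(t+1)-\bmt(t)\|_2$ from (ii)--(iii) at $t$ alone, which a priori keeps the interpolation segment in the ball, before invoking Assumption~\ref{vqacr_lemma_ntk_local_Jacobian_stability_xyz}. The secondary, purely bookkeeping burden is to choose $c_1,c_2,c_3$ (and absorb $c_0$) large enough that ``Lemma~\ref{vqacr_lemma_qntk_least_eigen_random_sphere_main} applies with $k=\tfrac13$'', ``$\|K(\bmt')-K(\0)\|_2\le\tfrac12\lmin'$ on the ball'', and ``the geometric sum is $\le R_0/\sqrt D$'' all hold at once, tracking the dependence on $\|O\|_2$, $\Delta_S$, $2^S$, $|\A|$ and $\L_{\A}(0)$; the powers $2^{2S}$ in $D_2$ and the $2^S$ in the lower bound on $S$ are exactly what this tracking produces.
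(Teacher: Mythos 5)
Your proposal follows essentially the same route as the paper: instantiate the random-matrix eigenvalue bounds of Lemma~\ref{vqacr_lemma_qntk_least_eigen_random_sphere_main} (with $k=1/3$) to get $\lmin'=\tfrac{2}{3}\Delta_S 2^{-S}$ and $\lmax'=\tfrac{4}{3}\|O\|_2^2 2^{-S}$, then run the standard lazy-training induction (residual contraction, Jacobian stability via Assumption~\ref{vqacr_lemma_ntk_local_Jacobian_stability_xyz}, confinement via the geometric sum), which is exactly the paper's reduction to its general convergence theorem. The one slip is in the final confinement sum: with your slack factor $\|J(\bmt(s))\|_2^2\le 2D\lmax'$ the geometric sum evaluates to $2\sqrt{2}\,\lmin'^{-1}\sqrt{2|\A|\lmax'\L_{\A}(0)}/\sqrt{D}$, which exceeds the prescribed $R_0/\sqrt{D}=\tfrac{5}{2}\lmin'^{-1}\sqrt{2|\A|\lmax'\L_{\A}(0)}/\sqrt{D}$ and so would escape the region where the locally smooth assumption is guaranteed; the paper avoids this by carrying a tighter perturbation factor $(1+k_1)$ with $k_1\le\tfrac{2}{5}k$ so that $(1+k_1)/(1-k)\le 2.4<\tfrac{5}{2}$, and your induction admits the same fix by tightening hypothesis (ii) accordingly.
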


\section{Numerical Results}

Here we provide some numerical evidence that suggests the separated performances of training QNNs with different input states, regarding the convergence rate and the generalization behavior.

\subsection{Quantum dynamics learning}
\label{vqacr_exper_qdl}

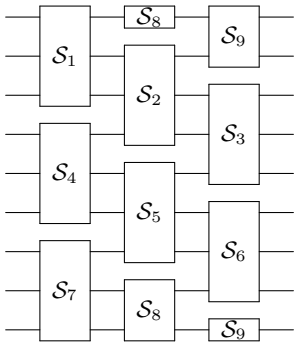
\begin{figure}
\centerline{
\Qcircuit @C=1.4em @R=.7em {
\lstick{} & \multigate{2}{\mathcal{S}_1} & \multigate{0}{\mathcal{S}_8} & \multigate{1}{\mathcal{S}_9} & \qw \\
\lstick{} & \ghost{\mathcal{S}_1} & \multigate{2}{\mathcal{S}_2} & \ghost{\mathcal{S}_9} & \qw \\
\lstick{} & \ghost{\mathcal{S}_1} & \ghost{\mathcal{S}_2} & \multigate{2}{\mathcal{S}_3} & \qw \\
\lstick{} & \multigate{2}{\mathcal{S}_4} & \ghost{\mathcal{S}_2} & \ghost{\mathcal{S}_3} & \qw \\
\lstick{} & \ghost{\mathcal{S}_4} & \multigate{2}{\mathcal{S}_5} & \ghost{\mathcal{S}_3} & \qw \\
\lstick{} & \ghost{\mathcal{S}_4} & \ghost{\mathcal{S}_5} & \multigate{2}{\mathcal{S}_6} & \qw \\
\lstick{} & \multigate{2}{\mathcal{S}_7} & \ghost{\mathcal{S}_5} & \ghost{\mathcal{S}_6} & \qw \\
\lstick{} & \ghost{\mathcal{S}_7} & \multigate{1}{\mathcal{S}_8} & \ghost{\mathcal{S}_6} & \qw \\
\lstick{} & \ghost{\mathcal{S}_7} & \ghost{\mathcal{S}_8} & \multigate{0}{\mathcal{S}_9} & \qw
}
}
\caption{An assignment of $\mathcal{Q}=\{\mathcal{S}_1,\cdots,\mathcal{S}_{N}\}$ in the hardware-efficient manner for $(N,S)=(9,3)$.}
\label{vqacr_Q_Sd_he_circuit_main}
\end{figure}

The first task is to learn quantum dynamics, where the aim is to approximate the measurement performance of the target unitary using variational quantum circuits. Given the target unitary $U$, the observable $O$, and an ensemble of input states $\{(\rho_a, y_a)\}_{a \in \A}$, we consider the optimization of the loss defined in Ref.~\cite{Wang2024}, i.e. 
\begin{align}
\L_{\rm QDL} := \frac{1}{2|\A|} \sum_{a \in \A} \left( \Tr[O V(\bmt) \rho_a V(\bmt)^\dag ] - y_a \right)^2 ,
\end{align}
with the label 
$y_a := \Tr [O U \rho_a U^\dag] $. 
In the experiment, we set the qubit number $N=12$.
The observable is set to be
\begin{equation}\label{vqacr_experiment_observable_eq}
O = \sum_{n=1}^{N} c_n Z_n
\end{equation}
with independent normal variables $c_n \sim \mathcal{N}(0,1)$. 
The target unitary is set to be the time evolution of the Heisenberg model~\cite{bonechi1992heisenberg} with open boundary conditions 
\begin{align}
U = \exp \left[-it\sum_{n=1}^{N} X_{n} X_{n+1} + Y_{n} Y_{n+1} + Z_{n} Z_{n+1} \right] , \notag
\end{align}
where we use $t=1$. 
Input states are generated from the state $|0\>$ after finite local-depth circuits (Fig~\ref{vqacr_fldcs_example}) with $L$ blocks. The size of both training and test sets are $40$.
Hamiltonians of variational quantum gates are sampled from the set $ \{X, Y, Z\}^{\otimes S}$, and the deployment of quantum gates assumes the 1D connectivity with $|\Q|=N$. The variational circuit part consists of $\{5, 7, 10, 14, 20\}$ layers of gates in Figure~\ref{vqacr_Q_Sd_he_circuit_main}. We initialize parameters as $\bmt(0)=\bm{0}$, which is consistent with the analysis in Theorem~\ref{vqacr_theorem_gd_xyz_lazy_theta_and_K}. 
During the training stage, we use gradient descent with learning rate $\eta=N/D$ for $L \in \{0,1,2,4\}$. For the case $L=8$, the initial gradient is at least an order of magnitude smaller that that with other $L$ values, as shown in Figure~\ref{vqacr_fig_qdl_fldc_gradnorm}. Therefore, we adopt a larger learning rate $\eta=10N/D$ when $L=8$ to guarantee the convergence of the training set loss.

\begin{figure}[t]
\centering
\includegraphics[width=.96\linewidth]{./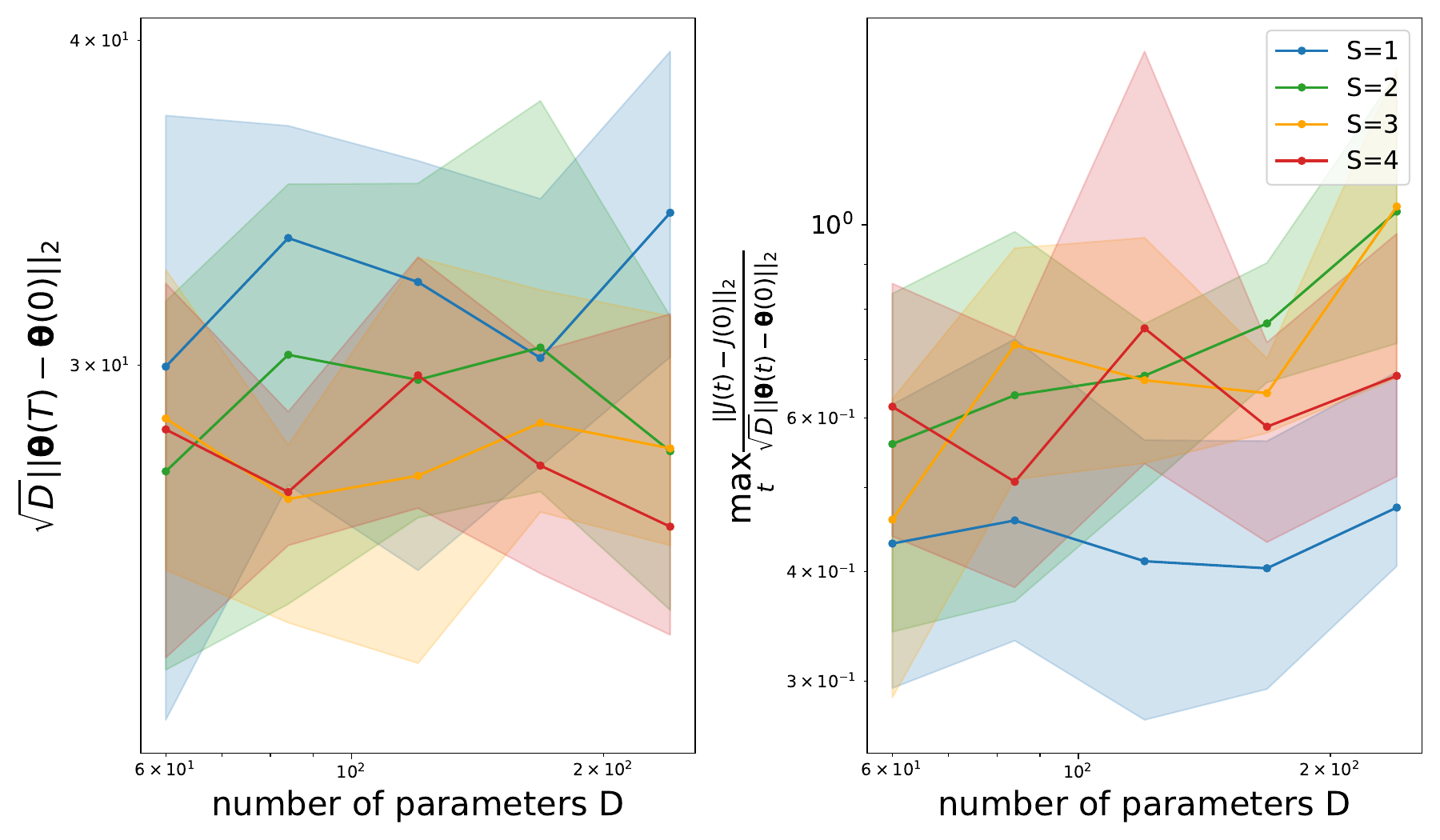}  
\caption{Numerical results of QDL with FLDC input states, where $(N,L)=(12,1)$ and $D \in \{60, 84, 120, 168, 240\}$. The left figure illustrates the $\ell_2$-norm distance in the parameter space from the initial to the final step. The right figure illustrates the largest coefficient in Eq.~(\ref{vqacr_lemma_ntk_local_Jacobian_stability_eq}) during the training. Each solid line denotes the average of $5$ rounds of simulations with independent circuits and parameters. }
\label{vqacr_fig_qdl_fldc_D_thetaandJ}
\end{figure}

\begin{figure*}[t]
\centering
\subfloat[]{
\includegraphics[width=.48\linewidth]{./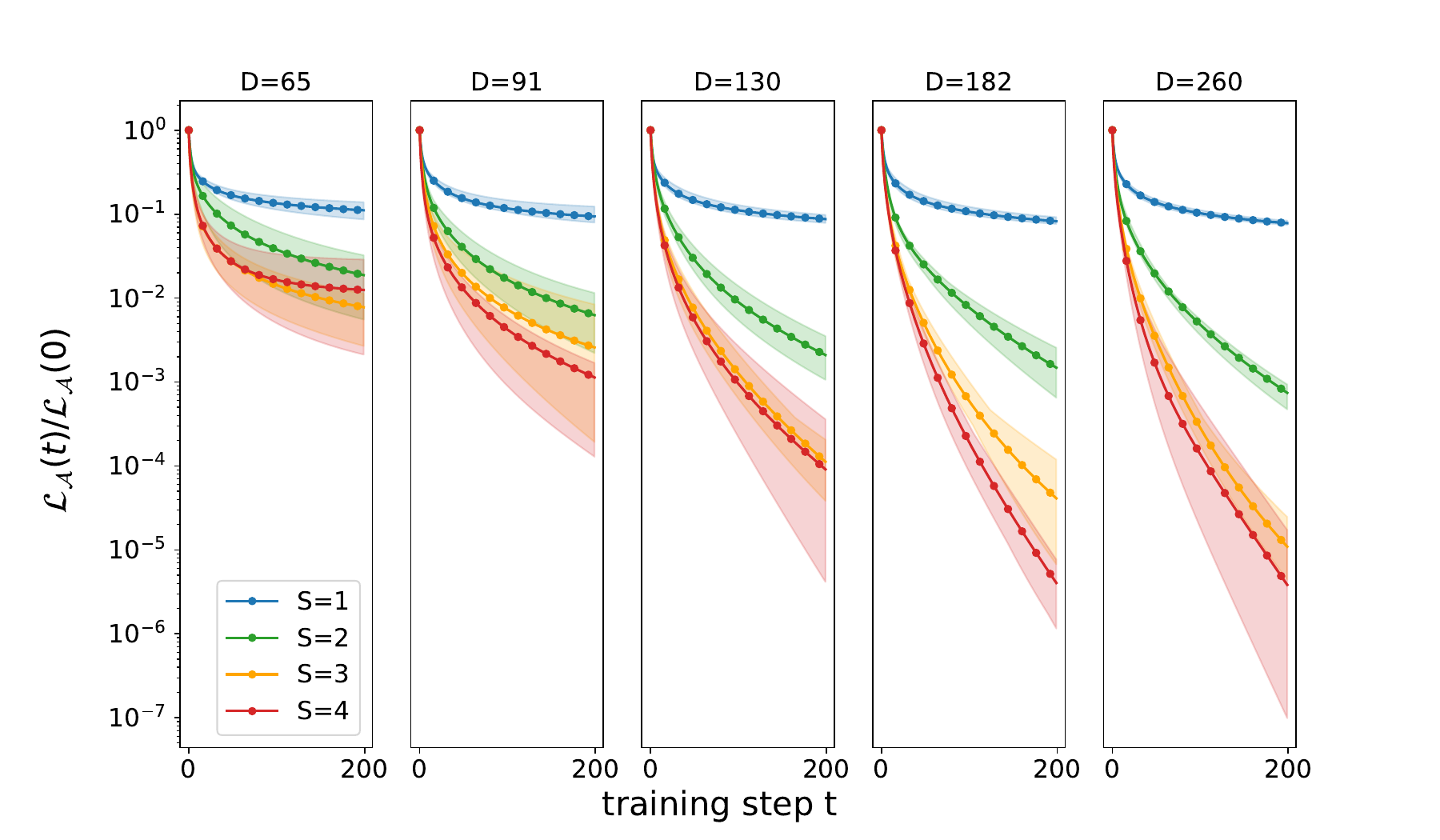}  
\label{vqacr_fig_bc_wine_LA}
}
\subfloat[]{
\includegraphics[width=.48\linewidth]{./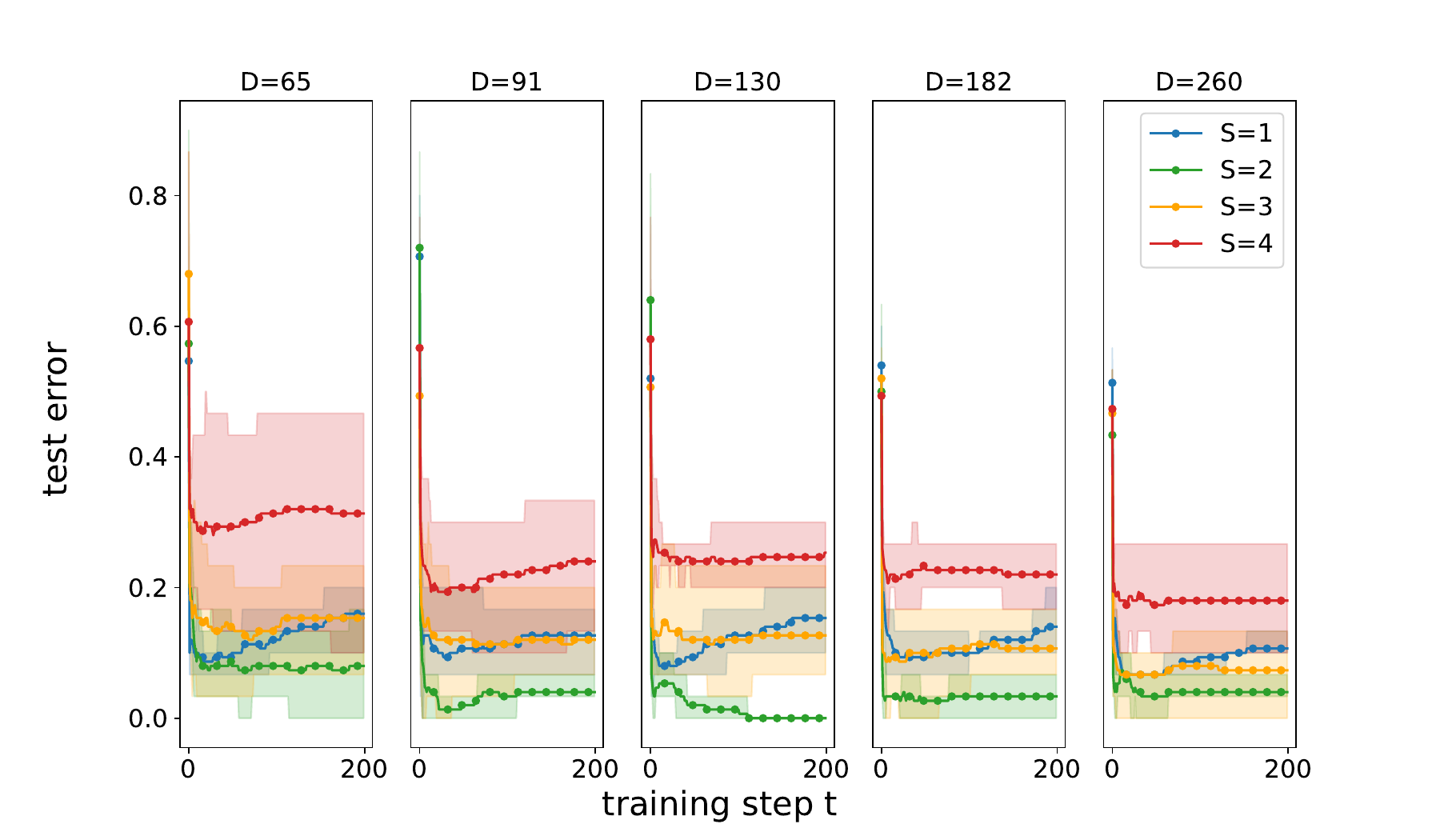}  
\label{vqacr_fig_bc_wine_eB}
}\\
\subfloat[]{
\includegraphics[width=.48\linewidth]{./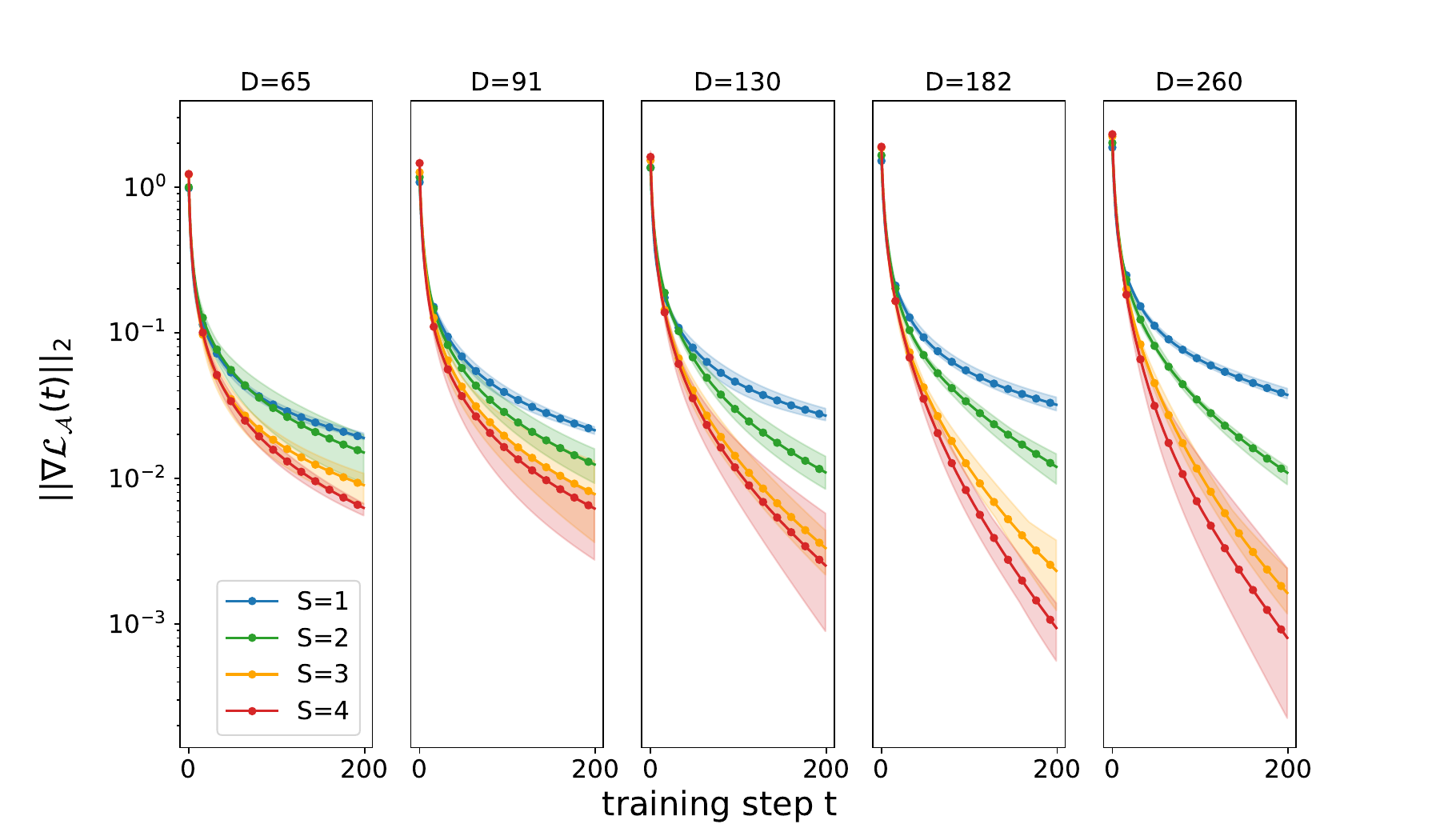}  
\label{vqacr_fig_bc_wine_gradnorm}
}
\subfloat[]{
\includegraphics[width=.48\linewidth]{./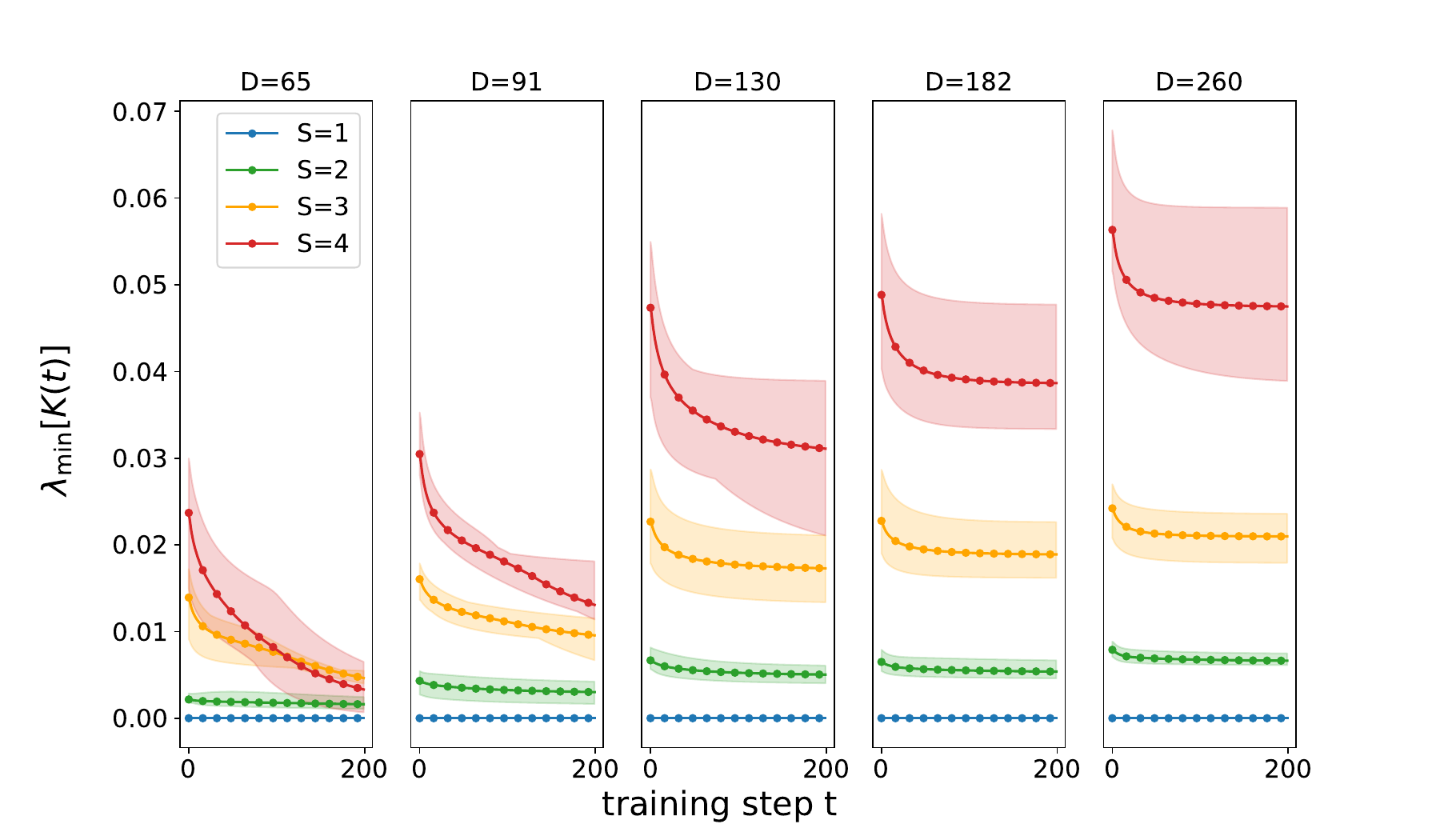}  
\label{vqacr_fig_bc_wine_lminK}
}
\caption{Numerical results of the binary classification with the wine dataset, where $N=13$ and $D \in \{65, 91, 130, 182, 260\}$. Figures~\ref{vqacr_fig_bc_wine_LA} and \ref{vqacr_fig_bc_wine_eB} show the training loss the test error during the training, respectively. Figure~\ref{vqacr_fig_bc_wine_gradnorm} shows the $\ell-2$-norm of the gradient of the training loss. Figure~\ref{vqacr_fig_bc_wine_lminK} shows the least eigenvalue of the QNTK during the training. Each solid line denotes the average of $5$ rounds of simulations with independent circuits and parameters. }
\label{vqacr_fig_bc_wine}
\end{figure*}

We exhibit numerical results in Figures~\ref{vqacr_fig_qdl_fldc_L} and \ref{vqacr_fig_qdl_fldc_D_thetaandJ}. Figures~\ref{vqacr_fig_qdl_fldc_LA} and \ref{vqacr_fig_qdl_fldc_LB} plot the training and the test loss over their initial values during the training. The convergence rate decreases towards the Haar limit as the input layer $L$ increases, while the test loss in Figure~\ref{vqacr_fig_qdl_fldc_LB} illustrates the separation of the generalization capability using different input states. In general, the test loss has worse performance with larger $L$. Meanwhile, the convergence performance is also influenced by the choice of parameterized gates, which is marked by different $S$ values. For the case $S=1$, the loss converges to non-zero values, implying the insufficient hypothesis space of the single-qubit unitary product. With larger $S$, the training loss exhibits increasingly pronounced and rapid linear convergence, which is further corroborated by the larger $\lmin[K]$ shown in Figure~\ref{vqacr_fig_qdl_fldc_lminK}.
Finally, we evaluate $D^{-1/2}$ and $D^{1/2}$ dependences in the frozen parameter result in Eq.~(\ref{vqacr_theorem_gd_xyz_lazy_theta_and_K_eq2}) and the locally smooth assumption in Eq.~(\ref{vqacr_lemma_ntk_local_Jacobian_stability_eq}). The results are shown in Figure~\ref{vqacr_fig_qdl_fldc_D_thetaandJ} where we employ VQCs with different depths, such that the number of parameters has values $D \in \{60, 84, 120, 168, 240\}$. Both dependences are more evident as $S$ increases to $4$.
In summary, we have verified theoretical results in Theorem~\ref{vqacr_theorem_gd_xyz_lazy_theta_and_K}.

\subsection{Binary classification}
\label{vqacr_experiment_bc_wines} 

In the second task, we evaluate the performance of binary classification using variational quantum circuits on the wine dataset~\cite{Dua2019}. This dataset comprises 138 instances from three different cultivars, with each instance featuring 13 measurements of various chemical components. For our experiment, we choose two wine types and divide the dataset into a training and a test set, each containing 30 samples, where samples from different classes $y=\pm 1$ are balanced in both sets. 
We use qubit embedding in Eq.~(\ref{vqacr_qubit_embedding_eq}) to encode the dataset information into $N=13$ qubits quantum states. 
Variational quantum circuits are generated randomly in accordance with all-to-all connectivity, where Hamiltonians are sampled from $\{X, Y\}^{\otimes S}$ with $S\in \{1,2,3,4\}$. We use the loss function~(\ref{vqacr_qntk_loss_eq}) with the observable in Eq.~(\ref{vqacr_experiment_observable_eq}). Coefficients are sampled from $c_n \sim \mathcal{N}(0, 2^{S}/N)$ to mitigate the local minima issue raised in recent studies~\cite{icml_xiaodi2021localminima,arxiv_anschuetz2022beyond}. We training each optimization task with gradient descent with the learning rate $\eta=N/D$ for $T=200$ steps.

We present numerical results in Figures~\ref{vqacr_fig_bc_wine} and \ref{vqacr_fig_bc_wines_D_thetaandJ}. Figures~\ref{vqacr_fig_bc_wine_LA}, \ref{vqacr_fig_bc_wine_eB}, \ref{vqacr_fig_bc_wine_gradnorm}, and \ref{vqacr_fig_bc_wine_lminK} plot the training loss, the test error, the gradient norm, and the least eigenvalue of the QNTK during the training. 
Similar to Section~\ref{vqacr_exper_qdl}, the training loss with qubit embedding exhibits increasingly pronounced and rapid linear convergence with increased $S$, which is further corroborated by the $\lmin[K]$ performance in Figure~\ref{vqacr_fig_bc_wine_lminK}. However, the advantage on the training loss convergence does not always induce smaller error on the test set. As shown in Figure~\ref{vqacr_fig_bc_wine_eB}, the test error increases slightly in later training stages, which is known as the overtraining issue. For the case $S=1$, the error increases the most. Another phenomenon is the relatively worse performance of the case $S=4$ compared with that of other S values, i.e. the overfitting. One potential reason of this phenomenon is that VQCs with large $S$ could extract excessive features from input states. Thus the training of parameters could capture each training sample easily before learning enough knowledge to classify test samples. Therefore, an appropriate selection of $S$ in a QNN should strike a balance between training convergence and generalization performance.
We note that a detailed generalization analysis of QNNs with input states generated from qubit embedding or other ensembles that deviate from 2-design states is beyond the scope of this work and could be an interesting direction for future research.
Finally, we evaluate $D^{-1/2}$ and $D^{1/2}$ dependences in the frozen parameter result in Eq.~(\ref{vqacr_theorem_gd_xy_lazy_theta_and_K_eq2}) and the locally smooth assumption in Eq.~(\ref{vqacr_lemma_ntk_local_Jacobian_stability_eq}). The results are shown in Figure~\ref{vqacr_fig_bc_wines_D_thetaandJ}.
Similar to Section~\ref{vqacr_exper_qdl}, both dependences are more evident as $S$ increases to $4$.
In summary, we have verified theoretical results in Theorem~\ref{vqacr_theorem_gd_xy_lazy_theta_and_K}.

\begin{figure}[t]
\centering
\includegraphics[width=.96\linewidth]{./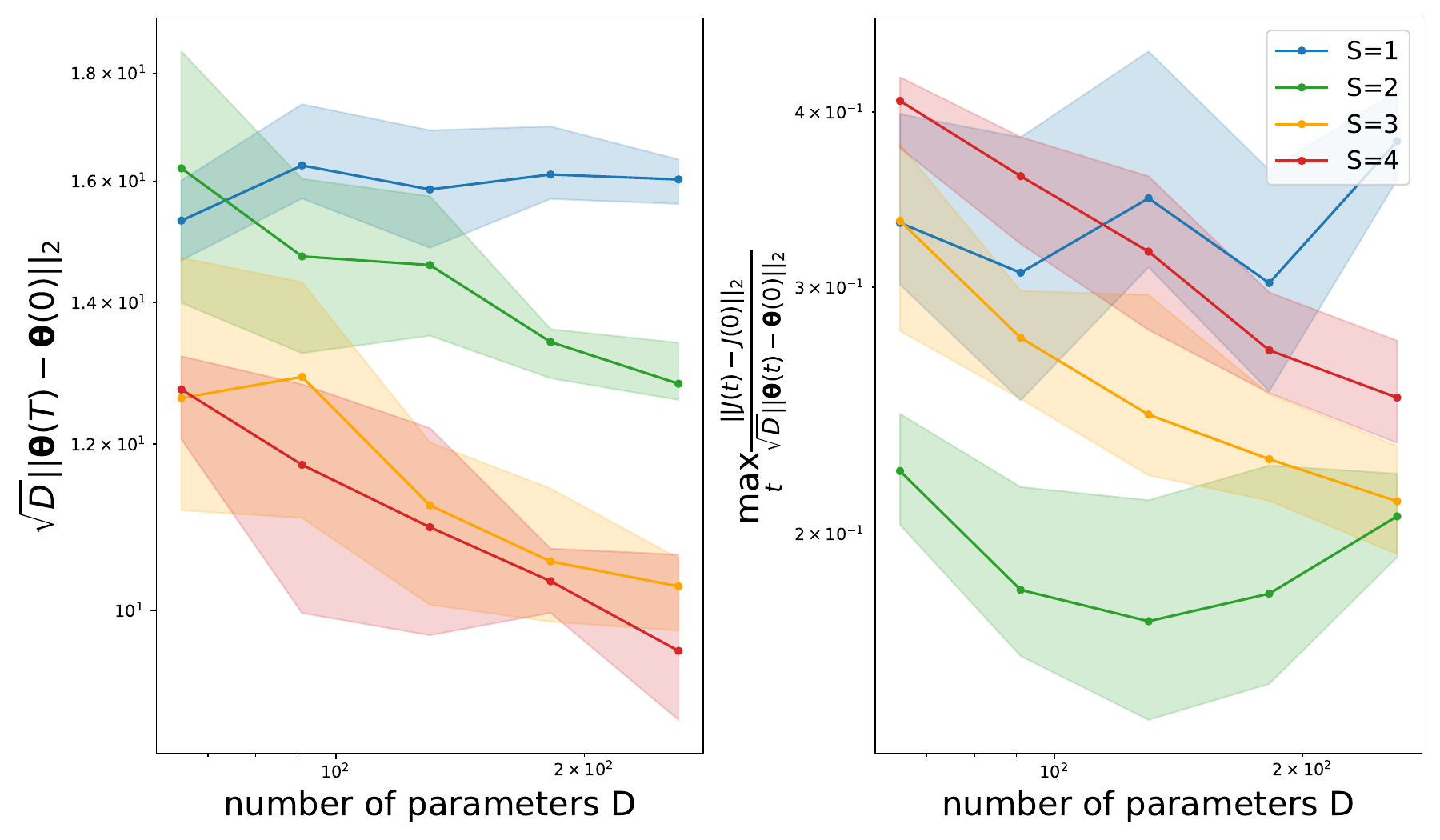}  
\caption{Numerical results of the binary classification with the wine dataset, where $N=13$ and $D \in \{65, 91, 130, 182, 260\}$. The left figure illustrates the $\ell_2$-norm distance in the parameter space from the initial to the final step. The right figure illustrates the largest coefficient in Eq.~(\ref{vqacr_lemma_ntk_local_Jacobian_stability_eq}) during the training. Each solid line denotes the average of $5$ rounds of simulations with independent circuits and parameters. }
\label{vqacr_fig_bc_wines_D_thetaandJ}
\end{figure}

\section{Discussion}

In this manuscript, we uncover the training and generalization issues of quantum machine learning with uniformly random quantum data. We demonstrate the separation performances of quantum datasets with different distributions in terms of the convergence rate and the generalization capability. Specifically, we prove fast linear convergence when optimizing the $\ell_2$-norm loss function under certain input state conditions. 
By studying the lower bound of QNTK, we derive reasonable convergence rate when the number of parameters and the number of qubits in local Hamiltonians exceed some moderate thresholds. All proposed theorems are verified by numerical experiments.

Despite these advancements, further research is imperative for a more profound understanding of QNN training. Notably, our analysis relies on the random matrix theory and assumes certain stochastic properties of datasets. Although numerical experiments indicate favorable convergence with real-world datasets, elucidating training dynamics on datasets with explicit real-world or quantum characteristics remains an area for future exploration.
Furthermore, this research primarily analyzes QNNs from the perspective of dataset properties, and VQCs are initialized with zero parameters. We expect that theories with zero initialization could be engaged as technical tools in the study with other initializations by merging the initial VQC into input states. 
Another intriguing direction for future work is to refine the parameter threshold that guarantees the linear convergence, as experiments indicate rapid convergence for circuits with the number of parameters $D\sim10^2$, which is significantly lower than our theoretical thresholds outlined in Section~\ref{vqacr_rate}. Additionally, we hope that this manuscript could inspire further research into the convergence behavior of variational models with analogous structures, such as the training of tensor networks~\cite{NIPS2016_5314b967}.

\bigskip

\textit{Acknowledgements.---}We thank Jens Eisert for useful discussions.
JL is supported in part by International Business Machines (IBM) Quantum through the Chicago Quantum Exchange, and the Pritzker School of Molecular Engineering at the University of Chicago through AFOSR MURI (FA9550-21-1-0209). LJ acknowledges support from ARO (W911NF-18-1-0020, W911NF-18-1-0212), ARO MURI (W911NF-16-1-0349, W911NF-21-1-0325), AFOSR MURI (FA9550-19-1-0399, FA9550-21-1-0209), AFRL (FA8649-21-P-0781), DoE Q-NEXT, NSF (OMA-1936118, EEC-1941583, OMA-2137642), NTT Research, and the Packard Foundation (2020-71479). This research used resources of the Oak Ridge Leadership Computing Facility, which is a DOE Office of Science User Facility supported under Contract DE-AC05-00OR22725.

\bibliographystyle{unsrt}
\bibliography{reference.bib}

\clearpage

\pagebreak

\pagenumbering{arabic}
\renewcommand*{\thepage}{S\arabic{page}}
\setcounter{lemma}{0}
\renewcommand{\thelemma}{S\arabic{lemma}}
\setcounter{theorem}{0}
\renewcommand{\thetheorem}{S\arabic{theorem}}

\onecolumngrid
\appendix

\vspace{0.5in}

\begin{center}
	{\Large \bf Supplemental Material}
\end{center}

\tableofcontents

\newpage

\section{Additional Numerical Results}

\begin{figure}[t]
\centering
\subfloat[]{
\includegraphics[width=.48\linewidth]{./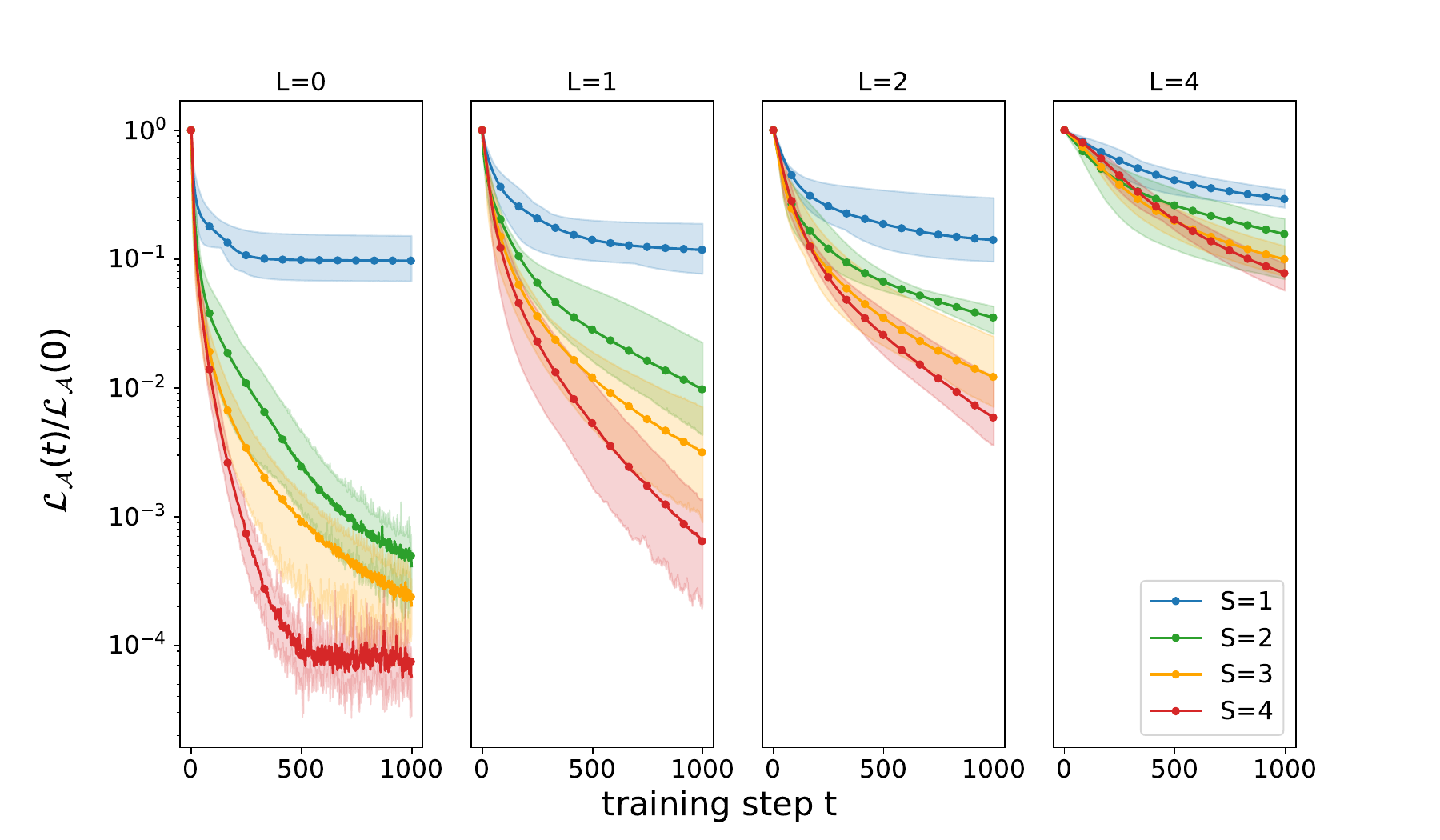}  
\label{vqacr_fig_qdl_fldc_sgd_LA}
}
\subfloat[]{
\includegraphics[width=.48\linewidth]{./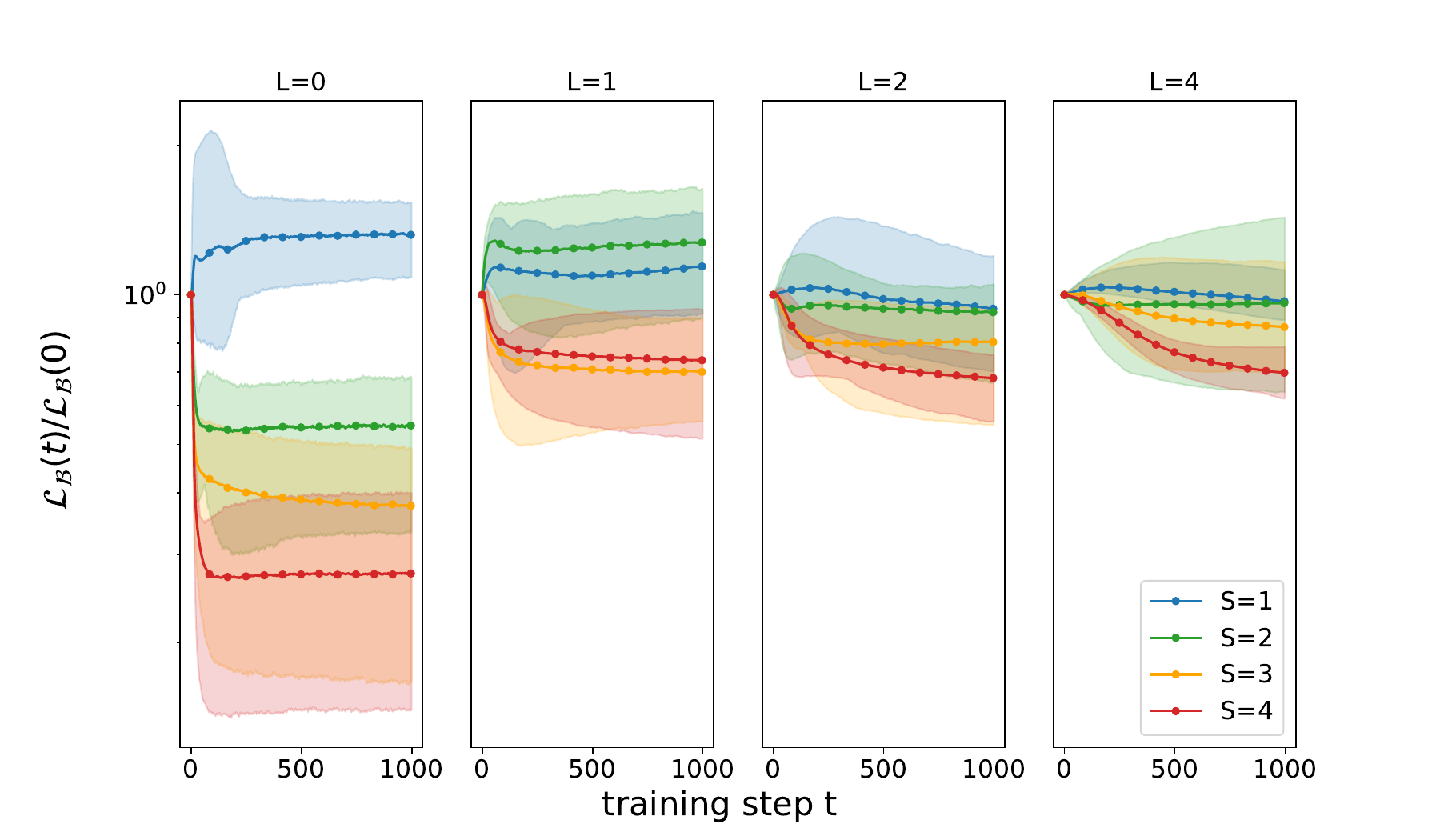}  
\label{vqacr_fig_qdl_fldc_sgd_LB}
}\\
\subfloat[]{
\includegraphics[width=.48\linewidth]{./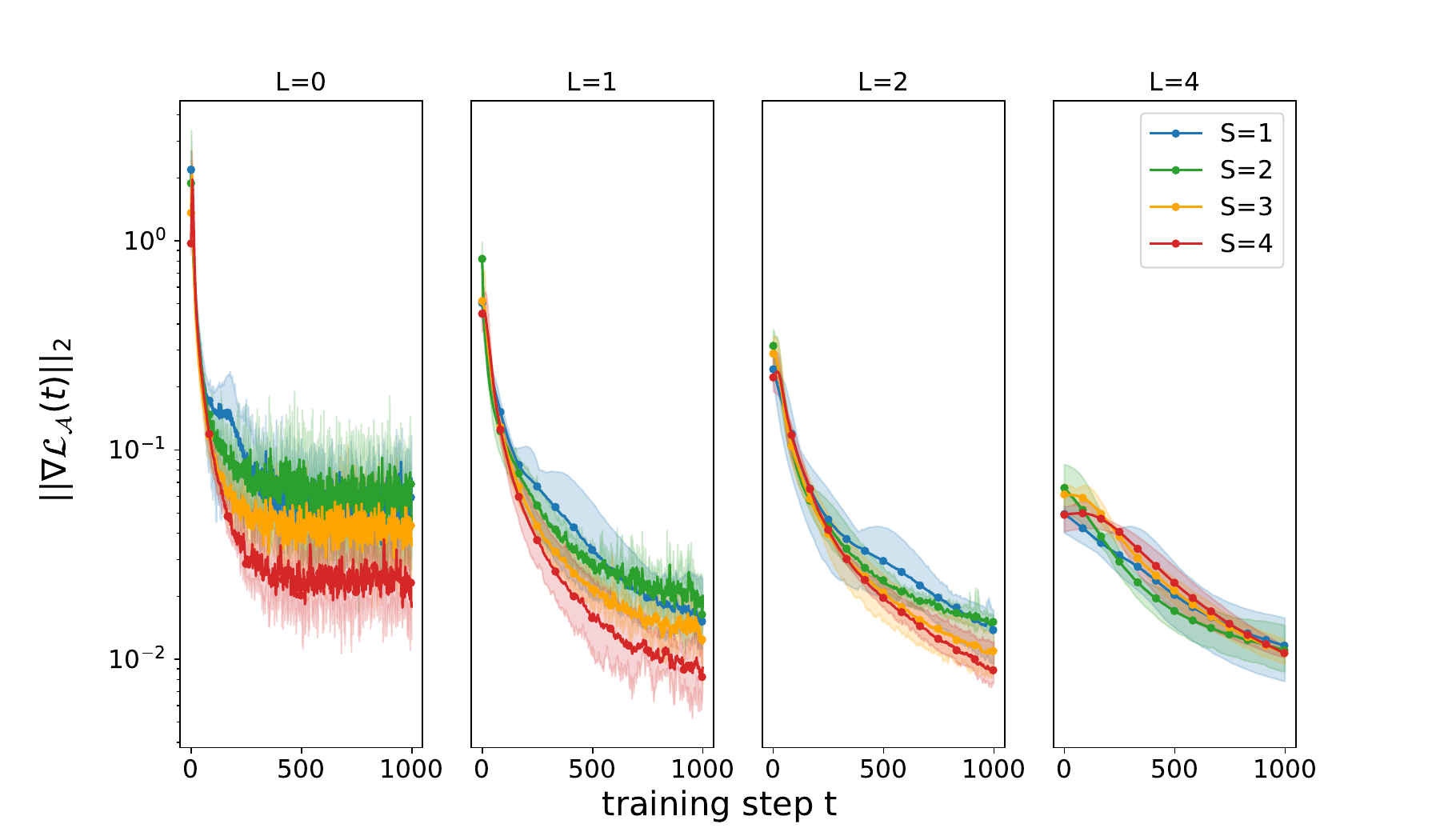}  
\label{vqacr_fig_qdl_fldc_sgd_gradnorm}
}
\subfloat[]{
\includegraphics[width=.48\linewidth]{./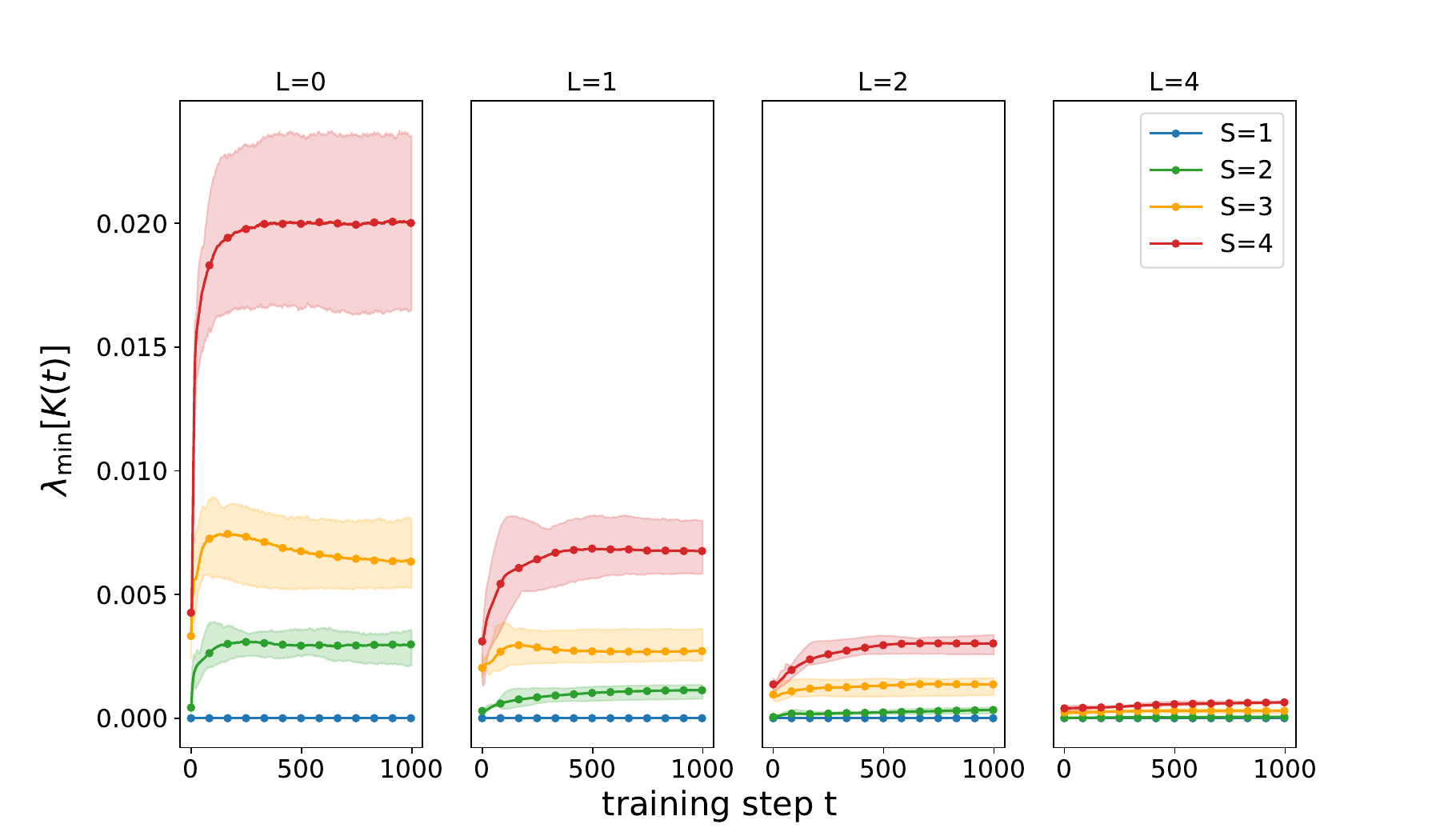}  
\label{vqacr_fig_qdl_fldc_sgd_lminK}
}
\caption{Numerical results of QDL with FLDC input states and stochastic gradient descent, where $(N,D)=(12,240)$ and $L \in \{0,1,2,4\}$. Figures~\ref{vqacr_fig_qdl_fldc_sgd_LA} and \ref{vqacr_fig_qdl_fldc_sgd_LB} show the relative loss of the training and the test dataset during the training, respectively. Figure~\ref{vqacr_fig_qdl_fldc_sgd_gradnorm} shows the $\ell_2$-norm of the gradient for the loss function. Figure~\ref{vqacr_fig_qdl_fldc_sgd_lminK} shows the least eigenvalue of the QNTK. Each solid line denotes the average of $5$ rounds of simulations with independent circuits and parameters. }
\label{vqacr_fig_qdl_fldc_sgd_L}
\end{figure}

Here we present some additional numerical results. Figure~\ref{vqacr_fig_qdl_fldc_sgd_L} shows the training and generalization performance of the quantum dynamics learning with FLDC states using stochastic gradient descent (SGD) optimizer. We add Gaussian noise into gradients to simulate the finite shots noise ($1000$ shots) when estimating measurement outcomes via the parameter-shift rule. Compared with the exact case, the training loss decays linearly towards the noise level in the SGD case. 
In terms of the test loss and the least eigenvalue of the QNTK, the exact and noisy cases exhibit comparable performance.

\section{Technical Lemmas}

In this section, we introduce some helpful lemmas that are related to unitary $t$-design distributions, variational quantum algorithms, and random matrix theory.

\subsection{Lemmas about unitary $t$-designs}

\begin{lemma}[from Ref.~\cite{nc_cerezo2020cost}]\label{vqacr_qntk_tdesign_tr_wawb}
Let $\{W_y\}_{y \in Y} \subset \mathcal{U}(d)$ form a unitary $t$-design with $t \geq 1$, and let $A,B: \mathcal{H}_w \rightarrow \mathcal{H}_w$ be arbitrary linear operators. Then
\begin{equation}\label{vqacr_qntk_tdesign_tr_wawb_eq}
\frac{1}{|Y|} \sum_{y \in Y} \Tr [W_y A W_y^\dag B] = \int d\mu(W) \Tr [W A W^\dag B] =\frac{\Tr[A]\Tr[B]}{d}.
\end{equation}
\end{lemma}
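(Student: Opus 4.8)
The plan is to split the claimed chain of equalities into two independent parts. The first equality --- that the finite average over the design equals the Haar integral --- should follow immediately from the defining property of a unitary $t$-design, once we observe that $\Tr[W A W^\dag B]$, viewed as a function of the entries of $W$, is a polynomial that is homogeneous of degree one in the matrix elements $\{W_{ij}\}$ and of degree one in their complex conjugates $\{\overline{W_{ij}}\}$. Writing it out in components, $\Tr[W A W^\dag B] = \sum_{i,j,k,l} W_{ij} A_{jk} \overline{W_{lk}} B_{li}$, which is manifestly a linear combination of monomials $W_{ij}\overline{W_{lk}}$ of balanced holomorphic/antiholomorphic degree $1 \le t$; hence a $t$-design with $t \ge 1$ reproduces the Haar average of each such monomial, and therefore of the whole expression by linearity.

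For the second equality I would compute the Haar integral directly via the first-moment (Weingarten) identity $\int d\mu(W)\, W_{ij}\overline{W_{lk}} = \frac{1}{d}\delta_{il}\delta_{jk}$, which is the $t=1$ instance of the standard unitary-integration formulas and can itself be derived from Schur's lemma applied to the regular action on $\mathbb{C}^d$. Substituting into the component expansion gives $\int d\mu(W)\,\Tr[W A W^\dag B] = \sum_{i,j,k,l} \frac{1}{d}\delta_{il}\delta_{jk}\, A_{jk} B_{li} = \frac{1}{d}\Big(\sum_j A_{jj}\Big)\Big(\sum_i B_{ii}\Big) = \frac{\Tr[A]\Tr[B]}{d}$. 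Equivalently and more slickly, one can observe that $\int d\mu(W)\, W X W^\dag$ commutes with every unitary and hence equals $\frac{\Tr[X]}{d} I$ by Schur's lemma, so that $\int d\mu(W)\,\Tr[W A W^\dag B] = \Tr\big[(\int d\mu(W)\, W A W^\dag)\, B\big] = \Tr\big[\tfrac{\Tr[A]}{d}\, B\big] = \frac{\Tr[A]\Tr[B]}{d}$.

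There is no genuine obstacle here: this is a classical fact about $1$-designs, and the only points requiring care are bookkeeping ones --- confirming that the trace expression lies within the class of polynomials for which the $t$-design property is guaranteed (degree at most $t$ in $W$ and at most $t$ in $W^\dag$ separately, so that $t=1$ suffices), and invoking the $d$-dimensional first-moment formula with the correct $1/d$ normalization. Since $A$ and $B$ are allowed to be arbitrary linear operators, with no Hermiticity or trace normalization assumed, I would keep the entire argument at the level of finite-dimensional index manipulations so that no positivity or normalization hypotheses are implicitly used.
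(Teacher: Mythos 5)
Your proof is correct. Note, however, that the paper does not actually prove this lemma: it is imported verbatim from the cited reference (Cerezo et al.) and used as a black box, so there is no in-paper argument to compare against. Your two-step derivation --- reducing the design average to the Haar average by observing that $\Tr[W A W^\dag B]$ is a balanced degree-$(1,1)$ polynomial in the entries of $W$ and $\overline{W}$, then evaluating the Haar integral either by the first-moment formula $\int d\mu(W)\, W_{ij}\overline{W_{lk}} = \frac{1}{d}\delta_{il}\delta_{jk}$ or by the Schur's-lemma identity $\int d\mu(W)\, W A W^\dag = \frac{\Tr[A]}{d} I$ --- is the standard and complete argument, and it correctly avoids any unneeded hypotheses on $A$ and $B$.
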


\begin{lemma}[from Ref.~\cite{nc_cerezo2020cost}]\label{vqacr_qntk_tdesign_trwawbtrwcwd}
Let $\{W_y\}_{y \in Y} \subset \mathcal{U}(d)$ form a unitary $t$-design with $t \geq 2$, and let $A,B,C,D: \mathcal{H}_w \rightarrow \mathcal{H}_w$ be arbitrary linear operators. Then
\begin{align}
\frac{1}{|Y|} \sum_{y \in Y} \Tr [W_y A W_y^\dag B ] \Tr [ W_y C W_y^\dag D] ={}& \int d\mu(W) \Tr [W A W^\dag B ] \Tr [ W C W^\dag D] \notag \\
={}& \frac{1}{d^2-1} \left( \Tr[AC] \Tr[BD] + \Tr[A]\Tr[B]\Tr[C]\Tr[D] \right) \notag \\
-{}& \frac{1}{d(d^2-1)} \left( \Tr[A]\Tr[C] \Tr[BD] + \Tr[AC]\Tr[B] \Tr[D] \right) . \label{vqacr_qntk_tdesign_trwawbtrwcwd_eq}
\end{align}
\end{lemma}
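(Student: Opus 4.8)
\textbf{Proof proposal for Lemma~\ref{vqacr_qntk_tdesign_trwawbtrwcwd}.}
The statement is the standard Weingarten/second-moment formula for the Haar measure on $\mathcal{U}(d)$, restricted to linear-functional consequences that already hold for any unitary $2$-design. The plan is to compute the fourth-moment tensor $\int d\mu(W)\, W_{i_1 j_1} W_{i_2 j_2} \overline{W_{k_1 l_1}} \overline{W_{k_2 l_2}}$ and then contract it against the index pattern dictated by $\Tr[WAW^\dag B]\Tr[WCW^\dag D]$. First I would recall the well-known expression
\begin{align}
\int d\mu(W)\, W_{i_1 j_1} W_{i_2 j_2} \overline{W_{k_1 l_1}} \overline{W_{k_2 l_2}}
={}& \mathrm{Wg}(1^2,d)\left( \delta_{i_1 k_1}\delta_{i_2 k_2}\delta_{j_1 l_1}\delta_{j_2 l_2} + \delta_{i_1 k_2}\delta_{i_2 k_1}\delta_{j_1 l_2}\delta_{j_2 l_1} \right) \notag \\
+{}& \mathrm{Wg}((12),d)\left( \delta_{i_1 k_1}\delta_{i_2 k_2}\delta_{j_1 l_2}\delta_{j_2 l_1} + \delta_{i_1 k_2}\delta_{i_2 k_1}\delta_{j_1 l_1}\delta_{j_2 l_2} \right), \notag
\end{align}
with Weingarten coefficients $\mathrm{Wg}(1^2,d)=\frac{1}{d^2-1}$ and $\mathrm{Wg}((12),d)=\frac{-1}{d(d^2-1)}$. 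Since a unitary $2$-design reproduces the Haar average of any polynomial of degree $\le 2$ in $W$ and degree $\le 2$ in $\overline{W}$, the identity for the design sum follows from the Haar case, which is what the first equality in the statement records.

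Next I would carry out the contraction. Writing $\Tr[WAW^\dag B] = \sum W_{i j} A_{j k} \overline{W_{l k}} B_{l i}$ and similarly $\Tr[WCW^\dag D] = \sum W_{i' j'} C_{j' k'} \overline{W_{l' k'}} D_{l' i'}$, the product becomes a sum over the eight free indices against the fourth-moment tensor above. Each of the four Kronecker terms produces a product of traces: the ``identity-permutation on kets, identity on bras'' term contributes $\Tr[A]\Tr[B]\Tr[C]\Tr[D]$; the ``swap on kets, swap on bras'' term contributes $\Tr[AC]\Tr[BD]$; and the two cross terms contribute $\Tr[A]\Tr[C]\Tr[BD]$ and $\Tr[AC]\Tr[B]\Tr[D]$ respectively. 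Grouping the first two with coefficient $\frac{1}{d^2-1}$ and the last two with coefficient $\frac{-1}{d(d^2-1)}$ yields exactly Eq.~(\ref{vqacr_qntk_tdesign_trwawbtrwcwd_eq}). This is essentially bookkeeping: the only care needed is to match which $\delta$'s land on the ``A/B block'' versus the ``C/D block'' and to track whether a bra-index delta links the two blocks (cross terms) or stays within a block.

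The main obstacle, such as it is, is purely the index accounting in the contraction step: with eight summation indices and four terms one must be careful not to conflate a within-block contraction (giving, say, $\Tr[AC]$) with a between-block one. A clean way to avoid errors is to introduce permutation-operator notation on $\mathcal{H}_w^{\otimes 2}$, writing the second moment as $\int d\mu(W)\, W^{\otimes 2} (A\otimes C) (W^\dag)^{\otimes 2} = \frac{1}{d^2-1}\big(\Tr[A\otimes C]\,\mathbb{I} + \Tr[(A\otimes C)\mathrm{SWAP}]\,\mathrm{SWAP}\big) - \frac{1}{d(d^2-1)}\big(\Tr[A\otimes C]\,\mathrm{SWAP} + \Tr[(A\otimes C)\mathrm{SWAP}]\,\mathbb{I}\big)$, then take the trace against $B\otimes D$ and use $\Tr[(B\otimes D)\mathbb{I}]=\Tr[B]\Tr[D]$, $\Tr[(B\otimes D)\mathrm{SWAP}]=\Tr[BD]$, $\Tr[(A\otimes C)\mathrm{SWAP}]=\Tr[AC]$. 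This packages the combinatorics into four elementary swap-trace identities and makes the final grouping immediate. Since the design version is just the statement that the design sum equals this Haar integral (true for any $2$-design by definition), no further work is required there.
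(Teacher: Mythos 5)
Your proposal is correct. Note that the paper does not prove this lemma at all: it is imported verbatim from Ref.~\cite{nc_cerezo2020cost} as a known second-moment (Weingarten) identity, so there is no in-paper argument to compare against. Your derivation is the standard one and checks out: the fourth-moment tensor you quote is the Collins--\'Sniady formula for $n=2$ with $\mathrm{Wg}(1^2,d)=\frac{1}{d^2-1}$ and $\mathrm{Wg}((12),d)=\frac{-1}{d(d^2-1)}$; the four delta patterns contract to $\Tr[A]\Tr[B]\Tr[C]\Tr[D]$, $\Tr[AC]\Tr[BD]$, $\Tr[AC]\Tr[B]\Tr[D]$, and $\Tr[A]\Tr[C]\Tr[BD]$ with the stated coefficients; and the reduction of the design sum to the Haar integral is immediate because the integrand is a balanced degree-$(2,2)$ polynomial in the entries of $W$ and $\overline{W}$ and any $t$-design with $t\geq 2$ is in particular a $2$-design. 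Your closing reformulation via the twirl $\int d\mu(W)\, W^{\otimes 2}(A\otimes C)(W^\dag)^{\otimes 2}$ paired against $B\otimes D$ using $\Tr[(B\otimes D)\,\mathrm{SWAP}]=\Tr[BD]$ is a clean way to organize the bookkeeping and reproduces Eq.~(\ref{vqacr_qntk_tdesign_trwawbtrwcwd_eq}) exactly.
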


\subsection{Lemmas about variational quantum algorithms}

\begin{lemma}\label{vqacr_qntk_jac_form}
Let $J_{da}(\bm{\theta})=\frac{\partial {z}_a}{\partial \theta_d}(\bm{\theta})$ be the partial derivative of $z_a(\bm{\theta})$ in Eq.~(\ref{vqacr_qntk_loss_i_eq}), where Hamiltonians $\{H_d\}_{d=1}^{D}$ in the ansätz $V(\bm{\theta})=\prod_{d=D}^{1} \exp(-iH_{d}\theta_{d}/2) W_d $ are Hermitian unitary matrices.  Let $V_{d:d'}$  be $ \exp(-iH_{d}\theta_{d}/2) W_d \exp(-iH_{d-1}\theta_{d-1}/2) W_{d-1} \cdots \exp(-iH_{d'}\theta_{d'}/2) W_{d'}$, then we have
\begin{equation}\label{vqacr_qntk_jac_form_eq}
J_{da}(\bm{\theta}) = \frac{\partial z_a}{\partial \theta_d} (\bm{\theta})  = -\frac{i}{2} \Tr \left[  \big[ {V_{D:d+1}}^\dag O V_{D:d+1}  , H_d  \big] V_{d:1} \rho_a {V_{d:1}}^\dag  \right] . 
\end{equation}
\end{lemma}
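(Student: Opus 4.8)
The statement to prove is Lemma~\ref{vqacr_qntk_jac_form}, the explicit formula for the Jacobian entry $J_{da}(\bmt)$ of a variational quantum circuit. The plan is a direct differentiation of $z_a(\bmt) = \Tr[O(\bmt)\rho_a]$ with $O(\bmt) = V(\bmt)^\dag O V(\bmt)$ and $V(\bmt) = \prod_{d=D}^{1}\exp(-iH_d\theta_d/2)W_d$, using the fact that each $H_d$ is Hermitian unitary (so the parameter-dependence enters through a single exponential factor that is easy to differentiate). First I would fix the parameter index $d$ and split the circuit at the $d$-th gate, writing $V(\bmt) = V_{D:d+1}\, e^{-iH_d\theta_d/2} W_d\, V_{d-1:1}$, and recalling $V_{d:1} = e^{-iH_d\theta_d/2}W_d V_{d-1:1}$ from the lemma's notation. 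Only the middle exponential depends on $\theta_d$, with $\partial_{\theta_d} e^{-iH_d\theta_d/2} = -\tfrac{i}{2}H_d\, e^{-iH_d\theta_d/2} = -\tfrac{i}{2}\, e^{-iH_d\theta_d/2} H_d$ (the two forms coincide since $H_d$ commutes with its own exponential — no need for $H_d$ unitary here, only for later lemmas).

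\textbf{Key steps.} (i) Apply the product rule to $z_a(\bmt) = \Tr[V(\bmt)^\dag O V(\bmt)\rho_a]$. Differentiating $V(\bmt)$ contributes a term and differentiating $V(\bmt)^\dag$ contributes its conjugate; by cyclicity of the trace and reality of $z_a$ these combine into $-\tfrac{i}{2}\Tr[\,[\tilde O_d, H_d]\, V_{d:1}\rho_a V_{d:1}^\dag\,]$ after collecting the exponentials, where $\tilde O_d := V_{D:d+1}^\dag O V_{D:d+1}$ is the observable evolved back through the gates after position $d$. Concretely: the $V$-derivative term is $-\tfrac{i}{2}\Tr[V_{d:1}^\dag W_d^\dag\cdots H_d\cdots O\cdots\rho_a]$ and pushing factors around the trace identifies it as $-\tfrac{i}{2}\Tr[\tilde O_d H_d\, V_{d:1}\rho_a V_{d:1}^\dag]$; the $V^\dag$-derivative term similarly gives $+\tfrac{i}{2}\Tr[\tilde O_d\, V_{d:1}\rho_a V_{d:1}^\dag H_d] = +\tfrac{i}{2}\Tr[H_d\tilde O_d\, V_{d:1}\rho_a V_{d:1}^\dag]$ (using cyclicity). (ii) Subtract to form the commutator $[\tilde O_d, H_d] = \tilde O_d H_d - H_d\tilde O_d$, yielding exactly Eq.~(\ref{vqacr_qntk_jac_form_eq}). (iii) Sanity-check Hermiticity: $[\tilde O_d,H_d]^\dag = [H_d,\tilde O_d] = -[\tilde O_d,H_d]$ is anti-Hermitian, so $-\tfrac{i}{2}[\tilde O_d,H_d]$ is Hermitian and the trace against the positive operator $V_{d:1}\rho_a V_{d:1}^\dag$ is real, consistent with $J_{da}$ being a real partial derivative.

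\textbf{Main obstacle.} There is no deep obstacle — the argument is a bookkeeping exercise in tracking which factors of $V(\bmt)$ sit to the left versus right of the differentiated exponential and then repeatedly invoking cyclicity of the trace to regroup them into the stated form. The one place to be careful is the bracketing of the block indices: $V_{D:d+1}$ collects gates $d+1$ through $D$ and $V_{d:1}$ collects gates $1$ through $d$ (including the $d$-th exponential $e^{-iH_d\theta_d/2}$ itself), so that $V(\bmt) = V_{D:d+1}V_{d:1}$ exactly; getting the split point consistent on both the $V$ and $V^\dag$ sides is what makes the two derivative terms assemble cleanly into a single commutator. I would also note in passing that the formula does not actually require $H_d$ to be unitary — only Hermitian — but stating it for Hermitian unitary $H_d$ matches the hypotheses used downstream (e.g.\ in Lemmas~\ref{vqacr_lemma_qntk_least_eigen_gaussian_matrix_main} and \ref{vqacr_lemma_qntk_least_eigen_random_sphere_main}, where $H_d^2 = I$ is exploited) and causes no loss.
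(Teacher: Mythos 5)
Your proposal is correct and follows essentially the same route as the paper's proof: split $V(\bmt)$ at the $d$-th gate, differentiate the single $\theta_d$-dependent exponential via the product rule, and use cyclicity of the trace to assemble the two resulting terms into the commutator $-\tfrac{i}{2}\Tr\big[[V_{D:d+1}^\dag O V_{D:d+1}, H_d]\, V_{d:1}\rho_a V_{d:1}^\dag\big]$ (the paper first forms $[H_d, V_{d:1}\rho_a V_{d:1}^\dag]$ and then applies $\Tr[A[B,C]]=\Tr[[A,B]C]$, which is the same bookkeeping in a different order). Your side observations — that Hermiticity of $H_d$ suffices here and that the result is manifestly real — are accurate and consistent with the paper.
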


\begin{proof}

We have
\begin{align}
J_{da} (\bmt) ={}& \frac{\partial }{ \partial \theta_{d}}  \Tr \left[ O V_{D:{d}+1} \exp(-iH_{d}\theta_{d}/2) W_d V_{d-1:1} \rho_a {V_{d-1:1}}^\dag {W_d }^\dag \exp(iH_{d}\theta_{d}/2) {V_{D:d+1}}^\dag  \right] \label{vqacr_qntk_jac_form_1_1} \\
={}& \frac{\partial }{ \partial \theta_{d}}  \Tr \left[  {V_{D:d+1}}^\dag O V_{D:d+1} \exp(-iH_{d}\theta_{d}/2) W_d V_{d-1:1} \rho_a {V_{d-1:1}}^\dag {W_d}^\dag \exp(iH_{d}\theta_{d}/2) \right] \label{vqacr_qntk_jac_form_1_2} \\
={}& \Tr \left[ {V_{D:d+1}}^\dag O V_{D:d+1} \Big( -\frac{i}{2}H_{d} \exp(-iH_{d}\theta_{d}/2)  {W_d} V_{d-1:1} \rho_a {V_{d-1:1}}^\dag {W_d}^\dag \exp(iH_{d}\theta_{d}/2) \Big) \right] \notag \\
+{}& \Tr \left[ {V_{D:d+1}}^\dag O V_{D:d+1} \Big( \exp(-iH_{d}\theta_{d}/2) {W_d} V_{d-1:1} \rho_a {V_{d-1:1}}^\dag {W_d}^\dag \exp(iH_{d}\theta_{d}/2) \frac{i}{2}H_{d} \Big) \right] \label{vqacr_qntk_jac_form_1_3} \\
={}& -\frac{i}{2} \Tr \left[ {V_{D:d+1}}^\dag O V_{D:d+1}  H_{d} V_{d:1} \rho_a {V_{d:1}}^\dag  \right] + \frac{i}{2} \Tr \left[ {V_{D:d+1}}^\dag O V_{D:d+1} V_{d:1} \rho_m {V_{d:1}}^\dag  H_{d}  \right] \notag \\
={}& -\frac{i}{2} \Tr \left[ {V_{D:d+1}}^\dag O V_{D:d+1} \big[ H_{d}  ,  V_{d:1} \rho_a {V_{d:1}}^\dag   \big] \right] \notag \\
={}& -\frac{i}{2} \Tr \left[  \big[ {V_{D:d+1}}^\dag O V_{D:d+1}  , H_{d}  \big] V_{d:1} \rho_a {V_{d:1}}^\dag  \right] .\label{vqacr_qntk_jac_form_1_5} 
\end{align}

Eq.~(\ref{vqacr_qntk_jac_form_1_1}) follows from Eq.~(\ref{vqacr_qntk_loss_i_eq}) and notations $V_{D:d+1}$, $V_{d:1}$. Eq.~(\ref{vqacr_qntk_jac_form_1_2}) is derived by using $\Tr [AB] = \Tr [BA]$. Eq.~(\ref{vqacr_qntk_jac_form_1_3}) follows from the matrix derivative. Eq.~(\ref{vqacr_qntk_jac_form_1_5}) is derived by using the trace property 
\begin{equation*}
\Tr[A[B,C]]=\Tr[ABC]-\Tr[ACB]=\Tr[ABC]-\Tr[BAC]=\Tr[[A,B]C].
\end{equation*} 
Thus, we have proved Eq.~(\ref{vqacr_qntk_jac_form_eq}).

\end{proof}

\begin{lemma}
\label{vqacr_lemma_pauli_decom_single}
Let $X=\sum_{{i} \in \{0,1,2,3\}} \frac{1}{2} c_{X,i}\sigma_{{i}}$ and $Y=\sum_{{j} \in \{0,1,2,3\}} \frac{1}{2} c_{Y,j}\sigma_{{j}}$ be the Pauli decomposition of matrices $X$ and $Y$ with dimension $(2,2)$. Denote by $\bm{c}_{XY}$ the Pauli decomposition coefficient of $XY$, then
\begin{equation}\label{vqacr_lemma_pauli_decom_single_eq}
\bm{c}_{XY} = \frac{1}{2} B \left( \bm{c}_{X} \otimes \bm{c}_{Y} \right) ,
\end{equation}
where
\begin{equation}\label{vqacr_lemma_pauli_decom_single_B_k_ij}
B = \left[ B_{ijk} \right]_{k,(ij)} ={} \left[\begin{array}{cccccccccccccccc}
1 & & & & & 1 & & & & & 1 & & & & & 1 \\
& 1 & & & 1 & & & & & & & i & & & -i & \\
& & 1 & & & & & -i & 1 & & & & & i & & \\
& & & 1 & & & i & & & -i & & & 1 & & & 
\end{array} \right].
\end{equation}
\end{lemma}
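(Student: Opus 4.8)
\textbf{Proof proposal for Lemma~\ref{vqacr_lemma_pauli_decom_single}.}

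The plan is to verify Eq.~(\ref{vqacr_lemma_pauli_decom_single_eq}) directly by expanding the product $XY$ in the Pauli basis and reading off the linear map that sends the pair of coefficient vectors to the coefficient vector of the product. Concretely, I would write $X = \frac12 \sum_i c_{X,i}\sigma_i$ and $Y = \frac12 \sum_j c_{Y,j}\sigma_j$, so that $XY = \frac14 \sum_{i,j} c_{X,i} c_{Y,j}\, \sigma_i \sigma_j$. The whole statement then reduces to knowing the structure constants of the single-qubit Pauli algebra: for each pair $(i,j)$ one has $\sigma_i\sigma_j = \sum_k (M)_{ijk}\,\sigma_k$ for fixed complex numbers $(M)_{ijk}$, and substituting this gives $XY = \frac14 \sum_k \big(\sum_{i,j} (M)_{ijk} c_{X,i} c_{Y,j}\big)\sigma_k$. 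Comparing with the normalization $XY = \frac12 \sum_k (\bm{c}_{XY})_k \sigma_k$ yields $(\bm{c}_{XY})_k = \frac12 \sum_{i,j} (M)_{ijk} c_{X,i} c_{Y,j}$, which is exactly Eq.~(\ref{vqacr_lemma_pauli_decom_single_eq}) with $B_{ijk} = (M)_{ijk}$ and the tensor-product ordering $(\bm{c}_X \otimes \bm{c}_Y)_{(ij)} = c_{X,i}c_{Y,j}$.

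The remaining work is to tabulate the coefficients $B_{ijk}$, i.e. to record the multiplication table $\sigma_i\sigma_j$ for $i,j \in \{0,1,2,3\}$ (with $\sigma_0 = I$, $\sigma_1 = X$, $\sigma_2 = Y$, $\sigma_3 = Z$) and check that it matches the displayed matrix in Eq.~(\ref{vqacr_lemma_pauli_decom_single_B_k_ij}). I would organize this as: the identity rows/columns ($i=0$ or $j=0$) contribute the four ``diagonal'' $1$'s visible as the outer pattern; the squares $\sigma_1^2 = \sigma_2^2 = \sigma_3^2 = I$ contribute to the $k=0$ row; and the genuinely noncommutative products use $\sigma_1\sigma_2 = i\sigma_3$, $\sigma_2\sigma_3 = i\sigma_1$, $\sigma_3\sigma_1 = i\sigma_2$ together with their reverses $\sigma_2\sigma_1 = -i\sigma_3$, etc. One checks these against the $\pm i$ entries of $B$, being careful that the column index $(ij)$ runs in the lexicographic order $00,01,02,03,10,\dots,33$ induced by the Kronecker product $\bm c_X\otimes\bm c_Y$. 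This is a finite $4\times 16$ bookkeeping check and needs no clever idea.

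The only subtlety — and the place where an error would most easily creep in — is the \emph{ordering convention}: one must make sure that the index $i$ (from $X$) is the ``outer'' Kronecker factor and $j$ (from $Y$) the ``inner'' one, so that column $(ij)$ corresponds to $B_{ijk}$ with the $X$-index varying slowest, consistent with how $B$ is written and with how the lemma will later be applied (iteratively, to build up multi-qubit products). So the proof is: (i) expand $XY$ bilinearly; (ii) invoke the single-qubit Pauli product rules $\sigma_i\sigma_j = \sum_k B_{ijk}\sigma_k$; (iii) match normalizations to get the factor $\frac12$; (iv) exhibit the $16$ columns of $B$ explicitly and confirm they reproduce the Pauli multiplication table in the chosen ordering. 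There is no real obstacle; the ``hard part'' is merely being disciplined about signs and index order in step (iv).
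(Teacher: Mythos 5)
Your proposal is correct and follows essentially the same route as the paper's proof: expand $XY$ bilinearly, insert the Pauli structure constants $\sigma_i\sigma_j=\sum_k B_{ijk}\sigma_k$ (the identity/equal-index cases giving $1$ and the fully distinct cases giving $i\epsilon_{ijk}$), match the $\tfrac12$ normalization, and reshape the $(4,4,4)$ tensor into the displayed $4\times 16$ matrix under the Kronecker ordering. Nothing further is needed.
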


\begin{proof}

We have 

\begin{align*}
XY ={}& \frac{1}{4} \sum_{i,j=0}^{3} x_i \sigma_i y_j \sigma_j \\
={}& \frac{1}{4} \sum_{i,j,k=0}^{3} x_i y_j B_{ijk} \sigma_k,
\end{align*}
where $B$ is a tensor with dimension $(4,4,4)$. By using the property of Pauli matrices, we can obtain entries of $B$ as
\begin{equation}\label{vqacr_lemma_pauli_decom_single_B_ijk}
B_{ijk} ={} \left\{
\begin{aligned}
{}& 1,\ & \text{when one of $\{i,j,k\}$ is $0$ and the other two index equals to each other;} \\
{}& i \epsilon_{ijk}, \ & \text{when $\{i,j,k\}$ differs from each other;} \\
{}& 0 , \ & \text{other cases;}
\end{aligned}
\right.
\end{equation}
where $\epsilon_{ijk}$ is the Levi-Civita symbol. 
By rewriting the tensor $B$ as a matrix with dimension $(4,16)$, we obtain Eq.~(\ref{vqacr_lemma_pauli_decom_single_B_k_ij}).

\end{proof}

\begin{lemma}
\label{vqacr_lemma_pauli_decom}
Let $X=\sum_{\bm{i} \in \{0,1,2,3\}^{\otimes N}} \frac{1}{2^N} c_{X,\bm{i}}\sigma_{\bm{i}}$ and $Y=\sum_{\bm{j} \in \{0,1,2,3\}^{\otimes N}} \frac{1}{2^N} c_{Y,\bm{j}}\sigma_{\bm{j}}$ be the Pauli decomposition of matrices $X$ and $Y$ with dimension $(2^N,2^N)$. Denote by $\bm{c}_{XY}$ the Pauli decomposition coefficient of $XY$, then
\begin{equation}\label{vqacr_lemma_pauli_decom_eq}
\bm{c}_{XY} = \frac{1}{2^N} B^{\otimes} \left( \bm{c}_{X} \otimes \bm{c}_{Y} \right) ,
\end{equation}
where $B^\otimes$ denotes the tensor product of $B$ in Lemma~\ref{vqacr_lemma_pauli_decom_single} under the index order $(i_1,\cdots,i_N, j_1,\cdots, j_N, k_1, \cdots, k_N)$.
\end{lemma}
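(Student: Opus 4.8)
The plan is to reduce the $N$-qubit identity to the single-qubit case of Lemma~\ref{vqacr_lemma_pauli_decom_single}, applied one tensor factor at a time, and then to perform a reordering of tensor indices.

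First I would expand the product directly from the Pauli decompositions of $X$ and $Y$:
\[
XY = \frac{1}{4^N} \sum_{\bm{i}, \bm{j} \in \{0,1,2,3\}^{\otimes N}} c_{X,\bm{i}}\, c_{Y,\bm{j}}\, \sigma_{\bm{i}} \sigma_{\bm{j}} .
\]
Since $\sigma_{\bm{i}} = \sigma_{i_1} \otimes \cdots \otimes \sigma_{i_N}$, the operator product factorizes over qubits, $\sigma_{\bm{i}} \sigma_{\bm{j}} = \bigotimes_{n=1}^{N} \sigma_{i_n} \sigma_{j_n}$, and for each single qubit the computation underlying Eq.~(\ref{vqacr_lemma_pauli_decom_single_B_ijk}) gives $\sigma_{i_n} \sigma_{j_n} = \sum_{k_n \in \{0,1,2,3\}} B_{i_n j_n k_n} \sigma_{k_n}$. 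Substituting and interchanging the finite sums yields
\[
XY = \frac{1}{4^N} \sum_{\bm{k}} \left( \sum_{\bm{i}, \bm{j}} c_{X,\bm{i}}\, c_{Y,\bm{j}} \prod_{n=1}^{N} B_{i_n j_n k_n} \right) \sigma_{\bm{k}} .
\]
Comparing with $XY = \sum_{\bm{k}} \frac{1}{2^N} c_{XY,\bm{k}} \sigma_{\bm{k}}$ and invoking uniqueness of the Pauli decomposition, I read off $c_{XY,\bm{k}} = \frac{1}{2^N} \sum_{\bm{i}, \bm{j}} \left( \prod_{n=1}^{N} B_{i_n j_n k_n} \right) c_{X,\bm{i}}\, c_{Y,\bm{j}}$.

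It then remains to recognize the contraction $\prod_{n} B_{i_n j_n k_n}$ as the matrix $B^{\otimes}$: by definition of the tensor product of linear maps, $\prod_n B_{i_n j_n k_n}$ is the entry of $\bigotimes_{n=1}^{N} B$ with output index $(k_1,\cdots,k_N)$ and input multi-index $((i_1,j_1),\cdots,(i_N,j_N))$; permuting the $2N$ input slots from this interleaved order to the blocked order $(i_1,\cdots,i_N,j_1,\cdots,j_N)$ turns the contraction into the matrix--vector product $B^{\otimes}(\bm{c}_X \otimes \bm{c}_Y)$, which is precisely the reindexing specified in the statement. This gives Eq.~(\ref{vqacr_lemma_pauli_decom_eq}).

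The main obstacle is bookkeeping rather than mathematics: one has to pin down carefully that $B^{\otimes}$ means $\bigotimes_n B$ composed with the permutation of tensor factors that reorders $((i_n)_n,(j_n)_n)$, and to confirm that the imaginary units $i\epsilon_{ijk}$ in Eq.~(\ref{vqacr_lemma_pauli_decom_single_B_ijk}) multiply consistently across the tensor factors so that no phase is dropped in the factor-wise substitution for $\sigma_{i_n}\sigma_{j_n}$. An equivalent alternative is induction on $N$, splitting off the first qubit and combining Lemma~\ref{vqacr_lemma_pauli_decom_single} with $(A_1 \otimes A_2)(B_1 \otimes B_2) = (A_1 B_1) \otimes (A_2 B_2)$; I expect the direct factorization above to be the shorter route.
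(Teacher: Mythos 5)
Your proof is correct and is essentially the same argument the paper intends: the paper's own proof is a one-line appeal to induction on $N$ via Lemma~\ref{vqacr_lemma_pauli_decom_single}, and your direct qubit-by-qubit factorization of $\sigma_{\bm{i}}\sigma_{\bm{j}}$ is just that induction unrolled, with the index-reordering step made explicit. If anything, your version supplies the bookkeeping (the permutation from interleaved to blocked input indices defining $B^{\otimes}$) that the paper leaves implicit.
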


\begin{proof}
Eq.~(\ref{vqacr_lemma_pauli_decom_eq}) could be proved by induction by using Lemma~\ref{vqacr_lemma_pauli_decom_single}.
\end{proof}

\subsection{Lemmas about random matrix theory}

\begin{lemma}[Covariance matrix estimation, Theorem~5.44 in \cite{arxiv_vershynin2010introduction}]\label{vqacr_lemma_qntk_operator_norm_of_covariance_matrix}
Let $X_1, \cdots, X_{d}$ be independent random vectors in $\R^{m}$ with the common second moment matrix $\Sigma=\E X_i X_i^T$. Suppose $\|X_i\|_2 \leq \sqrt{n}$ almost surely $\forall i$. Then for every $t \geq 0$, the following inequality holds with probability at least $1-m\exp(-ct^2)$:
\begin{equation*}
 \Big\| \frac{1}{d} \sum_{i=1}^{d} X_i X_i^T - \Sigma \Big\|_2 \leq \max \left( \|\Sigma\|_2^{1/2} \epsilon, \epsilon^2 \right) \ where \ \epsilon = t \sqrt{\frac{n}{d}}.
\end{equation*}
Here $c>0$ is an absolute constant.
\end{lemma}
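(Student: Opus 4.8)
The plan is to recognize this as a non-commutative (matrix) concentration statement and to prove it via the \emph{matrix Bernstein inequality}, whose dimensional prefactor is exactly what produces the stated $m\exp(-ct^2)$ probability. Write the centered empirical second-moment matrix as a sum of independent, mean-zero, symmetric random matrices,
\begin{equation*}
\frac{1}{d}\sum_{i=1}^{d} X_i X_i^T - \Sigma = \sum_{i=1}^{d} S_i, \qquad S_i := \frac{1}{d}\left( X_i X_i^T - \Sigma \right),
\end{equation*}
so that $\E S_i = 0$ and the object to bound is $\| \sum_i S_i \|_2$. Matrix Bernstein controls this in terms of two parameters: a uniform almost-sure bound $K$ on $\|S_i\|_2$, and a variance proxy $\sigma^2 := \| \sum_i \E S_i^2 \|_2$.

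First I would estimate these two parameters using only the hypothesis $\|X_i\|_2 \le \sqrt{n}$. Since $X_iX_i^T$ is rank one with norm $\|X_i\|_2^2 \le n$, and $\|\Sigma\|_2 \le n$ (as $\Sigma$ is an average of rank-one PSD matrices each of norm $\le n$), the triangle inequality gives $\|S_i\|_2 \le 2n/d =: K$. For the variance, the key identity is $\E S_i^2 = d^{-2}\,\E(X_iX_i^T - \Sigma)^2 = d^{-2}\left( \E[\|X_i\|_2^2\, X_iX_i^T] - \Sigma^2 \right)$; bounding $\E[\|X_i\|_2^2\, X_iX_i^T] \preceq n\,\E[X_iX_i^T] = n\Sigma$ in the positive-semidefinite order (again using $\|X_i\|_2^2 \le n$) and discarding the negative term $-\Sigma^2$ yields $\sum_i \E S_i^2 \preceq (n/d)\,\Sigma$, hence $\sigma^2 \le n\|\Sigma\|_2/d$.

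Next I would feed $K$ and $\sigma^2$ into matrix Bernstein, which states that for any deviation level $s$,
\begin{equation*}
\Pr\!\left[ \Big\| \sum_i S_i \Big\|_2 \ge s \right] \le 2m \exp\!\left( -\frac{s^2/2}{\sigma^2 + K s/3} \right),
\end{equation*}
and take $s = \max(\|\Sigma\|_2^{1/2}\epsilon, \epsilon^2)$ with $\epsilon = t\sqrt{n/d}$. The argument splits into the two regimes defining the maximum. When $\epsilon \le \|\Sigma\|_2^{1/2}$ one has $s=\|\Sigma\|_2^{1/2}\epsilon$ and $Ks \lesssim \sigma^2$, so the exponent is $\gtrsim s^2/\sigma^2 = d\epsilon^2/n = t^2$; when $\epsilon \ge \|\Sigma\|_2^{1/2}$ one has $s=\epsilon^2$ with $\sigma^2 \le n\epsilon^2/d$ and $Ks \lesssim n\epsilon^2/d$, so again the exponent is $\gtrsim d\epsilon^2/n = t^2$. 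Absorbing the factor $2$ and all numerical constants into the absolute constant $c$ gives the claimed probability $1 - m\exp(-ct^2)$.

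The one genuinely load-bearing step is the variance computation, specifically the semidefinite bound $\E[\|X_i\|_2^2\, X_iX_i^T] \preceq n\Sigma$: this is where the boundedness hypothesis enters and it is what forces the correct $n/d$ scaling inside $\epsilon$. Once the two Bernstein parameters are in hand, matching them to the $\max(\|\Sigma\|_2^{1/2}\epsilon, \epsilon^2)$ shape is a routine two-case check. As an alternative route that avoids matrix Bernstein, one could reduce $\| \sum_i S_i \|_2$ to $\max_{x \in \mathcal N} |\langle (\sum_i S_i)x, x\rangle|$ over a $\tfrac14$-net $\mathcal N$ of the unit sphere and apply scalar Bernstein to the independent bounded variables $\langle X_i, x\rangle^2$; this reproduces the same deviation but with a cruder $9^m$ union-bound prefactor instead of the clean linear $m$ factor, which is why the matrix version is the natural choice here.
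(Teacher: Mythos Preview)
The paper does not prove this lemma; it is quoted verbatim as Theorem~5.44 from Vershynin's notes and used as a black box, so there is no in-paper argument to compare against.

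Your matrix-Bernstein proof is correct and self-contained: the PSD bound $\E[\|X_i\|_2^2 X_iX_i^T]\preceq n\Sigma$ gives the right variance proxy $\sigma^2\le n\|\Sigma\|_2/d$, and the two-regime case split on $\epsilon\lessgtr\|\Sigma\|_2^{1/2}$ cleanly recovers the $\max(\|\Sigma\|_2^{1/2}\epsilon,\epsilon^2)$ shape with exponent $\gtrsim t^2$ in both cases. For context, Vershynin's own proof of 5.44 goes instead through symmetrization and Rudelson's non-commutative Khintchine inequality (his Lemma~5.36/Theorem~5.41) rather than matrix Bernstein; your route is arguably more direct and makes the origin of the linear-in-$m$ dimensional factor transparent, while the symmetrization route has the virtue of not requiring an a priori almost-sure bound on $\|S_i\|_2$ (only a high-probability one, which is how Vershynin handles unbounded sub-gaussian rows elsewhere). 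Either approach suffices here. The only cosmetic point is that matrix Bernstein yields $2m\exp(-ct^2)$; absorbing the factor $2$ into $c$ is harmless since the bound is vacuous for $t$ small enough that this would matter.
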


\begin{lemma}[Derived from Theorem~5.62 in \cite{arxiv_vershynin2010introduction}]\label{vqacr_lemma_qntk_spectral_norm_of_independent_columns}
Let $A$ be a $d_1 \times d_2$ matrix ($d_1 \geq d_2$) whose columns $A_i$ are independent isotropic random vectors in $\R^{d_1}$ with $\|A_j\|_2 = \sqrt{d_1}$ almost surely. Let $$d_3:=\frac{1}{d_1} \E \max_{j \leq d_2} \sum_{k \in [d_2], k \neq j} \left( A_j^T A_k \right)^2.$$
Then $\E \| \frac{1}{d_1} A^T A - I \| \leq C_0 \sqrt{\frac{d_3 \log d_2}{d_1}}$. In particular, 
\begin{equation*}
\E \max_{j \leq d_2} \left| s_j(A) - \sqrt{d_1} \right| \leq{} C \sqrt{d_3 \log d_2} ,
\end{equation*}
where $C_0, C$ are absolute constants.
\end{lemma}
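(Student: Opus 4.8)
The plan is to regard the operator-norm estimate $\E\|\frac{1}{d_1}A^TA - I\| \le C_0\sqrt{d_3\log d_2/d_1}$ as the substantive claim and to obtain the singular-value estimate from it by an elementary manipulation. The operator-norm estimate is exactly the content of Theorem~5.62 of \cite{arxiv_vershynin2010introduction} (equivalently, the central estimate in its proof) under the identification $d_1 \leftrightarrow N$, $d_2 \leftrightarrow n$, $d_3 \leftrightarrow m$: the hypotheses there are precisely that the columns of $A$ are independent, isotropic, and of fixed Euclidean length $\sqrt{d_1}$, which are our assumptions. So the first step is simply to check that the hypotheses match and quote the bound.

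For completeness I would record the mechanism behind that bound. Write $M := \frac{1}{d_1}A^TA - I$. Because $\|A_j\|_2^2 = d_1$ for every column $j$, the diagonal of $M$ vanishes, and for a unit vector $x \in \R^{d_2}$ the diagonal contributions cancel against $\|x\|_2^2$, leaving the order-two chaos $x^TMx = \frac{1}{d_1}\sum_{j\ne k} x_j x_k \langle A_j, A_k\rangle$. A decoupling inequality replaces this by a bilinear form in two independent copies of the columns, and a symmetrization replaces the correlated Gram entries $\langle A_j, A_k\rangle$ by independent Rademacher signs carrying the same second-order profile; the operator norm of the resulting randomly signed matrix is then bounded, via a non-commutative Khintchine inequality (or an $\varepsilon$-net union bound over $x$), by a constant times $\sqrt{\log d_2}$ times $\sqrt{\max_{j}\frac{1}{d_1}\sum_{k\ne j}\langle A_j,A_k\rangle^2}$. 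Taking expectations, the last maximum is exactly $d_3$, which gives the first displayed inequality. (A purely deterministic Gershgorin-type bound would only give $\sqrt{d_2 d_3/d_1}$ here, so the randomness is genuinely needed to replace the factor $d_2$ by $\log d_2$.)

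The second bound then follows with no further probabilistic input. Since $M$ is symmetric, $\|M\| = \max_{j\le d_2}\bigl|s_j(A)^2/d_1 - 1\bigr|$. For each $j$, from $s_j(A)\ge 0$ we have $s_j(A)+\sqrt{d_1}\ge\sqrt{d_1}$, hence
\begin{equation*}
\bigl|s_j(A)-\sqrt{d_1}\bigr| = \frac{|s_j(A)^2 - d_1|}{s_j(A)+\sqrt{d_1}} \le \sqrt{d_1}\,\Bigl|\frac{s_j(A)^2}{d_1}-1\Bigr| \le \sqrt{d_1}\,\|M\| .
\end{equation*}
Maximizing over $j$ and taking expectations gives $\E\max_{j\le d_2}\bigl|s_j(A)-\sqrt{d_1}\bigr| \le \sqrt{d_1}\,\E\|M\| \le C_0\sqrt{d_3\log d_2}$, that is, the claim with $C = C_0$.

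I expect the only real obstacle to be the decoupling-and-symmetrization step, which is where the sharp dependence on the incoherence $d_3$ and the logarithmic factor simultaneously arise and which is the technical heart of Vershynin's Theorem~5.62. Since we are entitled to cite that theorem, in practice this step reduces to a reference; the remainder (matching hypotheses and the scalar inequality above) is routine bookkeeping.
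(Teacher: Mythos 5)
Your proposal is correct and takes essentially the same route as the paper: the paper states this lemma as a direct restatement of Vershynin's Theorem~5.62 (with $N,n,m$ renamed to $d_1,d_2,d_3$) and supplies no further argument, which is exactly your first step of matching hypotheses and quoting the operator-norm bound. Your elementary deduction of the singular-value display from $\E\|\tfrac{1}{d_1}A^TA-I\|$ via $|s_j(A)-\sqrt{d_1}|\le \sqrt{d_1}\,\bigl|s_j(A)^2/d_1-1\bigr|$ is valid and cleanly supplies the ``in particular'' step.
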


\section{Generalization error of quantum machine learning with random quantum data}

\label{vqacr_geb_qkm_haar_input}

This section is composed of two parts, i.e., the generalization error analysis of quantum machine learning using quantum kernel method and quantum neural network, respectively. Specifically, we consider the generalization performance when using $2$-design quantum states as input datasets. 

\subsection{Generalization error of quantum kernel method}

First, we study the generalization error of quantum kernel method. To begin with, we introduce some notations for convenience.
The training and the test set are denoted by $\mathcal{A}=\{a\}=\{(\rho_a, y_a)\}$ and $\mathcal{B}=\{\hat{b}\}=\{(\rho_{\hat{b}}, y_{\hat{b}})\}$, respectively.
We consider the linear model that gives the prediction
\begin{equation}\label{vqacr_geb_qntk_prediction_test_kernel}
z_{\hat{b}} (\bmt) = \sum_{a \in \A} {k}(\rho_a, \rho_{\hat{b}}) \theta_{a},
\end{equation}
where $\bmt \in \R^{|\A|}$ is the parameter and $k(\rho_a, \rho_{\hat{b}})=\Tr [\rho_a \rho_{\hat{b}}]$ is the quantum kernel. For convenience, we denote $k_{a\hat{b}} = k(\rho_a, \rho_{\hat{b}})$. We use $k$ as the $|\A|\times|\A|$ quantum kernel on the training set $\A$.

Suppose the loss is defined by the $\ell_2$-norm distance between the prediction vector $\bm{z}$ and the label vector $\bm{y}$. Let $\bmt(t)$ and $\bm{z}(t)$ be the parameter and the prediction vector on training or test datasets at the optimization step $t$, respectively. Obviously we have
\begin{equation}\label{vqacr_geb_qntk_params_zero}
\bmt(0) = {k}^{-1} \bm{z}_{\A}(0)
\end{equation}
from Eq.~(\ref{vqacr_geb_qntk_prediction_test_kernel}). After the training, the linear model in Eq.~(\ref{vqacr_geb_qntk_prediction_test_kernel}) can be optimized to achieve zero training loss with the parameter
\begin{equation}\label{vqacr_geb_qntk_params_infty}
\bmt(\infty) = {k}^{-1} \bm{y}_{\A} .
\end{equation}
At the optimal parameter point, the prediction of a sample $\hat{b}=(\rho_{\hat{b}}, y_{\hat{b}})$ from the test dataset is given by
\begin{align}
z_{\hat{b}} (\infty) ={}& \sum_{a'\in \A} {k}_{\hat{b} a'}  \theta_{a'} (\infty) \tag{from Eq.~(\ref{vqacr_geb_qntk_prediction_test_kernel})} \\
={}& \sum_{a,a'\in \A} {k}_{\hat{b} a'} [{k}^{-1} ]_{a'a} {y}_{a}  \tag{from Eq.~(\ref{vqacr_geb_qntk_params_infty})} \\
={}& \sum_{a,a'\in \A} {k}_{\hat{b} a'} [{k}^{-1} ]_{a'a} \left({z}_{a}(0)-{r}_{a}(0) \right)  \tag{from $r=z-y$} \\
={}& \sum_{a,a'\in \A} {k}_{\hat{b} a'} \theta_{a'}(0) - \sum_{a,a' \in \mathcal{A}} {k}_{\hat{b}a'} [{k}^{-1}]_{a'a} r_{a} (0)  \tag{from Eq.~(\ref{vqacr_geb_qntk_params_zero})} \\
={}& z_{\hat{b}} (0) - \sum_{a,a' \in \mathcal{A}} {k}_{\hat{b}a'} [{k}^{-1}]_{a'a} r_{a} (0) ,
\label{vqacr_geb_qntk_prediction_test}
\end{align}
where $\bm{r}_{\A}(t):= \bm{z}_{\A}(t)-\bm{y}_{\A}$ is the training residual vector at the optimization step $t$.
So, the residual of the sample $\hat{b}$ after training is given by,
\begin{equation}\label{vqacr_geb_qntk_res_error_test}
r_{\hat{b}} (\infty) = z_{\hat{b}} (\infty) - y_{\hat{b}} = r_{\hat{b}} (0) - \sum_{a,a' \in \mathcal{A}} {k}_{\hat{b}a'} [{k}^{-1}]_{a'a} r_{a} (0) .
\end{equation}

We denote $\L(t):=\L(\bmt(t))$ for convenience. Therefore, we obtain the generalization error,
\begin{align}
\mathcal{E}_{\mathcal{B}|\mathcal{A}} ={}& \mathcal{L}_{\mathcal{B}} (\infty) - \mathcal{L}_{\mathcal{A}} (\infty) = \mathcal{L}_{\mathcal{B}} (\infty) = \frac{1}{2|\mathcal{B}|} \sum_{\hat{b} \in \mathcal{B}} r_{\hat{b}}^2 (\infty)  \notag \\
={}& \frac{1}{2|\mathcal{B}|} \sum_{\hat{b} \in \mathcal{B}} \left( r_{\hat{b}} (0) - \sum_{a,a' \in \mathcal{A}} {k}_{\hat{b}a'} [{k}^{-1}]_{a'a} r_{a} (0) \right)^2 \\
={}& \frac{1}{2|\mathcal{B}|} \sum_{\hat{b} \in \mathcal{B}} r_{\hat{b}}^2 (0) + \frac{1}{2|\mathcal{B}|} \sum_{\hat{b} \in \mathcal{B}} \left( \sum_{a,a' \in \mathcal{A}} {k}_{\hat{b}a'} [{k}^{-1}]_{a'a} r_{a} (0) \right)^2 \\
-{}& \frac{1}{|\mathcal{B}|} \sum_{\hat{b} \in \mathcal{B}} \sum_{a,a' \in \mathcal{A}} {k}_{\hat{b}a'} [{k}^{-1}]_{a'a} r_{\hat{b}} (0) r_{a} (0) .
\end{align}
We remark that the first term is the loss on the set $\mathcal{B}$ before the training. To show the improvement of the prediction on the test dataset, we define the following quantity as the relative generalization error,
\begin{align}
\mathcal{R}_{\mathcal{B}|\mathcal{A}} :={}& \mathcal{E}_{\mathcal{B}|\mathcal{A}} - \mathcal{L}_{\mathcal{B}} (0) \label{vqacr_geb_qntk_re_error_eq_1} \\
={}& \frac{1}{2|\mathcal{B}|} \sum_{\hat{b} \in \mathcal{B}} \left( \sum_{a,a' \in \mathcal{A}} {k}_{\hat{b}a'} [{k}^{-1}]_{a'a} r_{a} (0) \right)^2 - \frac{1}{|\mathcal{B}|} \sum_{\hat{b} \in \mathcal{B}} \sum_{a,a' \in \mathcal{A}} {k}_{\hat{b}a'} [{k}^{-1}]_{a'a} r_{\hat{b}} (0) r_{a} (0) \label{vqacr_geb_qntk_re_error_eq_2} \\
={}& \frac{1}{2|\mathcal{B}|} \sum_{\hat{b} \in \mathcal{B}, a_1, a_2 \in \mathcal{A}} g_{\hat{b} a_1} g_{\hat{b} a_2} r_{a_1}(0) r_{a_2}(0) - \frac{1}{|\mathcal{B}|} \sum_{\hat{b} \in \mathcal{B}, a \in \mathcal{A}} g_{\hat{b} a} r_{\hat{b}} (0) r_{a}(0) , \label{vqacr_geb_qntk_re_error_eq_3}
\end{align}
where we use the metric
\begin{equation}\label{vqacr_geb_qntk_metric_g_eq}
g_{\hat{b} a} = \sum_{a'\in \mathcal{A}} {k}_{\hat{b} a'} [{k}^{-1}]_{a'a}.
\end{equation}
When $\mathcal{R}_{\mathcal{B}|\mathcal{A}} \geq 0$, the element in set $\mathcal{B}$ has equal or worse prediction averagely after the training, i.e., we face the overfitting issue. Therefore we hope $\mathcal{R}_{\mathcal{B}|\mathcal{A}} \leq -C$, where the constant $C>0$ could be as large as possible.

Next, we analyze the scaling of generalization error when input datasets are sampled from state $2$-designs. Specifically, suppose states $\rho_a=|\psi_a\>\<\psi_a|$ in training and test datasets are independent $2$-design states, i.e., $|\psi_a\>=U_a |0\>$ and $U_a$s are randomly chosen from $\mathcal{U}(2^N)$ according to the unitary $2$-design. The quantum kernel method gives 
\begin{align*}
k_{ab}={}& \Tr [\rho_a \rho_b] = |\<\psi_a|\psi_b\>|^2.
\end{align*}

Since input states are engaged in the quantum kernel, different state distributions could naturally induce different training and generalization performances. 
Here, we repeat the generalization error result in the main text in Theorem~\ref{vqacr_geb_qkm_haar_input_theorem} for convenience, where the notation $\L(t):=\L(\bmt(t))$ is employed. The generalization error only has exponentially small improvement from the extremely overfitting situation.

\begin{theorem}\label{vqacr_geb_qkm_haar_input_theorem}
Suppose all $N$-qubit quantum states in the training and test datasets $\A$ and $\B$ are independently sampled from state 2-designs and the size of datasets is smaller than $2^{N/2}$. Let $\L_{\A}(t)$ and $\L_{\B}(t)$ be the training and the test loss function at step $t$ of the quantum kernel method, respectively. Then, with high probability,
\begin{equation}
\mathop{\E} \L_{\B}(\infty) \gtrsim{} \mathop{\E} \mathcal{L}_{\B} (0) - \frac{|\A|}{2^{N-1}} \mathop{\E} \sqrt{ \mathcal{L}_{\A}(0) \mathcal{L}_{\B}(0) } ,
\end{equation}
where the expectation is taken under state $2$-designs for states in $\A$ and $\B$.
\end{theorem}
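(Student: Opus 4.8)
The plan is to start from the exact expression for the relative generalization error $\mathcal{R}_{\mathcal{B}|\mathcal{A}}$ derived in Eq.~(\ref{vqacr_geb_qntk_re_error_eq_3}), take expectations over the $2$-design distributions for the states in $\A$ and $\B$, and show that the dangerous (negative) cross term is exponentially suppressed in $N$. Concretely, since $\E\L_{\B}(\infty) = \E\L_{\B}(0) + \E\mathcal{R}_{\mathcal{B}|\mathcal{A}}$, it suffices to lower-bound $\E\mathcal{R}_{\mathcal{B}|\mathcal{A}} \gtrsim -\frac{|\A|}{2^{N-1}}\E\sqrt{\L_{\A}(0)\L_{\B}(0)}$. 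The key structural fact is that for $2$-design states the off-diagonal kernel entries $k_{ab}=|\langle\psi_a|\psi_b\rangle|^2$ for $a\neq b$ have expectation $2^{-N}$ (more precisely $1/(2^N+1)$ with the diagonal equal to $1$), and their fluctuations are $O(2^{-N})$ as well. Because the dataset size is $|\A|<2^{N/2}$, the kernel matrix $k$ on the training set is, with high probability, a small perturbation of the identity: $k = I + E$ with $\|E\|_2 = O(|\A|2^{-N}) \ll 1$ by a matrix concentration / Gershgorin argument. Hence $k^{-1} = I - E + O(\|E\|^2)$, and the metric $g_{\hat{b}a}=\sum_{a'}k_{\hat{b}a'}[k^{-1}]_{a'a}$ is itself $O(2^{-N})$ entrywise: $g_{\hat{b}a} = k_{\hat{b}a} - \sum_{a'}k_{\hat{b}a'}E_{a'a} + \cdots$, and each $k_{\hat b a'}\approx 2^{-N}$ since $\hat b\notin\A$.

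\textbf{Main steps.} First I would establish the high-probability bound $\|k - I\|_2 \le c\,|\A|\,2^{-N/2}$ (or the sharper $O(|\A|2^{-N})$ for the mean part plus a concentration correction), using Lemma~\ref{vqacr_lemma_qntk_operator_norm_of_covariance_matrix} or a direct second-moment computation on $|\langle\psi_a|\psi_b\rangle|^2$ via Lemma~\ref{vqacr_qntk_tdesign_trwawbtrwcwd}; this is where the hypothesis $|\A|<2^{N/2}$ enters, guaranteeing $k$ is invertible and well-conditioned. Second, I would invert $k$ via the Neumann series and propagate the bound to conclude $|g_{\hat{b}a}| \le C 2^{-N}$ for all $\hat b\in\B$, $a\in\A$ with high probability (the test states being independent of the training states makes $k_{\hat b a'}$ concentrate around its $2^{-N}$ mean). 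Third, I would bound the two terms of Eq.~(\ref{vqacr_geb_qntk_re_error_eq_3}) termwise: the quadratic term $\frac{1}{2|\B|}\sum g_{\hat b a_1}g_{\hat b a_2}r_{a_1}r_{a_2}$ is nonnegative (it is a sum of squares, $\frac{1}{2|\B|}\sum_{\hat b}(\sum_a g_{\hat b a}r_a)^2 \ge 0$) so it only helps; the cross term $-\frac{1}{|\B|}\sum_{\hat b,a}g_{\hat b a}r_{\hat b}(0)r_a(0)$ is the only source of negativity, and by Cauchy–Schwarz $\left|\frac{1}{|\B|}\sum_{\hat b,a}g_{\hat b a}r_{\hat b}r_a\right| \le \max_{\hat b,a}|g_{\hat b a}| \cdot \frac{1}{|\B|}\sum_{\hat b}|r_{\hat b}|\sum_a|r_a| \le C2^{-N}\sqrt{|\B||\A|}\cdot\frac{\|r_{\B}\|_2}{|\B|}\cdot\|r_{\A}\|_2 \cdot(\text{const})$, which I would reorganize into $\frac{|\A|}{2^{N-1}}\sqrt{\L_{\A}(0)\L_{\B}(0)}$ using $\L_{\A}(0)=\frac{1}{2|\A|}\|r_{\A}\|_2^2$ and likewise for $\B$. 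Finally, taking expectations and combining gives the claimed bound, with the ``$\gtrsim$'' absorbing the lower-order $O(\|E\|^2)$ corrections and the failure probability of the concentration events.

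\textbf{Main obstacle.} The technical crux is controlling $\max_{\hat b,a}|g_{\hat b a}|$ uniformly with high probability after the matrix inversion: one must show not just that $k^{-1}\approx I$ in operator norm but that the resulting entrywise bounds on $g$ hold simultaneously over all $|\A||\B|$ pairs, which requires a union bound against the concentration of each $|\langle\psi_a|\psi_b\rangle|^2$ (whose fluctuations are $\Theta(2^{-N})$, comparable to the mean) together with the condition-number control on $k$. Getting the constant in the exponent right — i.e., genuinely $2^{-(N-1)}$ rather than $2^{-N/2}$ — likely requires tracking the $O(2^{-N})$ mean of $g$ separately from its $O(2^{-N})$-scale fluctuations, rather than naively bounding everything by the operator-norm estimate $\|E\|\sim|\A|2^{-N/2}$; this separation of mean and fluctuation is the delicate part and is presumably why the statement is phrased with ``$\gtrsim$'' and ``with high probability.''
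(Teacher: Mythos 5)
Your skeleton matches the paper's: you use the same decomposition of the relative error $\mathcal{R}_{\mathcal{B}|\mathcal{A}}$ from Eq.~(\ref{vqacr_geb_qntk_re_error_eq_3}), the same observation that ${k}=I+O(2^{-N})$ under the hypothesis $|\A|<2^{N/2}$ (the paper gets this from Lemma~\ref{vqacr_lemma_qntk_spectral_norm_of_independent_columns} applied to the Pauli-coefficient matrix, yielding Eq.~(\ref{vqacr_geb_qkm_haar_input_theorem_B_bound})), and the same Neumann inversion ${k}^{-1}\approx 2I-{k}$. Discarding the nonnegative quadratic term is also legitimate for a lower bound (the paper instead evaluates it and finds it is $O(4^{-N})$, hence negligible). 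The divergence, and the genuine gap, is in how the cross term is controlled. You propose a uniform entrywise bound $\max_{\hat{b},a}|g_{\hat{b}a}|\le C2^{-N}$ with high probability followed by a worst-case Cauchy--Schwarz. That bound is not available at the stated strength: each overlap $k_{\hat{b}a}=|\langle\psi_{\hat{b}}|\psi_a\rangle|^2$ has mean $1/(2^N+1)$ but standard deviation of the \emph{same} order $\Theta(2^{-N})$, and a $2$-design only furnishes second moments of $k_{\hat b a}$, so a union bound over the $|\A||\B|$ pairs via Chebyshev costs a factor $\sqrt{|\A||\B|/\delta}$ (even for exactly Haar states the maximum overlap costs a $\ln(|\A||\B|)$ factor). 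Feeding any such degraded bound into your termwise estimate produces a prefactor strictly larger than $|\A|/2^{N-1}$, so the route as written does not reach the claimed inequality; your own ``main obstacle'' paragraph correctly diagnoses exactly this.

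The paper closes the gap by never taking a supremum over $g$. It computes the conditional expectations $\E_{\hat{b}}[g_{\hat{b}a}r_{\hat{b}}(0)]$ and $\E_{\hat{b}}[g_{\hat{b}a_1}g_{\hat{b}a_2}]$ in closed form using the $2$-design trace identities (Lemmas~\ref{vqacr_qntk_tdesign_tr_wawb} and \ref{vqacr_qntk_tdesign_trwawbtrwcwd}, which give $\E_{\hat b}k_{\hat b a'}=2^{-N}$ and $\E_{\hat b}[k_{\hat b a_1'}k_{\hat b a_2'}]=(1+k_{a_1'a_2'})/(2^N(2^N+1))$), reducing the cross term to $-\tfrac{1}{2^N|\B|}\,\E_{\A}[\bm{e}^T\bm{r}_{\A}(0)]\,\E_{\B}[\bm{e}^T\bm{r}_{\B}(0)]$ and only then applying Cauchy--Schwarz in the form $|\bm{e}^T\bm{r}|\le\sqrt{|\A|}\,\|\bm{r}\|_2$. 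The high-probability event is used only for the operator-norm control of the remainder in ${k}^{-1}$, not for entrywise control of $g$. This ``average over $\hat b$ first, then Cauchy--Schwarz on the deterministic vector $\bm e$'' step is the precise realization of the mean-versus-fluctuation separation you gesture at; to repair your proposal you would need to carry it out rather than bound $\max|g_{\hat b a}|$.
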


\begin{proof}

The main idea is to find that the metric $g_{\hat{b}a}$ in Eq.~(\ref{vqacr_geb_qntk_metric_g_eq}) is exponentially small on average. To achieve this, we first prove that for Haar-randomly distributed input states, the quantum kernel is close to the identity matrix with an additional exponentially small term. We refer to the Pauli decomposition form of quantum states with random matrix theory. Specifically, let $A\in\R^{(4^N-1) \times |\A|}$ be the coefficient matrix that stores coefficients for input states $\{|\psi_a\>\}_{a \in \A}$,
\begin{equation}\label{vqacr_geb_qkm_haar_input_theorem_A_ia}
A_{\bm{i},a} = \frac{1}{2^N} \Tr \left[ |\psi_a\>\<\psi_a| \sigma_{\bm{i}} \right],
\end{equation}
where $\bm{i}\in \{0,1,2,3\}^{\otimes N}/\bm{0}$ denotes the index of Pauli matrices except the identity matrix: $\bm{i} = (i_1, i_2, \cdots, i_N)$. We remark that the coefficient for the identity matrix is always $1/2^N$ since $\Tr [|\psi_a\>\<\psi_a|]=1$, and the density matrix could be recovered as
\begin{equation}\label{vqacr_geb_qkm_haar_input_theorem_psi_a}
| \psi_a\>\<\psi_a| = \frac{1}{2^N} \sigma_{\bm{0}} + \sum_{\bm{i} \neq \bm{0}} A_{\bm{i},a} \sigma_{\bm{i}}.
\end{equation}
Thus, the quantum kernel can be written as 
\begin{align}
{k}_{a'a} ={}& \Tr \left[ |\psi_{a'}\>\<\psi_{a'}| \psi_a\>\<\psi_a| \right] \notag \\
={}& \frac{1}{2^{N}} + \sum_{\bm{i}, \bm{j}} A_{\bm{i},a'} A_{\bm{j},a} \Tr [\sigma_{\bm{i}}\sigma_{\bm{j}}] \tag{using Eq.~(\ref{vqacr_geb_qkm_haar_input_theorem_psi_a})} \\
={}& \frac{1}{2^{N}} + 2^N \left[ A^T A \right]_{a'a} . \label{vqacr_geb_qkm_haar_input_theorem_K_aprimea}
\end{align}
The matrix formulation of Eq.~(\ref{vqacr_geb_qkm_haar_input_theorem_K_aprimea}) is
\begin{align}
{k} ={}& 2^N A^T A + \frac{1}{2^{N}} \bm{e} \bm{e}^T, \label{vqacr_geb_qkm_haar_input_theorem_K_aprimea_matrix}
\end{align}
where $\bm{e}=(1,1,\cdots,1)^T \in \R^{|\mathcal{A}|}$.

For random input states sampled independently from state $2$-designs, the matrix $A$ is a random matrix with independent columns, so we could apply Lemma~\ref{vqacr_lemma_qntk_spectral_norm_of_independent_columns} for analyzing the kernel property. To match the isotropic column condition in Lemma~\ref{vqacr_lemma_qntk_spectral_norm_of_independent_columns}, we introduce the matrix 
\begin{equation}\label{vqacr_geb_qkm_haar_input_theorem_B}
B = 2^N \sqrt{2^N+1} A.
\end{equation}
Matrix $B$ has independent isotropic columns since
\begin{align}
\mathop{\E}_{\rm 2-design} \left[B_{a'} B_{a}^T \right]_{\bm{i}\bm{j}} ={}& \mathop{\E}_{\rm 2-design} 2^{2N} (2^N+1) A_{\bm{i},a'} A_{\bm{j},a} \tag{using Eq.~(\ref{vqacr_geb_qkm_haar_input_theorem_B})} \\
={}& \mathop{\E}_{\rm 2-design} (2^N+1) \Tr \left[ |\psi_{a'}\>\<\psi_{a'}| \sigma_{\bm{i}} \right] \Tr \left[ |\psi_a\>\<\psi_a| \sigma_{\bm{j}} \right] \tag{using Eq.~(\ref{vqacr_geb_qkm_haar_input_theorem_A_ia})} \\
={}& \delta_{a'a} \delta_{\bm{i} \bm{j}} . \tag{using Lemmas~\ref{vqacr_qntk_tdesign_tr_wawb} and \ref{vqacr_qntk_tdesign_trwawbtrwcwd} }
\end{align}
The $\ell_2$-norm of $B_{a}$ is $\sqrt{4^N-1}$ since
\begin{align}
\| B_{a} \|_2^2 ={}& 4^{N} (2^N+1) \sum_{\bm{i}} A_{\bm{i},a}^2 \tag{using Eq.~(\ref{vqacr_geb_qkm_haar_input_theorem_B})} \\
={}& 4^{N} (2^N+1) \frac{1}{2^N} \left( \Tr \left[|\psi_a\>\<\psi_a|^2\right] - \frac{1}{2^N}\right) \tag{using Eq.~(\ref{vqacr_geb_qkm_haar_input_theorem_psi_a})} \\
={}& 4^N -1 . \notag
\end{align}
Moreover, we have
\begin{align}
{}& \frac{1}{4^N-1} \mathop{\E}_{\rm 2-design} \max_{a} \sum_{a'\neq a} \left( B_{a}^T  B_{a'} \right)^2 \notag \\
\leq{}& \frac{1}{4^N-1}\mathop{\E}_{\rm 2-design} \sum_{a\in \A} \sum_{a'\neq a} \left( B_{a}^T  B_{a'} \right)^2 \\
={}& \frac{16^{N}(2^N+1)}{2^N-1} \mathop{\E}_{\rm 2-design} \sum_{a\in \A} \sum_{a'\neq a} \left( \sum_{\bm{i}} A_{\bm{i}, a}  A_{\bm{i}, a'} \right)^2 \tag{using Eq.~(\ref{vqacr_geb_qkm_haar_input_theorem_B})} \\
={}& \frac{16^{N}(2^N+1)}{2^N-1} \mathop{\E}_{\rm 2-design} \sum_{a\in \A} \sum_{a'\neq a} \frac{1}{4^N} \left( \Tr \left[ |\psi_a\>\<\psi_a|\psi_{a'}\>\<\psi_{a'}| \right] - \frac{1}{2^N} \right)^2 \tag{using Eq.~(\ref{vqacr_geb_qkm_haar_input_theorem_psi_a})} \\
={}& \frac{4^{N}(2^N+1)}{2^N-1} \sum_{a\in \A} \sum_{a'\neq a} \left[ \frac{2}{2^N(2^N+1)} - \frac{2}{2^{2N}} + \frac{1}{2^{2N}} \right] \tag{using Lemmas~\ref{vqacr_qntk_tdesign_tr_wawb} and \ref{vqacr_qntk_tdesign_trwawbtrwcwd} } \\
={}& |\A|^2-|\A|  \label{vqacr_geb_qkm_haar_input_theorem_cond3}
\end{align}

By using Lemma~\ref{vqacr_lemma_qntk_spectral_norm_of_independent_columns} with Eq.~(\ref{vqacr_geb_qkm_haar_input_theorem_cond3}), the following holds,
\begin{equation}\label{vqacr_geb_qkm_haar_input_theorem_B_bound}
\mathop{\E}_{\rm 2-design} \left\| \frac{1}{4^{N}-1} B^T B - I \right\|_F \leq C_0 \sqrt{\frac{(|\A|^2-|\A|) \ln |\A| }{4^{N}-1}},
\end{equation}
where $C_0$ is an absolute constant. Let the matrix $X=\frac{1}{4^{N}-1} B^T B - I$. Using Eq.~(\ref{vqacr_geb_qkm_haar_input_theorem_B_bound}) in the quantum kernel formulation, we have
\begin{equation}\label{vqacr_geb_qkm_haar_input_theorem_K_X}
{k} = I + \frac{1}{2^N} \left( \bm{e} \bm{e}^T - I \right) + \left( 1 - \frac{1}{2^N} \right) X,
\end{equation}
where $\E \|X\|_F \leq C_0 \sqrt{\frac{(|\A|^2-|\A|) \ln |\A| }{4^{N}-1}}$ by using Eqs.~(\ref{vqacr_geb_qkm_haar_input_theorem_K_aprimea_matrix}) and (\ref{vqacr_geb_qkm_haar_input_theorem_B}). Since $|\A|^2 \leq 2^N$, we use the norm bound of $X$ as $\|X\|_F \leq \O(2^{-N})$ with high probability. Therefore the inverse
\begin{equation}\label{vqacr_geb_qkm_haar_input_theorem_K_inverse}
\left[{k}^{-1}\right]_{a'a} = \delta_{a'a} - \frac{1}{2^N} \left( 1- \delta_{a'a} \right) - X_{a'a} + \O \left(\frac{1}{4^N} \right) = 2 \delta_{a'a} - {k}_{a'a}  + \O \left(\frac{1}{4^N} \right)
\end{equation}
with high probability.
By employing Eq.~(\ref{vqacr_geb_qkm_haar_input_theorem_K_inverse}) in the metric $g$, we have
\begin{align}
g_{\hat{b} a} ={}& \sum_{a' \in \A} {k}_{\hat{b}a'} \left( 2\delta_{a'a} - {k}_{a'a} + \O \left(\frac{1}{4^N} \right) \right) \notag \\
={}& \O \left(\frac{|\A|}{4^N} \right) + \sum_{a' \in \A} {k}_{\hat{b}a'} \left( 2\delta_{a'a} - {k}_{a'a} \right)  \label{vqacr_geb_qkm_haar_input_theorem_g_ba_fin}
\end{align}
which holds with high probability for all $\hat{b}$ and $a$.

Then, we proceed to analyze the relative generalization error. By using Eq.~(\ref{vqacr_geb_qntk_re_error_eq_3}), we have
\begin{align}
\mathop{\E} \left[ \mathcal{E}_{\mathcal{B}|\mathcal{A}} - \mathcal{L}_{\B} (0) \right ]
= \frac{1}{2|\mathcal{B}|} \sum_{\hat{b} \in \mathcal{B}, a_1, a_2 \in \mathcal{A}} \mathop{\E}_{\hat{b}} \mathop{\E}_{a_1} \mathop{\E}_{a_2} \left[ g_{\hat{b} a_1} g_{\hat{b} a_2} r_{a_1}(0) r_{a_2}(0) \right] - \frac{1}{|\mathcal{B}|} \sum_{\hat{b} \in \mathcal{B}, a \in \mathcal{A}} \mathop{\E}_{\hat{b}} \mathop{\E}_{a} \left[ g_{\hat{b} a} r_{\hat{b}} (0) r_{a}(0) \right] , \label{vqacr_geb_qkm_haar_input_theorem_r_1_1}
\end{align}
where $\E_{a}$ or $\E_{\hat{b}}$ denotes the expectation of unitary 2-design distribution with respect to the unitary $U$ that generates $U|0\>=|\psi_a\>$ or $U|0\>=|\psi_{\hat{b}}\>$. 

Next, we analyze two terms in Eq.~(\ref{vqacr_geb_qkm_haar_input_theorem_r_1_1}) separately. First, we integrate the distribution with respect to $\hat{b}$ in Eq.~(\ref{vqacr_geb_qkm_haar_input_theorem_r_1_1}). With high probability, we have
\begin{align}
\mathop{\E}_{\hat{b}} \left[ g_{\hat{b} a_1} g_{\hat{b} a_2} \right] ={}& \O \left(\frac{|\A|^2}{16^N} \right) + \O \left(\frac{|\A|}{4^N} \right) \sum_{a \in \{a_1,a_2\}} \sum_{a'\in\A} \mathop{\E}_{\hat{b}} \left[ {k}_{\hat{b}a'} \right] \left( 2\delta_{a'a} - {k}_{a'a} \right) \notag \\
+{}& \sum_{a_1', a_2' \in \A} \mathop{\E}_{\hat{b}} \left[ {k}_{\hat{b} a_1'} {k}_{\hat{b} a_2'} \right] \left( 2\delta_{a_1' a_1} - {k}_{a_1'a_1} \right) \left( 2\delta_{a_2' a_2} - {k}_{a_2'a_2} \right) \label{vqacr_geb_qkm_haar_input_theorem_r_2_1} \\
={}& \O \left(\frac{|\A|^2}{16^N} \right) + \O \left(\frac{|\A|^2}{8^N} \right)  \sum_{a \in \{a_1,a_2\}} \sum_{a'\in\A} \left( 2\delta_{a'a} - {k}_{a'a} \right) + \frac{1}{2^N(2^N+1)} \left[ (2I-{k}) {k} (2I-{k}) \right]_{a_1 a_2} \notag \\
+{}& \frac{1}{2^N(2^N+1)} \sum_{a_1', a_2' \in \A}   \left( 2\delta_{a_1' a_1} - {k}_{a_1'a_1} \right) \left( 2\delta_{a_2' a_2} - {k}_{a_2'a_2} \right)
\label{vqacr_geb_qkm_haar_input_theorem_r_2_2} \\
\simeq{}& \frac{1}{4^N}  \left( 2\delta_{a_1 a_2} - {k}_{a_1 a_2} \right)  +  \frac{1}{4^N} \sum_{a_1', a_2' \in \A}   \left( 2\delta_{a_1' a_1} - {k}_{a_1'a_1} \right) \left( 2\delta_{a_2' a_2} - {k}_{a_2'a_2} \right)  . \label{vqacr_geb_qkm_haar_input_theorem_r_2_3}
\end{align}
Eq.~(\ref{vqacr_geb_qkm_haar_input_theorem_r_2_1}) follows from Eq.~(\ref{vqacr_geb_qkm_haar_input_theorem_g_ba_fin}). Eq.~(\ref{vqacr_geb_qkm_haar_input_theorem_r_2_2}) follows from Lemmas~\ref{vqacr_qntk_tdesign_tr_wawb} and \ref{vqacr_qntk_tdesign_trwawbtrwcwd}, which give $\mathop{\E}_{\hat{b}} \left[ {k}_{\hat{b}a'} \right]=2^{-N}$ and $\mathop{\E}_{\hat{b}} \left[ {k}_{\hat{b} a_1'} {k}_{\hat{b} a_2'} \right]=\frac{1}{2^N(2^N+1)} \left[{k}_{a_1' a_2'} +1 \right]$, respectively. Eq.~(\ref{vqacr_geb_qkm_haar_input_theorem_r_2_3}) follows from ${k}^{-1} \approx 2I-{k}$ and the condition $|\A|^2 \leq 2^N$. Thus, by writing the summation as the matrix-vector production form, we have
\begin{align}
{}& \sum_{a_1,a_2 \in \A} \mathop{\E}_{\hat{b}} \mathop{\E}_{a_1} \mathop{\E}_{a_2} \left[ g_{\hat{b} a_1} g_{\hat{b} a_2} r_{a_1}(0) r_{a_2}(0) \right] \notag \\
\simeq{}& \frac{1}{4^N} \mathop{\E}_{\A} \bm{r}_{\A}(0)^T \left( 2I -{k} \right) \bm{r}_{\A}(0) + \frac{1}{4^N} \left[ \mathop{\E}_{\A} \bm{e}^T \left( 2I-{k} \right) \bm{r}_{\A}(0) \right]^2 \tag{using Eq.~(\ref{vqacr_geb_qkm_haar_input_theorem_r_2_3})} \\
={}& \frac{1}{4^N} \mathop{\E}_{\A} \left( \bmt(0)^T {k} - \bm{y}_{\A}^T \right) \left( 2I -{k} \right) \left( {k} \bmt(0) - \bm{y}_{\A} \right) + \frac{1}{4^N} \left[ \mathop{\E}_{\A} \bm{e}^T \left( 2I-{k} \right) \left( {k} \bmt(0) - \bm{y}_{\A} \right) \right]^2 \tag{using $\bm{r}=\bm{z}-\bm{y}$ and Eq.~(\ref{vqacr_geb_qntk_params_zero})} \\
\simeq{}& \frac{1}{4^N} \mathop{\E}_{\A} \left[ \bmt(0)^T {k} \bmt(0) - \bm{y}_{\A}^T {k} \bm{y}_{\A} +2 \bm{y}_{\A}^T \bm{y}_{\A} -2\bmt(0)^T \bm{y}_{\A} \right] \notag \\
+{}& \frac{1}{4^N} \left\{ \bm{e}^T \bmt(0) -2 \bm{e}^T \bm{y}_{\A} +  \mathop{\E}_{\A} \bm{e}^T {k} \bm{y}_{\A} \right\}^2 \tag{using $k^{-1} \approx 2I-k$} \\
\simeq{}& \frac{1}{4^N} \left\{  \bmt(0)^T \left[ \left(1-\frac{1}{2^N} \right) I + \frac{1}{2^N} \bm{e} \bm{e}^T \right] \bmt(0) - \bm{y}_{\A}^T \left[ \left(1-\frac{1}{2^N} \right) I + \frac{1}{2^N} \bm{e} \bm{e}^T \right] \bm{y}_{\A} +2 \bm{y}_{\A}^T \bm{y}_{\A} -2\bmt(0)^T \bm{y}_{\A} \right\} \notag \\
+{}& \frac{1}{4^N} \left\{ \bm{e}^T \bmt(0) -2 \bm{e}^T \bm{y}_{\A} +  \mathop{\E}_{\A} \bm{e}^T \left[ \left(1-\frac{1}{2^N} \right) I + \frac{1}{2^N} \bm{e} \bm{e}^T \right] \bm{y}_{\A} \right\}^2 \tag{using Eq.~(\ref{vqacr_geb_qkm_haar_input_theorem_K_X}) and $|\A|^2 \leq 2^N$} \\
\simeq{}& \frac{1}{4^N} \left\{ \|\bmt(0)-\bm{y}_{\A}\|_2^2 + \left[\bm{e}^T (\bmt(0)-\bm{y}_{\A}) \right]^2 \right\} . \label{vqacr_geb_qkm_haar_input_theorem_r_2_9}
\end{align}

Finally we proceed to the second term in Eq.~(\ref{vqacr_geb_qkm_haar_input_theorem_r_1_1}). With high probability, we have
\begin{align}
\mathop{\E}_{\hat{b}} \left[ g_{\hat{b} a} r_{\hat{b}} (0) \right] ={}& \mathop{\E}_{\hat{b}} \left[ g_{\hat{b} a} \left( - y_{\hat{b}} + \sum_{a'\in \A} {k}_{\hat{b}a'} \theta_{a'} (0)\right) \right] \tag{using $r_{\hat{b}}(0)=z_{\hat{b}}(0) - y_{\hat{b}} = \sum_{a'\in \A} {k}_{\hat{b}a'} \theta_{a'}(0) - y_{\hat{b}}$} \\
={}& \mathop{\E}_{\hat{b}} \left[ \left( \O \left(\frac{|\A|}{4^N} \right) + \sum_{a'' \in \A} {k}_{\hat{b}a''} \left( 2\delta_{a''a} - {k}_{a''a} \right) \right)
\left( - y_{\hat{b}} + \sum_{a'\in \A} {k}_{\hat{b}a'} \theta_{a'} (0)\right) \right] \label{vqacr_geb_qkm_haar_input_theorem_r_3_2} \\
={}& - \O \left(\frac{|\A|}{4^N} \right) y_{\hat{b}} + \O \left(\frac{|\A|}{4^N} \right) \sum_{a' \in \A} \theta_{a'}(0) \mathop{\E}_{\hat{b}} \left[ {k}_{\hat{b} a'}  \right] - y_{\hat{b}}\sum_{a'' \in \A} \mathop{\E}_{\hat{b}} \left[ {k}_{\hat{b} a''}  \right] \left( 2\delta_{a''a} - {k}_{a''a} \right) \notag \\
+{}& \sum_{a'a'' \in \A} \mathop{\E}_{\hat{b}} \left[ {k}_{\hat{b} a'} {k}_{\hat{b} a''} \right] \left( 2\delta_{a''a} - {k}_{a''a} \right) \theta_{a'}(0) \notag \\
={}& - \O \left(\frac{|\A|}{4^N} \right) y_{\hat{b}} + \O \left(\frac{|\A|}{8^N} \right) \sum_{a' \in \A} \theta_{a'}(0)  - y_{\hat{b}}\sum_{a'' \in \A} \frac{1}{2^N} \left( 2\delta_{a''a} - {k}_{a''a} \right) \notag \\
+{}& \sum_{a'a'' \in \A} \frac{1}{2^N(2^N+1)} \left[ 1+ k_{a'a''} \right] \left( 2\delta_{a''a} - {k}_{a''a} \right) \theta_{a'}(0) \label{vqacr_geb_qkm_haar_input_theorem_r_3_4} \\
\simeq{}& - y_{\hat{b}}\sum_{a'' \in \A} \frac{1}{2^N} \left( 2\delta_{a''a} - {k}_{a''a} \right) + \sum_{a'a'' \in \A} \frac{1}{4^N} \left[ 1+ k_{a'a''} \right] \left( 2\delta_{a''a} - {k}_{a''a} \right) \theta_{a'}(0) . \label{vqacr_geb_qkm_haar_input_theorem_r_3_5}
\end{align}
Eq.~(\ref{vqacr_geb_qkm_haar_input_theorem_r_3_2}) follows from Eq.~(\ref{vqacr_geb_qkm_haar_input_theorem_g_ba_fin}). Eq.~(\ref{vqacr_geb_qkm_haar_input_theorem_r_3_4}) is derived by using Lemmas~\ref{vqacr_qntk_tdesign_tr_wawb} and \ref{vqacr_qntk_tdesign_trwawbtrwcwd}.
By writing the summation as the matrix-vector production form, we have
\begin{align}
{}& \frac{1}{|\mathcal{B}|} \sum_{\hat{b} \in \mathcal{B}, a \in \mathcal{A}} \mathop{\E}_{\hat{b}} \mathop{\E}_{a} \left[ g_{\hat{b} a} r_{\hat{b}} (0) r_{a}(0) \right] \notag \\
\simeq{}& - \frac{1}{2^N|\B|} \left( \bm{e}^T \bm{y}_{\B} \right) \mathop{\E}_{\A} \left[ \bm{e}^T \left( 2I -{k} \right) \bm{r}_{\A} (0) \right] + \frac{1}{4^N |\B|} \mathop{\E}_{\A} \left[ \bmt(0)^T \left( \bm{e} \bm{e}^T + {k} \right) \left( 2I-{k} \right) \bm{r}_{\A} (0) \right] \tag{using Eq.~(\ref{vqacr_geb_qkm_haar_input_theorem_r_3_5})} \\
={}& - \frac{1}{2^N|\B|} \left( \bm{e}^T \bm{y}_{\B} \right) \mathop{\E}_{\A} \left[ \bm{e}^T \left( 2I -{k} \right) \left( {k} \bmt(0) - \bm{y}_{\A} \right) \right] + \frac{1}{4^N |\B|} \mathop{\E}_{\A} \left[ \bmt(0)^T \left( \bm{e} \bm{e}^T + {k} \right) \left( 2I-{k} \right) \left( {k} \bmt(0) - \bm{y}_{\A} \right) \right] \label{vqacr_geb_qkm_haar_input_theorem_r_3_8} \\
\simeq{}& - \frac{1}{2^N|\B|} \left( \bm{e}^T \bm{y}_{\B} \right) \left[ \bm{e}^T \bmt(0) - \left( 1-\frac{|\A|-1}{2^N}\right) \bm{e}^T \bm{y}_{\A} \right] \notag \\
+{}& \frac{1}{4^N |\B|} \left[ \left(1+\frac{1}{2^N} \right) \left( \bm{e}^T \bmt(0) \right)^2 - \left(1-\frac{|\A|-1}{2^N} \right)  \bm{e}^T \bmt(0)  \bm{e}^T \bm{y}_{\A} + \left(1-\frac{1}{2^N} \right)  \bmt(0)^T \bmt(0) - \bmt(0)^T \bm{y}_{\A}  \right] \label{vqacr_geb_qkm_haar_input_theorem_r_3_9} \\
\simeq{}& - \frac{1}{2^N|\B|} \left( \bm{e}^T \bm{y}_{\B} \right) \left[ \bm{e}^T \bmt(0) - \bm{e}^T \bm{y}_{\A} \right] . \label{vqacr_geb_qkm_haar_input_theorem_r_3_10}
\end{align}
Eq.~(\ref{vqacr_geb_qkm_haar_input_theorem_r_3_8}) is derived by using $\bm{r}=\bm{z}-\bm{y}$ and Eq.~(\ref{vqacr_geb_qntk_params_zero}). Eq.~(\ref{vqacr_geb_qkm_haar_input_theorem_r_3_9}) is derived by using $k^{-1} \approx 2I-k$ and 
\begin{align}
\mathop{\E} k_{aa'} ={}& \frac{1}{2^N} + \left(1-\frac{1}{2^N} \right) \delta_{aa'} \label{vqacr_geb_qkm_haar_input_theorem_r_3_11},
\end{align}
where Eq.~(\ref{vqacr_geb_qkm_haar_input_theorem_r_3_11}) follows from Lemma~\ref{vqacr_qntk_tdesign_tr_wawb}.
We remark that
\begin{align}
\mathop{\E}_{\A} \bm{e}^T \bm{r}_{\A} (0) ={}& \mathop{\E}_{\A} \bm{e}^T {k} \bmt(0) - \bm{e}^T \bm{y}_{\A} \simeq \bm{e}^T \bmt(0) - \bm{e}^T \bm{y}_{\A} , \notag \\
\mathop{\E}_{\B} \bm{e}^T \bm{r}_{\B} (0) ={}& 
\mathop{\E}_{\B} \sum_{\hat{b}\in \B} \sum_{a\in A} {k}_{\hat{b}a} \theta_a(0) - y_{\hat{b}} \simeq - \bm{e}^T \bm{y}_{\B} .
\end{align}
Using Eqs.~(\ref{vqacr_geb_qkm_haar_input_theorem_r_2_9}) and (\ref{vqacr_geb_qkm_haar_input_theorem_r_3_10}) in Eq.~(\ref{vqacr_geb_qkm_haar_input_theorem_r_1_1}), we have
\begin{align}
\mathop{\E} \left[ \mathcal{E}_{\mathcal{B}|\mathcal{A}} - \mathcal{L}_{\B} (0) \right] \simeq{}& \frac{1}{2^N|\B|} \left( \bm{e}^T \bm{y}_{\B} \right) \left[ \bm{e}^T \bmt(0) - \bm{e}^T \bm{y}_{\A} \right] \\
\simeq{}& - \frac{1}{2^N|\B|} \mathop{\E}_{\A} \bm{e}^T \bm{r}_{\A} (0) \mathop{\E}_{\B} \bm{e}^T \bm{r}_{\B} (0) \\
\geq{}& - \frac{\sqrt{|\A||\B|}}{2^N|\B|} \mathop{\E}_{\A} \|\bm{r}_{\A}(0)\|_2 \mathop{\E}_{\B} \|\bm{r}_{\B}(0)\|_2 \\
={}& - \frac{|\A|}{2^{N-1}} \mathop{\E} \sqrt{ \mathcal{L}_{\A}(0) \mathcal{L}_{\B}(0) } .
\end{align}

\end{proof}

\subsection{Generalization error of quantum neural network}

Next, we consider the quantum machine learning model utilizing variational quantum circuit to generate predications for input samples. Specifically, we denote by 
\begin{equation}\label{vqacr_geb_qntk_haar_input_v_theta}
V(\bmt)=\prod_{d=D}^{1} G_d(\theta_d) W_d =\prod_{d=D}^{1} \exp(-iH_d \theta_d/2) W_d
\end{equation}
the variational quantum circuit parameterized by $\bmt \in \R^D$, where $W_d$ and $H_d$ denote the fixed unitary gate and the Hamiltonian, respectively. We assume that all Hamiltonians $\{H_d\}$ are unitary Hermitians. Let the observable be $O$. For the sample $\rho_a \in \A$, the predication at training step $t$ is given by 
\begin{equation}\label{vqacr_geb_qntk_haar_input_z_a_t}
z_a(t) = \Tr \left[ O V(\bmt(t)) \rho_a V(\bmt(t))^\dag  \right] .
\end{equation}
The QNTK is 
\begin{equation}
K(t) := K(\bmt(t)) = \frac{1}{D} {J \left(\bmt(t)\right)}^T {J \left(\bmt(t)\right)},
\end{equation}
where the Jacobian matrix $J(\bmt) \in \R^{D\times |\A|}$ with elements $J_{da}(\bmt)=\frac{\partial z_a}{\partial \theta_d}(\bmt)$. For convenience, we assume the value of initial parameter $\theta_d(0)$ is absorbed into the fixed unitary $W_d$, i.e., $\theta_d(0) \equiv 0$. By using Lemma~\ref{vqacr_qntk_jac_form}, we denote
\begin{equation}\label{vqacr_geb_qntk_haar_input_J_da}
J_{da} := J_{d a} (\0) = -\frac{i}{2} \Tr \left[  \big[ {W_{D:d+1}}^\dag O W_{D:d+1}  , H_d  \big] W_{d:1} \rho_a {W_{d:1}}^\dag  \right] , 
\end{equation}
where $W_{d:d'}=W_d W_{d-1} \cdots W_{d'}$. We assume the frozen QNTK during training, i.e. 
\begin{equation}\label{vqacr_geb_qntk_haar_input_frozen_K_JTJ}
K(t) = K(0) = K = \frac{1}{D} J^T J.
\end{equation}

Similar to the quantum kernel method, property of input states could affect QNTK and the corresponding generalization behavior. Here we provide a negative example under the frozen QNTK regime for datasets from state 2-designs. The generalization error result in the main text is repeated in Theorem~\ref{vqacr_geb_qntk_haar_input_theorem} for convenience.

\begin{theorem}\label{vqacr_geb_qntk_haar_input_theorem}
Suppose all $N$-qubit quantum states in the training and test datasets $\A$ and $\B$ are independently sampled from state 2-designs. Let $\L_{\A}(t)$ and $\L_{\B}(t)$ be the training and the test loss function of the quantum neural network (Eqs.~(\ref{vqacr_main_backqml_za}) and (\ref{vqacr_main_backqml_qnn_o})) at the $t$-th step, where the label is given as $y:=\Tr[O U \rho U^\dag]$ for a target unitary $U$ with the zero trace observable $O$. Then, under the frozen QNTK regime,
\begin{equation}\label{vqacr_geb_qntk_haar_input_theorem_eq}
\mathop{\E} \L_{\B}(\infty) \geq{} \left( 1 - \frac{|\A|}{2^{2N}-1} \right) \mathop{\E} \mathcal{L}_{\B} (0) ,
\end{equation}
where the expectation is taken under $2$-designs distributions for states in $\A$ and $\B$ and the target unitary $U$.
\end{theorem}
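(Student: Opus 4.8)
I would follow the template of the quantum kernel method analysis, with the QNTK in place of the kernel. Under the frozen QNTK regime the gradient–flow dynamics is linear, so the training residual satisfies $\bm{r}_{\A}(\infty)=0$ and, analogously to Eqs.~(\ref{vqacr_geb_qntk_prediction_test})--(\ref{vqacr_geb_qntk_res_error_test}),
\[
 r_{\hat b}(\infty)=r_{\hat b}(0)-\sum_{a,a'\in\A}\widetilde{K}_{\hat b a'}\,[K^{-1}]_{a'a}\,r_a(0),\qquad \widetilde{K}_{\hat b a}=\tfrac1D\sum_{d}J_{d\hat b}J_{da},
\]
for every test point $\hat b$. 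Hence $\L_{\B}(\infty)=\mathcal{E}_{\B|\A}$ splits as $\L_{\B}(0)$ plus a nonnegative quadratic term minus a cross term $\tfrac1{|\B|}\sum_{\hat b,a}g_{\hat b a}r_{\hat b}(0)r_a(0)$, with $g_{\hat b a}=\sum_{a'}\widetilde{K}_{\hat b a'}[K^{-1}]_{a'a}$, so it suffices to control the \emph{expected} cross term minus quadratic term.

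\textbf{Pauli reduction and $2$-design averages.} By Lemma~\ref{vqacr_qntk_jac_form}, $J_{da}=\Tr[P_d\rho_a]$ with $P_d=-\tfrac i2 W_{d:1}^{\dag}[\,W_{D:d+1}^{\dag}OW_{D:d+1},H_d\,]W_{d:1}$, which is \emph{traceless} since the trace of a commutator vanishes; similarly $r_a(0)=z_a(0)-y_a=\Tr[\Delta O\,\rho_a]$ with $\Delta O=W^{\dag}OW-U^{\dag}OU$, traceless because $O$ is. Writing the Pauli coefficient vectors $\bm{p}_d$ of $P_d$ and $\bm{\delta}$ of $\Delta O$, and collecting the coefficient columns of the states into $A_{\A},A_{\B}$ as in Eq.~(\ref{vqacr_main_pauli_decomposition_rho_a}), one obtains $K=A_{\A}^{T}MA_{\A}$, $\widetilde{K}_{\B\A}=A_{\B}^{T}MA_{\A}$ with $M=\tfrac1D\sum_{d}\bm{p}_d\bm{p}_d^{T}\succeq0$, so that $g=A_{\B}^{T}MA_{\A}K^{-1}$ and $r_a(0)=\bm{\delta}^{T}\bm{A}_a$. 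I would then integrate out the test states using Eqs.~(\ref{vqacr_main_EAA_2design_eq0})--(\ref{vqacr_main_EAA_2design_eq1}): the quadratic term remains $\ge0$, and the expected cross term becomes $\tfrac1{2^{N}+1}\bm{\delta}^{T}MQ\,\bm{\delta}$ with $Q:=A_{\A}K^{-1}A_{\A}^{T}$, while $\E\L_{\B}(0)=\tfrac1{2(2^{N}+1)}\E\|\bm{\delta}\|_2^2$.

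\textbf{Extracting the $2^{2N}-1$.} The algebraic heart is that, since $QMQ=Q$, the matrix $R:=MQ$ is an oblique idempotent with $\operatorname{range}(R)=\operatorname{range}(MA_{\A})$, and a short computation gives
\[
\bm{\delta}^{T}\!\left(2MQ-QM^{2}Q\right)\bm{\delta}=\|P\bm{\delta}\|_2^2-\|N\bm{\delta}\|_2^2\le\|P\bm{\delta}\|_2^2,
\]
where $P$ is the \emph{orthogonal} projector onto $\operatorname{range}(MA_{\A})$ (dimension $\le|\A|$) and $N=R-P$. Next I would average over the target unitary with Lemma~\ref{vqacr_qntk_tdesign_trwawbtrwcwd}: since $O$ is traceless, $\E_{U}[U^{\dag}OU]=0$, giving $\E_{U}[\bm{\delta}\bm{\delta}^{T}]=\bm{\delta}_0\bm{\delta}_0^{T}+\tfrac{\|O\|_F^2}{2^{N}(2^{2N}-1)}I$ (with $\bm{\delta}_0$ the coefficients of $W^{\dag}OW$) and $\E_{U}\|\bm{\delta}\|_2^2=\tfrac{2\|O\|_F^2}{2^{N}}$. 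Combining this with $\Tr P=\operatorname{rank}(MA_{\A})\le|\A|$ and the exact trace identity $\Tr[K^{-1}A_{\A}^{T}MA_{\A}]=|\A|$, the isotropic part of $\E_{U}[\bm{\delta}\bm{\delta}^{T}]$ produces a contribution scaling as $\tfrac{|\A|}{2^{2N}-1}\E\L_{\B}(0)$, which is exactly the suppression factor claimed in the theorem.

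\textbf{Main obstacle.} The crux is the ``drift'' contribution $\bm{\delta}_0^{T}P\bm{\delta}_0$ coming from the $U$-mean $W^{\dag}OW$ of $\Delta O$: the projector $P$ lands on the \emph{data-dependent} subspace $\operatorname{range}(MA_{\A})$, which need not be rotationally invariant, so one cannot simply replace $\E_{\A}P$ by $\tfrac{|\A|}{2^{2N}-1}I$, and the $2$-design property alone does not control the higher moments of $\operatorname{range}(MA_{\A})$. I expect this to be where the real work lies, to be handled either by a random-matrix concentration bound forcing the QNTK Gram matrix $K$ close to a multiple of the identity (so that $P$ behaves like the projector onto a quasi-random $|\A|$-dimensional subspace, in the spirit of Lemmas~\ref{vqacr_lemma_qntk_operator_norm_of_covariance_matrix} and \ref{vqacr_lemma_qntk_spectral_norm_of_independent_columns}), or by exploiting the so-far-discarded nonnegative term $\|N\bm{\delta}\|_2^2$ to absorb the anisotropy; everything else is bookkeeping with $2$-design integrals.
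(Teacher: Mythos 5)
Your setup is sound and runs parallel to the paper's: the same decomposition of $\L_{\B}(\infty)$ into $\L_{\B}(0)$ plus a nonnegative quadratic form minus a cross term, the same reduction of the $2$-design averages over $\hat b$ and $U$ to isotropic second moments, and the exact identity $\Tr[K^{-1}A_{\A}^{T}MA_{\A}]=|\A|$ that produces the $|\A|/(2^{2N}-1)$ factor. But the step you flag as the ``main obstacle'' --- the drift contribution $\bm{\delta}_0^{T}P\bm{\delta}_0$ coming from the $U$-mean $W_{D:1}^{\dag}OW_{D:1}$ of $\Delta O$ --- is precisely the step your proposal does not close, and neither of the escape routes you suggest closes it in the form stated: a random-matrix concentration bound on $K$ or on $\operatorname{range}(MA_{\A})$ would only yield a high-probability or approximate version of Eq.~(\ref{vqacr_geb_qntk_haar_input_theorem_eq}), and the discarded nonnegative term $\|N\bm{\delta}\|_2^2$ has no reason to dominate $\|P\bm{\delta}_0\|_2^2$ in general.

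The missing idea is an exact algebraic identity, not a probabilistic one: $\bm{\delta}_0$ lies in the kernel of $M$, so $P\bm{\delta}_0=0$ deterministically, for every realization of $\A$ and every circuit. Indeed, in your notation $\bm{p}_d^{T}\bm{\delta}_0\propto\Tr[P_d\,W_{D:1}^{\dag}OW_{D:1}]$; writing $A:=W_{D:d+1}^{\dag}OW_{D:d+1}$ and using $W_{D:1}=W_{D:d+1}W_{d:1}$ to cancel the conjugation by $W_{d:1}$, this equals $-\tfrac i2\Tr\!\big[[A,H_d]\,A\big]=\Tr[AH_dA]-\Tr[H_dAA]=0$. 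Hence $M\bm{\delta}_0=0$, so $\bm{\delta}_0\perp\operatorname{range}(M)\supseteq\operatorname{range}(MA_{\A})$ and the drift term vanishes identically; only the isotropic part of $\E_U[\bm{\delta}\bm{\delta}^{T}]$ survives, giving $\E_U\|P\bm{\delta}\|_2^2\le c\,\Tr[P]/(2^{2N}-1)\le c\,|\A|/(2^{2N}-1)$ with no averaging over $\A$ required at all. This is exactly the mechanism in the paper's proof, phrased there in the opposite order (integrate over $U$ first, then over $\hat b$): the two problematic terms $\E_{\hat b}[J_{d\hat b}]$ and $\E_{\hat b}[J_{d\hat b}z_{\hat b}(0)]$ are shown to vanish by the commutator-trace identities $\Tr[[A,B]]=0$ and $\Tr[[A,B]A]=0$ (Eqs.~(\ref{vqacr_geb_qntk_ntk_grr_exp_b_1_4}) and (\ref{vqacr_geb_qntk_ntk_grr_exp_b_3_4})), leaving only the $\Tr[\rho_a\rho_{\hat b}]$ term, which telescopes to $[K^{-1}K]_{aa}=1$. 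With this one observation inserted, your argument closes and becomes essentially equivalent to the paper's.
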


\begin{proof}

After the training, the loss on the training set $\A$ should be zero. 
By using Eq.~(34) in Ref.~\cite{liu2021representation}, we have
\begin{align}
z_{\hat{b}} (\infty) ={}& z_{\hat{b}} (0) - \sum_{a,a' \in \mathcal{A}} \tilde{K}_{\hat{b}a'} [{K}^{-1}]_{a'a} r_{a} (0) 
\label{vqacr_geb_qntk_haar_input_z_b_infty}
\end{align}
under the frozen QNTK regime, where $\tilde{K}$ denotes the QNTK for the whole dataset $\A \cup \B$. Notice that Eq.~(\ref{vqacr_geb_qntk_haar_input_z_b_infty}) shares the same formulation with Eq.~(\ref{vqacr_geb_qntk_prediction_test}). After some calculation, we could derive the generalization error,
\begin{align}
\mathcal{R}_{\mathcal{B}|\mathcal{A}} :={}& \mathcal{E}_{\mathcal{B}|\mathcal{A}} - \mathcal{L}_{\mathcal{B}} (0) \label{vqacr_geb_qntk_ntk_re_error_eq_1} \\
={}& \frac{1}{2|\mathcal{B}|} \sum_{\hat{b} \in \mathcal{B}, a_1, a_2 \in \mathcal{A}} g_{\hat{b} a_1} g_{\hat{b} a_2} r_{a_1}(0) r_{a_2}(0) - \frac{1}{|\mathcal{B}|} \sum_{\hat{b} \in \mathcal{B}, a \in \mathcal{A}} g_{\hat{b} a} r_{\hat{b}} (0) r_{a}(0) , \label{vqacr_geb_qntk_ntk_re_error_eq_3}
\end{align}
where the metric 
\begin{equation}\label{vqacr_geb_qntk_ntk_metric_g_eq}
g_{\hat{b} a} = \sum_{a'\in \mathcal{A}} \tilde{K}_{\hat{b} a'} [{K}^{-1}]_{a'a}.
\end{equation}
We remark that $g_{\hat{b} a} \in \R$ since all entries in the QNTK are real elements. Therefore, the first term in Eq.~(\ref{vqacr_geb_qntk_ntk_re_error_eq_3}) is a positive semidefinite quadratic form, which is greater than or equal to zero. Thus, it suffices to show the following result:
\begin{align}
\mathop{\E}_{\A} \mathop{\E}_{\B} \mathop{\E}_{U} \left[ g_{\hat{b} a} r_{\hat{b}} (0) r_{a}(0) \right] \leq{}& \frac{1}{2^{2N}-1} \mathop{\E}_{\B} \mathop{\E}_{U} \mathcal{L}_{\B} (0) . \label{vqacr_geb_qntk_ntk_target_eq}
\end{align}

The rest part of the proof is to derive Eq.~(\ref{vqacr_geb_qntk_ntk_target_eq}). First, we calculate the expectation of the initial loss on the test set. The main idea is to conduct the average over the distribution of the target unitary $U$, and then deal with the sample $\hat{b}$. We have 
\begin{align}
\mathop{\E}_{\B} \mathop{\E}_{U} \L_{\B}(0) ={}& \frac{1}{2|\B|} \sum_{\hat{b} \in \B} \mathop{\E}_{\hat{b}} \left[ \mathop{\E}_{U} \left[ \left( z_{\hat{b}} (0) - y_{\hat{b}} \right)^2 \right] \right] \tag{using $r=z(0)-y$} \\
={}& \frac{1}{2|\B|} \sum_{\hat{b} \in \B} \mathop{\E}_{\hat{b}} \left[ \mathop{\E}_{U} \left[ \left( \Tr [O {W_{D:1}} \rho_{\hat{b}} {W_{D:1}}^\dag ] - \Tr [OU \rho_{\hat{b}} U^\dag ] \right)^2 \right] \right] \label{vqacr_geb_qntk_ntk_LB_exp_2} \\
={}& \frac{1}{2|\B|} \sum_{\hat{b} \in \B} \mathop{\E}_{\hat{b}} \left[ \left( \Tr [O {W_{D:1}} \rho_{\hat{b}} {W_{D:1}}^\dag ] \right)^2 + \frac{1}{2^N(2^N+1)} \Tr[O^2] \right] \label{vqacr_geb_qntk_ntk_LB_exp_3} \\
={}& \frac{1}{2^N(2^N+1)} \Tr[O^2] . \label{vqacr_geb_qntk_ntk_LB_exp_4}
\end{align}
Eq.~(\ref{vqacr_geb_qntk_ntk_LB_exp_2}) follows by using the detailed formulation of $z_{\hat{b}}(0)$ and $y_{\hat{b}}$ in terms of measurement results. The variational quantum circuit is denoted as $W_{D:1}$. Eq.~(\ref{vqacr_geb_qntk_ntk_LB_exp_3}) is derived by taking the average over the distribution of $U$ using Lemma~\ref{vqacr_qntk_tdesign_trwawbtrwcwd} and the condition $\Tr[O]=0$. Eq.~(\ref{vqacr_geb_qntk_ntk_LB_exp_4}) is derived similarly by taking the average over the distribution of $U_{\hat{b}}$ in $\rho_{\hat{b}}:=U_{\hat{b}}|0\>\<0|{U_{\hat{b}}}^\dag$ using Lemma~\ref{vqacr_qntk_tdesign_trwawbtrwcwd} and the condition $\Tr[O]=0$. 

Next, we calculate the expectation in the left part of Eq.~(\ref{vqacr_geb_qntk_ntk_target_eq}). The main idea is to first decompose $g_{\hat{b}a}$ into items with independent $\hat{b}$ or $a$. Next, we conduct the average over the distribution of the target unitary $U$, and then deal with the sample $\hat{b}$. We finish the decomposition and the expectation of the $U$ part as follows,
\begin{align}
\mathop{\E}_{\A} \mathop{\E}_{\B} \mathop{\E}_{U} \left[ g_{\hat{b} a} r_{\hat{b}} (0) r_a(0) \right] ={}& \mathop{\E}_{\A} \mathop{\E}_{\hat{b}} \left[ \sum_{a' \in \A}  [K^{-1}]_{a'a} \tilde{K}_{\hat{b}a'} \mathop{\E}_{U} \left[ r_{\hat{b}} (0) r_a(0) \right] \right]
\tag{using  Eq.~(\ref{vqacr_geb_qntk_ntk_metric_g_eq})} \\
={}& \frac{1}{D} \sum_{d=1}^{D} \mathop{\E}_{\A} \left[ \sum_{a' \in \A}  [K^{-1}]_{a'a} J_{da'} \mathop{\E}_{\hat{b}} \left[  J_{d\hat{b}} \mathop{\E}_{U} \left[ r_{\hat{b}} (0) r_a(0) \right] \right] \right]
\tag{using  Eq.~(\ref{vqacr_geb_qntk_haar_input_frozen_K_JTJ})} \\
={}& \frac{1}{D} \sum_{d=1}^{D} \mathop{\E}_{\A} \left[ \sum_{a' \in \A}  [K^{-1}]_{a'a} J_{da'} z_a(0) \mathop{\E}_{\hat{b}} \left[  J_{d\hat{b}} z_{\hat{b}} (0) \right] \right] \notag \\
+{}& \frac{1}{(2^{2N}-1)D} \Tr[O^2] \sum_{d=1}^{D} \mathop{\E}_{\A} \left[ \sum_{a' \in \A}  [K^{-1}]_{a'a} J_{da'} \mathop{\E}_{\hat{b}} \left[  J_{d\hat{b}} \Tr[\rho_a \rho_{\hat{b}}] \right] \right] \notag \\
-{}& \frac{1}{2^N(2^{2N}-1)D} \Tr[O^2] \sum_{d=1}^{D} \mathop{\E}_{\A} \left[ \sum_{a' \in \A}  [K^{-1}]_{a'a} J_{da'} \mathop{\E}_{\hat{b}} \left[  J_{d\hat{b}} \right] \right] . \label{vqacr_geb_qntk_ntk_grr_exp_U_3} 
\end{align}
Eq.~(\ref{vqacr_geb_qntk_ntk_grr_exp_U_3}) is obtained by taking the expectation of $U$ as follows,
\begin{align}
\mathop{\E}_{U} \left[ r_{\hat{b}} (0) r_a(0) \right] ={}& \mathop{\E}_{U} \left(z_{\hat{b}} (0) - y_{\hat{b}} \right) \left( z_a(0) - y_a \right) \tag{using $r(0)=z(0) - y$} \\
={}& \mathop{\E}_{U} \left(z_{\hat{b}} (0) - \Tr [OU \rho_{\hat{b}} U^\dag ] \right) \left( z_a(0) - \Tr [OU \rho_a U^\dag ] \right) \notag \\
={}& z_{\hat{b}} (0) z_a(0) + \frac{1}{2^{2N}-1} \Tr[O^2] \Tr[\rho_a \rho_{\hat{b}}] - \frac{1}{2^N(2^{2N}-1)} \Tr[O^2] , \label{vqacr_geb_qntk_ntk_grr_exp_U_6}
\end{align}
where Eq.~(\ref{vqacr_geb_qntk_ntk_grr_exp_U_6}) follows from Lemmas~\ref{vqacr_qntk_tdesign_tr_wawb} and \ref{vqacr_qntk_tdesign_trwawbtrwcwd} and the condition $\Tr[O]=0$. 

Next, we deal with the three $\E_{\hat{b}}$-related terms in Eq.~(\ref{vqacr_geb_qntk_ntk_grr_exp_U_3}) separately. The third $\E_{\hat{b}}$-related term in Eq.~(\ref{vqacr_geb_qntk_ntk_grr_exp_U_3}) is
\begin{align}
\mathop{\E}_{\hat{b}} \left[  J_{d\hat{b}} \right] ={}& \frac{-i}{2} \mathop{\E}_{\hat{b}} \left[  \Tr \left[  \big[ {W_{D:d+1}}^\dag O W_{D:d+1}  , H_d  \big] W_{d:1} \rho_{\hat{b}} {W_{d:1}}^\dag  \right] \right] \label{vqacr_geb_qntk_ntk_grr_exp_b_1_1} \\
={}& \frac{-i}{2\times 2^N} \Tr \left[ {W_{d:1}}^\dag \big[ {W_{D:d+1}}^\dag O W_{D:d+1}  , H_d  \big] W_{d:1} \right] \label{vqacr_geb_qntk_ntk_grr_exp_b_1_2} \\
={}& \frac{-i}{2\times 2^N(2^N+1)} \Tr \left[ \big[ {W_{D:d+1}}^\dag O W_{D:d+1}  , H_d  \big] \right] \label{vqacr_geb_qntk_ntk_grr_exp_b_1_3} \\
={}& 0 . \label{vqacr_geb_qntk_ntk_grr_exp_b_1_4}
\end{align}
Eq.~(\ref{vqacr_geb_qntk_ntk_grr_exp_b_1_1}) follows from the formulation of the partial derivative in Eq.~(\ref{vqacr_geb_qntk_haar_input_J_da}). Eq.~(\ref{vqacr_geb_qntk_ntk_grr_exp_b_1_2}) follows by taking the average over the distribution of $U_{\hat{b}}$ in $\rho_{\hat{b}}:=U_{\hat{b}}|0\>\<0|{U_{\hat{b}}}^\dag$ using Lemma~\ref{vqacr_qntk_tdesign_tr_wawb}. Eq.~(\ref{vqacr_geb_qntk_ntk_grr_exp_b_1_3}) is derived by using $\Tr[ABC]=\Tr[BCA]$. Eq.~(\ref{vqacr_geb_qntk_ntk_grr_exp_b_1_4}) is derived by using $\Tr[[A,B]]=\Tr[AB-BA]=0$. 

The second $\E_{\hat{b}}$-related term in Eq.~(\ref{vqacr_geb_qntk_ntk_grr_exp_U_3}) is
\begin{align}
\mathop{\E}_{\hat{b}} \left[  J_{d\hat{b}} \Tr[\rho_a \rho_{\hat{b}}] \right] ={}& \frac{-i}{2} \mathop{\E}_{\hat{b}} \left[  \Tr \left[  \big[ {W_{D:d+1}}^\dag O W_{D:d+1}  , H_d  \big] W_{d:1} \rho_{\hat{b}} {W_{d:1}}^\dag  \right] \Tr \left[ \rho_a \rho_{\hat{b}} \right] \right] \tag{using Eq.~(\ref{vqacr_geb_qntk_haar_input_J_da})} \\
={}& \frac{-i}{2\times 2^N(2^N+1)} \Tr \left[ {W_{d:1}}^\dag \big[ {W_{D:d+1}}^\dag O W_{D:d+1}  , H_d  \big] W_{d:1} \right] \Tr \left[ \rho_a \right] \notag \\
+{}& \frac{-i}{2\times 2^N(2^N+1)} \Tr \left[ {W_{d:1}}^\dag \big[ {W_{D:d+1}}^\dag O W_{D:d+1}  , H_d  \big] W_{d:1}  \rho_a \right] \label{vqacr_geb_qntk_ntk_grr_exp_b_2_2} \\
={}& \frac{-i}{2\times 2^N(2^N+1)} \Tr \left[ {W_{d:1}}^\dag \big[ {W_{D:d+1}}^\dag O W_{D:d+1}  , H_d  \big] W_{d:1}  \rho_a \right] . \label{vqacr_geb_qntk_ntk_grr_exp_b_2_3}
\end{align}
Eq.~(\ref{vqacr_geb_qntk_ntk_grr_exp_b_2_2}) follows by taking the average over the distribution of $U_{\hat{b}}$ in $\rho_{\hat{b}}:=U_{\hat{b}}|0\>\<0|{U_{\hat{b}}}^\dag$ using Lemma~\ref{vqacr_qntk_tdesign_trwawbtrwcwd}. Eq.~(\ref{vqacr_geb_qntk_ntk_grr_exp_b_2_3}) follows by noticing that the first term in Eq.~(\ref{vqacr_geb_qntk_ntk_grr_exp_b_2_2}) equals to zero by using the derivation in Eqs.~(\ref{vqacr_geb_qntk_ntk_grr_exp_b_1_2}-\ref{vqacr_geb_qntk_ntk_grr_exp_b_1_4}). 

The first $\E_{\hat{b}}$-related term in Eq.~(\ref{vqacr_geb_qntk_ntk_grr_exp_U_3}) is
\begin{align}
\mathop{\E}_{\hat{b}} \left[ J_{d\hat{b}} z_{\hat{b}} (0) \right] ={}& \frac{-i}{2} \mathop{\E}_{\hat{b}} \left[  \Tr \left[  \big[ {W_{D:d+1}}^\dag O W_{D:d+1}  , H_d  \big] W_{d:1} \rho_{\hat{b}} {W_{d:1}}^\dag  \right] \Tr \left[ O {W_{D:1}} \rho_{\hat{b}}  {W_{D:1}}^\dag \right] \right] \tag{using Eq.~(\ref{vqacr_geb_qntk_haar_input_J_da})} \\
={}& \frac{-i}{2\times 2^N(2^N+1)} \Tr \left[ {W_{d:1}}^\dag \big[ {W_{D:d+1}}^\dag O W_{D:d+1}  , H_d  \big] W_{d:1} \right] \Tr \left[ {W_{D:1}}^\dag O {W_{D:1}} \right] \notag \\
+{}& \frac{-i}{2\times 2^N(2^N+1)} \Tr \left[ {W_{d:1}}^\dag \big[ {W_{D:d+1}}^\dag O W_{D:d+1}  , H_d  \big] W_{d:1}  {W_{D:1}}^\dag O {W_{D:1}} \right] \label{vqacr_geb_qntk_ntk_grr_exp_b_3_2} \\
={}& \frac{-i}{2\times 2^N(2^N+1)} \Tr \left[ \big[ {W_{D:d+1}}^\dag O W_{D:d+1}  , H_d  \big] {W_{D:d+1}}^\dag O {W_{D:d+1}} \right] \label{vqacr_geb_qntk_ntk_grr_exp_b_3_3} \\
={}& 0. \label{vqacr_geb_qntk_ntk_grr_exp_b_3_4}
\end{align}
Eq.~(\ref{vqacr_geb_qntk_ntk_grr_exp_b_3_2}) follows by taking the average over the distribution of $U_{\hat{b}}$ in $\rho_{\hat{b}}:=U_{\hat{b}}|0\>\<0|{U_{\hat{b}}}^\dag$ using Lemma~\ref{vqacr_qntk_tdesign_trwawbtrwcwd}. Eq.~(\ref{vqacr_geb_qntk_ntk_grr_exp_b_3_3}) follows by noticing that the first term in Eq.~(\ref{vqacr_geb_qntk_ntk_grr_exp_b_3_2}) equals to zero by using the derivation in Eqs.~(\ref{vqacr_geb_qntk_ntk_grr_exp_b_1_2}-\ref{vqacr_geb_qntk_ntk_grr_exp_b_1_4}). Eq.~(\ref{vqacr_geb_qntk_ntk_grr_exp_b_3_4}) follows by noticing that $\Tr[[A,B]A]=\Tr[ABA-BAA]=0$.

By using Eq.~(\ref{vqacr_geb_qntk_ntk_grr_exp_b_1_4}), (\ref{vqacr_geb_qntk_ntk_grr_exp_b_2_3}), and (\ref{vqacr_geb_qntk_ntk_grr_exp_b_3_4}) jointly in Eq.~(\ref{vqacr_geb_qntk_ntk_grr_exp_U_3}), we obtain
\begin{align}
{}& \mathop{\E}_{\A} \mathop{\E}_{\B} \mathop{\E}_{U} \left[ g_{\hat{b} a} r_{\hat{b}} (0) r_a(0) \right] \notag \\
={}& \frac{-i}{2^{N+1}(2^N+1)(2^{2N}-1)D} \Tr[O^2] \sum_{d=1}^{D} \mathop{\E}_{\A} \left[ \sum_{a' \in \A}  [K^{-1}]_{a'a} J_{da'} \Tr \left[ {W_{d:1}}^\dag \big[ {W_{D:d+1}}^\dag O W_{D:d+1}  , H_d  \big] W_{d:1}  \rho_a \right] \right] \notag \\
={}& \frac{1}{2^{N}(2^N+1)(2^{2N}-1)D} \Tr[O^2] \sum_{d=1}^{D} \mathop{\E}_{\A} \left[ \sum_{a' \in \A}  [K^{-1}]_{a'a} J_{da'} J_{da} \right] \tag{using Eq.~(\ref{vqacr_geb_qntk_haar_input_J_da})} \\
={}& \frac{1}{2^{N}(2^N+1)(2^{2N}-1)} \Tr[O^2] \mathop{\E}_{\A} \left[ \sum_{a' \in \A}  [K^{-1}]_{a'a} K_{aa'} \right] \tag{using Eq.~(\ref{vqacr_geb_qntk_haar_input_frozen_K_JTJ})} \\
={}& \frac{1}{2^{N}(2^N+1)(2^{2N}-1)} \Tr[O^2] . \label{vqacr_geb_qntk_ntk_grr_exp_O2}
\end{align}
Comparing Eqs.~(\ref{vqacr_geb_qntk_ntk_LB_exp_4}) and (\ref{vqacr_geb_qntk_ntk_grr_exp_O2}), we obtain Eq.~(\ref{vqacr_geb_qntk_ntk_target_eq}). Thus, we have proved Theorem~\ref{vqacr_geb_qntk_haar_input_theorem}.

\end{proof}

\section{Spectral analysis of the quantum neural tangent kernel}
\label{vqacr_qntk_app_lmin_qntk}

We begin with a warm-up example that shows the exponentially small spectrum of QNTK with $2$-designs states as datasets.

\begin{lemma}\label{vqacr_lemma_qntk_spectrum_qntk_haar}
Let $\A\{(\rho_a, y_a)\}$ be the training datasets with independent $N$-qubit $2$-design states $\rho_a$. Denote by $z_a=\Tr[OV(\bmt)\rho_a V(\bmt)^\dag]$ the QNN with the observable $O$ and the variational circuit $V(\bmt)=\prod_{d=D}^{1} \exp[-iH_d \theta_d/2] W_d$, where the hamiltonian $H_d$ is Hermitian unitary and $W_d$ is the fixed unitary. Then the QNTK has exponentially small spectrum, i.e. there exists a constant $C>0$ such that
\begin{equation} \label{vqacr_lemma_qntk_spectrum_qntk_haar_eq}
\lmax \left[ K (\bmt) \right] \leq{} \frac{\|O\|_F^2}{4^{N}} \left( 1 + \frac{C|\A|}{2^N \delta} \sqrt{\ln |\A|} \right)
\end{equation}
with probability at least $1-\delta$.
\end{lemma}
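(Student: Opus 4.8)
The plan is to run the quantum-kernel argument (proof of Theorem~\ref{vqacr_geb_qkm_haar_input_theorem}) through the circuit Jacobian rather than through the kernel Gram matrix. First I would put the Jacobian in closed form via Lemma~\ref{vqacr_qntk_jac_form}: writing $P_d := V_{D:d+1}^\dag O V_{D:d+1}$, $Q_d := -\tfrac{i}{2}[P_d, H_d]$ and $\tilde{Q}_d := V_{d:1}^\dag Q_d V_{d:1}$, one has $J_{da}(\bmt) = \Tr[\tilde{Q}_d \rho_a]$, where each $\tilde{Q}_d$ is Hermitian and \emph{traceless}, and a one-line commutator computation using $H_d^2 = I$ gives $\Tr[Q_d^2] = \tfrac12(\|O\|_F^2 - \Tr[P_d H_d P_d H_d]) \le \|O\|_F^2$, together with the almost-sure bound $\|\tilde{Q}_d\|_2 \le \|O\|_2$. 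Substituting the Pauli decomposition (Eq.~(\ref{vqacr_main_pauli_decomposition_rho_a})) and using $\Tr[\tilde{Q}_d] = 0$ to annihilate the identity component yields the factorisation $J = T A'$, where $A' \in \R^{(4^N-1)\times|\A|}$ collects the non-identity Pauli coefficients of the states, so its columns have the \emph{deterministic} norm $\|A'_{\cdot a}\|_2 = \sqrt{2^N-1}$ (since $\Tr[\rho_a^2]=1$), and $T_{d\bm{p}} = \tfrac1{2^N}\Tr[\tilde{Q}_d \sigma_{\bm{p}}]$, with $\Tr[T^T T] = \tfrac1{2^N}\sum_d \Tr[Q_d^2] \le D\|O\|_F^2/2^N$.

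The crucial structural fact is that $K = \tfrac1D A'^T G A'$, $G := T^T T$, has a \emph{scalar} mean: Eqs.~(\ref{vqacr_main_EAA_2design_eq0})--(\ref{vqacr_main_EAA_2design_eq1}) give $\E[A'^T G A'] = \tfrac{\Tr G}{2^N+1}\, I_{|\A|}$, hence $\E K = \bar{\kappa}\, I_{|\A|}$ with $\bar{\kappa} = \tfrac{\Tr G}{D(2^N+1)} \le \tfrac{\|O\|_F^2}{2^N(2^N+1)} < \tfrac{\|O\|_F^2}{4^N}$. So $\lmax[K(\bmt)] \le \bar{\kappa} + \|K - \E K\|_2$, and everything reduces to the operator-norm fluctuation. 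I would control this along the lines of Eqs.~(\ref{vqacr_geb_qkm_haar_input_theorem_cond3})--(\ref{vqacr_geb_qkm_haar_input_theorem_B_bound}): the rescaled matrix $B = \sqrt{2^N+1}\, A'$ has independent isotropic columns of norm $\sqrt{4^N-1}$, its fourth-moment quantity $d_3$ in Lemma~\ref{vqacr_lemma_qntk_spectral_norm_of_independent_columns} is $\le |\A|^2 - |\A|$ by the $2$-design identities of Lemmas~\ref{vqacr_qntk_tdesign_tr_wawb} and \ref{vqacr_qntk_tdesign_trwawbtrwcwd}, so $\E\|\tfrac{1}{2^N-1}A'^T A' - I\|_2 \lesssim |\A|\sqrt{\ln|\A|}/2^N$; feeding this through $G$ (using $\|G\|_2 \le \Tr G \le D\|O\|_F^2/2^N$) controls the off-diagonal block, a union bound over the $|\A|$ diagonal entries $K_{aa}$ (each with mean $\bar{\kappa}$ and almost surely at most $\tfrac1D\|G\|_2(2^N-1) \le \|O\|_F^2$) supplies the $\sqrt{\ln|\A|}$ factor, and a closing Markov inequality on the resulting expectation bound produces the $\delta^{-1}$.

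The main obstacle is precisely the decoupling step: the naive estimate $\lmax[K(\bmt)] \le \tfrac1D \|T\|_2^2\,\|A'\|_2^2$ loses a factor $4^N$ (it only yields $\lmax[K] \lesssim \|O\|_F^2$), because in the worst case $\|T\|_2^2 = \Theta(D\|O\|_F^2/2^N)$ and $\|A'\|_2^2 = \Theta(2^N)$, so one must genuinely exploit the cancellation $\E K = \bar\kappa I$ and bound the \emph{operator-norm} deviation of $A'^T G A'$ from its scalar mean. Since a state $2$-design pins down only the first two moments of $A'$, the fourth-moment terms that appear in $\mathrm{Var}(K_{aa})$ and in $\E\|K-\E K\|_F^2$ cannot be evaluated and must be controlled through the deterministic column-norm identity and the bound $\|\tilde{Q}_d\|_2 \le \|O\|_2$; squeezing out of this crude control the advertised correction $|\A|\, 2^{-N}\sqrt{\ln|\A|}$ --- rather than the weaker $2^N\sqrt{|\A|}\,\delta^{-1}$ that a plain Frobenius-norm/Chebyshev estimate gives --- is the delicate part, and it is why the sharp random-matrix estimate of Lemma~\ref{vqacr_lemma_qntk_spectral_norm_of_independent_columns}, rather than a second-moment argument, is needed.
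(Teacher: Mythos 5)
Your factorisation $J = TA'$ and the accompanying moment computations ($\|A'_{\cdot a}\|_2^2 = 2^N-1$, $\Tr[T^TT]\le D\|O\|_F^2/2^N$, $\E[A'^TGA']=\tfrac{\Tr G}{2^N+1}I$) are all correct, but you take a genuinely different route from the paper, and your diagnosis of the ``main obstacle'' is in fact a diagnosis of the paper's own argument. The paper's proof \emph{is} the naive decoupled estimate you dismiss: it bounds $\lmax[K]\le\frac{1}{4^ND}\lmax[A^TA]\,\lmax[C_OC_O^T]$ with $\lmax[C_OC_O^T]\le\|C_O\|_F^2\le2^ND\|O\|_F^2$ (Eqs.~(\ref{vqacr_lemma_qntk_spectrum_qntk_haar_lmax_KCA})--(\ref{vqacr_lemma_qntk_spectrum_qntk_haar_lmax_C})). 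The factor $4^{-N}$ it obtains comes from combining the factorisation $J=\frac{1}{2^N}C_O^TA$ of Eq.~(\ref{vqacr_lemma_qntk_spectrum_qntk_haar_J}) with the column normalisation $\|A_{\cdot a}\|_2^2=(2^N-1)/4^N$ inherited from Eq.~(\ref{vqacr_geb_qkm_haar_input_theorem_A_ia}); these two conventions differ by a factor of $2^N$ (with $A_{\bm{i} a}=\frac{1}{2^N}\Tr[\rho_a\sigma_{\bm{i}}]$ one has $J_{da}=\Tr[O_d\rho_a]=[C_O^TA]_{da}$, with no prefactor), and tracked consistently the decoupled route yields only $\lmax[K]\lesssim\|O\|_F^2(1+\cdot)$, exactly as you compute. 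So your observation that the decoupling loses a factor of $4^N$ is accurate and identifies a real gap in the paper's derivation rather than an obstacle peculiar to your approach.

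Your replacement route ($\lmax[K]\le\bar{\kappa}+\|K-\E K\|_2$ with $\bar{\kappa}=\tfrac{\Tr G}{D(2^N+1)}\le\|O\|_F^2/4^N$) is the right idea for recovering the $4^{-N}$, but the step you flag as ``delicate'' genuinely fails under the lemma's stated hypotheses rather than merely being hard: the inequality $\|G\|_2\le\Tr G$ is saturated when the operators $\tilde{Q}_d$ are mutually aligned (e.g.\ trivial $W_d$ and identical $H_d$), in which case $K$ is rank one with $\lmax[K]=\frac{1}{D}\sum_d \cdot\sum_a\Tr[\tilde{Q}_1\rho_a]^2=\sum_a\Tr[\tilde{Q}_1\rho_a]^2$, whose expectation $|\A|\Tr[Q_1^2]/(2^N(2^N+1))$ is of order $|\A|\,\|O\|_F^2/4^N$ for generic $O$ and $H_1$ --- exceeding the advertised bound by a factor of order $|\A|$ precisely in the regime $|\A|\ll2^N$ where that bound is meant to bite. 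Equivalently, the off-diagonal entries $K_{ab}$ are then of the same order as the diagonal ones, so no operator-norm concentration around $\bar{\kappa}I$ is available. Closing your argument therefore requires an additional hypothesis forcing $\|G\|_2\ll\Tr G$ (near-orthogonality of the $\tilde{Q}_d$, or randomly sampled $H_d$ with $D$ large, as is assumed in Lemmas~\ref{vqacr_lemma_qntk_least_eigen_gaussian_matrix_main} and~\ref{vqacr_lemma_qntk_least_eigen_random_sphere_main}); no amount of second-moment information about the states can substitute for it, since the obstruction lives entirely in the circuit-dependent matrix $G$.
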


\begin{proof}

We follow notations in Lemma~\ref{vqacr_qntk_jac_form}. For convenience, we assume the value of each parameter $\theta_d$ is absorbed into the fixed unitary $W_d$, i.e., $\theta_d \equiv 0$. Denote by
\begin{align}
O_d :={}& - \frac{i}{2} {W_{d:1}}^\dag \big[ {W_{D:d+1}}^\dag O W_{D:d+1}  , H_d  \big] W_{d:1} . \label{vqacr_lemma_qntk_spectrum_qntk_haar_Od}
\end{align}
We can write the partial derivative of $z_a$ in the formulation with Pauli decomposition coefficients:
\begin{align}
\frac{\partial z_a}{\partial \theta_d} ={}& \Tr \left[  O_d \rho_a \right] \tag{using Lemma~\ref{vqacr_qntk_jac_form}} \\
={}& \frac{1}{2^N} \left[ B^{\otimes} \left( \bm{c}_{O_d} \otimes \bm{c}_{\rho_a} \right) \right]_{\bm{0}} \tag{using notation in Lemma~\ref{vqacr_lemma_pauli_decom}} \\
={}& \frac{1}{2^N} \bm{c}_{O_d}^T \bm{c}_{\rho_a}. \tag{calculating $B^\otimes$ in Lemma~\ref{vqacr_lemma_pauli_decom}} 
\end{align}
Therefore, the Jacobian matrix is
\begin{align}
J = \frac{1}{2^N} C_{O}^T A , \label{vqacr_lemma_qntk_spectrum_qntk_haar_J}	
\end{align}
where $C_{O} \in \R^{(4^N-1)\times D}$ stores the Pauli decomposition of matrices $\{O_d\}_{d=1}^{D}$ for $d \in \{1,2,\cdots,D\}$. We remark that the each matrix $O_d$ only corresponds to $4^N-1$ coefficients since the coefficient $c_{O_d, \bm{0}}=\frac{1}{2^N} \Tr [O_d I]=0$. Similarly the matrix $A \in \R^{(4^N-1)\times |\A|}$ stores the Pauli decomposition of states $\{\rho_a\}_{a\in\A}$, which is the same with the notation in Eq.~(\ref{vqacr_geb_qkm_haar_input_theorem_A_ia}).
By using Eq.~(\ref{vqacr_qntk_eq}), the QNTK is
\begin{align}
K(\bmt) ={}& \frac{1}{D} J^T J \notag \\
={}& \frac{1}{4^N D} A^T C_{O} C_{O}^T A . \notag
\end{align}
Thus, the largest eigenvalue is upper bounded as
\begin{align}
\lmax \left[ K(\bmt) \right]	 \leq{}& \frac{1}{4^N D} \lmax \left[ A^T A \right] \lmax \left[ C_{O} C_{O}^T \right] . \label{vqacr_lemma_qntk_spectrum_qntk_haar_lmax_KCA}
\end{align}
We remark that by using Eqs.~(\ref{vqacr_geb_qkm_haar_input_theorem_B}-\ref{vqacr_geb_qkm_haar_input_theorem_B_bound}), there exists a constant $C>0$ such that
\begin{align}
\lmax \left[ A^T A \right] \leq{}& \frac{1}{2^N} + \frac{C|\A|}{4^N \delta} \sqrt{\ln |\A|} \label{vqacr_lemma_qntk_spectrum_qntk_haar_lmax_A}
\end{align}
with probability at least $1-\delta$.
Besides, we have
\begin{align}
{}& \lmax \left[ C_{O} C_{O}^T \right] \notag \\
\leq{}& \left\| C_O \right\|_F^2 \notag \\
={}& \sum_{d=1}^{D} \sum_{\bm{i} \in \{0,1,2,3\}^N/\bm{0}} \left( c_{O_d, \bm{i}} \right)^2 \notag \\
={}& \sum_{d=1}^{D} 2^N \Tr \left[ O_d^2 \right] \tag{using Lemma~\ref{vqacr_lemma_pauli_decom}} \\
={}& - 2^{N-2} \sum_{d=1}^{D} \Tr \left[ \left( {W_{d:1}}^\dag \big[ {W_{D:d+1}}^\dag O W_{D:d+1}  , H_d  \big] W_{d:1} \right)^2 \right] \tag{using Eq.~(\ref{vqacr_lemma_qntk_spectrum_qntk_haar_Od})} \\
={}& - 2^{N-2} \sum_{d=1}^{D} \Tr \left[ \big[ {W_{D:d+1}}^\dag O W_{D:d+1}  , H_d  \big]^2 \right] \notag \\
={}& 2^{N-1} \sum_{d=1}^{D} \Tr \left[ O \left(  O -  W_{D:d+1} H_d {W_{D:d+1}}^\dag O W_{D:d+1}  H_d {W_{D:d+1}}^\dag \right) \right] \notag \\
={}& 2^{N-1} \sum_{d=1}^{D} \left\| O \right\|_F \left\| O -  W_{D:d+1} H_d {W_{D:d+1}}^\dag O W_{D:d+1}  H_d {W_{D:d+1}}^\dag \right\|_F  \notag \\
\leq{}& 2^{N-1} \sum_{d=1}^{D} \left\| O \right\|_F \left( \left\| O \right\|_F + \left\|  W_{D:d+1} H_d {W_{D:d+1}}^\dag O W_{D:d+1}  H_d {W_{D:d+1}}^\dag \right\|_F \right) \notag \\
={}& 2^{N-1} \sum_{d=1}^{D} \left\| O \right\|_F \left( \left\| O \right\|_F + \sqrt{ \Tr \left[ \left( W_{D:d+1} H_d {W_{D:d+1}}^\dag O W_{D:d+1}  H_d {W_{D:d+1}}^\dag \right)^2 \right] } \right) \tag{$\Tr[X^\dag X] = \|X\|_F^2$} \\
={}& 2^{N-1} \sum_{d=1}^{D} \left\| O \right\|_F \left( \left\| O \right\|_F + \sqrt{ \Tr \left[ O^2 \right] } \right) \tag{$\Tr[XY]=\Tr[YX]$} \\
={}& 2^N D \|O\|_F^2 . \label{vqacr_lemma_qntk_spectrum_qntk_haar_lmax_C}
\end{align}

Collaborating Eqs.~(\ref{vqacr_lemma_qntk_spectrum_qntk_haar_lmax_KCA}), (\ref{vqacr_lemma_qntk_spectrum_qntk_haar_lmax_A}), and (\ref{vqacr_lemma_qntk_spectrum_qntk_haar_lmax_C}), we obtain Eq.~(\ref{vqacr_lemma_qntk_spectrum_qntk_haar_eq}).

\end{proof}

The remaining part of this section focuses on the spectrum of the asymptotic QNTK with non-uniformly distributed quantum states, i.e. the QNTK with infinite number of gates and parameters, which is denoted by 
\begin{equation} \label{vqacr_qntk_infinite_eq}
K_{\infty} (\bm{\theta}) = \lim_{D \rightarrow + \infty} K(\bm{\theta}) = \lim_{D \rightarrow + \infty} \frac{1}{D} J(\bm{\theta})^{T} J(\bm{\theta}) .
\end{equation} 
Specifically, we consider the $N$-qubit ansätz $V(\bm{\theta})=\prod_{d=D}^{1} G_{d} (\theta_{d})$, where each gate $G_{d} (\theta_{d})=\exp[-iH_{d}\theta_{d}/2]$ is implemented on qubits $\mathcal{S}_{d} \subseteq \{1,2,\cdots,N\}$ with $|\mathcal{S}_{d}| = S$. Each Hamiltonian is sampled from the set $\H$. We assume general assignments of the qubits set $\Q:=\{\S_d\}$ if there has no additional instruction. One example of $\Q$ is shown in Fig.~\ref{vqacr_Q_Sd_he_circuit}, where each $\S_{d}$ is deployed in the hardware-efficient manner~\cite{PMID32874524} with $N$ different choices.

\subsection{Bounds of $K_{\infty}(\bm{0})$}

Let $A \in \R^{4^N \times |\A|}$ be the matrix that stores the Pauli decomposition coefficients of input states $\{\rho_a\}_{a\in\A}$, i.e. $A_{\bm{p}a} = \Tr[\sigma_{\bm{p}}\rho_a ]$. As shown in Lemmas~\ref{vqacr_lemma_qntk_lower_bound} and \ref{vqacr_lemma_qntk_lower_bound_xy}, $K_{\infty}(\bm{0})$ could be bounded by the linear sum of the gram of the sub-matrix of $A$, where we consider $\mathcal{H} =  \{X,Y,Z\}^{\otimes S}$ and $\mathcal{H} = \{X,Y\}^{\otimes S}$, respectively.

\begin{figure}
\centerline{
\Qcircuit @C=1.4em @R=.7em {
\lstick{} & \multigate{2}{\mathcal{S}_1} & \multigate{0}{\mathcal{S}_8} & \multigate{1}{\mathcal{S}_9} & \qw \\
\lstick{} & \ghost{\mathcal{S}_1} & \multigate{2}{\mathcal{S}_2} & \ghost{\mathcal{S}_9} & \qw \\
\lstick{} & \ghost{\mathcal{S}_1} & \ghost{\mathcal{S}_2} & \multigate{2}{\mathcal{S}_3} & \qw \\
\lstick{} & \multigate{2}{\mathcal{S}_4} & \ghost{\mathcal{S}_2} & \ghost{\mathcal{S}_3} & \qw \\
\lstick{} & \ghost{\mathcal{S}_4} & \multigate{2}{\mathcal{S}_5} & \ghost{\mathcal{S}_3} & \qw \\
\lstick{} & \ghost{\mathcal{S}_4} & \ghost{\mathcal{S}_5} & \multigate{2}{\mathcal{S}_6} & \qw \\
\lstick{} & \multigate{2}{\mathcal{S}_7} & \ghost{\mathcal{S}_5} & \ghost{\mathcal{S}_6} & \qw \\
\lstick{} & \ghost{\mathcal{S}_7} & \multigate{1}{\mathcal{S}_8} & \ghost{\mathcal{S}_6} & \qw \\
\lstick{} & \ghost{\mathcal{S}_7} & \ghost{\mathcal{S}_8} & \multigate{0}{\mathcal{S}_9} & \qw
}
}
\caption{An assignment of $\mathcal{Q}=\{\mathcal{S}_1,\cdots,\mathcal{S}_{N}\}$ in the hardware-efficient manner for $(N,S)=(9,3)$.}
\label{vqacr_Q_Sd_he_circuit}
\end{figure}

\begin{lemma}\label{vqacr_lemma_qntk_lower_bound}
Let the observable be $O=\sum_{k=1}^{N} o_k Z_k$, where each $Z_k$ is the product of Pauli matrices that is $Z$ on the $k$-th qubit and $I$ on other qubits. Denote by $H_d$ the Hamiltonian of the $d$-th quantum gate in the circuit $V(\bmt)=\prod_{d=D}^{1} \exp[-iH_d \theta_d/2]$ that acts on qubits $\S_d \subseteq \{1,2,\cdots,N\}$. We assume that $H_d$ is randomly sampled from the set $\H:=\{H , H \in \{X,Y,Z\}^{\otimes S}\}$. Let $\Q:=\{\S_d\}$. Let $A \in \R^{4^N \times |\A|}$ be the matrix with elements $A_{\bm{p}a} = \Tr[\sigma_{\bm{p}}\rho_a ]$. Then the asymptotic QNTK at $\bm{\theta}=\bm{0}$ could be bounded as 
\begin{equation} \label{vqacr_lemma_qntk_lower_bound_eq}
\frac{1}{3^{S} |\mathcal{Q}|} \sum_{w=0}^{S-1} \|O\|_2^2 A_{\mathcal{Q},w}^T A_{\mathcal{Q},w} \succcurlyeq{} K_{\infty} (\bm{0}) \succcurlyeq{}  \frac{1}{3^{S} |\mathcal{Q}|} \sum_{w=0}^{S-1} \Delta_{S-w} A_{\mathcal{Q},w}^T A_{\mathcal{Q},w} ,
\end{equation}
where we denote by $\succcurlyeq$ the semi-definite partial order and $\Delta_{j} := \min_{\bm{g}\in \{0,\pm 1\}^{\otimes N}, 1 \leq \|\bm{g}\|_1 \leq j} (\bm{g}^T \bm{o} )^2 $. The matrix $A_{\mathcal{Q},w}$ is formed by extracting rows from matrix $A$ corresponding to indices $\bm{p}$, if there exists some $\S_d \in \Q$ that matches the non-zero elements in $\bm{p}$ with $w$ numbers of element $3$.
\end{lemma}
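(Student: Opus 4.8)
The plan is to reduce the claim to the origin $\bm\theta=\bm 0$ and then to a single scalar inequality. At $\bm\theta=\bm 0$ one has $V(\bm 0)=I$, so Lemma~\ref{vqacr_qntk_jac_form} collapses to $J_{da}=-\tfrac{i}{2}\Tr\big[[O,H_d]\rho_a\big]$; since each gate draws its support from $\Q$ and its generator uniformly from the $3^{S}$ Pauli strings on that support, the $D\to\infty$ limit turns $\tfrac1D J^{\top}J$ into the population average
\begin{equation*}
K_\infty(\bm 0)_{ab}=\frac{1}{|\Q|}\sum_{\S\in\Q}\frac{1}{3^{S}}\sum_{H}J_{(\S,H),a}\,J_{(\S,H),b},\qquad J_{(\S,H),a}=-\tfrac{i}{2}\Tr\big[[O,H]\rho_a\big],
\end{equation*}
the inner sum over the $3^{S}$ Pauli strings $H$ of full support on $\S$. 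By the variational characterisation of $\succcurlyeq$ it suffices to show that for every $\bm v\in\R^{|\A|}$ the quantity $3^{S}|\Q|\,\bm v^{\top}K_\infty(\bm 0)\bm v$ lies between $\sum_{w=0}^{S-1}\Delta_{S-w}\|A_{\Q,w}\bm v\|_2^{2}$ and $\|O\|_2^{2}\sum_{w=0}^{S-1}\|A_{\Q,w}\bm v\|_2^{2}$. Writing $\rho_v:=\sum_{a}v_a\rho_a$ (Hermitian, generally not positive), we have $\sum_a v_a J_{(\S,H),a}=-\tfrac{i}{2}\Tr\big[[O,H]\rho_v\big]$ and $[A\bm v]_{\bm p}=\Tr[\sigma_{\bm p}\rho_v]$, so $\bm v^{\top}K_\infty(\bm 0)\bm v=\tfrac{1}{|\Q|3^{S}}\sum_{\S\in\Q}\Phi_{\S}$ with $\Phi_{\S}:=\sum_{H}\big(\sum_a v_a J_{(\S,H),a}\big)^{2}$, and it remains to bound each $\Phi_{\S}$.

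Fix $\S$ and split the sum defining $\Phi_{\S}$ according to the subset $T\subseteq\S$ of qubits on which $H$ acts as $Z$, writing $H=\sigma_{Z_T\otimes\phi(u)}$ with $u\in\{0,1\}^{\S\setminus T}$, where $\phi$ sends a bit to $X$ or $Y$. Only the terms $o_pZ_p$ with $p\in\S\setminus T$ fail to commute with such $H$, and the single-qubit identities $[Z,X]=2iY$, $[Z,Y]=-2iX$ give $[Z_p,H]=2i(-1)^{u_p}\sigma_{Z_T\otimes\phi(u\oplus e_p)}$; hence, writing $\gamma^{(T)}_{u}:=[A\bm v]_{Z_T\otimes\phi(u)}$, one finds $\sum_a v_a J_{(\S,H),a}=\sum_{p\in\S\setminus T}(-1)^{u_p}o_p\,\gamma^{(T)}_{u\oplus e_p}$. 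Therefore $\Phi_{\S}=\sum_{T\subseteq\S}\big(\gamma^{(T)}\big)^{\top}M_{\S,T}\,\gamma^{(T)}$ for a matrix $M_{\S,T}$ acting on the space indexed by $\{0,1\}^{\S\setminus T}$; the all-$Z$ term $T=\S$ contributes $0$, and the all-$Z$ string $Z_\S$ never occurs as an index because flipping $X\leftrightarrow Y$ keeps a coordinate non-$Z$ — which is exactly why only $w=|T|\le S-1$ appears, consistently with the definition of $A_{\Q,w}$.

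The \emph{key step} is to identify $M_{\S,T}$ in Pauli form. The linear map $\gamma\mapsto\big(\sum_{p\in\S\setminus T}(-1)^{u_p}o_p\gamma_{u\oplus e_p}\big)_u$ is precisely $L:=\sum_{p\in\S\setminus T}o_p\,Z_pX_p$ acting on $\gamma$, where $Z_p,X_p$ are Pauli matrices on the $|\S\setminus T|$ index qubits; since $Z_pX_p=iY_p$, the operator $L=i\sum_{p\in\S\setminus T}o_pY_p$ is real and antisymmetric, so $M_{\S,T}=L^{\top}L=-L^{2}=\big(\sum_{p\in\S\setminus T}o_pY_p\big)^{2}$. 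The $Y_p$ commute and each has spectrum $\{\pm1\}$, so the eigenvalues of $M_{\S,T}$ are exactly the numbers $(\sum_{p\in\S\setminus T}\epsilon_p o_p)^{2}$ with $\bm\epsilon\in\{\pm1\}^{\S\setminus T}$. Consequently $\lambda_{\max}(M_{\S,T})=\big(\sum_{p\in\S\setminus T}|o_p|\big)^{2}\le\big(\sum_{k=1}^{N}|o_k|\big)^{2}=\|O\|_2^{2}$, and $\lambda_{\min}(M_{\S,T})=\min_{\bm\epsilon}\big(\sum_{p\in\S\setminus T}\epsilon_p o_p\big)^{2}\ge\Delta_{|\S\setminus T|}$, because each sign pattern $\bm\epsilon$, extended by zeros, is a feasible vector in the minimum defining $\Delta_{|\S\setminus T|}$ (it has $\ell_1$-weight $|\S\setminus T|=S-|T|$).

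It remains only to reassemble. Grouping the subsets $T\subseteq\S$ by size $w=|T|$ and using $\sum_{|T|=w}\|\gamma^{(T)}\|_2^{2}=\sum_{\bm p:\,\mathrm{supp}(\bm p)=\S,\ \#Z(\bm p)=w}\big(\Tr[\sigma_{\bm p}\rho_v]\big)^{2}=\|A_{\S,w}\bm v\|_2^{2}$, the spectral bounds above yield $\sum_{w=0}^{S-1}\Delta_{S-w}\|A_{\S,w}\bm v\|_2^{2}\le\Phi_{\S}\le\|O\|_2^{2}\sum_{w=0}^{S-1}\|A_{\S,w}\bm v\|_2^{2}$; averaging over $\S\in\Q$ and using $\|A_{\Q,w}\bm v\|_2^{2}=\sum_{\S\in\Q}\|A_{\S,w}\bm v\|_2^{2}$ gives the desired scalar sandwich, which is Eq.~(\ref{vqacr_lemma_qntk_lower_bound_eq}). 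The part I expect to require the most care is the algebra of the third paragraph: propagating the signs $(-1)^{u_p}$ through the commutators and recognising that the induced operator on coefficient space is $i\sum_{p}o_pY_p$ — once that identification is made, the spectral estimates and the remaining bookkeeping are routine.
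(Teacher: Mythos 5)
Your proposal is correct and follows essentially the same route as the paper's proof: Jacobian at $\bm\theta=\bm 0$ via Lemma~\ref{vqacr_qntk_jac_form}, averaging over $\H=\{X,Y,Z\}^{\otimes S}$ grouped by the number $w$ of $Z$-factors, recognition that the resulting quadratic form on coefficient space is governed by $\big(\sum_{p}o_pY_p\big)^2$ (your operator $L$ is exactly the paper's $\sum_k o_k C_k$ with $C_k=I_2^{\otimes(k_d-1)}\otimes\bigl[\begin{smallmatrix}0&-1\\1&0\end{smallmatrix}\bigr]\otimes I_2^{\otimes(S-k_d)}$), and the spectral sandwich $\Delta_{S-w}\le(\sum_p\epsilon_po_p)^2\le\|O\|_2^2$. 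The only difference is presentational: you argue through the scalar quadratic form $\bm v^{\top}K_\infty\bm v$ with $\rho_v=\sum_a v_a\rho_a$, whereas the paper manipulates the PSD order on the matrices $A_{d,\bm z}^{\top}C_{k,\bm z}C_{k',\bm z}^{\top}A_{d,\bm z}$ directly — these are equivalent.
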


\begin{proof}

Denote by $\sigma_{\bm{i}}$ the Hamiltonian $H_d$. Denote by $K_{d} \in \R^{|\A| \times |\A|}$ the QNTK corresponding to the gate $G_{d}$ at $\bm{\theta}=\bm{0}$. By using Lemma~\ref{vqacr_qntk_jac_form}, we have 
\begin{equation}
[K_{d}]_{aa'} = \frac{\partial z_a}{\partial \theta_{d}} (\bm{0}) \frac{\partial z_{a'}}{\partial \theta_{d}} (\bm{0}) = - \frac{1}{4} \Tr \Big[ [O, \sigma_{\bm{i}}] \rho_a \Big] \Tr \Big[ [O, \sigma_{\bm{i}}] \rho_{a'} \Big] . \label{vqacr_lemma_qntk_lower_bound_Ka_eq}
\end{equation}
We remark that $\sigma_{\bm{i}}$ and each $Z_i$ in the linear sum of the observable are products of Pauli matrices, so $[Z_i, \sigma_{\bm{i}}]$ could be $0$ or $2Z_i \sigma_{\bm{i}}$. 
For convenience, we proceed to analyze the term $\Tr \left[ [O, \sigma_{\bm{i}}] \rho_a \right]$ first:
\begin{align}
\Tr \Big[ [O, \sigma_{\bm{i}}] \rho_a \Big] ={}& \sum_{k=1}^{N} o_k \Tr \Big[ [Z_k, \sigma_{\bm{i}}] \rho_a \Big] \tag{using $O=\sum_{k=1}^{N} o_k Z_k$} \\
={}& \sum_{k=1}^{N} o_k \Tr \left[ \frac{1}{2^N} \sum_{\bm{p} \in \{0,1,2,3\}^{\otimes N}} c_{[Z_k, \sigma_{\bm{i}}], \bm{p}} \sigma_{\bm{p}} \frac{1}{2^N} \sum_{\bm{p}' \in \{0,1,2,3\}^{\otimes N}} c_{\rho_a, \bm{p}'} \sigma_{\bm{p}'}  \right]  \label{vqacr_lemma_qntk_lower_bound_1_2} \\
={}& \sum_{k=1}^{N} o_k \frac{1}{2^N} \sum_{\bm{p} \in \{0,1,2,3\}^{\otimes N}} c_{[Z_k, \sigma_{\bm{i}}], \bm{p}} A_{\bm{p} a} \label{vqacr_lemma_qntk_lower_bound_1_21} \\
={}& \sum_{k=1}^{N} o_k \frac{1}{4^N} \sum_{\bm{p} \in \{0,1,2,3\}^{\otimes N}} \left[ B^{\otimes} \left( \bm{c}_{Z_k} \otimes \bm{c}_{\sigma_{\bm{i}}} \right) - B^{\otimes} \left( \bm{c}_{\sigma_{\bm{i}}} \otimes \bm{c}_{Z_k} \right) \right]_{\bm{p}}   A_{\bm{p} a} \label{vqacr_lemma_qntk_lower_bound_1_22} \\
={}& \sum_{k=1}^{N} o_k \sum_{\bm{p} \in \{0,1,2,3\}^{\otimes N}} \left[ 2i B_k \bm{e}_{\bm{i}'} \right]_{\bm{p}}   A_{\bm{p} a} \label{vqacr_lemma_qntk_lower_bound_1_23} \\
={}& \sum_{k=1}^{N} 2 o_k i {[A^T B_k]}_{a\bm{i}'}. \label{vqacr_lemma_qntk_lower_bound_1_3}
\end{align} 
Eq.~(\ref{vqacr_lemma_qntk_lower_bound_1_2}) is derived by using the Pauli decomposition with coefficients: $X=\frac{1}{2^N} \sum_{\bm{p}} c_{X,\bm{p}} \sigma_{\bm{p}}$. Eq.~(\ref{vqacr_lemma_qntk_lower_bound_1_21}) is derived by tracing out the index $\bm{p}'$ and using the notation $A_{\bm{p}a}$. Eq.~(\ref{vqacr_lemma_qntk_lower_bound_1_22}) follows from Lemma~\ref{vqacr_lemma_pauli_decom}, where $B^{\otimes}$ follows the notation in Lemma~\ref{vqacr_lemma_pauli_decom}. 
Eq.~(\ref{vqacr_lemma_qntk_lower_bound_1_23}) is derived by using the formulation of $B^{\otimes}$, where the notation $\bm{i}'$ is defined by zero-padding the vector $\bm{i} \in \{1,2,3\}^{\otimes S}$ in the space $\{0,1,2,3\}^{\otimes N}$. We denote $\bm{e}_{\bm{i}'}:=\otimes_{q=1}^{N} e_{i_q'}$, which is a $4^N$-dimensional vector, and
\begin{align} \label{vqacr_lemma_qntk_lower_bound_B_k}
B_k = I_4^{\otimes (k-1)} \otimes
\begin{bmatrix}
& & & 0 \\
& & -1 & \\
& 1 & & \\
0 & & & 
\end{bmatrix}
\otimes I_4^{\otimes (N-k)} .
\end{align}
Thus, the average of $K_{d}$ w.r.t. the Hamiltonian set $\H=\{X,Y,Z\}^{\otimes S}$ is 
\begin{align}
\mathop{\mathbb{E}}_{H_{d} \in \H} [K_{d}]_{aa'} ={}& \frac{1}{3^{S}} \sum_{H_{d} \in \{X,Y,Z\}^{\otimes S}} -\frac{1}{4}  \Tr \Big[ [O, H_{d}] \rho_a \Big] \Tr \Big[ [O, H_{d}] \rho_{a'} \Big]  \label{vqacr_lemma_qntk_lower_bound_2_1} \\
={}& \frac{1}{3^{S}} \sum_{\bm{i} \in \{1,2,3\}^{\otimes S}} \sum_{k,k'=1}^{N} o_{k} o_{k'} [A^T B_{k}]_{a\bm{i}'} [A^T B_{k'}]_{a'\bm{i}'} .\label{vqacr_lemma_qntk_lower_bound_2_2}
\end{align}
Eqs.~(\ref{vqacr_lemma_qntk_lower_bound_2_1}) and (\ref{vqacr_lemma_qntk_lower_bound_2_2}) are derived by using Eqs.~(\ref{vqacr_lemma_qntk_lower_bound_Ka_eq}) and (\ref{vqacr_lemma_qntk_lower_bound_1_3}), respectively.

Next, we proceed with the matrix formulation $K_{d}$. We have 
\begin{align}
\mathop{\mathbb{E}}_{H_{d} \in \H} K_{d} ={}&  \frac{1}{3^{S}} \sum_{w=0}^{S-1} \sum_{ \substack{ \bm{z}=(z_1,\cdots,z_{w}) \\ \{z_1,\cdots, z_{w}\} \subset \mathcal{S}_{d} } } \sum_{\bm{i}/\bm{i}_{\bm{z}} \in \{1,2\}^{\otimes (S-w)}} \sum_{k,k'=1}^{N} o_{k} o_{k'} \left\{ [A^T B_{k}]_{a\bm{i}'} [A^T B_{k'}]_{a'\bm{i}'} \right\}_{aa'} \label{vqacr_lemma_qntk_lower_bound_3_1} \\
={}& \frac{1}{3^{S}} \sum_{k,k' \in \mathcal{S}_{d}} o_{k} o_{k'} A_{d}^T C_{k} C_{k'}^{T} A_{d} + \frac{1}{3^{S}} \sum_{w=1}^{S-1} \sum_{ \substack{ \bm{z}=(z_1,\cdots,z_{w}) \\ \{z_1,\cdots, z_{w}\} \subset \mathcal{S}_{d} } } \sum_{k,k' \in \mathcal{S}_{d}/ \{z_1,\cdots, z_{w}\} } o_{k} o_{k'} A_{d,\bm{z}}^T C_{k,\bm{z}} C_{k',\bm{z}}^{T} A_{d,\bm{z}}. \label{vqacr_lemma_qntk_lower_bound_3_2}
\end{align}
We obtain Eq.~(\ref{vqacr_lemma_qntk_lower_bound_3_1}) by rearranging the summation with kernels corresponding to Hamiltonians with $w \in \{0,1,\cdots,S\}$ number of $Z$ components, where the case $w=S$ induces a zero term by noticing $[Z^{\otimes S},O]=0$. Eq.~(\ref{vqacr_lemma_qntk_lower_bound_3_2}) follows from using notations $A_{d}$, $A_{d,\bm{z}}$, $C_k$, and $C_{k,\bm{z}}$. The matrix $A_{d} \in \R^{2^{S} \times |\A|}$ is derived by extracting rows from matrix $A$ corresponding to indices from the set $\{\bm{i}| i_x \in \{1,2\} \ \forall x \in \mathcal{S}_{d}, i_x = 0 \ \forall x \notin \mathcal{S}_{d}\}$. The matrix $A_{d,\bm{z}} \in \R^{2^{S-w} \times |\A|}$ is derived by extracting rows from matrix $A$ corresponding to indices from the set 
\begin{equation} \label{vqacr_lemma_qntk_lower_bound_index_z_set}
\Big\{ \bm{i}=(i_1, \cdots, i_N) | i_x \in \{1,2\} \ \forall x \in \mathcal{S}_{d}/\{z_1,\cdots,z_{w}\}, i_x=3 \ \forall x \in \{z_1,\cdots,z_{w}\}, i_x = 0 \ \forall x \notin \mathcal{S}_{d} \Big\}.
\end{equation}
The matrix $C_k \in \R^{2^{S} \times 2^{S}}$ is derived by extracting columns and rows from $B_k$ corresponding to indices from the set mentioned above, i.e., 
\begin{align} \label{vqacr_lemma_qntk_lower_bound_C_k}
C_k = I_2^{\otimes (k_{d}-1)} \otimes
\begin{bmatrix}
& -1 \\
1 &
\end{bmatrix}
\otimes I_2^{\otimes (S-k_{d})},
\end{align}
where $k_{d}$ is the index of $k$ in the set $\mathcal{S}_{d}$. The matrix $C_{k,\bm{z}} \in \R^{2^{S-w} \times 2^{S-w}}$ is obtained from matrix $C_k$ by dropping tensor product components corresponding to indices in $\bm{z}$. 

We remark that both $C_k$ and $C_{k,\bm{z}}$ are of full rank with singular values $\pm 1$. Therefore the eigenvalue of the positive semi-definite matrix $\sum_{k,k' \in \mathcal{S}_{d}} o_{k} o_{k'} C_{k} C_{k'}^{T}$ can be bounded as follows
\begin{align} \label{vqacr_lemma_qntk_lower_bound_C_k_least_eiganvalue}
\|O\|_2^2 = \left( \sum_{k=1}^{N} |o_k| \right)^2 \geq{} \sum_{k,k' \in \S_d} |o_k| |o_{k'}| \geq{} \lambda \bigg[ \sum_{k,k' \in \mathcal{S}_{d}} o_{k} o_{k'} C_{k} C_{k'}^{T} \bigg] \geq  \min_{\bm{g}\in \{-1,1\}^{\otimes S}} (\bm{g}^T \bm{o}_{d})^2 \geq \Delta_{S},
\end{align}
where we denote by $\bm{o}_{d} \in \R^{S}$ the vector formed by extracting elements in $\bm{o}$ corresponding to indices in $\mathcal{S}_{d}$. 
Similarly, we have 
\begin{align} \label{vqacr_lemma_qntk_lower_bound_C_kz_least_eiganvalue}
\|O\|_2^2 \geq{} \lambda \bigg( \sum_{k,k' \in \mathcal{S}_{d}/\{z_1,\cdots,z_{w}\}} o_{k} o_{k'} C_{k,\bm{z}} C_{k',\bm{z}}^{T} \bigg) \geq  \min_{\bm{g}\in \{-1,1\}^{\otimes (S-w)}} (\bm{g}^T \bm{o}_{d,\bm{z}})^2 \geq \Delta_{S-w},
\end{align}
where we denote by $\bm{o}_{d,\bm{z}} \in \R^{S-w}$ the vector formed by extracting elements in $\bm{o}$ corresponding to indices in $\mathcal{S}_{d}/\{z_1,\cdots,z_{w}\}$.

Next, we proceed to take the average w.r.t. the qubit group $\mathcal{Q}$. Denote by $\succcurlyeq$ the semi-definite partial order, i.e., $X \succcurlyeq Y \iff \bm{a}^T (X-Y) \bm{a} \geq 0$, we have
\begin{align}
K_{\infty}(\bm{0}) = \mathop{\mathbb{E}}_{\mathcal{S}_{d} \in \mathcal{Q} } \mathop{\mathbb{E}}_{H_{d} \in \mathcal{H}} K_{d} \succcurlyeq{}& \frac{1}{3^{S} |\mathcal{Q}|} \sum_{\mathcal{S}_{d} \in \mathcal{Q}} \sum_{w=0}^{S-1} \sum_{ \substack{ \bm{z}=(z_1,\cdots,z_{w}) \\ \{z_1,\cdots, z_{w}\} \subset \mathcal{S}_{d} } } \Delta_{S-w} A_{d,\bm{z}}^T A_{d,\bm{z}},\label{vqacr_lemma_qntk_lower_bound_4_1} \\
={}& \frac{1}{3^{S} |\mathcal{Q}|} \sum_{w=0}^{S-1} \Delta_{S-w} A_{\mathcal{Q},w}^T A_{\mathcal{Q},w}.\label{vqacr_lemma_qntk_lower_bound_4_2}
\end{align}
Eq.~(\ref{vqacr_lemma_qntk_lower_bound_4_1}) is obtained from Eqs.~(\ref{vqacr_lemma_qntk_lower_bound_3_2}) and (\ref{vqacr_lemma_qntk_lower_bound_C_kz_least_eiganvalue}) by noticing that $XYX^T$ is positive semi-definite if $Y$ is positive semi-definite. Eq.~(\ref{vqacr_lemma_qntk_lower_bound_4_2}) is obtained by using the notation $A_{\mathcal{Q},w}$, which is the matrix with dimension $(2^{S-w} \binom{S}{w}|\mathcal{Q}|, |\A|)$ formed by vertically concatenating matrices $A_{d,\bm{z}}$ for all $\mathcal{S}_{d} \in \mathcal{Q}$ and $ {\dim}(\bm{z}) = w$. 
Similarly we have
\begin{align}
K_{\infty}(\bm{0}) = \mathop{\mathbb{E}}_{\mathcal{S}_{d} \in \mathcal{Q} } \mathop{\mathbb{E}}_{H_{d} \in \mathcal{H}} K_{d} \preccurlyeq{}& \frac{1}{3^{S} |\mathcal{Q}|} \sum_{\mathcal{S}_{d} \in \mathcal{Q}} \sum_{w=0}^{S-1} \sum_{ \substack{ \bm{z}=(z_1,\cdots,z_{w}) \\ \{z_1,\cdots, z_{w}\} \subset \mathcal{S}_{d} } } \|O\|_2^2 A_{d,\bm{z}}^T A_{d,\bm{z}}, \notag \\
={}& \frac{1}{3^{S} |\mathcal{Q}|} \sum_{w=0}^{S-1} \|O\|_2^2 A_{\mathcal{Q},w}^T A_{\mathcal{Q},w}.\notag 
\end{align}
Thus, we obtain Eq.~(\ref{vqacr_lemma_qntk_lower_bound_eq}).

\end{proof}

\begin{lemma}\label{vqacr_lemma_qntk_lower_bound_xy}
Let the observable be $O=\sum_{k=1}^{N} o_k Z_k$, where each $Z_k$ is the product of Pauli matrices that is $Z$ on the $k$-th qubit and $I$ on other qubits. Denote by $H_d$ the Hamiltonian of the $d$-th quantum gate in the circuit $V(\bmt)=\prod_{d=D}^{1} \exp[-iH_d \theta_d/2]$ that acts on qubits $\S_d \subseteq \{1,2,\cdots,N\}$. We assume that $H_d$ is randomly sampled from the set $\H:=\{X,Y\}^{\otimes S}$. Let $\Q:=\{\S_d\}$. Let $A \in \R^{4^N \times |\A|}$ be the matrix with elements $A_{\bm{p}a} = \Tr[\sigma_{\bm{p}}\rho_a ]$. Then the asymptotic QNTK at $\bm{\theta}=\bm{0}$ could be bounded as 
\begin{equation} \label{vqacr_lemma_qntk_lower_bound_xy_eq}
\frac{1}{2^{S} |\mathcal{Q}|} \|O\|_2^2 A_{\mathcal{Q}}^T A_{\mathcal{Q}} \succcurlyeq{}K_{\infty} (\bm{0}) \succcurlyeq{} \frac{1}{2^{S} |\mathcal{Q}|} \Delta_{S} A_{\mathcal{Q}}^T A_{\mathcal{Q}},
\end{equation}
where we denote by $\succcurlyeq$ the semi-definite partial order and $\Delta_{j} := \min_{\bm{g}\in \{0,\pm 1\}^{\otimes N}, 1 \leq \|\bm{g}\|_1 \leq j} (\bm{g}^T \bm{o} )^2 $. The matrix $A_{\mathcal{Q}}$ is formed by extracting rows from matrix $A$ corresponding to indices $\bm{p}$, if there exists some $\S_d \in \Q$ that matches the non-zero elements in $\bm{p}$ with elements in $\{1,2\}$.
\end{lemma}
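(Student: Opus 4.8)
The plan is to specialize the argument of Lemma~\ref{vqacr_lemma_qntk_lower_bound} to the Hamiltonian alphabet $\{X,Y\}$; the proof is in fact simpler, because forbidding $Z$ in the generators eliminates the $w$-summation entirely. First I would fix a gate $G_d$ acting on $\S_d$ with Hamiltonian $\sigma_{\bm{i}'}$, where $\bm{i}\in\{1,2\}^{\otimes S}$ and $\bm{i}'$ is its zero-padding to $\{0,1,2,3\}^{\otimes N}$. By Lemma~\ref{vqacr_qntk_jac_form} the single-gate QNTK at $\bmt=\bm{0}$ is $[K_d]_{aa'}=-\frac{1}{4}\Tr[[O,\sigma_{\bm{i}'}]\rho_a]\Tr[[O,\sigma_{\bm{i}'}]\rho_{a'}]$, and the Pauli-coefficient computation of Eqs.~(\ref{vqacr_lemma_qntk_lower_bound_1_2})--(\ref{vqacr_lemma_qntk_lower_bound_1_3}) carries over verbatim, giving $\Tr[[O,\sigma_{\bm{i}'}]\rho_a]=\sum_{k=1}^{N}2i\,o_k\,[A^T B_k]_{a\bm{i}'}$ with $B_k$ as in Eq.~(\ref{vqacr_lemma_qntk_lower_bound_B_k}). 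The one structural point to flag is that, since both $X$ and $Y$ anticommute with $Z$, we have $[Z_k,\sigma_{\bm{i}'}]=2Z_k\sigma_{\bm{i}'}\neq 0$ for every $k\in\S_d$ and $[Z_k,\sigma_{\bm{i}'}]=0$ for $k\notin\S_d$; in particular no $\bm{i}'$ carries a $Z$ on $\S_d$, so there is no extra block to peel off and only $k,k'\in\S_d$ contribute.

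Next I would average over $H_d\in\{X,Y\}^{\otimes S}$. Since each coordinate of $\bm{i}$ ranges only over $\{1,2\}$, this is a single sum over $\bm{i}\in\{1,2\}^{\otimes S}$, and assembling it into matrix form exactly as in Eq.~(\ref{vqacr_lemma_qntk_lower_bound_3_2}) (now keeping only the $w=0$ block) yields $\E_{H_d\in\H}K_d=\frac{1}{2^S}\sum_{k,k'\in\S_d}o_k o_{k'}\,A_d^T C_k C_{k'}^T A_d$, where $A_d\in\R^{2^S\times|\A|}$ extracts the rows of $A$ indexed by $\{\bm{i}:i_x\in\{1,2\}\ \forall x\in\S_d,\ i_x=0\ \forall x\notin\S_d\}$ and $C_k\in\R^{2^S\times 2^S}$ is the restriction of $B_k$ to that index set, of the form Eq.~(\ref{vqacr_lemma_qntk_lower_bound_C_k}). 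Because the single-slot block of $C_k$ is antisymmetric ($C_k^T=-C_k$, indeed $C_k=-i\bar Y_k$ for the fixed single-slot Pauli $\bar Y_k$ on position $k_d$), the operator $\sum_{k,k'\in\S_d}o_k o_{k'}C_k C_{k'}^T=\big(\sum_{k\in\S_d}o_k\bar Y_k\big)^2$ is positive semidefinite with spectrum contained in $\big[\min_{\bm{g}\in\{-1,1\}^{\otimes S}}(\bm{g}^T\bm{o}_d)^2,\ \big(\sum_{k\in\S_d}|o_k|\big)^2\big]\subseteq[\Delta_S,\|O\|_2^2]$, which is the analogue of Eq.~(\ref{vqacr_lemma_qntk_lower_bound_C_k_least_eiganvalue}).

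Finally, I would average over the qubit assignment $\S_d\in\Q$ and use that $XYX^T\succcurlyeq 0$ whenever $Y\succcurlyeq 0$; the two spectral bounds then give
\[
\frac{\Delta_S}{2^S|\Q|}\sum_{\S_d\in\Q}A_d^T A_d \;\preccurlyeq\; K_\infty(\bm{0})=\mathop{\E}_{\S_d\in\Q}\mathop{\E}_{H_d\in\H}K_d \;\preccurlyeq\; \frac{\|O\|_2^2}{2^S|\Q|}\sum_{\S_d\in\Q}A_d^T A_d ,
\]
and identifying $\sum_{\S_d\in\Q}A_d^T A_d=A_\Q^T A_\Q$ with $A_\Q$ the vertical concatenation of the matrices $A_d$ proves Eq.~(\ref{vqacr_lemma_qntk_lower_bound_xy_eq}). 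I do not expect a genuine obstacle here: the only delicate bookkeeping is the passage from the single-gate Pauli-coefficient identity to the $C_k$ matrices and the verification that $C_k^T=-C_k$, so that $\sum_{k,k'}o_k o_{k'}C_k C_{k'}^T$ is a perfect square and hence PSD; all else is a direct (and in fact lighter) specialization of the proof of Lemma~\ref{vqacr_lemma_qntk_lower_bound}.
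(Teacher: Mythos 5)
Your proposal is correct and follows essentially the same route as the paper's proof: specialize the Pauli-coefficient computation of Lemma~\ref{vqacr_lemma_qntk_lower_bound} to the $\{X,Y\}^{\otimes S}$ alphabet (which kills the $w$-summation), assemble $\E_{H_d}K_d=\frac{1}{2^S}\sum_{k,k'\in\S_d}o_k o_{k'}A_d^T C_k C_{k'}^T A_d$, bound the spectrum of the $C$-block between $\Delta_S$ and $\|O\|_2^2$, and average over $\Q$ using the semidefinite order. Your explicit observation that $C_k^T=-C_k$ makes $\sum_{k,k'}o_k o_{k'}C_k C_{k'}^T=\bigl(\sum_{k\in\S_d}o_k\bar Y_k\bigr)^2$ a perfect square of commuting terms is a slightly cleaner justification of the eigenvalue bound that the paper states more tersely in Eq.~(\ref{vqacr_lemma_qntk_lower_bound_xy_C_k_least_eiganvalue}), but it is the same argument in substance.
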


\begin{proof}

We follow notations and derivations in Eqs.~(\ref{vqacr_lemma_qntk_lower_bound_Ka_eq}-\ref{vqacr_lemma_qntk_lower_bound_B_k})  in Lemma~\ref{vqacr_lemma_qntk_lower_bound}.
Thus, the average of $K_{d}$ w.r.t. the Hamiltonian set $\mathcal{H}=\{X,Y\}^{\otimes S}$ is 
\begin{align}
\mathop{\mathbb{E}}_{H_{d} \in \mathcal{H}} [K_{d}]_{aa'} ={}& \frac{1}{2^{S}} \sum_{H_{d} \in \{X,Y\}^{\otimes S}} -\frac{1}{4} \Tr \Big[ [O, H_{d}] \rho_a \Big] \Tr \Big[ [O, H_{d}] \rho_{a'} \Big] \label{vqacr_lemma_qntk_lower_bound_xy_2_1} \\
={}& \frac{1}{2^{S}} \sum_{\bm{i} \in \{1,2\}^{\otimes S}} \sum_{k,k'=1}^{N} o_{k} o_{k'} [A^T B_{k}]_{a\bm{i}'} [A^T B_{k'}]_{a'\bm{i}'} .\label{vqacr_lemma_qntk_lower_bound_xy_2_2}
\end{align}
Eqs.~(\ref{vqacr_lemma_qntk_lower_bound_xy_2_1}) and (\ref{vqacr_lemma_qntk_lower_bound_xy_2_2}) are derived by using Eqs.~(\ref{vqacr_lemma_qntk_lower_bound_Ka_eq}) and (\ref{vqacr_lemma_qntk_lower_bound_1_3}), respectively. The notation $\bm{i}'$ is defined by zero-padding the vector $\bm{i} \in \{1,2\}^{\otimes S}$ in the space $\{0,1,2,3\}^{\otimes N}$.

Next, we proceed with the matrix formulation $K_{d}$. We have 
\begin{align}
\mathop{\mathbb{E}}_{H_{d} \in \mathcal{H}} K_{d} ={}&  \frac{1}{2^{S}} \sum_{\bm{i} \in \{1,2\}^{\otimes S}} \sum_{k,k'=1}^{N} o_{k} o_{k'} \left[ [A^T B_{k}]_{a\bm{i}'} [A^T B_{k'}]_{a'\bm{i}'} \right]_{aa'} \label{vqacr_lemma_qntk_lower_bound_xy_3_1} \\
={}& \frac{1}{2^{S}} \sum_{k,k' \in \mathcal{S}_{d}} o_{k} o_{k'} A_{d}^T C_{k} C_{k'}^{T} A_{d}. \label{vqacr_lemma_qntk_lower_bound_xy_3_2}
\end{align}
The derivation  of Eq.~(\ref{vqacr_lemma_qntk_lower_bound_xy_3_1}) is similar to that of Eq.~(\ref{vqacr_lemma_qntk_lower_bound_3_1}) with $w=0$, since the notation $\bm{i}'$ in this Lemma has no $z$ terms.
Eq.~(\ref{vqacr_lemma_qntk_lower_bound_xy_3_2}) follows from using notations $A_{d}$ and $C_k$. The matrix $A_{d} \in \R^{2^{S} \times |\A|}$ is derived by extracting rows from matrix $A$ corresponding to indices from the set $\{\bm{i}| i_x \in \{1,2\} \ \forall x \in \mathcal{S}_{d}, i_x = 0 \ \forall x \notin \mathcal{S}_{d}\}$. The matrix $C_k \in \R^{2^{S} \times 2^{S}}$ is derived by extracting columns and rows from $B_k$ corresponding to indices from the set mentioned above, i.e., 
\begin{align} \label{vqacr_lemma_qntk_lower_bound_xy_C_k}
C_k = I_2^{\otimes (k_{d}-1)} \otimes
\begin{bmatrix}
& -1 \\
1 &
\end{bmatrix}
\otimes I_2^{\otimes (S-k_{d})},
\end{align}
where $k_{d}$ is the index of $k$ in the set $\mathcal{S}_{d}$. 

We remark that $C_k$ is of full rank with singular values $\pm 1$. The eigenvalue of the positive semi-definite matrix $\sum_{k,k' \in \mathcal{S}_{d}} c_{k} c_{k'} C_{k} C_{k'}^{T}$ can be bounded as follows
\begin{align} \label{vqacr_lemma_qntk_lower_bound_xy_C_k_least_eiganvalue}
\|O\|_2^2 = \left( \sum_{k=1}^{N} |o_k| \right)^2 \geq{} \lambda \bigg[ \sum_{k,k' \in \mathcal{S}_{d}} o_{k} o_{k'} C_{k} C_{k'}^{T} \bigg] \geq  \min_{\bm{g}\in \{-1,1\}^{\otimes S}} (\bm{g}^T \bm{o}_{d})^2 \geq \Delta_{S},
\end{align}
where we denote by $\bm{o}_{d} \in \R^{S}$ the vector formed by extracting elements in $\bm{c}$ corresponding to indices in $\mathcal{S}_{d}$. 

Next, we proceed to take the average w.r.t. the qubit group $\mathcal{Q}$. Denote by $\succcurlyeq$ the semi-definite partial order, i.e., $X \succcurlyeq Y \iff \bm{a}^T (X-Y) \bm{a} \geq 0$, we have
\begin{align}
K_{\infty}(\bm{0}) = \mathop{\mathbb{E}}_{\mathcal{S}_{d} \in \mathcal{Q} } \mathop{\mathbb{E}}_{H_{d} \in \mathcal{H}} K_{d} \succcurlyeq{}& \frac{1}{2^{S} |\mathcal{Q}|} \sum_{\mathcal{S}_{d} \in \mathcal{Q}} \Delta_{S} A_{d}^T A_{d},\label{vqacr_lemma_qntk_lower_bound_xy_4_1} \\
={}& \frac{1}{2^{S} |\mathcal{Q}|} \Delta_{S} A_{\mathcal{Q}}^T A_{\mathcal{Q}}.\label{vqacr_lemma_qntk_lower_bound_xy_4_2}
\end{align}
Eq.~(\ref{vqacr_lemma_qntk_lower_bound_xy_4_1}) is obtained from Eq.~(\ref{vqacr_lemma_qntk_lower_bound_xy_3_2}) by noticing that $XYX^T$ is positive semi-definite if $Y$ is positive semi-definite. Eq.~(\ref{vqacr_lemma_qntk_lower_bound_xy_4_2}) is obtained by using the notation $A_{\mathcal{Q}}$, which is the matrix with dimension $(2^{S} |\mathcal{Q}|, |\A|)$ formed by vertically concatenating matrices $A_{d}$ for all $\mathcal{S}_{d} \in \mathcal{Q}$. 
Similarly we have
\begin{align}
K_{\infty}(\bm{0}) = \mathop{\mathbb{E}}_{\mathcal{S}_{d} \in \mathcal{Q} } \mathop{\mathbb{E}}_{H_{d} \in \mathcal{H}} K_{d} \preccurlyeq{}& \frac{1}{2^{S} |\mathcal{Q}|} \sum_{\mathcal{S}_{d} \in \mathcal{Q}} \|O\|_2^2 A_{d}^T A_{d}, \notag \\
={}& \frac{1}{2^{S} |\mathcal{Q}|} \|O\|_2^2 A_{\mathcal{Q}}^T A_{\mathcal{Q}}. \notag
\end{align}
Thus, we obtain Eq.~(\ref{vqacr_lemma_qntk_lower_bound_xy_eq}).

\end{proof}

\subsection{Bounds of $\lmin \left[ K_{\infty}(\bm{0}) \right]$ and $\lmax \left[ K_{\infty}(\bm{0}) \right]$}

Different randomness within $\{\rho_a\}_{a \in \A}$ or elements of $A$ and different choice of $\H$ can lead to separate lower bound guarantees on the asymptotic QNTK. 
Here we provide two examples. The former is the random state distribution described in the main text with a bias on local terms, where the Hamiltonian set is $\H=\{X,Y,Z\}^{S}$. The latter is a state distribution based on the qubit embedding model and the Hamiltonian set is $\H=\{X,Y\}^{S}$. Roughly speaking, we found that for both cases, the least eigenvalue of the asymptotic QNTK deviates from zero with high probability when $S\geq \O(\log|\A|)$. Detailed results are presented in Lemmas~\ref{vqacr_lemma_qntk_least_eigen_gaussian_matrix} and \ref{vqacr_lemma_qntk_least_eigen_random_sphere}, respectively. Both results are derived based on the random matrix theory.

\begin{lemma}
\label{vqacr_lemma_qntk_least_eigen_gaussian_matrix}
We follow conditions and notations in Lemma~\ref{vqacr_lemma_qntk_lower_bound}. Let $\rho_{a}=\sum_{\bm{p} \in \{0,1,2,3\}^{\otimes N}} \frac{1}{2^N} A_{\bm{p}a} \sigma_{\bm{p}}$ be the Pauli decomposition of state $\rho_a$, where $A$ follows Proposition~\ref{vqacr_prop_state_distribution_local}. Denote by $A_{\Q}$ the matrix with dimension $((3^S-1)|\Q|, |\A|)$ formed by vertically concatenating matrices $A_{\Q,w}$ for $w \in \{0,1,\cdots,S-1\}$. Denote by $\alpha_{\min}:=\min_{\bm{i}} \alpha_{\bm{i}}$, $\alpha_{\max}:=\max_{\bm{i}} \alpha_{\bm{i}}$, and ${\rm Var}_{\Q} := {\rm diag}(\alpha_{\bm{i}})$, where $\bm{i}$ corresponds to the index of rows of $A_{\Q}$. Denote by $A'_{\Q} := {\rm Var}_{\Q}^{-1/2} A_{\Q}$, $ A'_{\Q,\min}:=\frac{1 }{\sqrt{(3^S-1)|\Q|}} \min_{a\in\A} \left\| [A'_{\Q}]_{\cdot,a} \right\|_2 $, and $ A'_{\Q,\max}:=\frac{1 }{\sqrt{(3^S-1)|\Q|}} \max_{a\in\A} \left\| [A'_{\Q}]_{\cdot,a} \right\|_2 $.
Then there exists two constants $C>0$ and $k<1$ such that for $S \geq{} \log_3 \left[ 1 + \frac{C^2  |\A|^2 \ln |A| }{ k^2 A_{\Q, \min}'^4 |\Q| \delta^2 } \right]$, we have
\begin{align}
(1-k) \frac{3^S-1}{3^S} \Delta_S \alpha_{\min} A_{\Q, \min}'^2 \leq{}& \lmin \left[ K_{\infty} (\bm{0}) \right] \leq{}   \lmax \left[ K_{\infty} (\bm{0}) \right] \leq{} (1+k) \frac{3^S-1}{3^S} \|O\|_2^2 \alpha_{\max} A_{\Q, \max}'^2
\label{vqacr_lemma_qntk_least_eigen_gaussian_matrix_eq_1}
\end{align}
with probability at least $1-\delta$.
\end{lemma}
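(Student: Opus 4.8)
\textbf{Step 1: reduction to a Gram matrix via Lemma~\ref{vqacr_lemma_qntk_lower_bound}.}
The plan is to reduce the two-sided eigenvalue estimate for $K_{\infty}(\bm{0})$ to a spectral estimate for $A_{\Q}^{T}A_{\Q}$. Since $A_{\Q}$ is the vertical concatenation of the blocks $A_{\Q,w}$, one has $A_{\Q}^{T}A_{\Q}=\sum_{w=0}^{S-1}A_{\Q,w}^{T}A_{\Q,w}$, and $\Delta_{S-w}\ge\Delta_{S}$ because $\Delta_j$ is non-increasing in $j$. Feeding both facts into Lemma~\ref{vqacr_lemma_qntk_lower_bound} gives
\[
\frac{\|O\|_2^{2}}{3^{S}|\Q|}\,A_{\Q}^{T}A_{\Q}\ \succcurlyeq\ K_{\infty}(\bm{0})\ \succcurlyeq\ \frac{\Delta_{S}}{3^{S}|\Q|}\,A_{\Q}^{T}A_{\Q}.
\]
Writing $A_{\Q}={\rm Var}_{\Q}^{1/2}A'_{\Q}$ and using $\alpha_{\min}I\preccurlyeq{\rm Var}_{\Q}\preccurlyeq\alpha_{\max}I$ yields $\alpha_{\min}G\preccurlyeq A_{\Q}^{T}A_{\Q}\preccurlyeq\alpha_{\max}G$ with $G:=(A'_{\Q})^{T}A'_{\Q}$. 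Taking smallest and largest eigenvalues through these semidefinite inequalities, it is enough to prove, with probability at least $1-\delta$, that $\lmin[G]\ge(1-k)(3^{S}-1)|\Q|\,A_{\Q,\min}'^{2}$ and $\lmax[G]\le(1+k)(3^{S}-1)|\Q|\,A_{\Q,\max}'^{2}$, since multiplying these through by $\Delta_S\alpha_{\min}/(3^S|\Q|)$ and $\|O\|_2^2\alpha_{\max}/(3^S|\Q|)$ respectively reproduces~(\ref{vqacr_lemma_qntk_least_eigen_gaussian_matrix_eq_1}).

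\textbf{Step 2: isolate the diagonal of $G$ and bound the off-diagonal.}
Set $m=(3^{S}-1)|\Q|$, $D_0={\rm diag}(G)$ and $E=G-D_0$. By the definitions of $A_{\Q,\min}'$ and $A_{\Q,\max}'$, every diagonal entry $G_{aa}=\|[A'_{\Q}]_{\cdot,a}\|_2^2$ lies in $[\,m A_{\Q,\min}'^{2},\,m A_{\Q,\max}'^{2}\,]$, so $\lmin[G]\ge m A_{\Q,\min}'^{2}-\|E\|$ and $\lmax[G]\le m A_{\Q,\max}'^{2}+\|E\|$; it remains to force $\|E\|\le k\,m A_{\Q,\min}'^{2}$ with the stated probability. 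I would use $\|E\|\le\|E\|_{F}$ and compute the expectation of $\|E\|_F^2=\sum_{a\ne b}(\sum_{\bm{i}}A'_{\bm{i} a}A'_{\bm{i} b})^{2}$: Proposition~\ref{vqacr_prop_state_distribution_local}, which after the rescaling by ${\rm Var}_{\Q}^{-1/2}$ gives $\E A'_{\bm{i} a}=0$ and $\E A'_{\bm{i} a}A'_{\bm{j} b}=\delta_{\bm{i}\bm{j}}\delta_{ab}$, together with independence of distinct samples $a\ne b$, makes each summand equal to $m$, so $\E\|E\|_F^{2}\le|\A|^{2}m$ and Markov gives $\|E\|_{F}\le|\A|\sqrt{m/\delta}$ with probability $\ge1-\delta$. (To recover literally the $\sqrt{\ln|\A|}$ appearing in the hypothesis one would instead control $\max_{a}\sum_{b\ne a}(\cdots)^{2}$ and invoke Lemma~\ref{vqacr_lemma_qntk_spectral_norm_of_independent_columns} on the column-normalized matrix; since the condition on $S$ is only sufficient, either estimate works.)

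\textbf{Step 3: discharge the hypothesis on $S$.}
The inequality $|\A|\sqrt{m/\delta}\le k\,m A_{\Q,\min}'^{2}$ is equivalent to $m\ge|\A|^{2}/(k^{2}\delta A_{\Q,\min}'^{4})$, that is to $3^{S}-1\ge|\A|^{2}/(k^{2}\delta|\Q|A_{\Q,\min}'^{4})$, which is implied by the assumed $S\ge\log_{3}[1+C^{2}|\A|^{2}\ln|\A|/(k^{2}A_{\Q,\min}'^{4}|\Q|\delta^{2})]$ for any absolute constant $C\ge1$ (using $\delta\le1$ and $\ln|\A|\ge1$); the same inequality forces $m\ge|\A|$, so $A'_{\Q}$ sits in the tall regime. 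Chaining Steps 1--3 yields~(\ref{vqacr_lemma_qntk_least_eigen_gaussian_matrix_eq_1}).

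\textbf{Main obstacle.}
I expect the delicate point to be that $A'_{\Q}$ has \emph{random}, not deterministic, column norms, so the off-the-shelf isotropic-column estimates (Lemmas~\ref{vqacr_lemma_qntk_operator_norm_of_covariance_matrix}--\ref{vqacr_lemma_qntk_spectral_norm_of_independent_columns}) cannot be applied verbatim; isolating the diagonal and absorbing the column-norm fluctuation into $A_{\Q,\min}'$ and $A_{\Q,\max}'$ is precisely what makes the argument close. A secondary bookkeeping issue is that when gate supports in $\Q$ overlap, rows of $A_{\Q}$ may repeat; this only strengthens the positive-semidefinite lower bound and, in the off-diagonal estimate, affects the constant $C$ but not the form of the condition.
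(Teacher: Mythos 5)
Your proposal is correct, and while Step 1 (reducing to $A_{\Q}^T A_{\Q}$ via Lemma~\ref{vqacr_lemma_qntk_lower_bound}, the monotonicity $\Delta_{S-w}\geq\Delta_S$, and pulling out ${\rm Var}_{\Q}$) coincides with the paper's reduction, your treatment of the Gram matrix is genuinely different. The paper column-normalizes $A'_{\Q}$ to form $R:=\sqrt{(3^S-1)|\Q|}\,A'_{\Q}D_{\Q}^{-1}$, asserts that $R$ has independent \emph{isotropic} columns of fixed norm, and invokes Lemma~\ref{vqacr_lemma_qntk_spectral_norm_of_independent_columns} plus Markov to concentrate all singular values of $R$ around $\sqrt{(3^S-1)|\Q|}$; the column-norm fluctuation is then reabsorbed through $A'_{\Q,\min}$ and $A'_{\Q,\max}$ exactly as in your Step 1. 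You instead split $G=(A'_{\Q})^TA'_{\Q}$ into its diagonal (whose entries are pinned to $[mA_{\Q,\min}'^2, mA_{\Q,\max}'^2]$ by definition) plus an off-diagonal part $E$, bound $\|E\|_2\le\|E\|_F$ by a second-moment computation using only Proposition~\ref{vqacr_prop_state_distribution_local} and independence across samples, and apply Markov. Your route is more elementary and, notably, sidesteps the issue you flag yourself: the paper's claim that $\E R_{\bm p a}R_{\bm q b}=\delta_{\bm p\bm q}\delta_{ab}$ is not an immediate consequence of Proposition~\ref{vqacr_prop_state_distribution_local}, since each column of $R$ is divided by its own \emph{random} norm, so isotropy of $A'_{\Q}$ does not transfer verbatim to $R$; your diagonal/off-diagonal split never needs normalized columns. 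The price is a slightly weaker probability dependence ($1/\delta$ from a single Markov on $\|E\|_F^2$ rather than the $\sqrt{\ln|\A|}/\delta$ profile that Lemma~\ref{vqacr_lemma_qntk_spectral_norm_of_independent_columns} delivers), but as you observe the stated hypothesis on $S$, carrying $\ln|\A|/\delta^2$, covers your requirement with room to spare. The repeated-rows caveat when supports in $\Q$ overlap (which breaks $\E A'_{\bm i a}A'_{\bm j a}=\delta_{\bm i\bm j}$ for distinct row indices carrying the same Pauli string) is present in the paper's argument as well and, as you say, only affects constants; it is to your credit that you name it explicitly.
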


\begin{proof}
Denote by $D_{\Q}$ a $|\A|\times |\A|$ diagonal matrix with elements $[D_{\Q}]_{aa}:=\|[A'_{\Q}]_{\cdot,a}\|_2$, i.e., $D_{\Q}$ stores the norm of columns of $A'_{\Q}$. Let 
\begin{equation}\label{vqacr_lemma_qntk_least_eigen_gaussian_matrix_R}
R := \sqrt{(3^S-1)|\Q|} {\rm Var}_{\Q}^{-1/2} A_{\Q} D_{\Q}^{-1} .
\end{equation}
We denote that $R$ matches the random matrix condition in Lemma~\ref{vqacr_lemma_qntk_spectral_norm_of_independent_columns}. Specifically, columns of $R$ are independent isotropic random vectors:
\begin{align}
\mathop{\E}_{\bm{p}, \bm{q}, a, b} R_{\bm{p}a} R_{\bm{q}b} ={}& (3^S-1)|\Q| \mathop{\E}_{\bm{p}, \bm{q}, a, b} \frac{[A'_\Q]_{\bm{p},a}[A'_\Q]_{\bm{q},b}}{\left\|[A'_\Q]_{\cdot,a}\right\|_2 \left\|[A'_\Q]_{\cdot,b}\right\|_2} = \delta_{\bm{p}\bm{q}} \delta_{ab} . \notag
\end{align}
All columns of $R$ have the same norm:
\begin{align}
\left\|R_{\cdot,a}\right\|_2 = \sqrt{(3^S-1)|\Q|} \frac{\left\|[A'_\Q]_{\cdot,a}\right\|_2}{\left\|[A'_\Q]_{\cdot,a}\right\|_2} = \sqrt{(3^S-1)|\Q|} .
\end{align}
We have
\begin{align}
h_{\Q,\A} := {}& \frac{1}{(3^S-1)|\Q|} \mathop{\E} \max_{a \in \A} \sum_{b\in\A, b \neq a} \left( {R_{\cdot,a}}^T R_{\cdot,b} \right)^2 \notag \\
={}& {(3^S-1)|\Q|} \mathop{\E} \max_{a \in \A} \sum_{b\in\A, b \neq a} \left( \frac{ {[A'_\Q]_{\cdot,a}}^T {[A'_\Q]_{\cdot,b}} }{\left\|[A'_\Q]_{\cdot,a}\right\|_2 \left\|[A'_\Q]_{\cdot,b}\right\|_2 }  \right)^2 \tag{using Eq.~(\ref{vqacr_lemma_qntk_least_eigen_gaussian_matrix_R})} \\
\leq{}& \frac{1}{(3^S-1)|\Q|} \mathop{\E} \sum_{a,b\in\A, b \neq a} \frac{1}{ A_{\Q,\min}'^4 } \left( \sum_{\bm{p}} [A'_\Q]_{\bm{p},a} [A'_\Q]_{\bm{p},b}   \right)^2 \tag{using the notation of $D_{\Q,\min}$} \\
={}& \frac{|\A| (|\A|-1)}{ A_{\Q, \min}'^4}  , \label{vqacr_lemma_qntk_least_eigen_gaussian_matrix_3_4}
\end{align}
where Eq.~(\ref{vqacr_lemma_qntk_least_eigen_gaussian_matrix_3_4}) is obtained by using the distribution condition of $A$ in Proposition~\ref{vqacr_prop_state_distribution_local}. 
Then by using Lemma~\ref{vqacr_lemma_qntk_spectral_norm_of_independent_columns}, there exists a constant $C$ such that 
\begin{align}
\E \left| \frac{1}{(3^S-1)|\Q|} \left( s \left[R \right] \right)^2 - 1 \right|	\leq C \sqrt{\frac{ h_{\Q,\A} \ln |\A|}{(3^S-1)|\Q|} } , \label{vqacr_lemma_qntk_least_eigen_gaussian_matrix_3_6}
\end{align}
where $s[X]$ denotes the singular value of a matrix $X$.
Thus, with probability at least $1-\delta$, we have 
\begin{align}
\lmin \left[ A_{\Q}^T A_{\Q} \right] ={}& \left( s_{\min} \left[ A_{\Q} \right] \right)^2 \notag \\
={}& \frac{1}{(3^S-1)|\Q|} \left( s_{\min} \left[ {\rm Var}_{\Q}^{1/2} R D_{\Q} \right] \right)^2 \tag{using Eq.~(\ref{vqacr_lemma_qntk_least_eigen_gaussian_matrix_R})} \\
\geq{}& \alpha_{\min} A_{\Q, \min}'^2 \left( s_{\min} \left[ R \right] \right)^2 \notag \\
\geq{}& {(3^S-1)|\Q|} \alpha_{\min} A_{\Q, \min}'^2 \left(  1 - \frac{C}{\delta} \sqrt{ \frac{h_{\Q,\A} \ln |\A|}{(3^S-1)|\Q|} } \right) \tag{using Eq.~(\ref{vqacr_lemma_qntk_least_eigen_gaussian_matrix_3_6})} \\
\geq{}& {(3^S-1)|\Q|} \alpha_{\min} A_{\Q, \min}'^2 \left(  1 - \frac{C |\A|}{\delta A_{\Q, \min}'^2} \sqrt{ \frac{\ln |\A|}{(3^S-1)|\Q|} } \right) \tag{using Eq.~(\ref{vqacr_lemma_qntk_least_eigen_gaussian_matrix_3_4})} \\
\geq{}& (1-k) {(3^S-1)|\Q|} \alpha_{\min} A_{\Q, \min}'^2 \label{vqacr_lemma_qntk_least_eigen_gaussian_matrix_4_3}
\end{align}
and
\begin{align}
\lmax \left[ A_{\Q}^T A_{\Q} \right] ={}& \left( s_{\max} \left[ A_{\Q} \right] \right)^2 \notag \\
={}& \frac{1}{(3^S-1)|\Q|} \left( s_{\max} \left[ {\rm Var}_{\Q}^{1/2} R D_{\Q} \right] \right)^2 \tag{using Eq.~(\ref{vqacr_lemma_qntk_least_eigen_gaussian_matrix_R})} \\
\leq{}& \alpha_{\max} A_{\Q, \max}'^2 \left( s_{\max} \left[ R \right] \right)^2 \notag \\
\leq{}& {(3^S-1)|\Q|} \alpha_{\max} A_{\Q, \max}'^2 \left(  1 + \frac{C}{\delta} \sqrt{ \frac{h_{\Q,\A} \ln |\A|}{(3^S-1)|\Q|} } \right) \tag{using Eq.~(\ref{vqacr_lemma_qntk_least_eigen_gaussian_matrix_3_6})} \\
\leq{}& {(3^S-1)|\Q|} \alpha_{\max} A_{\Q, \max}'^2 \left(  1 + \frac{C |\A|}{\delta A_{\Q, \min}'^2} \sqrt{ \frac{\ln |\A|}{(3^S-1)|\Q|} } \right) \tag{using Eq.~(\ref{vqacr_lemma_qntk_least_eigen_gaussian_matrix_3_4})} \\
\leq{}& (1+k) {(3^S-1)|\Q|} \alpha_{\max} A_{\Q, \max}'^2 .\label{vqacr_lemma_qntk_least_eigen_gaussian_matrix_5_3}
\end{align}
Eqs.~(\ref{vqacr_lemma_qntk_least_eigen_gaussian_matrix_4_3}) and (\ref{vqacr_lemma_qntk_least_eigen_gaussian_matrix_5_3}) follows from the condition on $S$:
\begin{equation}
S \geq{} \log_3 \left[ 1 + \frac{C^2 |\A|^2 \ln |A| }{ k^2 A_{\Q, \min}'^4 |\Q| \delta^2 } \right] . \notag
\end{equation}
By using Lemma~\ref{vqacr_lemma_qntk_lower_bound}, we could finally provide the lower bound and the upper bound on the least and the largest eigenvalue of the asymptotic QNTK, respectively. With probability at least $1-\delta$, we have
\begin{align}
\lmin \left[ K_{\infty} (\bm{0}) \right] \geq{}& \frac{1}{3^{S} |\mathcal{Q}|} \Delta_{S} \lmin \left[ A_{\mathcal{Q}}^T A_{\mathcal{Q}} \right] \geq{} (1-k) \frac{3^S-1}{ 3^S} \Delta_S \alpha_{\min} A_{\Q, \min}'^2 , \notag \\
\lmax \left[ K_{\infty} (\bm{0}) \right] \leq{}& \frac{1}{3^{S} |\mathcal{Q}|} \|O\|_2^2 \lmax \left[ A_{\mathcal{Q}}^T A_{\mathcal{Q}} \right] \leq{} (1+k) \frac{3^S-1}{3^S} \|O\|_2^2 \alpha_{\max} A_{\Q, \max}'^2 . \notag
\end{align}

\end{proof}

\begin{lemma}\label{vqacr_lemma_qntk_least_eigen_random_sphere}
We follow conditions and notations in Lemma~\ref{vqacr_lemma_qntk_lower_bound_xy}. Let $\rho_a$ be the density matrix of the qubit embedding of the vector $\bm{x}_{a}$ via Eq.~(\ref{vqacr_qubit_embedding_eq}), where each entry of $\bm{x}_{a}$ is an independent random variable distributed uniformly in $[-1,1]$. Then there exists two constants $C>0$ and $k<1$ such that for $S \geq 2+ \log_2 \frac{C^2 |\A|^2 \ln |\A|}{k^2|\Q|\delta^2}$, 
\begin{equation}\label{vqacr_lemma_qntk_least_eigen_random_sphere_eq}
(1-k) \frac{\Delta_{S} }{2^{S}} \leq{} \lmin \left[ K_{\infty} (\bm{0}) \right] \leq{} \lmax \left[ K_{\infty} (\bm{0}) \right] \leq{} (1+k)\frac{ \|O\|_2^2 }{2^{S}}
 \end{equation}
with probability at least $1-\delta$.
\end{lemma}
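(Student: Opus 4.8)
The plan is to reduce the claim to a random-matrix concentration estimate on the coefficient matrix $A_{\mathcal{Q}}$, using the semidefinite sandwich already proved in Lemma~\ref{vqacr_lemma_qntk_lower_bound_xy},
\[
\frac{\|O\|_2^2}{2^{S}|\mathcal{Q}|}A_{\mathcal{Q}}^T A_{\mathcal{Q}}\ \succcurlyeq\ K_{\infty}(\bm{0})\ \succcurlyeq\ \frac{\Delta_{S}}{2^{S}|\mathcal{Q}|}A_{\mathcal{Q}}^T A_{\mathcal{Q}}.
\]
Hence it suffices to show that, under the stated lower bound on $S$, the Gram matrix $\frac{1}{|\mathcal{Q}|}A_{\mathcal{Q}}^T A_{\mathcal{Q}}$ satisfies $(1-k)I\preccurlyeq\frac{1}{|\mathcal{Q}|}A_{\mathcal{Q}}^T A_{\mathcal{Q}}\preccurlyeq(1+k)I$ with probability at least $1-\delta$; the two-sided eigenvalue bound on $K_{\infty}(\bm{0})$ then follows immediately.

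First I would compute the Pauli coefficients of the qubit-embedding state. Using $|+\rangle\langle+|=\tfrac12(I+X)$ and $e^{-iZ\pi x/2}Xe^{iZ\pi x/2}=X\cos(\pi x)+Y\sin(\pi x)$, the single-qubit factor on qubit $n$ equals $\tfrac12\big(I+\cos(\pi x_{a,n})X+\sin(\pi x_{a,n})Y\big)$, so $\rho_a=\bigotimes_n\tfrac12(I+\cos(\pi x_{a,n})X+\sin(\pi x_{a,n})Y)$. Consequently, for a row index $\bm{p}$ in the block of some $\mathcal{S}_d\in\mathcal{Q}$ (i.e.\ $p_n\in\{1,2\}$ for $n\in\mathcal{S}_d$ and $p_n=0$ otherwise), $A_{\bm{p}a}=\prod_{n\in\mathcal{S}_d}f_{p_n}(x_{a,n})$ with $f_1=\cos(\pi\cdot)$, $f_2=\sin(\pi\cdot)$. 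Two elementary facts drop out: (i) each block of a column has squared norm $\prod_{n\in\mathcal{S}_d}(\cos^2+\sin^2)=1$, so every column of $A_{\mathcal{Q}}$ has norm exactly $\sqrt{|\mathcal{Q}|}$; and (ii) since $x$ is uniform on $[-1,1]$ we have $\E f_1=\E f_2=0$ and $\E f_if_j=\tfrac12\delta_{ij}$, so $\E[A_{\bm{p}a}A_{\bm{q}a}]=2^{-S}\delta_{\bm{p}\bm{q}}$ when $\bm{p},\bm{q}$ lie in the same block and $0$ when they lie in distinct blocks (a qubit in the symmetric difference carries a mean-zero factor). Therefore the rescaled matrix $\widetilde{A}_{\mathcal{Q}}:=2^{S/2}A_{\mathcal{Q}}$ has columns that are independent (over $a$), isotropic, and of fixed norm $\sqrt{2^{S}|\mathcal{Q}|}$ in $\R^{2^{S}|\mathcal{Q}|}$ — exactly the hypotheses of Lemma~\ref{vqacr_lemma_qntk_spectral_norm_of_independent_columns}.

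Next I would control the incoherence parameter of that lemma. The same product structure gives, for $a\neq b$, $\widetilde{A}_{\mathcal{Q}}[\cdot,a]^T\widetilde{A}_{\mathcal{Q}}[\cdot,b]=2^{S}\sum_{\mathcal{S}_d\in\mathcal{Q}}\prod_{n\in\mathcal{S}_d}\cos\!\big(\pi(x_{a,n}-x_{b,n})\big)$, and since the variables $c_n:=\cos(\pi(x_{a,n}-x_{b,n}))$ are independent with $\E c_n=0$, $\E c_n^2=\tfrac12$, the cross terms between distinct $\mathcal{S}_d$ vanish and $\E\big(\widetilde{A}_{\mathcal{Q}}[\cdot,a]^T\widetilde{A}_{\mathcal{Q}}[\cdot,b]\big)^2=4^{S}\cdot|\mathcal{Q}|\cdot2^{-S}=2^{S}|\mathcal{Q}|$. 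A crude union bound then yields $d_3\le\frac{1}{2^{S}|\mathcal{Q}|}\sum_a\sum_{b\neq a}2^{S}|\mathcal{Q}|=|\mathcal{A}|(|\mathcal{A}|-1)$. Plugging this into Lemma~\ref{vqacr_lemma_qntk_spectral_norm_of_independent_columns} gives $\E\big\|\tfrac{1}{2^{S}|\mathcal{Q}|}\widetilde{A}_{\mathcal{Q}}^T\widetilde{A}_{\mathcal{Q}}-I\big\|\le C_0\sqrt{|\mathcal{A}|^2\ln|\mathcal{A}|/(2^{S}|\mathcal{Q}|)}$, and Markov's inequality turns this into a tail bound: with probability at least $1-\delta$, $\big\|\tfrac{1}{|\mathcal{Q}|}A_{\mathcal{Q}}^T A_{\mathcal{Q}}-I\big\|\le(C_0/\delta)\sqrt{|\mathcal{A}|^2\ln|\mathcal{A}|/(2^{S}|\mathcal{Q}|)}$, which is $\le k$ precisely once $S\ge 2+\log_2\!\big(C^2|\mathcal{A}|^2\ln|\mathcal{A}|/(k^2|\mathcal{Q}|\delta^2)\big)$ for a suitable absolute constant $C$ (the additive $2$ absorbs the slack from Markov and from using $|\mathcal{A}|^2$ in place of $|\mathcal{A}|(|\mathcal{A}|-1)$). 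Combining this with the Lemma~\ref{vqacr_lemma_qntk_lower_bound_xy} sandwich produces the claimed bounds $(1-k)\Delta_{S}/2^{S}\le\lmin[K_{\infty}(\bm{0})]\le\lmax[K_{\infty}(\bm{0})]\le(1+k)\|O\|_2^2/2^{S}$.

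All of these computations are routine; the only place demanding care is the independence and isotropy bookkeeping — in particular verifying that distinct qubit subsets $\mathcal{S}_d$ produce exactly mutually orthogonal blocks, so that $\frac{1}{|\mathcal{Q}|}A_{\mathcal{Q}}^T A_{\mathcal{Q}}$ genuinely concentrates around $I$ rather than around some other Gram structure, and that the incoherence bound does not pick up stray factors of $2^{S}$. A secondary point to check is the dimension requirement $2^{S}|\mathcal{Q}|\ge|\mathcal{A}|$ of Lemma~\ref{vqacr_lemma_qntk_spectral_norm_of_independent_columns}, which is implied by the assumed lower bound on $S$ once $|\mathcal{A}|$ is sufficiently large.
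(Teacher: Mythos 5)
Your proposal is correct and follows essentially the same route as the paper's proof: the semidefinite sandwich from Lemma~\ref{vqacr_lemma_qntk_lower_bound_xy}, the explicit product form $\rho_a=\bigotimes_n\tfrac12\left(I+X\cos(\pi x_{a,n})+Y\sin(\pi x_{a,n})\right)$ giving fixed-norm independent isotropic columns for $2^{S/2}A_{\mathcal{Q}}$, the bound $d_3\leq|\A|(|\A|-1)$ from the vanishing cross terms between distinct $\S_d$, and then Lemma~\ref{vqacr_lemma_qntk_spectral_norm_of_independent_columns} plus Markov's inequality. The only cosmetic difference is that you invoke the Gram-matrix form of that lemma's conclusion while the paper uses the singular-value form; these are equivalent here.
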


\begin{proof}

By using Lemma~\ref{vqacr_lemma_qntk_lower_bound_xy}, we have
\begin{align*}
\frac{\Delta_{S}}{2^{2S} |\mathcal{Q}|}  \left( s_{\min} \left[ {2^{S/2}} A_{\mathcal{Q}} \right] \right)^2 \leq{}& \lmin  \left[ K_{\infty} (\0) \right] \leq{} \lmin  \left[ K_{\infty} (\0) \right] \leq{} \frac{\|O\|_2^2}{2^{2S} |\mathcal{Q}|}  \left( s_{\max} \left[ {2^{S/2}} A_{\mathcal{Q}} \right] \right)^2 ,
\end{align*}
so it suffices to prove
\begin{align}
s_{\min} \left[ 2^{{S}/{2}} A_{\mathcal{Q}} \right]  \geq{}& \sqrt{2^S |\mathcal{Q}|} - \frac{C}{\delta} \sqrt{|\A|(|\A|-1)\ln |\A|} \geq \sqrt{(1-k)2^S |\Q|} , \label{vqacr_lemma_qntk_least_eigen_random_sphere_key} \\
s_{\max} \left[ 2^{{S}/{2}} A_{\mathcal{Q}} \right]  \leq{}& \sqrt{2^S |\mathcal{Q}|} + \frac{C}{\delta} \sqrt{|\A|(|\A|-1)\ln |\A|} \leq \sqrt{(1+k)2^S |\Q|} , \label{vqacr_lemma_qntk_least_eigen_random_sphere_key_2}
\end{align}
with probability at least $1-\delta$, where the final inequality is derived from the condition of $S$. Our main idea is to employ Lemma~\ref{vqacr_lemma_qntk_spectral_norm_of_independent_columns} to obtain bounds of singular values of the matrix $2^{{S}/{2}} A_{\mathcal{Q}}$ with $d_1=2^{S}|\mathcal{Q}|$ and $d_2=|\A|$.

By using the qubit embedding with $R_Z$ rotations on the initial state $\otimes_{n=1}^{N} \frac{|0\>+|1\>}{\sqrt{2}}$, we have
\begin{align}
\rho_a ={}& \bigotimes_{n=1}^{N} \exp[-iZ\pi x_{a,n}/2] \frac{(|0\>+|1\>)(\<0|+\<1|)}{2} \exp[iZ\pi x_{a,n}/2] \notag \\
={}& \bigotimes_{n=1}^{N} \exp[-iZ\pi x_{a,n}/2] \frac{I+X}{2} \exp[iZ\pi x_{a,n}/2] \notag \\
={}& \bigotimes_{n=1}^{N} \frac{I+X\cos(\pi x_{a,n})+Y\sin(\pi x_{a,n})}{2} .\label{vqacr_lemma_qntk_least_eigen_random_sphere_1_3}
\end{align}
Denote by $[A_{\mathcal{Q}}]_{\cdot,a}$ the $a$-th column of $A_{\mathcal{Q}}$. By using the definition of $A_{\mathcal{Q}}$ and Eq.~(\ref{vqacr_lemma_qntk_least_eigen_random_sphere_1_3}), we have
\begin{align}
[A_{\mathcal{Q}}]_{\cdot,a} ={}& \bigoplus_{\mathcal{S}_d \in \mathcal{Q}} \bigotimes_{n \in \mathcal{S}_d} 
\begin{bmatrix}
\cos(\pi x_{a,n}) \\
\sin(\pi x_{a,n})
 \end{bmatrix} \label{vqacr_lemma_qntk_least_eigen_random_sphere_1_4} \\
:={}& \bigoplus_{\mathcal{S}_d \in \mathcal{Q}} \bm{v}_{d,a} , \label{vqacr_lemma_qntk_least_eigen_random_sphere_1_5}
\end{align}
where we split $[A_{\mathcal{Q}}]_{\cdot,a}$ into $|\mathcal{Q}|$ vectors $\bm{v}_{1,a},\cdots,\bm{v}_{|\mathcal{Q}|,a}$ in Eq.~(\ref{vqacr_lemma_qntk_least_eigen_random_sphere_1_5}). The vector $\bm{v}_{d,a}$ corresponds to the qubit group $\mathcal{S}_{d}$.
Thus, we have
\begin{align}
\l\|2^{S/2} [A_{\mathcal{Q}}]_{\cdot,a} \r\|_2^2 ={}& 2^S \sum_{d=1}^{|\mathcal{Q}|} \l\| \bm{v}_{d,a} \r\|_2^2 \notag \\
={}& 2^S \sum_{d=1}^{|\mathcal{Q}|} \prod_{n \in \mathcal{S}_{d}} \left( \cos^2(\pi x_{a,n}) + \sin^2(\pi x_{a,n}) \right) \notag \\
={}& 2^S |\mathcal{Q}| .
\end{align}

Next, we prove that $2^{S/2}[A_{\mathcal{Q}}]_{\cdot,a}$ are independent isotropic random vectors, i.e.,
\begin{align}
\mathop{\E}_{\bm{x}_{a}} 	2^{{S}} [A_{\mathcal{Q}}]_{\cdot,a} [A_{\mathcal{Q}}]_{\cdot,a}^T ={}& I_{2^S|\mathcal{Q}|} . \label{vqacr_lemma_qntk_least_eigen_random_sphere_iso_cond}
\end{align}

We consider the diagonal term first. By noticing Eq.~(\ref{vqacr_lemma_qntk_least_eigen_random_sphere_1_5}), we remark that the vector $[A_{\mathcal{Q}}]_{\cdot,a}$ could be written as the direct sum of $|\Q|$ vectors $\bm{v}_{d,a}$, where the element of each $\bm{v}_{d,a}$ is assigned with
an index $\bm{p}_{d} \in \{1,2\}^{\otimes S}$. Thus, we could denote by  $v_{\bm{p}_{d},a}$ the element of $[A_{\mathcal{Q}}]_{\cdot,a}$. Suppose $\S_d=\{q_1, \cdots,q_S\}$. For any $\bm{p}_{d}$, we have
\begin{align}
\mathop{\E}_{\bm{x}_{a}} 2^S v_{\bm{p}_{d},a}^2 ={}& 2^S \mathop{\E}_{\bm{x}_{a}} \prod_{n=1}^{S} \left( \delta_{p_{n}1} \cos^2(\pi x_{a,q_n}) + \delta_{p_{n}2} \sin^2(\pi x_{a,q_n}) \right) \label{vqacr_lemma_qntk_least_eigen_random_sphere_3_1} \\
={}& 2^S \prod_{n=1}^{S} \mathop{\E}_{{x}_{a,n}} \left( \delta_{p_{n}1} \cos^2(\pi x_{a,q_n}) + \delta_{p_{n}2} \sin^2(\pi x_{a,q_n}) \right) \label{vqacr_lemma_qntk_least_eigen_random_sphere_3_2} \\
={}& 2^S \prod_{n=1}^{S} \frac{\delta_{p_{n}1} + \delta_{p_{n}2}}{2} \tag{using $\E_{x} \cos^2(\pi x) = \E_{x} \sin^2(\pi x) = \frac{1}{2}$} \\
={}& 1 . \label{vqacr_lemma_qntk_least_eigen_random_sphere_3_4} 
\end{align}
Eq.~(\ref{vqacr_lemma_qntk_least_eigen_random_sphere_3_1}) follows from Eq.~(\ref{vqacr_lemma_qntk_least_eigen_random_sphere_1_4}). Eq.~(\ref{vqacr_lemma_qntk_least_eigen_random_sphere_3_2}) is obtained by noticing that $\{x_n^{m}\}_{n=1}^{N}$ are independent random variables. Eq.~(\ref{vqacr_lemma_qntk_least_eigen_random_sphere_3_4}) is obtained by noticing that $p_{n} \in \{1,2\}$.
Next, we consider the off-diagonal term in $[A_{\mathcal{Q}}]_{\cdot,a} [A_{\mathcal{Q}}]_{\cdot,a}^T$. Similar to the diagonal case, we assign two entries $v_{\bm{p}_{d},a}$ and $v_{\bm{p}'_{d'},a}$ with two indices $\bm{p}_{d}$ and $\bm{p}'_{d'}$, respectively. 
For the case $d\neq d'$, i.e. $\S_d \neq \S_{d'}$, there exists a qubit $q \in \S_d$ and $q \notin \S_{d'}$. Then the expectation $\mathop{\E}_{\bm{x}_a} v_{\bm{p}_{d},a}v_{\bm{p}'_{d'},a}$ contains a factor $\E \cos(\pi x_{a,q})=0$ or $\E \sin(\pi x_{a,q})=0$, where both induce the zero result. For the case $d=d'$,  we remark that $\bm{p}_{d} \neq \bm{p}'_{d}$ due to the off-diagonal condition. Thus, there exists at least one index $i\in\{1,\cdots,S\}$ such that $p_i \neq p'_i$ and the qubit $q_i \in \S_d$. Therefore the expectation $\mathop{\E}_{\bm{x}_a} v_{\bm{p}_{d},a}v_{\bm{p}'_{d'},a}$ contains a factor $\E \cos(\pi x_{a,q_i}) \sin(\pi x_{a,q_i})=0$, which induces the zero result.
Thus, we have proved Eq.~(\ref{vqacr_lemma_qntk_least_eigen_random_sphere_iso_cond}).

Next, we proceed to calculate the term $d_3$ in Lemma~\ref{vqacr_lemma_qntk_spectral_norm_of_independent_columns}. We have
\begin{align}
[A_{\mathcal{Q}}]_{\cdot,a}^T [A_{\mathcal{Q}}]_{\cdot,b} ={}& \sum_{\mathcal{S}_d \in \mathcal{Q}} \prod_{n\in \mathcal{S}_d} \left( \cos\pi x_{a,n} \cos\pi x_{b,n} +  \sin\pi x_{a,n} \sin\pi x_{b,n} \right) \label{vqacr_lemma_qntk_least_eigen_random_sphere_5_1} \\
={}& \sum_{\mathcal{S}_d \in \mathcal{Q}} \prod_{n\in \mathcal{S}_d}  \cos \left( \pi x_{a,n} - \pi x_{b,n} \right) , \label{vqacr_lemma_qntk_least_eigen_random_sphere_5_2} 
\end{align}
where Eq.~(\ref{vqacr_lemma_qntk_least_eigen_random_sphere_5_1}) follows from Eq.~(\ref{vqacr_lemma_qntk_least_eigen_random_sphere_1_4}). Thus
\begin{align}
d_3 ={}& \frac{1}{2^S|\mathcal{Q}|} \E \max_{a \in \A} \sum_{b \in \A, b \neq a} \left( 2^S [A_{\mathcal{Q}}]_{\cdot,a}^T [A_{\mathcal{Q}}]_{\cdot,b} \right)^2  \notag \\
\leq{}& \frac{2^S}{|\mathcal{Q}|} \E \sum_{a,b \in \A, b \neq a}\left( \sum_{\mathcal{S}_d \in \mathcal{Q}} \prod_{n\in \mathcal{S}_d}  \cos \left( \pi x_{a,n} - \pi x_{b,n} \right) \right)^2 \label{vqacr_lemma_qntk_least_eigen_random_sphere_6_2} \\
={}& \frac{2^S}{|\mathcal{Q}|} \E \sum_{a,b \in \A, b \neq a} \sum_{\mathcal{S}_d \in \mathcal{Q}} \prod_{n\in \mathcal{S}_d}  \cos^2 \left( \pi x_{a,n} - \pi x_{b,n} \right)  \label{vqacr_lemma_qntk_least_eigen_random_sphere_6_3} \\
={}& \frac{2^S}{|\mathcal{Q}|} \sum_{a,b \in \A, b \neq a} \sum_{\mathcal{S}_d \in \mathcal{Q}} \prod_{n\in \mathcal{S}_d}  \frac{1}{2}  \notag \\
={}& |\A|(|\A|-1) . \notag
\end{align}
Eq.~(\ref{vqacr_lemma_qntk_least_eigen_random_sphere_6_2}) follows from Eq.~(\ref{vqacr_lemma_qntk_least_eigen_random_sphere_5_2}). Eq.~(\ref{vqacr_lemma_qntk_least_eigen_random_sphere_6_3}) is derived by noticing that for $\mathcal{S}_d \neq \mathcal{S}_{d'}$, the expectation  of the $\cos$ term is zero. 

Thus, by using Lemma~\ref{vqacr_lemma_qntk_spectral_norm_of_independent_columns}, there exists a constant $C$ such that
\begin{align}
\E \max_{j} \left| s_j(2^{S/2} A_{\mathcal{Q}}) - \sqrt{2^S |\mathcal{Q}|} \right| \leq{} C \sqrt{|\A|(|\A|-1) \ln |\A|} .
\label{vqacr_lemma_qntk_least_eigen_random_sphere_7_1}
\end{align}
Eqs.~(\ref{vqacr_lemma_qntk_least_eigen_random_sphere_key}) and (\ref{vqacr_lemma_qntk_least_eigen_random_sphere_key_2}) follow from Eq.~(\ref{vqacr_lemma_qntk_least_eigen_random_sphere_7_1}) by using the 
Markov's inequality. Thus, we have proved Lemma~\ref{vqacr_lemma_qntk_least_eigen_random_sphere}.

\end{proof}

\subsection{Proof of Lemma~\ref{vqacr_lemma_qntk_least_eigen_gaussian_matrix_main}}

\label{vqacr_lemma_qntk_stability_proof1}

The main idea is to analyze the stability of the QNTK when the depth is large but finite. Specifically, we are interested in the distance between the asymptotic QNTK $K_{\infty}(\bm{0})$ with the finite QNTK $K (\bm{0})$ with $D$ gates.
We follow notations in Lemmas~\ref{vqacr_lemma_qntk_lower_bound} and \ref{vqacr_lemma_qntk_least_eigen_gaussian_matrix}. 
We remark that 
\begin{equation*}
K(\bmt) = \frac{1}{D} J(\bmt)^T J(\bmt)
\end{equation*}
by Eq.~(\ref{vqacr_qntk_eq}), where the row $J_{d}(\bm{\theta}) = \{ \frac{\partial z_a}{\partial \theta_{d}} (\bm{\theta}) \}_{a \in \A}$. Thus, $K(\bm{0})$ and $K_{\infty}(\bm{0})$ are the empirical and the explicit covariance matrix of independent random vectors $\{J_{d}(\bm{0})\}$, respectively. Since each Hamiltonian $H_d$ involves $S$ qubits, we have
\begin{align}
\mathop{\E}_{a} J_{da}(\bm{0})^2 ={}& \mathop{\E}_{a} \frac{1}{4} \left( \Tr \left[ i [O, H_d] \rho_a \right] \right)^2 \leq \|O\|_2^2 \alpha_{\max}, \notag
\end{align}
where $\alpha_{\max} := \max_{\|\bm{i}\|_1=S} \alpha_{\bm{i}}$. We remark that each element in the vector $J_d(\bm{0})$ is independent from each other due to the independent sample $a \in \A$. By using the central limit theorem, there exists an absolute constant $c_1>0$, such that 
\begin{align}
P \left( \|J_{d}(\bm{0})\|_2^2 \geq{} c_1 |\A| \|O\|_2^2 \alpha_{\max} \ln \frac{D}{\delta} \right) \leq{}& 1 - \frac{\delta}{3D} . \notag
\end{align}
Thus, with probability at least $1-\delta/3$ we have that for all $d = 1,2,\cdots,D$,
\begin{align}
\|J_{d}(\bm{0})\|_2^2 \leq{} c_1 |\A| \|O\|_2^2 \alpha_{\max} \ln \frac{D}{\delta} . \label{vqacr_lemma_qntk_stability_proof1_bound_Jd_all}
\end{align}
By using Lemma~\ref{vqacr_lemma_qntk_operator_norm_of_covariance_matrix} and Eq.~(\ref{vqacr_lemma_qntk_stability_proof1_bound_Jd_all}), we have that with probability at least $1-2\delta/3$, 
\begin{align}
{}& \left\| K(\bm{0}) - K_{\infty}(\bm{0}) \right\|_2 \notag \\
={}& \left\| \frac{1}{D} \sum_{d=1}^{D} J_d(\bm{0}) J_d(\bm{0})^T - K_{\infty}(\bm{0}) \right\|_2 \notag \\
\leq{}& \epsilon \cdot \max \left( \sqrt{\left\|K_{\infty}(\bm{0})\right\|_2} , \epsilon \right) , \label{vqacr_lemma_qntk_stability_proof1_2_3}
\end{align}
where 
\begin{equation}\label{vqacr_lemma_qntk_stability_proof1_epsilon1}
\epsilon = \sqrt{\frac{c_1 |\A| \|O\|_2^2 \alpha_{\max} }{D} \ln \frac{D}{\delta}} \cdot \sqrt{\frac{1}{c_2} \ln \frac{|\A|}{\delta}}
\end{equation}
and $c_2>0$ is an absolute constant. Let $C_2:=9\max(c_1/c_2,1)$ in the condition $D_0 = C_2 \frac{ \|O\|_2^4 \alpha_{\max}^2 A_{\Q, \max}'^2 }{k^2 \Delta_S^2 \alpha_{\min}^2 A_{\Q, \min}'^4} |\A| \ln \frac{|\A|}{\delta} $ such that $D\geq D_0 \ln \frac{D_0}{\delta}$, we have
\begin{align}
\frac{\frac{D}{\delta}}{\ln \frac{D}{\delta}} \geq{}& \frac{ \frac{D_0}{\delta} \ln \frac{D_0}{\delta} }{\ln \frac{D_0}{\delta} + \ln \ln \frac{D_0}{\delta} } \tag{function $f(x)=x/\ln x$ is increasing for $x\geq e$} \\
\geq{}& \frac{\frac{D_0}{\delta} \ln \frac{D_0}{\delta}}{\ln \frac{D_0}{\delta} + \frac{1}{2} \ln \frac{D_0}{\delta} } \tag{$\ln x \leq \frac{x}{2}$ for $x>0$} \\
={}& \frac{2}{3} \frac{D_0}{\delta} = \frac{6}{k^2} \max(c_1/c_2, 1) \frac{|\A| \|O\|_2^4 \alpha_{\max}^2 A_{\Q, \max}'^2 }{\Delta_S^2 \alpha_{\min}^2 A_{\Q, \min}'^4} \frac{1}{\delta} \ln \frac{|\A|}{\delta} . \label{vqacr_lemma_qntk_stability_proof1_bound_DlnD}
\end{align}

Using Eq.~(\ref{vqacr_lemma_qntk_stability_proof1_bound_DlnD}) in Eq.~(\ref{vqacr_lemma_qntk_stability_proof1_epsilon1}), we obtain 
\begin{align}
\epsilon ={}& \sqrt{\frac{c_1 |\A| \|O\|_2^2 \alpha_{\max} }{c_2} \frac{1}{\delta} \ln \frac{|\A|}{\delta}  } \cdot  \sqrt{\frac{\ln \frac{D}{\delta}}{\frac{D}{\delta}}} \leq{} \frac{k}{\sqrt{6}} \cdot \frac{\Delta_S \alpha_{\min} A_{\Q, \min}'^2 }{ \sqrt{  \|O\|_2^2 \alpha_{\max} A_{\Q, \max}'^2 } } . \label{vqacr_lemma_qntk_stability_proof1_epsilon2}
\end{align}
Besides, by using Lemma~\ref{vqacr_lemma_qntk_least_eigen_gaussian_matrix}, 
we have
\begin{align}
\left(1-\frac{k}{2}\right) \Delta_S \alpha_{\min} A_{\Q, \min}'^2 \leq{}& \lmin \left[ K_{\infty} (\bm{0}) \right] \leq{}   \lmax \left[ K_{\infty} (\bm{0}) \right] \leq{} \left(1+\frac{k}{2}\right) \|O\|_2^2 \alpha_{\max} A_{\Q, \max}'^2
\label{vqacr_lemma_qntk_stability_proof1_spectrum_norm_K_infty}
\end{align}
with probability at least $1-\delta/3$ for $S \geq{} \log_3 \left[ 1 + \frac{C_1^2  |\A|^2 \ln |A| }{ k^2 A_{\Q, \min}'^4 |\Q| \delta^2 } \right]$, where $C_1=6C$ and the constant $C$ follows the notation in Lemma~\ref{vqacr_lemma_qntk_least_eigen_gaussian_matrix}.
Using Eqs.~(\ref{vqacr_lemma_qntk_stability_proof1_epsilon2}) and (\ref{vqacr_lemma_qntk_stability_proof1_spectrum_norm_K_infty}) in Eq.~(\ref{vqacr_lemma_qntk_stability_proof1_2_3}) could yield the result. With probability at least $1-\delta$ we have
\begin{align}
\lmin \left[K(\bm{0}) \right] \geq{}& \lmin \left[K_{\infty}(\bm{0}) \right] - \left\| K(\bm{0}) - K_{\infty}(\bm{0}) \right\|_2 \notag \\
\geq{}& \left(1-\frac{k}{2}\right) \Delta_S \alpha_{\min} A_{\Q, \min}'^2 - \frac{k}{\sqrt{6}} \cdot \frac{\Delta_S \alpha_{\min} A_{\Q, \min}'^2 }{\sqrt{ \|O\|_2^2 \alpha_{\max} A_{\Q, \max}'^2 } } \cdot \sqrt{1+\frac{k}{2}} \cdot \sqrt{ \|O\|_2^2 \alpha_{\max} A_{\Q, \max}'^2 }  \notag \\
\geq{}& (1-k) \Delta_S \alpha_{\min} A_{\Q, \min}'^2 \notag
\end{align}
and
\begin{align}
\lmax \left[K(\bm{0}) \right] \leq{}& \lmax \left[K_{\infty}(\bm{0}) \right] + \left\| K(\bm{0}) - K_{\infty}(\bm{0}) \right\|_2 \notag \\
\leq{}& \left(1+\frac{k}{2}\right) \|O\|_2^2 \alpha_{\max} A_{\Q, \max}'^2 + \frac{k}{\sqrt{6}} \cdot \frac{\Delta_S \alpha_{\min} A_{\Q, \min}'^2 }{\sqrt{ \|O\|_2^2 \alpha_{\max} A_{\Q, \max}'^2 } } \cdot \sqrt{1+\frac{k}{2}} \cdot \sqrt{ \|O\|_2^2 \alpha_{\max} A_{\Q, \max}'^2 } \notag \\
\leq{}& \left(1+\frac{k}{2}\right) \|O\|_2^2 \alpha_{\max} A_{\Q, \max}'^2 + \frac{k}{2} \Delta_S \alpha_{\min} A_{\Q, \min}'^2 \notag \\
\leq{}& (1+k) \|O\|_2^2 \alpha_{\max} A_{\Q, \max}'^2 . \notag
\end{align}

\subsection{Proof of Lemma~\ref{vqacr_lemma_qntk_least_eigen_random_sphere_main}}

\label{vqacr_lemma_qntk_stability_proof2}
The main idea is similar to Section~\ref{vqacr_lemma_qntk_stability_proof1}. We follow notations in Lemmas~\ref{vqacr_lemma_qntk_lower_bound_xy} and \ref{vqacr_lemma_qntk_least_eigen_random_sphere}. 
We remark that 
\begin{equation*}
K(\bmt) = \frac{1}{D} J(\bmt)^T J(\bmt)
\end{equation*}
by Eq.~(\ref{vqacr_qntk_eq}), where the row $J_{d}(\bm{\theta}) = \{ \frac{\partial z_a}{\partial \theta_{d}} (\bm{\theta}) \}_{a \in \A}$. Thus, $K(\bm{0})$ and $K_{\infty}(\bm{0})$ are the empirical and the explicit covariance matrix of independent random vectors $\{J_{d}(\bm{0})\}$, respectively. Since each Hamiltonian $H_d$ involves $S$ qubits, we have
\begin{align}
\mathop{\E}_{a} J_{da}(\bm{0})^2 ={}& \mathop{\E}_{a}  \frac{1}{4} \left( \Tr \left[ i [O, H_d] \rho_a \right] \right)^2 \leq \frac{\|O\|_2^2}{2^S} \notag
\end{align}
by using Eq.~(\ref{vqacr_lemma_qntk_least_eigen_random_sphere_iso_cond}). We remark that each element in the vector $J_d(\bm{0})$ is independent from each other due to the independent sample $a \in \A$. By using the central limit theorem, there exists an absolute constant $c_1>0$, such that 
\begin{align}
P \left( \|J_{d}(\bm{0})\|_2^2 \geq{} c_1 |\A| \frac{\|O\|_2^2}{2^S} \ln \frac{D}{\delta} \right) \leq{}& 1 - \frac{\delta}{3D} . \notag
\end{align}
Thus, with probability at least $1-\delta/3$ we have that for all $d = 1,2,\cdots,D$,
\begin{align}
\|J_{d}(\bm{0})\|_2^2 \leq{} c_1 |\A| \frac{\|O\|_2^2}{2^S} \ln \frac{D}{\delta} . \label{vqacr_lemma_qntk_stability_proof2_bound_Jd_all}
\end{align}
By using Lemma~\ref{vqacr_lemma_qntk_operator_norm_of_covariance_matrix} and Eq.~(\ref{vqacr_lemma_qntk_stability_proof2_bound_Jd_all}), we have that with probability at least $1-2\delta/3$, 
\begin{align}
{}& \left\| K(\bm{0}) - K_{\infty}(\bm{0}) \right\|_2 \notag \\
={}& \left\| \frac{1}{D} \sum_{d=1}^{D} J_d(\bm{0}) J_d(\bm{0})^T - K_{\infty}(\bm{0}) \right\|_2 \notag \\
\leq{}& \epsilon \cdot \max \left( \sqrt{\left\|K_{\infty}(\bm{0})\right\|_2} , \epsilon \right) , \label{vqacr_lemma_qntk_stability_proof2_2_3}
\end{align}
where 
\begin{equation}\label{vqacr_lemma_qntk_stability_proof2_epsilon1}
\epsilon = \sqrt{\frac{c_1 |\A| \|O\|_2^2 }{2^S D} \ln \frac{D}{\delta}} \cdot \sqrt{\frac{1}{c_2} \ln \frac{|\A|}{\delta}}
\end{equation}
and $c_2>0$ is an absolute constant. Let $C_2:=9\max(c_1/c_2,1)$ in the condition $D_0 = C_2 \frac{|\A| \|O\|_2^4 }{k^2 \Delta_S^2 } \ln \frac{|\A|}{\delta} $ such that $D\geq D_0 \ln \frac{D_0}{\delta}$, we have
\begin{align}
\frac{\frac{D}{\delta}}{\ln \frac{D}{\delta}} \geq{}& \frac{\frac{D_0}{\delta} \ln \frac{D_0}{\delta}}{\ln \frac{D_0}{\delta} + \ln \ln \frac{D_0}{\delta}} \tag{function $f(x)=x/\ln x$ is increasing for $x\geq e$} \\
\geq{}& \frac{\frac{D_0}{\delta} \ln \frac{D_0}{\delta}}{\ln \frac{D_0}{\delta} + \frac{1}{2} \ln \frac{D_0}{\delta}} \tag{$\ln x \leq \frac{x}{2}$ for $x>0$} \\
={}& \frac{2}{3} \frac{D_0}{\delta} = 6 \max(c_1/c_2, 1) \frac{\|O\|_2^4 }{k^2 \Delta_S^2 } \frac{|\A|}{\delta} \ln \frac{|\A|}{\delta} . \label{vqacr_lemma_qntk_stability_proof2_bound_DlnD}
\end{align}

Using Eq.~(\ref{vqacr_lemma_qntk_stability_proof2_bound_DlnD}) in Eq.~(\ref{vqacr_lemma_qntk_stability_proof2_epsilon1}), we obtain 
\begin{align}
\epsilon ={}& \sqrt{\frac{c_1 |\A| \|O\|_2^2 }{2^S c_2} \ln \frac{|\A|}{\delta} } \cdot  \sqrt{\frac{\ln \frac{D}{\delta}}{D}} \leq{}  \frac{k \Delta_S }{\sqrt{6} \|O\|_2 \sqrt{2^S}} . \label{vqacr_lemma_qntk_stability_proof2_epsilon2}
\end{align}
Besides, by using Lemma~\ref{vqacr_lemma_qntk_least_eigen_random_sphere}, 
we have
\begin{align}
\left( 1-\frac{k}{2} \right) \frac{\Delta_S}{2^{S}} \leq{}& \lmin \left[ K_{\infty} (\bm{0}) \right] \leq{}   \lmax \left[ K_{\infty} (\bm{0}) \right] \leq{} \left( 1-\frac{k}{2} \right) \frac{\|O\|_2^2 }{2^{S}}
\label{vqacr_lemma_qntk_stability_proof2_spectrum_norm_K_infty}
\end{align}
with probability at least $1-\delta/3$ for $S \geq 2+ \log_2 \frac{C_1^2 |\A|^2 \ln |\A|}{k^2 |\Q|\delta^2}$, where $C_1=6C$ and the constant $C$ follows the notation in Lemma~\ref{vqacr_lemma_qntk_least_eigen_random_sphere}.
Using Eqs.~(\ref{vqacr_lemma_qntk_stability_proof2_epsilon2}) and (\ref{vqacr_lemma_qntk_stability_proof2_spectrum_norm_K_infty}) in Eq.~(\ref{vqacr_lemma_qntk_stability_proof2_2_3}) could yield the result. With probability at least $1-\delta$ we have
\begin{align}
\lmin \left[K(\bm{0}) \right] \geq{}& \lmin \left[K_{\infty}(\bm{0}) \right] - \left\| K(\bm{0}) - K_{\infty}(\bm{0}) \right\|_2 \notag \\
\geq{}& \left( 1-\frac{k}{2} \right) \frac{\Delta_S}{2^{S}} - \frac{k \Delta_S }{\sqrt{6} \|O\|_2 \sqrt{2^S}} \cdot \sqrt{1+\frac{k}{2}}  \frac{\|O\|_2 }{\sqrt{2^S}} \notag \\
\geq{}& (1-k) \frac{\Delta_S}{2^S} \notag
\end{align}
and
\begin{align}
\lmax \left[K(\bm{0}) \right] \leq{}& \lmax \left[K_{\infty}(\bm{0}) \right] + \left\| K(\bm{0}) - K_{\infty}(\bm{0}) \right\|_2 \notag \\
\leq{}& \left( 1+\frac{k}{2} \right) \frac{\|O\|_2^2}{2^{S}} + \frac{k \Delta_S }{\sqrt{6} \|O\|_2 \sqrt{2^S}} \cdot \sqrt{1+\frac{k}{2}}  \frac{\|O\|_2 }{\sqrt{2^S}} \notag \\
\leq{}& \left( 1+\frac{k}{2} \right) \frac{ \|O\|_2^2 }{2^{S}} + \frac{k}{2} \frac{\Delta_S}{ 2^S} \notag \\
\leq{}& (1+k) \frac{\|O\|_2^2}{2^S} . \notag
\end{align}

\section{Linear convergence of QNN training}
\label{vqacr_qntk_app_qnn_linear_converge}

We first present a linear convergence theory for quantum neural networks under general settings in Section~\ref{vqacr_qntk_app_qnn_linear_converge_general}. Next, we provide the proof of the linear convergence result, i.e., Theorems~\ref{vqacr_theorem_gd_xyz_lazy_theta_and_K} and \ref{vqacr_theorem_gd_xy_lazy_theta_and_K} in the main text, in Sections~\ref{vqacr_qntk_app_qnn_linear_converge_xyz} and \ref{vqacr_qntk_app_qnn_linear_converge_xy}, respectively.

\subsection{QNN training in the locally smooth region}
\label{vqacr_qntk_app_qnn_linear_converge_general}

We begin by analyzing the general QNN case. 
In this section, we prove the convergence rate of training general QNNs and demonstrate the lazy training regime in Theorem~\ref{vqacr_lemma_gd_general_lazy_theta_and_K}. 
Analogous to Theorem~G.1 in the work~\cite{NEURIPS2019_0d1a9651}, our analysis requires the local Lipschitz continuity of the Jacobian matrix $J(\bmt)$ with constants of the order $\O({D}^{\frac{1}{2}})$ as shown in Assumption~\ref{vqacr_lemma_ntk_local_Jacobian_stability_xyz}. 
In the locally smooth region defined in Assumption~\ref{vqacr_lemma_ntk_local_Jacobian_stability_xyz}, the loss function decays linearly during the gradient descent training with high probability. The decay rate is independent of the number of parameter $D$ for any $D\geq D_0: = \poly(|\A|, \lmax, \lmin^{-1})$, where $\lmin, \lmax$ denotes the least and the largest eigenvalue of the initial QNTK, respectively. We remark that the linear convergence threshold in this work is milder than that in previous research  with $D \geq \poly(2^N)$ when $\lmin^{-1} \lesssim \poly(N)$. 

\begin{theorem}\label{vqacr_lemma_gd_general_lazy_theta_and_K}
Let $\bmt(t) \in \R^D$ be the parameter of a QNN $\{V(\bmt), O\}$ at the $t$-th step of the gradient descent training on the dataset $A=\{(\rho_a, y_a)\}$. Let $\L_{\A}(t)$ and $K(t)$ be the loss function and the QNTK at the $t$-th step, respectively. Denote by $\lmax$ and $\lmin$ the largest and the least eigenvalue of $K(0)$, respectively. Let $\eta=\frac{|\A|}{D}\eta_0$ be the learning rate, where $\eta_0 \leq \lmin^{-1} $. Then for any $k \in (0,1/2]$, there exists two constants $C_0, C_3>0$, such that if Assumption~\ref{vqacr_lemma_ntk_local_Jacobian_stability_xyz} holds for the parameter $\bmt(0)$ with the constant $C_0$ and $R_0 := \frac{5}{2} {\lmin^{-1}} \sqrt{2|\A|\lmax \L_{\A}(0)} $ for every $D \geq D_0:= C_3 k^{-2} \lmax^2 \lmin^{-4} |\A| \L_{\A}(0) $, we have,
\begin{align}
{}& \L_{\A} (T) \leq{} \left( 1 - (1-k) \eta_0 \lmin \right)^{2T} \L_{\A} (0) , \label{vqacr_lemma_gd_general_lazy_theta_and_K_eq_1} \\
{}& \left\| \bm{\theta}(T) - \bm{\theta}(0) \right\|_2  \leq \frac{R_0}{\sqrt{D}} . \label{vqacr_lemma_gd_general_lazy_theta_and_K_eq_2} 
\end{align}
\end{theorem}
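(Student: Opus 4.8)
The plan is to run a bootstrap induction over the gradient-descent steps $t=0,1,\dots,T$, maintaining three coupled invariants: (i) the iterate stays in the locally smooth ball, $\|\bmt(t)-\bmt(0)\|_2\le R_0/\sqrt{D}$; (ii) the QNTK has not drifted, $\|K(t)-K(0)\|_2\le k\lmin$, so that $\lmin[K(t)]\ge(1-k)\lmin$ and $\lmax[K(t)]\le\lmax+k\lmin$; and (iii) the residual contracts geometrically, $\|\bm r_{\A}(t)\|_2\le(1-(1-k)\eta_0\lmin)^t\|\bm r_{\A}(0)\|_2$. The base case $t=0$ is immediate. Invariant (iii) at $t=T$ gives Eq.~\eqref{vqacr_lemma_gd_general_lazy_theta_and_K_eq_1} after squaring and dividing by $2|\A|$, while (i) at $t=T$ is exactly Eq.~\eqref{vqacr_lemma_gd_general_lazy_theta_and_K_eq_2}. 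This is the discrete-time lazy-training scheme, parallel to Theorem~G.1 of Ref.~\cite{NEURIPS2019_0d1a9651}, specialized to the QNN Jacobian of Lemma~\ref{vqacr_qntk_jac_form}.

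For the inductive step, assume (i)--(iii) up to step $t$. \emph{Kernel stability:} writing $K=\tfrac1D J^{T}J$, using $\|J(0)\|_2=\sqrt{D\lmax}$ and Assumption~\ref{vqacr_lemma_ntk_local_Jacobian_stability_xyz}, $\|J(\bmt(t))-J(0)\|_2\le\sqrt{C_0D}\,\|\bmt(t)-\bmt(0)\|_2\le\sqrt{C_0}\,R_0$ — here the $\O(\sqrt D)$ Lipschitz scaling is essential, since the $\sqrt D$ cancels the $R_0/\sqrt D$ — so that
\begin{equation*}
\|K(t)-K(0)\|_2 \le \tfrac1D\big(\|J(0)\|_2+\|J(\bmt(t))\|_2\big)\,\|J(\bmt(t))-J(0)\|_2 \lesssim \frac{\sqrt{\lmax}}{\sqrt D}\,R_0 ,
\end{equation*}
which is $\le k\lmin$ once $D\ge D_0=C_3k^{-2}\lmax^2\lmin^{-4}|\A|\L_{\A}(0)$ (substitute $R_0^2=\tfrac{25}{2}\lmin^{-2}|\A|\lmax\L_{\A}(0)$ and absorb $C_0$ and numerical factors into $C_3$); this yields (ii) at $t$. \emph{Residual recursion:} the update is $\Delta\bmt(t):=\bmt(t{+}1)-\bmt(t)=-\tfrac{\eta_0}{D}J(\bmt(t))\bm r_{\A}(t)$, and the mean-value form $z_a(t{+}1)-z_a(t)=\langle\int_0^1\nabla z_a(\bmt(t)+s\Delta\bmt(t))\,ds,\Delta\bmt(t)\rangle$ gives $\bm r_{\A}(t{+}1)=\bm r_{\A}(t)-\eta_0\bar K(t)\bm r_{\A}(t)$ with $\bar K(t)=\tfrac1D\bar J(t)^{T}J(\bmt(t))$, $\bar J(t):=\int_0^1 J(\bmt(t)+s\Delta\bmt(t))\,ds$, and $\|\bar K(t)-K(t)\|_2=\O(1/\sqrt D)$ (Lipschitz bound applied to $\|\Delta\bmt(t)\|_2\lesssim\tfrac{\eta_0}{D}\sqrt{D\lmax}\,\|\bm r_{\A}(t)\|_2$). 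Expanding $\|\bm r_{\A}(t{+}1)\|_2^2$ and using the step-size bound $\eta_0\le\lmin^{-1}$ (so $I-\eta_0K(t)$ acts on $\bm r_{\A}(t)$ with factor at most $1-\eta_0\lmin[K(t)]$), with the $\O(1/\sqrt D)$ correction absorbed for $D\ge D_0$, produces $\|\bm r_{\A}(t{+}1)\|_2\le(1-(1-k)\eta_0\lmin)\|\bm r_{\A}(t)\|_2$, i.e.\ (iii) at $t{+}1$.

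Finally, re-establish (i) at $t{+}1$ by summing step sizes and invoking the geometric decay from (iii):
\begin{equation*}
\|\bmt(t{+}1)-\bmt(0)\|_2 \le \sum_{s=0}^{t}\|\Delta\bmt(s)\|_2 \le \frac{\eta_0}{D}\sqrt{(1{+}k)D\lmax}\sum_{s=0}^{t}\|\bm r_{\A}(s)\|_2 \le \frac{\sqrt{(1{+}k)\lmax}}{(1-k)\lmin\sqrt D}\,\|\bm r_{\A}(0)\|_2 ,
\end{equation*}
and with $\|\bm r_{\A}(0)\|_2=\sqrt{2|\A|\L_{\A}(0)}$ and $k\le\tfrac12$ the right side is $\le R_0/\sqrt D$ precisely because of the constant $\tfrac52$ in the definition of $R_0$ — closing the induction. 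I expect the main obstacle to be exactly this self-consistent bookkeeping: the ball radius $R_0$, the drift tolerance $k\lmin$, the contraction factor, and the threshold $D_0$ must be chosen simultaneously so the three invariants reproduce themselves, and one must check that the $\O(\sqrt D)$ exponent in Assumption~\ref{vqacr_lemma_ntk_local_Jacobian_stability_xyz} is the borderline scaling making the kernel-drift term $D$-independent while each displacement is $\O(1/\sqrt D)$. Everything else — the trace identity for $\|J(0)\|_2$, the geometric sum, the quadratic-form expansion of $\|\bm r_{\A}(t{+}1)\|_2^2$ — is routine.
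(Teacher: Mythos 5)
Your proposal is correct and follows essentially the same route as the paper's proof: an induction over gradient-descent steps maintaining that the iterate stays in the locally smooth ball (so the Jacobian, and hence the kernel, drifts by at most an $\mathcal{O}(k\lmin)$ amount), combined with a mean-value expansion of the residual update to get the contraction factor $1-(1-k)\eta_0\lmin$, and a geometric sum of step sizes to close the loop with the constant $\tfrac52$ in $R_0$. The only cosmetic difference is that you state the kernel-drift invariant directly on $K(t)$, whereas the paper tracks the per-step displacement bound and the Jacobian drift $\|J(\bmt)-J(0)\|_2\le k_1\sqrt{D\lmax}$ with $k_1=\tfrac{2\lmin}{5\lmax}k$; these are equivalent formulations of the same bookkeeping.
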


\begin{proof}

For convenience, we will denote by $\bm{r}(t):=\bm{r}(\bmt(t))$ and $J(t):=J(\bmt(t))$ the residual vector and the Jacobian matrix at the parameter $\bmt(t)$, respectively. Notations such as $\bm{r}(\bmt)$ will also be employed without ambiguity. 
Denote by \begin{align}
k_1 ={}& \frac{2\lmin}{5\lmax}k .\label{vqacr_lemma_gd_general_lazy_theta_and_K_k1}
\end{align}
Choose the proper constant $C_3$ such that 
\begin{align}
D_0 \geq{}& \left( \frac{2}{k} + \frac{4}{5} \right)^2 \frac{25\lmax^2}{4\lmin^4} C_0 2|\A| \L_{\A}(0) \notag \\
\geq{}& \frac{1}{k^2} \left( \frac{1+\frac{2}{5}k}{1-k} \right)^2 \frac{25\lmax^2}{4\lmin^4} C_0 2|\A| \L_{\A}(0) \notag \\
\geq{}& \frac{1}{k_1^2} \left( \frac{1+k_1}{1-k} \right)^2 \frac{1}{\lmin^2} C_0 \|\bm{r}(0)\|_2^2 . \label{vqacr_lemma_gd_general_lazy_theta_and_K_D0}
\end{align}

Our main idea is to prove the following two equations jointly by induction:
\begin{align}
\left\| \bm{r}(t) \right\|_2 \leq{}& \left[ 1 - (1-k) \eta_0 \lmin \right]^{t} \l\| \bm{r}(0) \r\|_2 , \label{vqacr_lemma_gd_general_lazy_theta_and_K_eq_5} \\
\left\| \bmt(t+1) - \bmt(t) \right\|_2 \leq{}& (1+k_1) \eta_0 \sqrt{\frac{\lmax}{D}} \left[ 1 - (1-k) \eta_0 \lmin \right]^{t} \l\| \bm{r}(0) \r\|_2  .\label{vqacr_lemma_gd_general_lazy_theta_and_K_eq_4} 
\end{align}
Then Eqs.~(\ref{vqacr_lemma_gd_general_lazy_theta_and_K_eq_1}) and (\ref{vqacr_lemma_gd_general_lazy_theta_and_K_eq_2}) could be obtained subsequently.

Next we prove Eqs.~(\ref{vqacr_lemma_gd_general_lazy_theta_and_K_eq_4}) and (\ref{vqacr_lemma_gd_general_lazy_theta_and_K_eq_5}) by induction. For the $t=0$ case, Eq.~(\ref{vqacr_lemma_gd_general_lazy_theta_and_K_eq_5}) holds trivially, and Eq.~(\ref{vqacr_lemma_gd_general_lazy_theta_and_K_eq_4}) holds by considering the gradient updation rule:
\begin{align*}
\|\bmt(1)-\bmt(0)\|_2 = \eta \|\nabla \L_{\A} (0) \|_2 = \frac{\eta}{|\A|} \| J(0) \bm{r}(0)\|_2 \leq \frac{\eta_0}{D} \|J(0)\|_2 \|\bm{r}(0)\|_2 = \frac{\eta_0}{D} \sqrt{D\lmax} \|\bm{r}(0)\|_2 .
\end{align*}
Next we assume that Eqs.~(\ref{vqacr_lemma_gd_general_lazy_theta_and_K_eq_4}) and (\ref{vqacr_lemma_gd_general_lazy_theta_and_K_eq_5}) holds for $t=0,1,\cdots,j$ and proceed to the $t=j+1$ case. 
By using $t\in \{0,1,\cdots,j\}$ cases of Eq.~(\ref{vqacr_lemma_gd_general_lazy_theta_and_K_eq_4}), we have the following bound for any $\bmt \in \{ \bmt | \|\bmt-\bmt(j)\|_2 \leq \|\bmt(j+1)-\bmt(j)\|_2 \}$,
\begin{align}
\| J(\bmt) - J(\bmt(0)) \|_2 \leq{}& \sqrt{C_0 D} \| \bmt - \bmt(0) \|_2 \notag \\
\leq{}& \sqrt{C_0 D} \left( \|\bmt - \bmt(j)\|_2 + \sum_{t=0}^{j-1} \|\bmt(t+1) - \bmt(t)\|_2 \right) \notag \\
\leq{}& \sqrt{C_0 D} \sum_{t=0}^{j} \|\bmt(t+1) - \bmt(t)\|_2 \notag \\
\leq{}& \sqrt{C_0 D} (1+k_1) \eta_0 \sqrt{\frac{\lmax}{D}} \frac{1-\left[ 1 - (1-k) \eta_0 \lmin \right]^{j+1}}{1-\left[ 1 - (1-k) \eta_0 \lmin \right]} \l\| \bm{r}(0) \r\|_2 \notag \\
\leq{}& \frac{1+k_1}{1-k} \sqrt{C_0 \lmax} \frac{1}{\lmin} \l\| \bm{r}(0) \r\|_2 \notag \\
\leq{}& k_1 \sqrt{D\lmax} . \label{vqacr_lemma_gd_general_lazy_theta_and_K_Jthetanorm}
\end{align}
The last term follows from $D \geq D_0$ and Eq.~(\ref{vqacr_lemma_gd_general_lazy_theta_and_K_D0}).
Then the $t=j+1$ case of Eq.~(\ref{vqacr_lemma_gd_general_lazy_theta_and_K_eq_5}) could be derived as follows:
\begin{align}
\l\| \bm{r}(j+1) \r\|_2 ={}& \l\| \bm{r}(j+1) - \bm{r}(j) + \bm{r}(j) \r\|_2	\notag \\
={}& \l\| J(\bmt')^T \left[ \bmt(j+1) - \bmt(j) \right] + \bm{r}(j) \r\|_2 \notag \\
={}& \l\| - \frac{\eta}{|\A|} J(\bmt')^T J(j) \bm{r}(j) + \bm{r}(j) \r\|_2 \label{vqacr_lemma_gd_general_lazy_theta_and_K_2_3} \\
\leq{}& \l\| 1 - \frac{\eta_0}{D} J(\bmt')^T J(j) \r\|_2 \l\| \bm{r}(j) \r\|_2 \notag \\
\leq{}& \left[ 1 - (1-k) \eta_0 \lmin \right] \l\| \bm{r}(j) \r\|_2 , \label{vqacr_lemma_gd_general_lazy_theta_and_K_2_5}
\end{align}
where $\bmt'$ is a point derived from the Taylor expansion, which satisfies $\|\bmt'-\bmt(j)\|_2 \leq \|\bmt(j+1)-\bmt(j)\|$.
Eq.~(\ref{vqacr_lemma_gd_general_lazy_theta_and_K_2_3}) comes from the gradient descent rule. 
Eq.~(\ref{vqacr_lemma_gd_general_lazy_theta_and_K_2_5}) is derived as follows:
\begin{align}
{}& \l\| 1 - \frac{\eta_0}{D} J(\bmt')^T J(j) \r\|_2 \notag \\
\leq{}& \l\| 1 - \frac{\eta_0}{D} J(0)^T J(0) \r\|_2 + \frac{\eta_0}{D} \l\| J(0)^T J(0) - J(0)^T J(j) \r\|_2 + \frac{\eta_0}{D} \l\| J(0)^T J(j) - J(\bmt')^T J(j) \r\|_2 \notag \\
\leq{}& 1 - \eta_0 \lmin + \frac{\eta_0}{D} \left( \l\| J(0) \r\|_2 \l\| J(0)  - J(j) \r\|_2 + \l\| J(j)-J(0)+J(0) \r\|_2 \l\| J(0)  - J(\bmt') \r\|_2 \right) \notag \\
\leq{}&  1 - \eta_0 \lmin + \frac{\eta_0}{D} \sqrt{D\lmax} \cdot k_1 \sqrt{D\lmax} + \frac{\eta_0}{D} \left( k_1 \sqrt{D\lmax} + \sqrt{D\lmax} \right) k_1 \sqrt{D\lmax} \tag{using Eq.~(\ref{vqacr_lemma_gd_general_lazy_theta_and_K_Jthetanorm})} \\
={}& 1-\eta_0 \lmin + k_1 (2+k_1) \eta_0 \lmax \notag \\
\leq{}& 1-\eta_0 \lmin + \frac{2}{5} k (2+k_1) \eta_0 \lmin \notag \\
\leq{}& 1 - (1-k) \eta_0 \lmin , \notag
\end{align}
where the last term is derived by using $k_1 \leq 2/5$.
The $t=j+1$ case of Eq.~(\ref{vqacr_lemma_gd_general_lazy_theta_and_K_eq_4}) could be obtained accordingly:
\begin{align}
\|\bmt(j+2)-\bmt(j+1)\|_2 ={}& \eta \left\|\nabla \L_{\A} (j+1) \right\|_2 \notag \\
\leq{}& \frac{|\A|}{D} \eta_0 \frac{1}{|\A|} \left\|J(j+1) \right\|_2 \left\|\bm{r}(j+1) \right\|_2 \label{vqacr_lemma_gd_general_lazy_theta_and_K_4_2} \\
={}& \frac{\eta_0}{D} \left\|J(j+1) - J(0) + J(0)\right\|_2 \left\|\bm{r}(j+1) \right\|_2 \notag \\
\leq{}& (1+k_1) \frac{\eta_0}{D} \sqrt{D\lmax} \left\|\bm{r}(j+1) \right\|_2 \label{vqacr_lemma_gd_general_lazy_theta_and_K_4_4} \\
\leq{}& (1+k_1) \eta_0 \sqrt{\frac{\lmax}{D}} \left[ 1 - (1-k) \eta_0 \lmin \right]^{j+1} \l\| \bm{r}(0) \r\|_2 .\label{vqacr_lemma_gd_general_lazy_theta_and_K_4_5} 
\end{align}
Eq.~(\ref{vqacr_lemma_gd_general_lazy_theta_and_K_4_2}) follows from $\eta=\frac{|\A|}{D} \eta_0$ and $\L_{\A} = \frac{1}{2|\A|} \|\bm{r}\|_2^2$. Eq.~(\ref{vqacr_lemma_gd_general_lazy_theta_and_K_4_4}) is derived by using the local smoothness bound of $J$ in Eq.~(\ref{vqacr_lemma_gd_general_lazy_theta_and_K_Jthetanorm}). Eq.~(\ref{vqacr_lemma_gd_general_lazy_theta_and_K_4_5}) is derived by using the $t=j+1$ case of Eq.~(\ref{vqacr_lemma_gd_general_lazy_theta_and_K_eq_5}).
Thus, we have proved Eqs.~(\ref{vqacr_lemma_gd_general_lazy_theta_and_K_eq_4}) and (\ref{vqacr_lemma_gd_general_lazy_theta_and_K_eq_5}) for $t=0,1,\cdots,+\infty$.

Finally, we proceed to derive the main result in the theorem. The linear convergence result in Eq.~(\ref{vqacr_lemma_gd_general_lazy_theta_and_K_eq_1}) could be obtained from Eq.~(\ref{vqacr_lemma_gd_general_lazy_theta_and_K_eq_5}) and $\L_{\A}(t)=\frac{1}{2|\A|} \|\bm{r}(t)\|_2^2$. The frozen parameter regime in Eq.~(\ref{vqacr_lemma_gd_general_lazy_theta_and_K_eq_2}) could be obtained from Eq.~(\ref{vqacr_lemma_gd_general_lazy_theta_and_K_eq_4}) as follows:
\begin{align}
\left\| \bmt(T) - \bmt(0) \right\|_2 \leq{}& \sum_{t=0}^{T-1} \left\| \bmt(t+1) - \bmt(t) \right\|_2 \notag \\
\leq{}& \sum_{t=0}^{T-1} (1+k_1) \eta_0 \sqrt{\frac{\lmax}{D}} \left[ 1 - (1-k) \eta_0 \lmin \right]^{t} \l\| \bm{r}(0) \r\|_2 \notag \\
\leq{}& (1+k_1) \eta_0 \sqrt{\frac{\lmax}{D}} \frac{1}{1-\left[ 1 - (1-k) \eta_0 \lmin \right]} \l\| \bm{r}(0) \r\|_2 \notag \\
={}& \frac{1+k_1}{1-k} \frac{\sqrt{\lmax}}{\lmin} \sqrt{2|\A|\L_{\A}(0)} \frac{1}{\sqrt{D}} \notag \\
\leq{}& \frac{1+\frac{2}{5}k}{1-\frac{1}{2}} \frac{1}{\lmin} \sqrt{2|\A| \lmax\L_{\A}(0)} \frac{1}{\sqrt{D}} \notag \\
\leq{}& \frac{R_0}{\sqrt{D}} .
\end{align}

\end{proof}

\subsection{Proof of Theorem~\ref{vqacr_theorem_gd_xyz_lazy_theta_and_K}}
\label{vqacr_qntk_app_qnn_linear_converge_xyz}

We choose constants $c_1=\frac{9}{4}C_2$, $c_2=4C_3$, and $c_3=9C_1^2$, where constants $C_1$ and $C_2$ follow from Lemma~\ref{vqacr_lemma_qntk_least_eigen_gaussian_matrix_main} and the constant $C_3$ follows from Theorem~\ref{vqacr_lemma_gd_general_lazy_theta_and_K}.
Then by using Lemma~\ref{vqacr_lemma_qntk_least_eigen_gaussian_matrix_main}, for $D\geq D_1 \ln \frac{D_1}{\delta}$ with $D_1 = \left( \frac{\lmax' }{\lmin' A_{\Q, \max}' } \right)^2  c_1 |\A| \ln \frac{|\A|}{\delta} $ and $S \geq{} \log_3 \left[ 1 + c_3 \frac{ |\A|^2 \ln |A| }{ A_{\Q, \min}'^4 |\Q| \delta^2 } \right]$, the following holds with probability at least $1-\delta$:
\begin{align}
\lmin' = \frac{2}{3} \Delta_S \alpha_{\min} A_{\Q, \min}'^2 \leq{}& \lambda \left[K(\bmt(0)) \right] \leq{} \frac{4}{3} \|O\|_2^2 \alpha_{\max} A_{\Q, \max}'^2 = \lmax', \label{vqacr_theorem_gd_xyz_lazy_theta_and_K_min_max_eigen}
\end{align}

Our main idea is to derive the convergence result based on Theorem~\ref{vqacr_lemma_gd_general_lazy_theta_and_K}, which holds with probability at least $1-\delta$ with eigenvalues in terms of $\lmin'$ and $\lmax'$. 
Let $k=1/2$, the threshold $D_2$ satisfies
\begin{align}
D_2 ={}& c_2 \lmax'^2 \lmin'^{-4} |\A| \L_{\A}(0)  \notag \\
={}& C_3 k^{-2} \lmax'^2 \lmin'^{-4} |\A| \L_{\A}(0). \label{vqacr_theorem_gd_xyz_lazy_theta_and_K_D2}
\end{align}

By employing the formulation of $D_2$ in Eq.~(\ref{vqacr_theorem_gd_xyz_lazy_theta_and_K_D2}) in Theorem~\ref{vqacr_lemma_gd_general_lazy_theta_and_K}, we have the following result that holds with probability $1-\delta$,
\begin{align*}
{}& \L_{\A} (T) \leq{} \left( 1 - \frac{1}{2} \eta_0 \lmin' \right)^{2T} \L_{\A} (0) ,  \\
{}& \left\| \bm{\theta}(T) - \bm{\theta}(0) \right\|_2  \leq \frac{R_0}{\sqrt{D}} .
\end{align*}

Thus, we have proved Theorem~\ref{vqacr_theorem_gd_xyz_lazy_theta_and_K}.

\subsection{Proof of Theorem~\ref{vqacr_theorem_gd_xy_lazy_theta_and_K}}
\label{vqacr_qntk_app_qnn_linear_converge_xy}

We choose constants $c_1= 9 C_2$, $c_2=36C_3$, and $c_3=9C_1^2$, where constants $C_1$ and $C_2$ follow from Lemma~\ref{vqacr_lemma_qntk_least_eigen_random_sphere_main} and the constant $C_3$ follows from Theorem~\ref{vqacr_lemma_gd_general_lazy_theta_and_K}.
Then by using Lemma~\ref{vqacr_lemma_qntk_least_eigen_random_sphere_main}, for 
$D\geq D_1 \ln \frac{D_1}{\delta}$ with $D_1 = 9 C_2 \frac{|\A| \|O\|_2^4 }{ \Delta_S^2 } \ln \frac{|\A|}{\delta} $ and $S \geq 2+ \log_2 \frac{9C_1^2 |\A|^2 \ln |\A|}{ |\Q|\delta^2}$, the following holds with probability at least $1-\delta$:
\begin{align}
\lmin' := \frac{2}{3} \frac{\Delta_S}{2^S} \leq{}& \lambda \left[K(\bmt(0)) \right] \leq{} \frac{4}{3} \frac{\|O\|_2^2}{2^S} := \lmax' .\label{vqacr_theorem_gd_xy_lazy_theta_and_K_min_max_eigen}
\end{align}

Our main idea is to derive the convergence result based on Theorem~\ref{vqacr_lemma_gd_general_lazy_theta_and_K}, which holds with probability at least $1-\delta$ with eigenvalues in terms of $\lmin'$ and $\lmax'$. 
Specifically, the term $R_0$ in the locally smooth condition in Assumption~\ref{vqacr_lemma_ntk_local_Jacobian_stability_xyz} satisfies
\begin{align}
R_0 ={}& \frac{5}{2} \frac{ \|O\|_2 }{\Delta_S } \sqrt{6 \times 2^S |\A| \L_{\A}(0)} \notag \\
={}& \frac{5}{2} \frac{ 1 }{ \frac{2}{3} \Delta_S 2^{-S} } \sqrt{\frac{4}{3} \|O\|_2^2 2^{-S} \cdot 2 |\A| \L_{\A}(0)} \notag \\
={}& \frac{5}{2} {\lmin'^{-1}} \sqrt{2|\A|\lmax' \L_{\A}(0)} . \label{vqacr_theorem_gd_xy_lazy_theta_and_K_R0}
\end{align}
Let $k=1/2$, the threshold $D_2$ satisfies
\begin{align}
D_2 ={}& 36 C_3 \frac{\|O\|_2^4 2^{2S} }{ \Delta_S^4  }  |\A| \L_{\A}(0)  \notag \\
={}& C_3 k^{-2} \frac{\frac{16}{9} \|O\|_2^4 2^{-2S} }{ \frac{16}{81} \Delta_S^4 2^{-4S} } |\A| \L_{\A}(0)  \notag \\
={}& C_3 k^{-2} \lmax'^2 \lmin'^{-4} |\A| \L_{\A}(0). \label{vqacr_theorem_gd_xy_lazy_theta_and_K_D2}
\end{align}

By employing the formulation of $R_0$ and $D_2$ in Eqs.~(\ref{vqacr_theorem_gd_xy_lazy_theta_and_K_R0}) and (\ref{vqacr_theorem_gd_xy_lazy_theta_and_K_D2}) in Theorem~\ref{vqacr_lemma_gd_general_lazy_theta_and_K}, we have the following result that holds with probability $1-\delta$,
\begin{align*}
{}& \L_{\A} (T) \leq{} \left( 1 - \frac{1}{2} \eta_0 \lmin' \right)^{2T} \L_{\A} (0) ,  \\
{}& \left\| \bm{\theta}(T) - \bm{\theta}(0) \right\|_2  \leq \frac{R_0}{\sqrt{D}} . 
\end{align*}
Theorem~\ref{vqacr_theorem_gd_xy_lazy_theta_and_K} could be easily derived by using eigenvalue bounds in Eq.~(\ref{vqacr_theorem_gd_xy_lazy_theta_and_K_min_max_eigen}).

\end{document}